\newcommand{\R}{\mathbb{R}}
\newcommand{\N}{\mathbb{N}}
\newcommand{\jnabla}{{\langle\nabla\rangle}}
\newcommand{\dXY}{{\mathrm{dist}(X,Y)}}
\theoremstyle{plain}
\newtheorem{theo}{Theorem}[section]
\newtheorem{prop}[theo]{Proposition}
\newtheorem{lemma}[theo]{Lemma}
\newtheorem{cor}[theo]{Corollary}
\newtheorem{rem}[theo]{Remark}
\newtheorem{definition}[theo]{Definition}
\definecolor{mylightgreen}{rgb}{0.816, 0.91, 0.789}
\newcommand{\gcell}{\cellcolor{mylightgreen}}
\definecolor{mylightred}{rgb}{0.949, 0.781, 0.648}
\newcommand{\rcell}{\cellcolor{mylightred}}
\definecolor{mylightblue}{rgb}{0.602, 0.75, 0.918}
\newcommand{\bcell}{\cellcolor{mylightblue}}
\newcommand{\meas}{\mathcal{M}}
\newcommand{\diff}{\mathrm{d}}
\newcommand{\setW}{{\mathcal{W}_{d,s}}}
\newcommand{\Tepsilon}{{G_\varepsilon}}
\newcommand{\Tzero}{{G_0}}
\let\Im\relax
\DeclareMathOperator{\Im}{Im}
\title{Propagation Estimates for the Boson Star Equation}
\date{}
\author{S\'ebastien Breteaux, J\'er\'emy Faupin, Viviana Grasselli}
\affil{Université de Lorraine, CNRS, IECL, F-57000 Metz, France}
\begin{document}
\maketitle
	\begin{abstract}
		We consider the boson star equation with a general two-body interaction potential $w$ and initial data $\psi_0$ in a Sobolev space. Under general assumptions on $w$, namely that $w$ decomposes as a sum of a finite, signed measure and an essentially bounded function, we prove that the (local in time) solution cannot propagate faster than the speed of light, up to a sharp exponentially small remainder term. If $w$ is short-range and $\psi_0$ is regular and small enough, we prove in addition  asymptotic phase-space propagation estimates and minimal velocity estimates for the (global in time) solution, depending on the momentum of the scattering state associated to $\psi_0$.
	\end{abstract}

\tableofcontents

	\section{Introduction}

In this paper we study the pseudo-relativistic (or semi-relativistic) Hartree equation
\begin{eqnarray}
	\label{eq:hartree_semi_intro}
	\begin{cases}
i\partial_t \psi = (\langle\nabla\rangle + w*|\psi|^2)\psi, \quad  x \in \R^d, \, \quad t\ge0 , \\
\psi|_{t=0} = \psi_0,
	\end{cases}
\end{eqnarray}
in any dimension $d\ge3$. Here and throughout the paper, we use the notation $\langle\nabla\rangle$ for the Fourier multiplier
\begin{equation*}
\langle\nabla\rangle:=\sqrt{1-\Delta}=\mathcal{F}^{-1}\sqrt{1+|\xi|^2}\mathcal{F},
\end{equation*}
where $\mathcal{F}$ is the Fourier transform normalized such that $\mathcal{F}$ is unitary on $L^2$.

Eq. \eqref{eq:hartree_semi_intro} describes the effective dynamics,  in the mean-field limit, of an $N$-body quantum system of pseudo-relativistic bosons of mass $m=1$, with two-body gravitational interaction given by the convolution potential $w$; see \cite{ElgartSchlein}. It is thus used as a model for a pseudo-relativistic boson star. Here we work in units such that Planck's constant divided by $2\pi$ and the velocity of light are equal to $1$. We also assume that the mass of the bosons is $1$ for simplicity, but we could consider the kinetic energy $\sqrt{m^2-\Delta}$ (or $\sqrt{m^2-\Delta}-m$), for any $m>0$, without changing our results.

A physically relevant choice of the interaction potential $w$ is the attractive Newtonian potential, $w(x)=-\kappa|x|^{-1}$ for some $\kappa>0$, but we will consider in this paper general potentials in suitable spaces, imposing different conditions depending on the results. The assumption $w\ge0$ corresponds to repulsive, or defocusing interactions, while $w\le0$ corresponds to attractive, or focusing ones.

We use the shorthand $\psi_t$ for a solution $(t,x)\mapsto\psi(t,x)$ to \eqref{eq:hartree_semi_intro}. The boson star equation \eqref{eq:hartree_semi_intro} exhibits (formally at this stage) three important conserved quantities: the mass,  
\begin{equation}\label{eq:def_mass_intro}
    M(\psi_t):=\int_{\mathbb{R}^d}|\psi_t|^2;
\end{equation}
the energy, defined by
\begin{equation}\label{eq:def_energy_intro}
    E(\psi_t):=\frac12\int_{\mathbb{R}^d}\big|\langle\nabla\rangle^{\frac12}\psi_t\big|^2+\frac14\int_{\mathbb{R}^d}(w*|\psi_t|^2)|\psi_t|^2
\end{equation}
and the momentum,
\begin{equation}\label{eq:def_momentum_intro}
    P(\psi_t):=-\frac{i}{2}\int_{\mathbb{R}^d}\bar\psi_t\nabla\psi_t.
\end{equation}
We consider in this paper a translation invariant boson star equation, but several of our results might be extended to a boson star placed in an external potential $V(x)$, provided that suitable regularity and decay assumptions on $V$ are imposed. In this case, of course, the momentum is not a conserved quantity anymore.

For the attractive Newtonian potential $w(x)=-\kappa|x|^{-1}$ with $\kappa>0$, the convergence, in the mean-field limit, of the ground state energy of the pseudo-relativistic $N$-body Schrödinger Hamiltonian to the ground state energy of the static, pseudo-relativistic Hartree equation corresponding to \eqref{eq:hartree_semi_intro}, was established by Lieb and Yau in their seminal paper \cite{LiebYau1}. The fermionic case is also considered in \cite{LiebYau1}, in relation with the Chandrasekhar theory of stellar collapse. As mentioned before, the convergence in the mean-field limit of the time-dependent pseudo-relativistic Schrödinger equation to the pseudo-relativistic Hartree equation \eqref{eq:hartree_semi_intro}  was proven in \cite{ElgartSchlein}. See also \cites{Knowles-Pizzo,Lee} for explicit rates of convergence in the mean-field limit.

In \cite{lenzmann_semirel}, local and global well-posedness are studied in dimension $d=3$ for \eqref{eq:hartree_semi_intro} with a general external potential $V$ and a Yukawa-type interaction potential, $w(x)=\kappa|x|^{-1}e^{-\mu|x|}$ with $\kappa\in\mathbb{R}$, $\mu\ge0$, and for initial data $\psi_0$ in the Sobolev space $H^s$, with $s\ge\frac12$ (for attractive interaction potentials, a smallness condition must be imposed, either on $\psi_0$ or on $w$). The local well-posedness is extended in \cite{HerrLenzmann} to low-regularity initial states, namely $\psi_0\in H^s$ with $s>\frac14$. Here it should be noticed that the energy space corresponds to the Sobolev regularity $H^{1/2}$. In \cites{ChoOzawa1,ChoOzawa2,ChoOzawaSazaki}, global existence -- and small initial data scattering -- are proven for (sums of) short-range interaction potentials of the form $w_i(x)=\kappa_i|x|^{-\alpha_i}$, for some suitable values of $\alpha_i$; see also \cite{HerrTesfahun} for related results in the case of Yukawa-type potentials $w(x)=\kappa|x|^{-1}e^{-\mu|x|}$ with $\mu>0$.   We will compare some of these results to ours  below, after the statement of our main contributions. For long-range potentials $w(x)=\kappa|x|^{-1}$, ``modified scattering states'' must be introduced; see \cite{Pusateri} for global existence and small initial data scattering in this case.

In relation with gravitational collapse, finite time blow-up for \eqref{eq:hartree_semi_intro} is proven in \cite{frohlich_lenzm07} for attractive Newtonian potentials $w(x)=-\kappa|x|^{-1}$ with $\kappa>0$ and the mass $M(\psi_0)=\|\psi_0\|_{L^2}^2$ of the initial state larger than some critical value. The blow-up phenomenon is analysed in the mean-field limit in \cite{MichelangeliSchlein}. For the existence of solitary waves and stability results around them, under suitable assumptions, we refer to \cites{FrohlichJonssonLenz1, FrohlichJonssonLenz2}. We will not study finite time blow-up nor the existence of solitary waves here.

Our main concern is the speed of propagation of boson stars whose dynamics is given by \eqref{eq:hartree_semi_intro}. We generally consider a large class of interaction potentials,
\begin{equation}\label{eq:w-general}
    w\in\meas +L^\infty,
\end{equation}
where $\meas $ stands for the Banach space of finite, signed Radon measures on $\mathbb{R}^d$. In particular, we can take $w$ as a Dirac delta measure and hence consider a pure-power cubic non-linearity. Depending on the results, we will restrict the class of admissible potentials.

We will begin with establishing local and global existence of solutions to \eqref{eq:hartree_semi_intro}. Our proof follows the usual strategy of applying a fixed point argument to solve Duhamel's equation associated to \eqref{eq:hartree_semi_intro} in a suitable function space. For the global existence, we will distinguish two distinct regimes: the long-range regime, $w\in L^{d/2+1,\infty}+L^\infty$, and the short-range regime, $w\in\meas +L^q$, $1\le q<\frac{2d}{3}$, with possibly a smallness condition involving the initial state $\psi_0$. Compared to the extensively studied non-linear Schrödinger or Hartree equations (see e.g.  \cites{cazenaveWeissler_H1,cazenaveWeissler_Hs,Tao,GinibreVelo} and references therein), a difficulty, as in previous works, comes from the loss of derivatives in the dispersive estimates associated to the semi-relativistic kinetic energy, which, for the $L^1\to L^\infty$ estimate, may be written as
\begin{equation}\label{eq:dispersive_est_intro}
    \big\|e^{-it\langle\nabla\rangle}f\big\|_{L^\infty}\lesssim |t|^{-\frac{d}{2}}\|\langle\nabla\rangle^{\frac{d}{2}+1} f\|_{L^1}, 
\end{equation}
see Appendix \ref{app: pointw dec}.

Given a solution $\psi_t$ to \eqref{eq:hartree_semi_intro}, we will aim at estimating the speed of propagation of the boson star, in the sense of proving time-decay estimates for the probability that the velocity~$\frac{x}{t}$ of the state $\psi_t$ belongs to a certain domain, possibly depending on the position and momentum of the initial state $\psi_0$. We will establish both a maximal and a minimal velocity estimate, expressed under different forms. Our maximal velocity bound is of the form
\begin{equation}\label{eq:max-vel-intro}
    \|\mathbf{1}_Y\psi_t\|_{L^2}\le e^{t-\mathrm{dist}(X,Y)}\|\mathbf{1}_X\psi_0\|_{L^2},
\end{equation}
for any convex subsets $X,Y\subset\mathbb{R}^d$, where $\mathrm{dist}(X,Y)$ stands for the distance from $X$ to~$Y$. It holds for any local in time solution $\psi_t$ to \eqref{eq:hartree_semi_intro}, for any interaction potential $w\in\meas +L^\infty$, provided that the initial state $\psi_0$ is regular enough, depending on $w$. For times $t\ll \mathrm{dist}(X,Y)$, \eqref{eq:max-vel-intro} shows that the probability for the star to travel from $X$ to $Y$ is exponentially small, hence justifying that the maximal velocity of propagation for \eqref{eq:hartree_semi_intro} is equal to the velocity of light ($1$ in our units). A feature of \eqref{eq:max-vel-intro} is that it gives a sharp exponentially small error  (in a sense that will be made precise below). Our proof is adapted from \cite{SigalWu}. The maximal velocity estimate \eqref{eq:max-vel-intro} can be extended to general disjoint subsets $X,Y$, dropping the convexity assumption, at the cost of losing optimality. See below for more details. 

Our (asymptotic) minimal velocity estimate takes the form
 \begin{equation}\label{eq:min-vel-intro}
\Big\| \mathbf{1}_{[0,\alpha)}\Big(\frac{x^2}{t^2}\Big) \psi_t\Big\|_{L^2} \to 0, \qquad t\to\infty,
\end{equation}
for any initial state $\psi_0$ associated to a scattering state with an ``asymptotic instantaneous velocity'' larger than $\alpha$. The precise definition of the instantaneous velocity operator will be given below. In order to construct initial states with localized asymptotic instantaneous velocity, we will need first to establish ``phase-space'' propagation estimates and to prove the existence, and right-invertibility, of wave operators on a suitable set of small and regular initial data. For this we will have to restrict the admissible class of potentials to short-range interaction potentials $w\in \meas +L^q$ with $q<\frac{d}{2}$.

Propagation estimates, including maximal velocity, minimal velocity and phase-space propagation estimates, played a crucial role in the eighties and nineties in the scattering theory of $N$-body quantum systems, especially in the proof of asymptotic completeness of the wave operators, see, among others, \cites{Enss83,SigalSoffer88,SigalSoffer90,Gerard92,Derezinski93,Skibsted91,Graf90,HunzikerSigalSoffer99}. Propagation estimates were later extended to the framework of non-relativistic QED in e.g. \cites{DerezinskiGerard99,FrohlichGriesemerSchlein02,BonyFaupinSigal12,FaupinSigal14}. In the recent years, proving bounds on the maximal speed of propagation for quantum information in various physical contexts has been the subject of many works. We refer to \cite{ArbunichPusateriSigalSoffer21} and \cite{SigalWu} for the development of general methods for proving maximal velocity estimates for quantum systems, the former by the means of adiabatic space-time localization observables (ASTLO), while the latter is based on analyticity properties leading to exponential bounds.
In relation with Lieb-Robinson bounds, maximal velocity of quantum transport for Bose-Hubbard type Hamiltonians has been established in \cites{FaupinLemmSigal22_LR, FaupinLemmSigal22_maxv,LemmRubilianiZhang23,FaupinLemmSigalZhang25, LemmRubilianiZhang25}. For related works in the case of open quantum systems described by a Lindblad master equation, we refer to \cites{BreteauxFaupinLemmSigal22,BreteauxFaupinLemmSigalYangZhang24,SigalWu25}. Finally a maximal velocity estimate for the (non-relativistic) Hartree equation, using the ASTLO method, has been derived in \cite{ArbunichFaupinPusateriSigal23}.

Our paper is organized as follows. In the next section, we describe our main results in precise terms. Section \ref{sec:prelim} contains preliminary technical estimates that are subsequently used in the proof of our main results. In Section \ref{sec:local}, we prove local existence for \eqref{eq:hartree_semi_intro}, for the general class of potentials $w$ satisfying \eqref{eq:w-general}.
Sections \ref{sec:global-long} and \ref{sec:global-short} are devoted to the proof of global existence, in the long-range and short-range regimes, respectively. In Section \ref{sec:max-vel}, we prove the maximal velocity estimate \eqref{eq:max-vel-intro} and in Section \ref{sec:min-vel} we study the scattering theory for short-range potentials and establish the minimal velocity bound \eqref{eq:min-vel-intro}. Appendix \ref{app: commute} recalls estimates on commutators between weights and fractional derivatives in $L^p$ spaces. In Appendix \ref{app: pointw dec} we derive various time-decay estimates for the linear flow $e^{-it\langle\nabla\rangle}$ that are important ingredients in the proofs of our main results.

\section{Main results}

In this section we state and comment our main results on the existence and properties of solutions to \eqref{eq:hartree_semi_intro}.   In what follows $L^p$ and $L^{p,\infty}$ stand for the usual Lebesgue and weak Lebesgue spaces over $\mathbb{R}^d$, respectively; we also denote by $\meas $ the space of signed, finite Radon measures on $\mathbb{R}^d$, equipped with the total variation norm $\|\mu\|_{\meas }=|\mu|(\mathbb{R}^d)$.
We introduce the following class of interaction potentials
\begin{equation}\label{eq:def_Wds}
\setW=\begin{cases}
    L^{\frac{d}{2s},\infty}+L^\infty &\text{ if } s<\frac{d}{2},\\
    \bigcup_{q>1}(L^{q,\infty}+L^\infty) &\text{ if } s=\frac{d}{2},\\
    \meas+L^\infty & \text{ if } s> \frac{d}{2},    
\end{cases}
\end{equation}
where, in the sequel, $s$ will correspond to the regularity of the initial data $\psi_0$ in \eqref{eq:hartree_semi_intro}.

\begin{rem}$ $
\begin{enumerate}
    \item As mentioned in the introduction, for $s>\frac{d}2$, $w$ can be chosen as a sum of a Dirac delta measure and a function in any $L^p$ space, $1\le p\le\infty$, leading in \eqref{eq:hartree_semi_intro} to a potential which is a sum of a cubic non-linearity and a convolution non-linearity by a function in a broad class.
	\item The class $L^{\frac{d}{2s}, \infty} + L^\infty$ includes the interactions given by (sums of) convolution potentials of the form $w(x) = \frac{1}{|x|^\alpha}$ for $\alpha \in [0, 2s]$. Indeed, $\frac{1}{|x|^{2s}} \in L^{\frac{d}{2s}, \infty}(\R^d)$, while for $\alpha \in [0,2s)$ we have 
	\begin{equation*}
		\frac{1}{|x|^\alpha} = \frac{\mathbbm{1}_{|x|\leq 1}}{|x|^\alpha}  +  \frac{\mathbbm{1}_{|x|> 1}}{|x|^\alpha} \leq \frac{1}{|x|^{2s}} + \frac{\mathbbm{1}_{|x|> 1}}{|x|^\alpha}  \in L^{\frac{d}{2s}, \infty}(\R^d) + L^\infty(\R^d). 
	\end{equation*}
	\item For $s=\frac{d}{2}$, as usual, we cannot reach the endpoint case $L^{1,\infty}$ because of the absence of embedding of $H^s$ into $L^\infty$. Note the equality $\bigcup_{q>1}(L^{q,\infty}+L^\infty)=\bigcup_{q>1}(L^{q}+L^\infty)$. Here we do not try to refine further the functional space considered in this critical case.
\end{enumerate}
\end{rem}

To simplify the exposition, we only consider non-negative times to state (and prove) our results, even if they clearly also hold for negative times. Similarly, all our results hold without change if one replaces $\langle\nabla\rangle$ by $-\langle\nabla\rangle$ in \eqref{eq:hartree_semi_intro}.

\subsection{Local existence}

We begin with the local existence of solutions to \eqref{eq:hartree_semi_intro} for initial data in $H^s$, $s\ge0$. 

\begin{theo}[Local existence I]\label{th:local-long-range-intro}
	 Let $s\ge 0$, $\psi_0$ in $H^{s}$ and $w$ in $\setW$. Then there exists $T_{\max}$ in $(0,\infty]$ such that  \eqref{eq:hartree_semi_intro} admits a unique solution 
	\begin{equation*}
		\psi \in \mathcal{C}^0([0, T_{\max}), H^{s}) \cap \mathcal{C}^1 ([0, T_{\max}), H^{s-1}),
	\end{equation*}
	where either $T_{\max}=\infty$ or $\lim_{t\to T_{\max}} \|\psi_t\|_{H^s} =\infty$. Moreover, for any $T$ in $[0, T_{\max})$, the map
	\begin{equation*}
	    H^s\ni\psi_0\mapsto\psi|_{[0,T]}\in\mathcal{C}^0([0,T],H^s)\cap \mathcal{C}^1 ([0, T], H^{s-1})
	\end{equation*}
	is continuous.
\end{theo}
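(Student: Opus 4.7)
The plan is to solve \eqref{eq:hartree_semi_intro} via its Duhamel formulation
\begin{equation*}
\psi_t = \Phi(\psi)_t := e^{-it\jnabla}\psi_0 - i\int_0^t e^{-i(t-\tau)\jnabla}\bigl[(w*|\psi_\tau|^2)\psi_\tau\bigr]\,d\tau,
\end{equation*}
and apply the Banach fixed point theorem in the closed ball
\begin{equation*}
X_{T,R}=\bigl\{\psi\in\mathcal{C}^0([0,T], H^s):\ \|\psi\|_{L^\infty_T H^s}\le R\bigr\}
\end{equation*}
with $R=2\|\psi_0\|_{H^s}$, for $T$ sufficiently small. Since $e^{-it\jnabla}$ is unitary on $H^s$, the whole argument reduces to the trilinear estimate
\begin{equation*}
\bigl\|(w*(\psi_1\bar\psi_2))\psi_3\bigr\|_{H^s}\lesssim \|w\|_{\setW}\,\|\psi_1\|_{H^s}\|\psi_2\|_{H^s}\|\psi_3\|_{H^s},
\end{equation*}
which I would establish case by case, according to the three regimes in the definition of $\setW$.

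For $s>d/2$, $H^s$ is a Banach algebra and embeds in $L^\infty$; decomposing $w=\mu+v$ with $\mu\in\meas$ and $v\in L^\infty$, the contribution of $\mu$ is controlled because convolution with a finite measure is bounded both on $L^\infty$ and on $H^s$ (via Minkowski), while for $v$ one uses that $\psi_1\bar\psi_2\in L^1\cap H^s$. For $s<d/2$, I would exploit the Sobolev embedding $H^s\hookrightarrow L^{2d/(d-2s)}$ so that $|\psi|^2\in L^{d/(d-2s)}$, apply Young's inequality in weak Lebesgue spaces (using the conjugate relation $\frac{2s}{d}+\frac{d-2s}{d}=1$) to place $w*|\psi|^2$ in $L^\infty$, and distribute the fractional derivative $\jnabla^s$ between the two factors via a Kato--Ponce fractional Leibniz argument, calling on the commutator estimates of Appendix \ref{app: commute}. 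The endpoint case $s=d/2$ is handled analogously, replacing $L^{d/(2s),\infty}$ by $L^q$ for some $q>1$ with $w\in L^q+L^\infty$ and using $H^{d/2}\hookrightarrow L^p$ for every $p<\infty$.

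Once the trilinear estimate is in hand, one immediately obtains $\|\Phi(\psi)\|_{L^\infty_T H^s}\le \|\psi_0\|_{H^s}+C\,T\,\|w\|_{\setW}\,R^3$ together with a matching Lipschitz bound for $\Phi(\psi)-\Phi(\tilde\psi)$, so that choosing $T$ depending only on $\|\psi_0\|_{H^s}$ and $\|w\|_{\setW}$ makes $\Phi$ a strict contraction on $X_{T,R}$; this yields existence and uniqueness of a fixed point. The $\mathcal{C}^1([0,T],H^{s-1})$ regularity follows at once from the equation, since $\jnabla:H^s\to H^{s-1}$ is bounded and $(w*|\psi|^2)\psi\in L^\infty_T H^s\hookrightarrow L^\infty_T H^{s-1}$. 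Maximal extension and the blow-up alternative follow by iterating the construction as long as $\|\psi_t\|_{H^s}$ stays bounded, and continuous dependence on the initial data is obtained from the same contraction estimate combined with a Gronwall argument on each closed subinterval of $[0,T_{\max})$. I expect the main obstacle to be the trilinear estimate in the low-regularity regimes $s\le d/2$: the loss of derivatives in the dispersive estimate for $e^{-it\jnabla}$ leaves little room in the Sobolev scale, so one must carefully track the splitting of $w$ and the way $\jnabla^s$ is distributed between the factors $w*|\psi|^2$ and $\psi$, combining fractional Leibniz bounds with Young's inequality in Lorentz spaces.
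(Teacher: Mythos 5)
Your proposal is correct and follows essentially the same route as the paper: a contraction argument for the Duhamel map in $\mathcal{C}^0([0,T],H^s)$, with the trilinear estimate established case by case in the three regimes of $\setW$ via fractional Leibniz, Sobolev embeddings and Young/H\"older inequalities in Lorentz spaces, followed by the standard blow-up alternative, Gronwall-based continuity, and $\mathcal{C}^1([0,T],H^{s-1})$ regularity read off from the equation. The only point to be careful about in the regime $s<\tfrac d2$ is that placing $w*|\psi|^2$ in $L^\infty$ with $w\in L^{d/(2s),\infty}$ requires the Lorentz refinement $H^s\hookrightarrow L^{2d/(d-2s),2}$, so that $|\psi|^2\in L^{d/(d-2s),1}$, which is exactly how the paper's Lemma 3.6 is set up.
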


The proof of Theorem \ref{th:local-long-range-intro} relies on a standard fixed point argument in $H^s$ to solve Duhamel's equation associated to \eqref{eq:hartree_semi_intro}, using multilinear estimates proven in Section \ref{sec:prelim}.
We also remark that the mass $M(\psi_t)=\|\psi_t\|_{L^2}^2$ is a conserved quantity for  \eqref{eq:hartree_semi_intro} (see Lemma \ref{lemma: en conservation} for a precise statement), therefore the previous theorem with $s=0$ implies global existence for $w\in L^\infty$ and $\psi_0\in L^2$. 

For initial states with limited regularity, the class of admissible potentials $w$ for local existence can be extended    thanks to the Strichartz  estimates given in Appendix \ref{app: pointw dec}. In the next statement $L^{b,2}$ stands for the usual Lorentz space, see Subsection \ref{subsec:Lorentz-ineq} for the definition.

	\begin{theo}[Local existence II]\label{th: local_with_Strichartz-intro}
	    Let $0\le s<(d+1)/(2d-2)$,  $w$ in $L^{(d+1)/(4s),\infty}+L^\infty$ and~$\psi_{0}$ in~$H^{s}$. There exists~$T_{\max}$ in~$(0,\infty]$ such that \eqref{eq:hartree_semi_intro} admits a unique solution 
\begin{equation}\label{eq:space-Strichartz-intro}
\psi\in\mathcal{C}^{0}([0,T_{\max}),H^{s})\cap \mathcal{C}^{1}([0,T_{\max}),H^{s-1})\cap L^a_{\mathrm{loc}}([0,T_{\max}),L^{b,2}),
\end{equation}
where $\frac1a=\frac{(d-1)s}{d+1}$ and $\frac1b=\frac12-\frac{2s}{d+1}$.

If $d\ge4$, $s\ge (d+1)/(2d-2)$, $w\in  L^{(d-1)/2,\infty}+L^\infty$ and~$\psi_{0}\in H^{s}$, then \eqref{eq:hartree_semi_intro} admits a unique solution satisfying \eqref{eq:space-Strichartz-intro}, with $\frac1a=\frac12$ and $\frac1b=\frac12-\frac1{d-1}$.

If $d=3$, $s\ge1$, $w\in L^{q,\infty}+L^\infty$ with $q>1$ and $\psi_0\in H^s$, then \eqref{eq:hartree_semi_intro} admits a unique solution satisfying \eqref{eq:space-Strichartz-intro} with $\frac1a=\frac1{2q}$ and $\frac1b=\frac12-\frac1{2q}$.
	\end{theo}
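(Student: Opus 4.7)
The plan is to run a Banach fixed point argument on Duhamel's formula
\begin{equation*}
\Phi(\psi)_t := e^{-it\langle\nabla\rangle}\psi_0 - i\int_0^t e^{-i(t-\tau)\langle\nabla\rangle}\big(w*|\psi_\tau|^2\big)\psi_\tau\, \mathrm{d}\tau,
\end{equation*}
posed on a closed ball of the space $X_T := \mathcal{C}^0([0,T], H^s) \cap L^a([0,T], L^{b,2})$ equipped with the natural sum of norms. The three cases of the theorem correspond to three choices of Strichartz-admissible pair $(a,b)$, all satisfying the wave admissibility relation $\frac{2}{a} + \frac{d-1}{b} = \frac{d-1}{2}$. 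In each regime the pair is calibrated so that the (lossy) Strichartz estimate from Appendix~\ref{app: pointw dec} delivers $\|e^{-it\langle\nabla\rangle}\psi_0\|_{L^a_T L^{b,2}} \lesssim \|\psi_0\|_{H^s}$: the derivative loss $d(\tfrac12 - \tfrac1b) - \tfrac1a$ evaluates to exactly $s$ in case~1, to $(d+1)/(2(d-1)) \le s$ in case~2, and to $1/q < 1 \le s$ in case~3.

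The Duhamel term is then estimated by splitting $w = w_1 + w_2$ with $w_1 \in L^{p,\infty}$ (for the relevant exponent $p$) and $w_2 \in L^\infty$. The $w_2$-piece falls under the same multilinear $H^s$-estimates used for Theorem~\ref{th:local-long-range-intro}; the novelty concerns the singular piece~$w_1$. For this I would apply Young's convolution inequality in Lorentz spaces to bound $w_1 * |\psi|^2$, then Hölder in Lorentz spaces to multiply by~$\psi$, and land either in $L^1_T H^s$ (non-endpoint cases~1 and~3, via the inhomogeneous Strichartz estimate $\|\int_0^t e^{-i(t-\tau)\langle\nabla\rangle} F\, \mathrm{d}\tau\|_{L^a_T L^{b,2}} \lesssim \|F\|_{L^1_T H^s}$) or in the dual Strichartz space $L^{\tilde a'}_T L^{\tilde b',2}$ associated to a second admissible pair (Keel--Tao endpoint, case~2 when $d \ge 4$). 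In each regime the specific exponent $(d+1)/(4s)$, $(d-1)/2$, or $q$ is precisely the one for which the chain of Young + Hölder + Sobolev closes against the Strichartz norms of $\psi$, leaving a positive power of $T$ (or alternatively smallness in $T$ coming from the absolute continuity of the $L^a_T$ integral) as a prefactor. This prefactor yields both stability of a small closed ball and the contraction property.

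Standard arguments then complete the proof. The fixed point supplies a unique short-time solution in $X_T$; iterating produces a maximal existence time with the blow-up alternative $\lim_{t \to T_{\max}} \|\psi_t\|_{H^s} = \infty$; the $\mathcal{C}^1([0,T_{\max}), H^{s-1})$ regularity is read off the equation itself; and continuous dependence in the sense of Theorem~\ref{th:local-long-range-intro} follows by applying the same contraction estimate to the difference of two solutions. The main technical obstacle is the bookkeeping in the middle paragraph: verifying for each of the three cases that the Hölder/Young chain in Lorentz spaces closes with the correct derivative count, and, in particular, for the Keel--Tao endpoint ($d \ge 4$, $s \ge (d+1)/(2(d-1))$), juggling two distinct admissible pairs so that $w_1 \in L^{(d-1)/2,\infty}$ can be absorbed without sacrificing the sharp $L^2_T$ time-integrability needed for the contraction.
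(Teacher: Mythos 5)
Your proposal is correct and follows essentially the same route as the paper: a contraction on Duhamel's formula in $\mathcal{C}^0([0,T],H^s)\cap L^a([0,T],L^{b,2})$, with the homogeneous Strichartz estimate \eqref{eq:Strichartz1} for the linear part (your derivative count $d(\tfrac12-\tfrac1b)-\tfrac1a$ agrees with the paper's threshold $\tfrac12+\tfrac1a-\tfrac1b$ on the admissible line, and your identification of the potential exponent as $b/(b-2)$ in each case is exactly what the paper uses). The one genuine divergence is the inhomogeneous term in the endpoint case $d\ge4$, $\tfrac1a=\tfrac12$: you propose to land in a dual Strichartz space via Keel--Tao with a second admissible pair, whereas the paper (proof of Theorem \ref{th: local_with_Strichartz}) uniformly estimates the forcing in $L^1_tH^s_x$ using the Lorentz-space H\"older/Young bound \eqref{eq:w_ad/4} (i.e.\ $\|(w_{b/(b-2)}*|\psi|^2)\psi\|_{H^s}\lesssim\|w\|_{L^{b/(b-2),\infty}}\|\psi\|_{H^s}\|\psi\|_{L^{b,2}}^2$, a consequence of \eqref{eq:convolution-Hartree-Hs-Lq}) and then applies \eqref{eq:Strichartz2}; the absence of a positive power of $T$ when $a=2$ is compensated exactly by the mechanism you name as your fallback, namely the smallness of $\|\psi^{(0)}\|_{L^2([0,T],L^{b,2})}$ as $T\to0$. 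Your Keel--Tao route would work but is heavier: it requires fractional Leibniz and product estimates in $L^{\tilde b'}$ rather than in $L^2$, and the paper deliberately avoids endpoint inhomogeneous estimates in Proposition \ref{prop:Strichartz-Lorentz}; the $L^1_tH^s$ route buys you a single multilinear estimate valid in all three cases at the cost of having to extract smallness from the ball center rather than from a power of $T$.
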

Similarly as in Theorem \ref{th:local-long-range-intro}, both a blow-up alternative and a continuity property of the solution with respect to the initial data hold under the conditions of Theorem \ref{th: local_with_Strichartz-intro}. See Theorem \ref{th: local_with_Strichartz} below for a more precise and more general statement involving any admissible pair $(a,b)$ for the Strichartz estimates; as observed in \cite{GinibreVelo_Timedecay}, a useful property of \eqref{eq:hartree_semi_intro} is that Strichartz estimates hold for both wave and Schrödinger admissible pairs. In Appendix \ref{app: pointw dec}, we justify that Strichartz estimates hold in Lorentz spaces, in order to be able to consider $w$ in the weak space $L^{(d-1)/2,\infty}$ for $d\ge4$.

As shown in \cite{HerrLenzmann}, the regularity condition on the initial states can be improved, at least for the Coulomb potential. Indeed, local existence for initial data $\psi_0\in H^s$ for $s>\frac14$ in dimension $3$ is proven in \cite{HerrLenzmann}, in the case where $w(x)=-|x|^{-1}$ is the attractive Coulomb potential (which belongs to $L^{3,\infty}$), while Theorem \ref{th: local_with_Strichartz-intro} requires $s\ge\frac13$. In this paper we do not try to optimize the regularity condition of initial states to ensure local existence for a given potential.

\subsection{Maximal velocity estimates}

Our first main result provides a maximal velocity estimate for the boson star between two convex subsets of $\mathbb{R}^d$. It holds for any local in time solutions to \eqref{eq:hartree_semi_intro}.

Recall that $\mathrm{dist}(X,Y)$ stands for the distance between two subsets of $\mathbb{R}^d$.

\begin{theo}[Sharp maximal velocity estimate for convex subsets]\label{th:max-vel-intro}
    Let $X,Y\subset\mathbb{R}^d$ be two convex subsets, $s\geq 0$, $w$ in $\setW$ and $\psi_0$ in $H^{s}$ such that~$\mathbf{1}_X\psi_0=\psi_0$.
    
    If $\psi$ is the local solution to \eqref{eq:hartree_semi_intro} on $[0,T_{\mathrm{max}})$ given by Theorem~\ref{th:local-long-range-intro} or Theorem~\ref{th: local_with_Strichartz-intro}, then
    \begin{equation}\label{eq:sharp_MVE}
       \forall t\in [0,T_{\mathrm{max}})\,,\qquad \|\mathbf{1}_Y\psi_t\|_{L^2}\le e^{t-\dXY}\|\psi_0\|_{L^2}.
    \end{equation}
\end{theo}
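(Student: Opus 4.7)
The plan is to exploit the analyticity of the dispersion relation $\sqrt{1+|\xi|^2}$ in the strip $\{|\Im\xi|<1\}$ of the Fourier variable (which is precisely what encodes the finite speed of light), following the strategy of \cite{SigalWu}. The Hartree potential $w*|\psi_t|^2$ is real-valued, so it is self-adjoint and drops out of any weighted $L^2$ energy identity.

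Assume $d := \dXY > 0$ (the case $d=0$ is immediate from mass conservation). Since $X$ and $Y$ are convex and disjoint, a supporting hyperplane at the nearest pair provides a unit vector $n\in\R^d$ such that, after a translation, $X\subset\{x\cdot n\le 0\}$ and $Y\subset\{x\cdot n\ge d\}$. For each $a\in(0,1)$, set $\phi_a(x) := a(n\cdot x)$, so that $\phi_a\le 0$ on $X$ and $\phi_a\ge ad$ on $Y$, and let $\tilde\psi_t := e^{\phi_a}\psi_t$.

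Formally, $\tilde\psi_t$ solves $i\partial_t\tilde\psi_t = H_a\tilde\psi_t + (w*|\psi_t|^2)\tilde\psi_t$, where $H_a := e^{\phi_a}\jnabla e^{-\phi_a}$ is the Fourier multiplier with symbol $m(\xi) := \sqrt{1+(\xi+ian)^2}$. Because $|an|=a<1$, this symbol is well-defined, and an elementary optimization gives $\sup_{\xi\in\R^d}|\Im m(\xi)| = a$, so the antisymmetric part $(H_a - H_a^*)/(2i)$ extends to a bounded operator on $L^2$ of norm at most $a$. As $(w*|\psi_t|^2)$ is a real-valued multiplier, it contributes zero to the imaginary part and
\begin{equation*}
\tfrac{d}{dt}\|\tilde\psi_t\|_{L^2}^2 = 2\,\Im\langle\tilde\psi_t, H_a\tilde\psi_t\rangle \le 2a\,\|\tilde\psi_t\|_{L^2}^2.
\end{equation*}
Gronwall's lemma, together with $\phi_a\le 0$ on $\mathrm{supp}\,\psi_0\subset X$ and $e^{\phi_a}\ge e^{ad}$ on $Y$, then gives
\begin{equation*}
\|\mathbf{1}_Y\psi_t\|_{L^2}\le e^{-ad}\|\tilde\psi_t\|_{L^2}\le e^{-ad}e^{at}\|\tilde\psi_0\|_{L^2}\le e^{a(t-d)}\|\psi_0\|_{L^2},
\end{equation*}
and letting $a\to 1^-$ yields \eqref{eq:sharp_MVE}.

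The main obstacle is justifying the above rigorously: $e^{\phi_a}$ is unbounded, so it is not clear a priori that $\tilde\psi_t\in L^2$ for $t>0$, nor that $H_a$ has a suitable common dense domain for the energy identity. I would circumvent this by truncating the weight, working with $\phi_{a,R}(x) := a\min(n\cdot x, R)$ for any fixed $R\ge d$. The operators $e^{\pm\phi_{a,R}}$ are then bounded and $H_{a,R} := e^{\phi_{a,R}}\jnabla e^{-\phi_{a,R}}$ is a bounded perturbation of $\jnabla$ on $H^1$; the crucial technical point, drawn from \cite{SigalWu}, is that the antisymmetric part of $H_{a,R}$ is still bounded on $L^2$ by the Lipschitz constant $a$ of $\phi_{a,R}$, proved via contour deformation in a Fourier representation of $\jnabla$ using the analyticity of $\sqrt{1+z^2}$ in the strip. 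The inequalities $\phi_{a,R}\le 0$ on $X$ and $\phi_{a,R}\ge ad$ on $Y$ survive the truncation for any $R\ge d$, so the energy argument produces the bound without any limit in $R$; an approximation argument using the continuity of the flow from Theorems \ref{th:local-long-range-intro} and \ref{th: local_with_Strichartz-intro} reduces us to $\psi_0$ smooth enough that the above manipulations are classical.
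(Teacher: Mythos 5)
Your formal computation is exactly the mechanism of the paper's proof: conjugate by an exponential weight attached to a hyperplane separating the convex sets, observe that the real-valued Hartree potential drops out of the weighted $L^2$ identity, bound the antisymmetric part of the conjugated kinetic energy using the analyticity of $\sqrt{1+\zeta^2}$ in the unit strip, and close with Gronwall. Your device of taking slope $a<1$ and letting $a\to1^-$ is a legitimate alternative to the paper's choice of working with slope $1$ directly; both recover the sharp constant from the \emph{affine} weight, for which the conjugated operator is a constant-coefficient Fourier multiplier.

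The gap is in the regularization. A minor slip first: with the one-sided truncation $\phi_{a,R}=a\min(n\cdot x,R)$ the operator $e^{-\phi_{a,R}}$ is \emph{not} bounded (it blows up as $n\cdot x\to-\infty$), so you need a two-sided cutoff, as in the paper's $\ell_\varepsilon$. The serious issue is your claim that the antisymmetric part of $H_{a,R}$ is still bounded on $L^2$ by the Lipschitz constant $a$, "without any limit in $R$". For a truncated, hence non-affine, weight, $e^{\phi_{a,R}}\jnabla e^{-\phi_{a,R}}$ is a variable-coefficient operator, not a multiplier; the contour-deformation/commutator expansion of \cite{SigalWu} yields a \emph{uniform} bound on its imaginary part but with no control guaranteeing that the constant is $\le a$ — this is precisely why the bound in \cite{SigalWu} carries an unspecified constant $C_{\mu,c}$, and why the paper emphasizes that reaching the sharp constant requires extra work. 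The exact computation $\sup_\xi|\Im\sqrt{1+(\xi+ian)^2}|=a$ is available only after the truncation is removed. The paper's proof therefore proceeds in two Gronwall steps: a uniform-in-$\varepsilon$ (non-sharp) bound on $\Tepsilon=\Im\big(e^{\ell_\varepsilon}\jnabla e^{-\ell_\varepsilon}\big)$ is used only to show that $e^{-\ell}\psi_t$ lies in $L^2$; one then passes to the weak limit $\Tepsilon\to\Tzero$, where $\Tzero$ is the constant-coefficient multiplier with $\|\Tzero\|_{\mathcal{B}(L^2)}\le1$, and reruns Gronwall with the sharp constant. Some limit removing the truncation is thus unavoidable; once you add it (letting $R\to\infty$ and proving convergence of the truncated operators to the affine-weight multiplier), your argument coincides with the paper's, and the $a\to1^-$ step becomes redundant.
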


\begin{rem}$ $
\begin{enumerate}
\item    The maximal velocity bound \eqref{eq:sharp_MVE} also holds if one replaces the conditions on $w$ and $\psi_0$ ensuring the local existence of solutions to \eqref{eq:hartree_semi_intro} given by Theorem \ref{th: local_with_Strichartz-intro}, by the slightly more general conditions ensuring local existence given by Theorem \ref{th: local_with_Strichartz}.
\item
We recall that the speed of light is equal to $1$ in our units. For the relativistic dispersion relation $\sqrt{-c^2\Delta+m^2c^4}-mc^2$, a simple scaling argument gives, instead of \eqref{eq:sharp_MVE}, the maximal velocity bound
    \begin{equation*}
        \|\mathbf{1}_Y\psi_t\|_{L^2}\le e^{mc^2(ct-\dXY)}\|\psi_0\|_{L^2}.
    \end{equation*}
\item
The exponentially small error term in the maximal velocity estimate \eqref{eq:sharp_MVE} is ``sharp'' in the sense that:
\begin{enumerate}
    \item If $C<1$, there exist convex subsets $X$ and $Y$ such that the estimate 
\begin{equation*}
        \forall t\in [0,T_{\mathrm{max}})\,,\quad \|\mathbf{1}_Y \psi_t\|_{L^2} \le C e^{t-\dXY}\|\psi_0\|_{L^2},
\end{equation*}
does \emph{not} hold. This is obvious since, if $X=Y$ then $\|\mathbf{1}_Y \psi_t \|_{L^2}= \|\psi_0\|_{L^2}$ at $t=0$ for any $\psi_0$ as in the statement of Theorem \ref{th:max-vel-intro}.
\item If $c<1$, there exist convex subsets $X$, $Y$ and $w$, $\psi_0$ as in the statement of Theorem \ref{th:max-vel-intro} such that the estimate 
\begin{equation*}
        \forall t\in [0,T_{\mathrm{max}})\,,\quad \|\mathbf{1}_Y \psi_t\|_{L^2} \le e^{ct-\dXY} \|\psi_0\|_{L^2},
\end{equation*}
does \emph{not} hold. This statement is proven in the case of the free evolution, $w=0$, in our companion paper \cite{BFG2a}*{Corollary B.2}.
\end{enumerate} 
\end{enumerate}

\end{rem}

Our proof of Theorem \ref{th:max-vel-intro} is based on an elegant and powerful argument recently introduced in \cite{SigalWu} for linear Schrödinger-type equations of the form $i\partial_t\psi=\omega(-i\nabla)\psi_t+V(x)\psi_t$, with $\omega(\xi)$ admitting an analytic continuation to a bounded region of $\mathbb{C}^d$. Substantial modifications are however necessary to accommodate the non-linear dynamics considered here, and to reach the sharp bound stated in \eqref{eq:sharp_MVE} (note that the bound obtained in \cite{SigalWu}, for more general dispersion relations and general open sets $X,Y\in\R^d$, is of the form $\|\mathbf{1}_Y\psi_t\|_{L^2}\le C_{\mu,c}e^{\mu(ct-\dXY)}\|\psi_0\|_{L^2}$ for some $C_{\mu,c}>0$ and any $\mu<1$ and $c>1$, without explicit control on $C_{\mu,c}$).

The idea of the proof of Theorem \ref{th:max-vel-intro} is to estimate, for a suitably chosen function $\ell$:
\begin{align*}
\|\mathbf{1}_Y \psi_t\|_{L^2}
    \le \underset{\le\, \exp\big(-\frac{\dXY}{2}\big)}{\underbrace{\|\mathbf{1}_Ye^{\ell(x)}\|_{\mathcal{B}(L^2)}}} \quad
    \underset{\le\,  \exp(t)\exp\big(-\frac{\dXY}{2}\big)\|\psi_0\|_{L^2}}{\underbrace{\|e^{-\ell(x)}\psi_t\|_{L^2}}}
   \,,
\end{align*}
The function $\ell$ is constructed thanks to the convexity of $X$ and $Y$, through a separation argument; see \cite{FaupinLemmSigalZhang25} for a similar construction and see Section \ref{sec:max-vel} for the precise expression of $\ell$ we use here. The main technical issue to justify this estimate lies in the proof of the fact that $e^{-\ell(x)}\psi_t$ is well-defined in $L^2$ and can be estimated by Gronwall's Lemma. Instead of using an analyticity argument as in \cite{SigalWu}, we introduce a bounded approximation $\ell_\varepsilon$ and establish estimates that are uniform in $\varepsilon$. A careful analysis allows us to obtain the sharp bound stated in \eqref{eq:sharp_MVE}. Some technical results entering the proof of Theorem \ref{th:max-vel-intro} are deferred to our companion paper \cite{BFG2a} where we prove a maximal velocity estimate for the non-autonomous pseudo-relativistic Schrödinger equation.

Using results from \cite{BFG2a} together with a covering argument in the spirit of that used in \cite{SigalWu}, we can extend the maximal velocity bound to non-convex subsets $X$, $Y$, up to a polynomial growth in $\mathrm{dist}(X,Y)$:

\begin{prop}[Maximal velocity estimate for general subsets]\label{th:max-vel-general-intro1}
There exists~$C_d>0$ such that, if $X,Y\subseteq\mathbb{R}^d$ are Borel subsets, $s\ge 1$, $w$ is in $\setW$ and~$\psi_0$ is in $H^{s}$ with~$\mathbf{1}_X\psi_0=\psi_0$, then the local solution $\psi$ to \eqref{eq:hartree_semi_intro} on $[0,T_{\mathrm{max}})$ given by Theorem~\ref{th:local-long-range-intro}  satisfies
    \begin{equation}\label{eq:general_MVE}
       \forall t\in[0,T_{\max})\,,\qquad \|\mathbf{1}_Y\psi_t\|_{L^2}\le C_d \, e^{t-\dXY} \, \langle\dXY\rangle^d \, \|\psi_0\|_{L^2}.
    \end{equation}
\end{prop}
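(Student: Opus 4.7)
The plan is to reduce to the convex estimate of Theorem~\ref{th:max-vel-intro} via a linearization of \eqref{eq:hartree_semi_intro} combined with a covering argument in the spirit of \cite{SigalWu}. Given the local solution $\psi$, set $V_t:=w*|\psi_t|^2$; this is a real-valued, time-dependent potential, and the hypothesis $s\ge 1$ guarantees, through Sobolev embedding and Young's inequality, enough regularity on $|\psi_t|^2$ for $V_t$ to fall within the class of non-autonomous potentials treated in our companion paper \cite{BFG2a}. Then $\psi_t$ also solves the linear equation $i\partial_t\phi=(\jnabla+V_t)\phi$ with $\phi|_{t=0}=\psi_0$; let $U(t,s)$ denote its unitary propagator on $L^2$. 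The key linear input from \cite{BFG2a} is the operator-norm version of the convex estimate of Theorem~\ref{th:max-vel-intro}, namely
\begin{equation*}
\|\mathbf{1}_B U(t,0)\mathbf{1}_A\|_{\mathcal{B}(L^2)}\le e^{t-\mathrm{dist}(A,B)}\qquad\text{for all convex }A,B\subseteq\R^d.
\end{equation*}
The weighted $L^2$ Gronwall argument behind Theorem~\ref{th:max-vel-intro} only uses that $V_t$ is real-valued (so that $\langle\phi,-iV_t\phi\rangle$ is purely imaginary), so it upgrades to a bound on the propagator, not merely on the norm of $\psi_t$.

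I then tile $\R^d$ by unit cubes $\{Q_k\}_{k\in\mathbb{Z}^d}$ and set $I:=\{i:Q_i\cap X\neq\emptyset\}$, $J:=\{j:Q_j\cap Y\neq\emptyset\}$, $R:=\dXY$, and $d_{ij}:=\mathrm{dist}(Q_i,Q_j)$. Each cube has diameter $\sqrt d$, so $d_{ij}\ge R-2\sqrt d$ for every $i\in I$, $j\in J$. Using $\mathbf{1}_X\psi_0=\psi_0$ and the linearity of $U(t,0)$, I decompose
\begin{equation*}
\psi_t=\sum_{i\in I}U(t,0)(\mathbf{1}_{Q_i}\psi_0),
\end{equation*}
and apply the convex bound above to each pair of (convex) cubes $(Q_i,Q_j)$, giving
\begin{equation*}
\|\mathbf{1}_{Q_j}\psi_t\|_{L^2}\le e^t\sum_{i\in I}e^{-d_{ij}}\|\mathbf{1}_{Q_i}\psi_0\|_{L^2}.
\end{equation*}

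To conclude, I combine $\|\mathbf{1}_Y\psi_t\|_{L^2}^2\le\sum_{j\in J}\|\mathbf{1}_{Q_j}\psi_t\|_{L^2}^2$ with Schur's test applied to the kernel $K_{ji}:=e^{-d_{ij}}$. A simple count — the number of unit cubes at distance $\sim k$ from a fixed one is $\le C_d k^{d-1}$ — together with $d_{ij}\ge R-2\sqrt d$ gives
\begin{equation*}
\sup_{j\in J}\sum_{i\in I}e^{-d_{ij}}+\sup_{i\in I}\sum_{j\in J}e^{-d_{ij}}\le C_d\langle R\rangle^{d-1}e^{-R};
\end{equation*}
coupled with the orthogonality relation $\sum_{i\in I}\|\mathbf{1}_{Q_i}\psi_0\|_{L^2}^2\le\|\psi_0\|_{L^2}^2$, Schur's test yields
\begin{equation*}
\|\mathbf{1}_Y\psi_t\|_{L^2}^2\le C_d\langle R\rangle^{2(d-1)}e^{2(t-R)}\|\psi_0\|_{L^2}^2,
\end{equation*}
which implies \eqref{eq:general_MVE} (indeed with the stronger exponent $d-1$).

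The main obstacle will be the linearization step: identifying an appropriate function space for $V_t=w*|\psi_t|^2$ and importing from \cite{BFG2a} the operator-norm convex bound on $\mathbf{1}_B U(t,0)\mathbf{1}_A$ in that setting — this is precisely where the strengthened regularity assumption $s\ge 1$ (compared to Theorem~\ref{th:max-vel-intro}) enters. Once this linear convex bound is in hand, the covering and Schur-test arguments are purely combinatorial and only cost a polynomial factor in $R$.
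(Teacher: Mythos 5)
Your proposal is correct and follows the same overall strategy as the paper: freeze the nonlinearity into a real, time-dependent potential $V_t=w*|\psi_t|^2$, pass to the unitary propagator $U(t,0)$ of the linear non-autonomous equation (which is legitimate precisely because $s\ge1$), identify $\psi_t=U(t,0)\mathbf{1}_X\psi_0$ by uniqueness, and then exploit maximal-velocity bounds for that propagator. The difference is one of division of labour. The paper imports the general-set propagator bound wholesale from the companion paper (Theorem~\ref{cor:max-vel-bounds-general-subsets}, i.e.\ Corollary~1.4 of \cite{BFG2a}, which already contains the covering argument and the factor $\langle\dXY\rangle^d$), and spends essentially all of its written proof on what you dismiss in one sentence: verifying, via Proposition~\ref{cor:estimates-conv-potential-for-unitary-propagator}, that $\|w*|\psi_t|^2\|_{L^\infty}$ and $\|\partial_t(w*|\psi_t|^2)\|_{L^{d,\infty}+L^\infty}$ are locally uniformly bounded so that the propagator exists in the sense required by Proposition~\ref{cor:existence-unitary-propagator-for-time-dependent-potential}; this verification is not a one-liner (it splits into the cases $s<\frac d2$, $s=\frac d2$, $s>\frac d2$ and uses the Lorentz-space H\"older/Young/Sobolev inequalities together with the $\mathcal{C}^1([0,T],H^{s-1})$ regularity of $\psi$ to control $\partial_t\psi_t$). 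You, conversely, import only the convex-set operator bound $\|\mathbf{1}_BU(t,0)\mathbf{1}_A\|_{\mathcal{B}(L^2)}\le e^{t-\mathrm{dist}(A,B)}$ — which is indeed available, since the Gronwall argument of Section~\ref{subsec:convex} only uses that $V_t$ is real — and carry out the cube-tiling plus Schur-test argument yourself. Your combinatorics are sound ($\sum_i a_i^2=\|\psi_0\|_{L^2}^2$ by disjointness of the cubes, the shell count $\#\{i:d_{ij}\in[k,k+1)\}\lesssim_d k^{d-1}$, and $\sum_{k\gtrsim R}k^{d-1}e^{-k}\lesssim\langle R\rangle^{d-1}e^{-R}$), and they yield the slightly sharper power $\langle\dXY\rangle^{d-1}$, which of course implies \eqref{eq:general_MVE}. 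To turn the proposal into a complete proof you would still need to write out the potential-regularity verification in full, since that — not the covering — is where the hypotheses $s\ge1$ and $w\in\setW$ actually get used.
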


The restriction to more regular initial states in Proposition \ref{th:max-vel-general-intro1}, compared to Theorem \ref{th:max-vel-intro}, comes from the fact that, in order to use the covering argument, we need to write, similarly as in \cite{ArbunichFaupinPusateriSigal23} (for the non-relativistic Hartree equation), the solution to  \eqref{eq:hartree_semi_intro} as $\psi_t=U_t\psi_0$, with $U_t=U_{t,0}$ the propagator generated by the time-dependent (and ``initial state-dependent'') Hamiltonian $H_t=\langle\nabla\rangle+w*|\psi_t|^2$. For the same reason we only consider the local solution to \eqref{eq:hartree_semi_intro} given by Theorem \ref{th:local-long-range-intro}, not the extension to a larger class of potentials provided by Theorem \ref{th: local_with_Strichartz-intro}. Under these conditions we can apply the abstract results derived in 
\cite{BFG2a} for time-dependent pseudo-relativistic Hamiltonians of the form $H_t=\langle\nabla\rangle+V_t$.

\subsection{Global existence}

As already mentioned before, for potentials $w\in L^\infty$ and initial states $\psi_0$ in $L^2$, Theorem \ref{th:local-long-range-intro} together with the conservation of mass (see Lemma \ref{lemma: en conservation} below) imply the existence of a unique global solution to \eqref{eq:hartree_semi_intro} associated to $\psi_0$. For more general potentials (and more regular initial data), the next  theorems provide the global existence of solutions to \eqref{eq:hartree_semi_intro} in two distinct regimes, namely assuming that the convolution potential $w$ has either a ``long-range'' or a ``short-range'' behavior. 

For long-range potentials, similarly as in \cite{lenzmann_semirel} (where the Yukawa-type interaction potential $w(x)=\kappa|x|^{-1}e^{-\mu|x|}$ is considered in dimension $d=3$, with $\kappa\in\mathbb{R}$, $\mu\ge0$), we can use the conservation of the energy (and of the mass) to obtain the global existence. To do that, we need that the energy defined in \eqref{eq:def_energy_intro} is well-defined and real. Hence we need to assume that the initial data $\psi_0$ belong to $H^s$ with $s\ge\frac12$ and that $w$ is even.

\begin{theo}[Global existence for long-range interaction potentials I]\label{th:global-long-range-intro}
	Let $s\ge\frac12$. There exists a universal constant $C_0>0$ such that, for all $w$ even of the form $w = w_d + w_\infty \in L^{d, \infty} + L^\infty$ and $\psi_0 \in H^{s}$ verifying
	\begin{equation}
	\label{eq: glob ex w_1 small-intro}
	    \| (w_d)_-\|_{L^{d, \infty}}\|\psi_0\|^2_{L^{2}}<C_0 ,
	\end{equation}
	Eq. \eqref{eq:hartree_semi_intro} admits a unique solution 
	\begin{equation*}
		\psi \in \mathcal{C}^0([0,\infty), H^{s}) \cap \mathcal{C}^1 ([0,\infty), H^{s-1}).
	\end{equation*}
\end{theo}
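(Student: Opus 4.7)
I would apply Theorem~\ref{th:local-long-range-intro} at regularity $s\ge\tfrac12$ to obtain a local solution $\psi\in\mathcal{C}^0([0,T_{\max}),H^s)$; the assumption $w\in L^{d,\infty}+L^\infty$ fits in $\setW$ for every $s\ge\tfrac12$ after splitting $w_d$ at level one (by layer-cake the superlevel piece belongs to $L^p$ for every $p<d$, hence to $L^{d/(2s),\infty}$ when $s<d/2$, and to $\meas$ when $s\ge d/2$). It then suffices, by the blow-up alternative, to bound $\|\psi_t\|_{H^s}$ a priori on each $[0,T]\subset[0,T_{\max})$. The plan is to first control $\|\psi_t\|_{H^{1/2}}$ by the conservation laws, and then to bootstrap to $H^s$ when $s>\tfrac12$.

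For the $H^{1/2}$ bound I would use the energy and mass conservation (assumed for now, see below) to rewrite the energy identity as
\begin{equation*}
    \|\psi_t\|_{H^{1/2}}^2 = 2E(\psi_0) - \tfrac12 \int (w*|\psi_t|^2)|\psi_t|^2,
\end{equation*}
then discard the non-negative $(w_d)_+$ contribution, bound the $w_\infty$ term by $\tfrac12\|w_\infty\|_{L^\infty}\|\psi_0\|_{L^2}^4$ via mass conservation, and handle the $(w_d)_-$ term by Young's inequality in Lorentz spaces together with duality:
\begin{equation*}
    \int ((w_d)_-*|\psi|^2)\,|\psi|^2 \le C\,\|(w_d)_-\|_{L^{d,\infty}}\,\|\psi\|_{L^{4d/(2d-1)}}^4.
\end{equation*}
The key numerological point is that $4d/(2d-1)$ is exactly the midpoint (interpolation parameter $\theta=\tfrac12$) between $L^2$ and the Sobolev endpoint $L^{2d/(d-1)}$ into which $\dot H^{1/2}$ embeds, hence
\begin{equation*}
    \|\psi\|_{L^{4d/(2d-1)}}^4 \le C\,\|\psi\|_{L^2}^2\,\|\psi\|_{\dot H^{1/2}}^2 \le C\,\|\psi_0\|_{L^2}^2\,\|\psi_t\|_{H^{1/2}}^2.
\end{equation*}
Choosing $C_0$ in \eqref{eq: glob ex w_1 small-intro} so that $\tfrac{C}{2}\|(w_d)_-\|_{L^{d,\infty}}\|\psi_0\|_{L^2}^2<1$, the resulting contribution can be absorbed into the left-hand side, yielding a uniform bound on $\|\psi_t\|_{H^{1/2}}$ for all $t\in[0,T_{\max})$.

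If $s>\tfrac12$, I would close the higher regularity estimate via Duhamel and Gronwall. Applying $\jnabla^s$ and using the unitarity of $e^{-it\jnabla}$ on $H^s$,
\begin{equation*}
    \|\psi_t\|_{H^s}\le \|\psi_0\|_{H^s} + \int_0^t \|(w*|\psi_\tau|^2)\psi_\tau\|_{H^s}\,d\tau,
\end{equation*}
and the multilinear estimates of Section~\ref{sec:prelim} would bound the integrand by $C(\|w\|_{\setW},\|\psi_\tau\|_{H^{1/2}})\,\|\psi_\tau\|_{H^s}$, linearly in the top norm. The coefficient is controlled uniformly in $\tau$ by the previous step, so Gronwall produces the required uniform bound on $\|\psi_t\|_{H^s}$ and forces $T_{\max}=\infty$.

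The main obstacle is twofold. Analytically, the absorption argument in the energy step relies on getting exactly the power $\|\psi\|_{\dot H^{1/2}}^2$ out of the nonlinear term: this forces the interpolation between $L^2$ and $L^{2d/(d-1)}$ to land precisely at the dual exponent of the Young pairing with $L^{d,\infty}$, which is the structural reason the smallness assumption reads $\|(w_d)_-\|_{L^{d,\infty}}\|\psi_0\|_{L^2}^2<C_0$ rather than a condition involving $\|\psi_0\|_{H^{1/2}}$. Justifying the conservation of $E$ at the mere $H^{1/2}$ regularity of the solution, with $w$ only in $L^{d,\infty}+L^\infty$, is the other technical point: one would first establish it for smoother approximants (where the evenness of $w$ symmetrizes $\int\partial_t|\psi|^2\,(w*|\psi|^2)$ with $\int(w*\partial_t|\psi|^2)\,|\psi|^2$, so that the formal derivative of $E$ vanishes against the equation), and then pass to the limit using the continuity statement of Theorem~\ref{th:local-long-range-intro} together with the continuity of the quartic form $\psi\mapsto\int(w*|\psi|^2)|\psi|^2$ on $H^{1/2}$, itself a consequence of the estimate derived above.
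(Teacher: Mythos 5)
Your proposal is correct and follows essentially the same route as the paper: local existence from Theorem~\ref{th:local-long-range-intro}, mass/energy conservation, absorption of the $(w_d)_-$ quartic term via the critical $L^{d,\infty}$ Young/Sobolev pairing to get a uniform $H^{1/2}$ bound under the smallness condition, and then Duhamel plus Gronwall with the multilinear estimates to propagate the bound to $H^s$. The only (cosmetic) difference is in the H\"older split of the quartic term — the paper puts $(w_d)_-*|\psi|^2$ in $L^\infty$ against $\|\psi\|_{L^2}^2$ using the Lorentz embedding $H^{1/2}\hookrightarrow L^{2d/(d-1),2}$, while you pair $L^{2d}$ with $L^{2d/(2d-1)}$ and interpolate — and both yield the same controlling product $\|(w_d)_-\|_{L^{d,\infty}}\|\psi_0\|_{L^2}^2\|\psi\|_{H^{1/2}}^2$.
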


\begin{rem}$ $
\begin{enumerate}
\item The universal constant $C_0$ appearing in the statement of the previous theorem can be chosen as $C_0=2C_S^{-1}$ where $C_{S}$ is the optimal constant in the Sobolev embedding $H^{\frac12} \hookrightarrow L^{\frac{2d}{d-1},2}$. Here $L^{\frac{2d}{d-1},2}$ stands for the usual Lorentz space (see Subsection \ref{subsec:Lorentz-ineq} for the definition).
\item The same continuity property with respect to the initial data as that stated in Theorem \ref{th:local-long-range-intro} holds.
\item The smallness condition \eqref{eq: glob ex w_1 small-intro} cannot be avoided since, as mentioned in the introduction, it is proven in \cite{frohlich_lenzm07} that finite time blow-up holds , in dimension $d=3$ and for the Newtonian potential $w(x)=-\kappa|x|^{-1}$ (which belongs to $L^{3,\infty}$) with $\kappa>0$ if the mass $M(\psi_0)=\|\psi_0\|_{L^2}^2$ of the initial state is larger than some critical value.
\end{enumerate}
\end{rem}

Using the Strichartz estimates given in Appendix \ref{app: pointw dec} and taking initial states with a small enough $H^{1/2}$-norm, the class of potentials $L^{d,\infty}+L^\infty$ considered in the previous theorem can be extended to $L^{(d+1)/2,\infty}+L^\infty$. More precisely, we have the following result.

\begin{theo}[Global existence for long-range interaction potentials II]\label{th:global-long-range-intro-small-energy-opt}
	Let $s=\frac12$. There exists a universal constant $C_0>0$ such that, for all $w$ even of the form $w = w_{d/2} + w_\infty \in L^{(d+1)/2,\infty} + L^\infty\subset L^{d/2}+L^\infty$ and $\psi_0 \in H^{\frac12}$ verifying
	\begin{equation*}
	    \| (w_{d/2})_-\|_{L^{\frac{d}2}}(E(\psi_0)+\|\psi_0\|^2_{H^{\frac12}}+\|(w_\infty)_-\|_{L^\infty}\|\psi_0\|^4_{L^{2}})<C_0 ,
	\end{equation*}
	Eq. \eqref{eq:hartree_semi_intro} admits a unique solution 
	\begin{equation*}
		\psi \in \mathcal{C}^0([0,\infty), H^{\frac12}) \cap \mathcal{C}^1 ([0,\infty), H^{-\frac12})\cap L^a_{\mathrm{loc}}([0,\infty),L^{b,2}),
	\end{equation*}
	where $\frac1a=\frac1b=\frac{d-1}{2d+2}$.
\end{theo}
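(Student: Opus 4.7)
The overall strategy parallels the proof of Theorem~\ref{th:global-long-range-intro}: one combines a local solution with the conservation laws to extract a uniform $H^{1/2}$-bound and then invokes the blow-up alternative. The key novelty is the enlarged class $L^{(d+1)/2,\infty}+L^\infty$ of admissible potentials, which forces us to use the Strichartz-based local existence (Theorem~\ref{th: local_with_Strichartz-intro}) and produces a \emph{quadratic} (rather than linear) inequality in the a priori estimate, to be closed by a bootstrap. Concretely, I would first apply Theorem~\ref{th: local_with_Strichartz-intro} with $s=\tfrac12$ (the hypothesis $s<(d+1)/(2d-2)$ is automatic for $d\ge 3$) to obtain a maximal-in-time solution on some $[0,T_{\max})$ in the target space with the stated Strichartz indices, together with the blow-up alternative. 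The inclusion $L^{(d+1)/2,\infty}\subset L^{d/2}+L^\infty$ (obtained by truncating $w$ at level one and using the weak-type bound $|\{|w|>t\}|\lesssim t^{-(d+1)/2}$) provides the decomposition $w=w_{d/2}+w_\infty$ with $w_{d/2}\in L^{d/2}$ appearing in the smallness condition. Since $w$ is real, even, and $\psi_0\in H^{1/2}$, Lemma~\ref{lemma: en conservation} yields $M(\psi_t)=M(\psi_0)$ and $E(\psi_t)=E(\psi_0)$ on $[0,T_{\max})$.

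To bound the negative part of the interaction energy I would combine Young's convolution inequality (with the exponent relation $\tfrac1d+1=\tfrac2d+\tfrac{d-1}{d}$) and the Sobolev embedding $H^{1/2}\hookrightarrow L^{2d/(d-1)}$ to obtain
\begin{equation*}
    \Big|\int\big((w_{d/2})_-*|\psi_t|^2\big)|\psi_t|^2\Big|\le C\,\|(w_{d/2})_-\|_{L^{d/2}}\,\|\psi_t\|_{L^{2d/(d-1)}}^4\le C'\,\|(w_{d/2})_-\|_{L^{d/2}}\,\|\psi_t\|_{H^{1/2}}^4,
\end{equation*}
while mass conservation controls the $L^\infty$ piece by $\|(w_\infty)_-\|_{L^\infty}\|\psi_0\|_{L^2}^4$. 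Plugging these bounds into the conserved-energy identity and setting $X(t):=\|\psi_t\|_{H^{1/2}}^2$ produces an inequality of the form $X(t)\le A+B\,X(t)^2$, where $A$ is a positive multiple of $E(\psi_0)+\|\psi_0\|_{L^2}^2+\|(w_\infty)_-\|_{L^\infty}\|\psi_0\|_{L^2}^4$ and $B$ is a positive multiple of $\|(w_{d/2})_-\|_{L^{d/2}}$, with explicit universal constants.

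The global bound is closed by a bootstrap. The smallness hypothesis, with $C_0$ chosen small enough in terms of the Sobolev/Young constants, is tuned so that simultaneously $4AB<1$ (so that the admissible set $\{X\ge 0:X\le A+BX^2\}$ splits as $[0,X_-]\cup[X_+,\infty)$ with two distinct positive roots $X_\pm$) and $X(0)=\|\psi_0\|_{H^{1/2}}^2<X_-$ (placing the initial datum in the left component, which is what forces the explicit $\|\psi_0\|_{H^{1/2}}^2$ summand in the hypothesis). Continuity of $t\mapsto X(t)$ then prevents $X(\cdot)$ from jumping across the forbidden interval $(X_-,X_+)$, so $X(t)\le X_-$ throughout $[0,T_{\max})$; the blow-up alternative upgrades this to $T_{\max}=\infty$, and iterating Theorem~\ref{th: local_with_Strichartz-intro} on successive windows whose length is controlled by the uniform $H^{1/2}$-bound delivers the $L^a_{\mathrm{loc}}L^{b,2}$ Strichartz regularity on $[0,\infty)$. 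The main obstacle, compared with Theorem~\ref{th:global-long-range-intro}, is precisely the \emph{quadratic} character of the inequality $X\le A+BX^2$: the nonlinear term can no longer be absorbed automatically, and one must instead run this continuity/bootstrap argument, which is also what explains why $\|\psi_0\|_{H^{1/2}}^2$ must appear additively in the smallness hypothesis.
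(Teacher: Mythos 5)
Your proposal follows essentially the same route as the paper's proof of Theorem \ref{th:global-long-range-intro-small-energy}: local existence via the Strichartz-based theorem, conservation of mass and energy combined with Young/H\"older/Sobolev to obtain the quadratic inequality $X(t)\le A+BX(t)^2$ for $X(t)=\|\psi_t\|_{H^{1/2}}^2$, and a continuity argument confining $X(t)$ to the lower branch $[0,X_-]$. The one point to present with more care is the order in which you invoke the blow-up alternative: for Theorem \ref{th: local_with_Strichartz} it reads $\lim_{t\to T_{\max}}\big(\|\psi_t\|_{H^{1/2}}+\|\psi\|_{L^a([0,T_{\max}),L^{b,2})}\big)=\infty$, so the uniform $H^{1/2}$ bound alone does not yet yield $T_{\max}=\infty$; you must first establish finiteness of the Strichartz norm on $[0,T_{\max})$ --- your window-restart argument does exactly this and is an acceptable substitute for the paper's explicit a priori Strichartz estimate summed over subintervals of uniform length --- and only then conclude globality.
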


A more general version of Theorem \ref{th:global-long-range-intro-small-energy-opt} will be given in Theorem \ref{th:global-long-range-intro-small-energy} below, involving any admissible pair $(a,b)$ for the Strichartz estimates from Appendix \ref{app: pointw dec}. A related result is proven in \cite{ChoOzawa2} for sums of potentials of the form $w_i(x)=\kappa_i|x|^{-\alpha_i}$ with $0<\alpha_i<2d/(d+1)$ (which corresponds to potentials in $L^{q_1}+L^{q_2}$ with $(d+1)/2< q_1,q_2<\infty$).

For short-range interaction potentials, the conservation of energy is not sufficient anymore to obtain the global existence of solutions to \eqref{eq:hartree_semi_intro}, but one can rely instead on dispersive estimates satisfied by the free half-Klein-Gordon equation $i\partial_t\psi_t=\langle\nabla\rangle\psi_t$, see \eqref{eq:dispersive_est_intro} for the $L^1\to L^\infty$ estimate. For $s>\frac{d}2$, $r\ge0$ and $1\le p\le\infty$, we introduce the subspace~$S^{s,r,p}$ of~$L^{\infty}([0,\infty),H^{s})$ containing the functions~$\varphi$ such that
\begin{equation}\label{eq:def-Ssrp}
\|\varphi\|_{S^{s,r,p}}:=\sup_{t\ge0}\|\varphi_{t}\|_{H^{s}}+\sup_{t\ge0}\langle t\rangle^{\frac{d}{2r}}\|\varphi_{t}\|_{L^{\infty}\cap L^{p}}<\infty\,.
\end{equation}
We also use the notation $H^{s,p}$ for the Bessel-Sobolev space with norm $\|f\|_{H^{s,p}} = \|\langle \nabla \rangle^s f\|_{L^p}$ (see Subsection \ref{subsec:weighted-Bessel} for the precise definition).

	\begin{theo}[Global existence for short-range interaction potentials]
		\label{th:global-short-range-intro}
	Let $s\ge\frac{d}{2} +1$, $1\le r <d$ and $1\le q<\frac{2d}{3}$ be such that $\frac{1}{d}+\frac{1}{2r}<\frac{1}{q}$. Define $p$ by the relation $1=\frac{1}{r}+\frac{2}{p}$.

There exists $\varepsilon_0>0$ such that the following holds: for all $w\in\meas+L^{q}$ and~$\psi_{0}\in H^{s}\cap H^{s,p'}$ satisfying
\begin{equation}\label{eq:cond_w_psi0 intro}
\|w\|_{\meas+L^{q}}\,\|\psi_{0}\|_{H^{s}\cap H^{s,p'}}^{2}\le\varepsilon_0,
\end{equation}
Eq.~\eqref{eq:hartree_semi_intro} admits a unique solution $\psi$ in $S^{s,r,p}$.
 
 This solution $\psi$ belongs to $\mathcal{C}^0([0,\infty),H^{s})\cap \mathcal{C}^1([0,\infty),H^{s-1})$ and satisfies, for all $t\ge0$,
\begin{align}
&\|\psi_t\|_{L^\infty\cap L^{p}} \lesssim \langle t\rangle^{-\frac{d}{2r}}\|\psi_0\|_{H^{s}\cap H^{s,p'}},\label{eq:estim1_psit}\\
&\big\|\psi_t-\psi^{(0)}_t\big\|_{L^\infty\cap L^{p}} \lesssim \varepsilon_0\langle t\rangle^{-\frac{d}{2r}}\|\psi_0\|_{H^{s}\cap H^{s,p'}},\\
&\|\psi_t\|_{H^s} \lesssim \|\psi_0\|_{H^{s}\cap H^{s,p'}}, \label{eq:estim_psit_energy}\\
&\big\|\psi_t-\psi^{(0)}_t\big\|_{H^s} \lesssim \varepsilon_0 \|\psi_0\|_{H^{s}\cap H^{s,p'}},\label{eq:estim4_psit}
\end{align}
where $\psi^{(0)}_t:=e^{-it\langle\nabla\rangle}\psi_0$. In particular, the map $\psi_0\mapsto \psi\in S^{s,r,p}$, defined on the subset of $\psi_0$'s in $H^s\cap H^{s,p'}$ satisfying \eqref{eq:cond_w_psi0 intro},
is continuous.
	\end{theo}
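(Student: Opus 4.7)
The plan is to solve the Duhamel formulation
\[
\psi=\Phi(\psi),\qquad \Phi(\psi)_t:=\psi^{(0)}_t-i\int_0^t e^{-i(t-\tau)\langle\nabla\rangle}(w*|\psi_\tau|^2)\psi_\tau\,d\tau,
\]
by a Banach fixed point argument in a closed ball $B_R\subset S^{s,r,p}$ of radius $R=2C\|\psi_0\|_{H^s\cap H^{s,p'}}$ for a suitable universal $C$. The space $S^{s,r,p}$ is tailored to the Klein-Gordon propagator: the decay rate $d/(2r)$ is what one obtains by interpolating the $L^2$-unitarity of $e^{-it\langle\nabla\rangle}$ with the $L^1\to L^\infty$ estimate \eqref{eq:dispersive_est_intro} at parameter $\theta=1/r$, while the assumption $s\ge d/2+1$ is calibrated to absorb the derivative loss of $d/2+1$ in \eqref{eq:dispersive_est_intro}.

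First I would check that the linear term lies in the right ball: by unitarity one has $\|\psi^{(0)}_t\|_{H^s}=\|\psi_0\|_{H^s}$, and the dispersive/Strichartz-type estimates of Appendix~\ref{app: pointw dec} give $\|\psi^{(0)}_t\|_{L^\infty\cap L^p}\lesssim\langle t\rangle^{-d/(2r)}\|\psi_0\|_{H^{s,p'}}$, so $\|\psi^{(0)}\|_{S^{s,r,p}}\lesssim\|\psi_0\|_{H^s\cap H^{s,p'}}$. The heart of the proof is then the trilinear estimate
\[
\|\Phi(\psi)-\psi^{(0)}\|_{S^{s,r,p}}\lesssim\|w\|_{\meas+L^q}\|\psi\|_{S^{s,r,p}}^3
\]
together with its Lipschitz counterpart
\[
\|\Phi(\psi_1)-\Phi(\psi_2)\|_{S^{s,r,p}}\lesssim\|w\|_{\meas+L^q}\bigl(\|\psi_1\|_{S^{s,r,p}}^2+\|\psi_2\|_{S^{s,r,p}}^2\bigr)\|\psi_1-\psi_2\|_{S^{s,r,p}}.
\]
Combined with \eqref{eq:cond_w_psi0 intro} and $\varepsilon_0$ small enough, these make $\Phi$ a strict contraction of $B_R$ into itself, and Banach's theorem provides the unique fixed point $\psi\in S^{s,r,p}$.

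To prove the trilinear bound, split $w=\mu+v$ with $\mu\in\meas$, $v\in L^q$: the measure part is handled by $\|\mu*|\psi|^2\|_{L^\infty}\le\|\mu\|_\meas\|\psi\|_{L^\infty}^2$ together with the fact that convolution with $\mu$ preserves every Bessel-Sobolev space, while the $L^q$ part is controlled by Young's inequality, the hypothesis $1/d+1/(2r)<1/q$, and the Sobolev embedding $H^s\hookrightarrow L^\infty$ (valid since $s>d/2$). A Kato-Ponce fractional Leibniz inequality then gives
\[
\|(w*|\psi_\tau|^2)\psi_\tau\|_{H^s}\lesssim\|w\|_{\meas+L^q}\|\psi_\tau\|_{L^\infty\cap L^p}^2\|\psi_\tau\|_{H^s}\lesssim\|w\|_{\meas+L^q}\langle\tau\rangle^{-d/r}\|\psi\|_{S^{s,r,p}}^3,
\]
whose integral in $\tau$ is finite because $r<d$; by $L^2$-unitarity, this controls the $H^s$-part of $\Phi(\psi)-\psi^{(0)}$. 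The $L^\infty\cap L^p$-part is obtained by splitting the Duhamel integral at $t/2$ and applying the dispersive bound on each piece; the condition $1/d+1/(2r)<1/q$ is precisely what makes the two resulting integrals reproduce the sharp rate $\langle t\rangle^{-d/(2r)}$.

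Once $\psi$ is constructed, estimates \eqref{eq:estim1_psit}--\eqref{eq:estim4_psit} follow by reading off $\psi=\psi^{(0)}+(\Phi(\psi)-\psi^{(0)})$ and using the trilinear bound evaluated at $\psi$. The $\mathcal{C}^0([0,\infty),H^s)\cap\mathcal{C}^1([0,\infty),H^{s-1})$ regularity is inherited from Theorem~\ref{th:local-long-range-intro}: the uniform bound $\sup_t\|\psi_t\|_{H^s}<\infty$ rules out the blow-up alternative, so $T_{\max}=\infty$, and differentiating Duhamel's formula delivers the $\mathcal{C}^1$-regularity. Continuity of $\psi_0\mapsto\psi$ is a direct consequence of the Lipschitz bound. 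The main obstacle will be to reproduce the sharp decay $\langle t\rangle^{-d/(2r)}$ through the nonlinear Duhamel integral while keeping the derivative loss of \eqref{eq:dispersive_est_intro} under control; this balance forces the precise numerology encoded in the hypotheses $r<d$, $q<2d/3$ and $1/d+1/(2r)<1/q$, any relaxation of which would either break the time integrability of $\langle\tau\rangle^{-d/r}$ or fail to deliver the sharp decay exponent.
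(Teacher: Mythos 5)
Your proposal follows essentially the same route as the paper: Duhamel's formula, a contraction in the ball of $S^{s,r,p}$ centred at the free evolution, trilinear estimates obtained by splitting $w$ into a measure part and an $L^q$ part, and the dispersive decay of $e^{-it\langle\nabla\rangle}$ to close the $L^p\cap L^\infty$ component. Two points deserve care. First, your claimed bound $\|(w_q*|\psi_\tau|^2)\psi_\tau\|_{H^s}\lesssim\|w_q\|_{L^q}\|\psi_\tau\|_{L^\infty\cap L^p}^2\|\psi_\tau\|_{H^s}$ is only valid when $q\le r$ (so that $2q'\ge p$ and $L^{2q'}$ interpolates between $L^p$ and $L^\infty$); for $q>r$ one must interpolate $L^{2q'}$ between $L^2$ and $L^p$, which replaces the decay $\langle\tau\rangle^{-d/r}$ by $\langle\tau\rangle^{-(d/r)\min\{1,r/q\}}$ --- still integrable since the hypotheses force $q<d$, so the argument survives, but the stated rate is not correct as written. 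Second, to run the dispersive estimate inside the Duhamel integral you need a bound on $\|(w*|\psi_\tau|^2)\psi_\tau\|_{H^{s,p'}}$, not just on its $H^s$-norm; this is the step your sketch leaves implicit, and it is exactly where the paper's weighted-exponent multilinear estimates produce the rates $\langle\tau\rangle^{-(d/r)\theta(r-\frac12)}$ and $\langle\tau\rangle^{-(d/r)\theta(\frac{r}{q}-\frac12)}$ whose integrability is the true source of the hypothesis $\frac1d+\frac1{2r}<\frac1q$. Neither point changes the strategy, but both must be supplied for the proof to be complete.
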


\begin{rem}
	Theorem~\ref{th:global-short-range-intro} gives global existence for all potentials $w \in L^{q,\infty}$ with $1< q<\frac{2d}{3}$, by applying the theorem with $q+\varepsilon<\frac{2d}{3}$ and observing that $L^{q,\infty}\subset \meas + L^{q+\varepsilon} $. In Figure \ref{fig:well-posedness-short} we represent the class of admissible potentials for Theorem~\ref{th:global-short-range-intro}. 
\end{rem}

As mentioned in the introduction, global existence -- and small initial data scattering -- are proven in \cite{ChoOzawa2} for (sums of) short-range interaction potentials of the form $w_i(x)=\kappa_i|x|^{-\alpha_i}$ with $2<\alpha_i<d$ (which corresponds to potentials in $L^{q_1}+L^{q_2}$ with $1< q_1,q_2<d/2$). The proof in \cite{ChoOzawa2} relies in particular on the use of Strichartz estimates with Schrödinger admissible pairs. See also \cite{NakamuraTsutaya} for a single potential $w$ satisfying $|w(x)|\lesssim|x|^{-\alpha}$ with $\frac32<\alpha<d$ and \cite{Yang} for small data scattering for low regularity initial states and a smooth potential $w$.

\subsection{Scattering states and scattering operators}

In the case of short-range interaction potentials, one can show that the global solution given by Theorem \ref{th:global-short-range-intro} scatters to a free solution in $H^s$, in the sense of the following theorem. For any $\delta>0$, we denote by  $\mathcal{B}_{\mathcal{E}}(\delta)$ the closed ball of radius $\delta$ in a normed vector space~$\mathcal{E}$.

	\begin{theo}[Scattering for short-range interaction potentials]
		\label{th:scattering-short-range-intro}
Under the conditions of Theorem \ref{th:global-short-range-intro}, the global solution $\psi$ to \eqref{eq:hartree_semi_intro} scatters to a free solution:
			\begin{equation}\label{eq:decay-scat0-intro}
			\| \psi_t - e^{-it\langle\nabla\rangle} \psi_+\|_{H^s} \to 0, \qquad t\to\infty,
		\end{equation}
		where the scattering state $\psi_+\in H^s$ is defined by
		\begin{equation}\label{eq:def_psi+}
			\psi_+ := \psi_0 - i \int_0^\infty e^{i\tau\langle\nabla\rangle} (w * |\psi_\tau|^2)\psi_\tau\, \mathrm{d}\tau.
		\end{equation}
Moreover there exists $\delta_w>0$ such that the ``inverse'' wave operator 
		\begin{align*}
    W_+:\mathcal{B}_{H^s\cap H^{s,p'}}(\delta_w)&\to H^s, \\ 
    \psi_0&\mapsto  W_+\psi_0:= \psi_0 - i \int_0^\infty e^{i\tau\langle\nabla\rangle} (w * |\psi_\tau|^2)\psi_\tau\, \mathrm{d}\tau,
\end{align*}
is continuous. 
	\end{theo}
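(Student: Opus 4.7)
The plan is to derive scattering from the Duhamel representation. Since $\psi\in\mathcal{C}^0([0,\infty),H^s)\cap\mathcal{C}^1([0,\infty),H^{s-1})$ by Theorem~\ref{th:global-short-range-intro}, it satisfies
\begin{equation*}
e^{it\jnabla}\psi_t \;=\; \psi_0 \;-\; i\int_0^t e^{i\tau\jnabla}(w*|\psi_\tau|^2)\psi_\tau\, d\tau.
\end{equation*}
Because $e^{-it\jnabla}$ is unitary on $H^s$, the convergence \eqref{eq:decay-scat0-intro} is equivalent to showing that the Duhamel integral converges in $H^s$ as $t\to\infty$; and for this it suffices to establish
\begin{equation*}
\int_0^\infty\big\|(w*|\psi_\tau|^2)\psi_\tau\big\|_{H^s}\, d\tau<\infty.
\end{equation*}

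To get this, I would invoke the multilinear estimate already used in the fixed-point setup for Theorem~\ref{th:global-short-range-intro}, namely a bound of the form
\begin{equation*}
\|(w*|\varphi|^2)\varphi\|_{H^s}\;\lesssim\;\|w\|_{\meas+L^q}\,\|\varphi\|_{L^\infty\cap L^p}^{2}\,\|\varphi\|_{H^s},
\end{equation*}
which is compatible with the scaling relation $1=1/r+2/p$ and uses $s\ge d/2+1$ (so that $H^s$ is an algebra, handling the product with $w*|\varphi|^2$). Substituting the decay \eqref{eq:estim1_psit} and the uniform bound \eqref{eq:estim_psit_energy} yields
\begin{equation*}
\|(w*|\psi_\tau|^2)\psi_\tau\|_{H^s}\;\lesssim\;\langle\tau\rangle^{-d/r}\,\|w\|_{\meas+L^q}\,\|\psi_0\|_{H^s\cap H^{s,p'}}^{3}.
\end{equation*}
The hypothesis $r<d$ gives $d/r>1$, so the majorant is integrable on $[0,\infty)$. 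Hence the Duhamel integral converges absolutely in $H^s$, defining $\psi_+\in H^s$ by the formula \eqref{eq:def_psi+}, and \eqref{eq:decay-scat0-intro} follows.

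For the continuity of $W_+$, the last sentence of Theorem~\ref{th:global-short-range-intro} already provides continuity of the solution map $\psi_0\mapsto\psi$ from $\mathcal{B}_{H^s\cap H^{s,p'}}(\delta_w)$ into $S^{s,r,p}$ for a suitably small radius $\delta_w$. Next I would check that the functional $\psi\mapsto\psi_0-i\int_0^\infty e^{i\tau\jnabla}(w*|\psi_\tau|^2)\psi_\tau\, d\tau$ is continuous from (the image of) $S^{s,r,p}$ into $H^s$; this is obtained by applying the same multilinear estimate to differences $\psi-\tilde{\psi}$, getting a Lipschitz-type integrand dominated by $C\langle\tau\rangle^{-d/r}$ times the $S^{s,r,p}$-norm of $\psi-\tilde{\psi}$, and invoking dominated convergence. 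Composing the two continuous maps gives the continuity of $W_+$.

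The main technical point is verifying that the product/convolution estimate closes with an integrable tail, especially at the endpoint $r=1$ (where $p=\infty$ and $p'=1$): there the bookkeeping between $\|w\|_{\meas+L^q}$, $\|\psi_\tau\|_{L^\infty\cap L^p}$ and $\|\psi_\tau\|_{H^s}$ must be done via the Bessel-Sobolev calculus of Section~\ref{sec:prelim} (commutators of $\jnabla^s$ with products, Young/Minkowski on Lorentz spaces, etc.). This is, however, exactly the same estimate that powers the contraction argument for Theorem~\ref{th:global-short-range-intro}, so essentially no new analytic input is needed beyond what the global existence theorem already delivers.
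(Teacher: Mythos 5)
Your proposal follows essentially the same route as the paper: write the Duhamel integral, bound the integrand in $H^s$ via the multilinear estimates of Section~\ref{sec:prelim} combined with \eqref{eq:estim1_psit}--\eqref{eq:estim_psit_energy}, conclude absolute convergence of $\int_0^\infty e^{i\tau\jnabla}(w*|\psi_\tau|^2)\psi_\tau\,d\tau$ in $H^s$, and obtain continuity of $W_+$ by applying the same estimate to differences together with the continuity of the solution map. One imprecision: the bound $\|(w*|\varphi|^2)\varphi\|_{H^s}\lesssim\|w\|_{\meas+L^q}\|\varphi\|_{L^\infty\cap L^p}^{2}\|\varphi\|_{H^s}$ as you state it only holds for the $L^q$ part of $w$ when $q\le r$ (so that $2q'\ge p$); in the admissible regime $r<q<2r$ one must interpolate $\|\varphi\|_{L^{2q'}}$ between $L^2$ and $L^p$, which is exactly the $\theta(r/q)=\min\{1,r/q\}$ exponent in \eqref{eq:convolution-Hartree-Hs-Lq-theta}, yielding decay $\langle\tau\rangle^{-\frac{d}{r}\theta(r/q)}$ rather than $\langle\tau\rangle^{-d/r}$. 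The argument still closes because $q<\frac{2d}{3}<d$ guarantees $\frac{d}{r}\theta(r/q)>1$ (the justification ``$r<d$'' alone is not sufficient), consistent with the rate recorded in Remark~\ref{rk:decay-scat0-intro-rate}.
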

	
\begin{rem}\label{rk:decay-scat0-intro-rate}
    Our proof gives an explicit rate of decay in \eqref{eq:decay-scat0-intro}, namely we have the bound
    \begin{equation}\label{eq:decay-scat0-intro-rate}
			\| \psi_t - e^{-it\langle\nabla\rangle} \psi_+\|_{H^s} \lesssim \langle t\rangle^{1-\frac dr \min\{1, \frac{r}{q}\}}\|\psi_0\|_{H^s\cap H^{s,p'}},
		\end{equation}
	uniformly in $t\ge0$.
\end{rem}

As in previous works \cites{ChoOzawa1,ChoOzawa2,ChoOzawaSazaki}, the proof of Theorem \ref{th:scattering-short-range-intro} is a rather straightforward application of the regularity and decay properties of the solution to \eqref{eq:hartree_semi_intro}, see \eqref{eq:estim1_psit}--\eqref{eq:estim4_psit} in our context.

The wave operator $\Omega_+$, mapping any scattering state $\psi_+$ to an initial state $\psi_0$ can be defined similarly as $W_+$, constructing a solution to the Cauchy problem at $\infty$ for the pseudo-relativistic Hartree equation,
\begin{align}
	\label{eq: hartree semir infinity}
	\begin{cases}
				i\partial_t \psi_t = (\langle\nabla\rangle + w*|\psi_t|^2)\psi_t, \\
    \lim_{t\to\infty}\|\psi_t-e^{-it\langle\nabla\rangle}\psi_+\|_{L^2}=0. 
	\end{cases}
\end{align}
See Theorems \ref{th:global-short-range-infinity} and \ref{cor:scat2} below for precise statements. Formally, $\Omega_+$ is the inverse of $W_+$ by definition; however, since $W_+$ maps $H^s\cap H^{s,p'}$ to $H^s$ and $\Omega_+$ is defined on (a small ball in) $H^s\cap H^{s,p'}$, the composition $\Omega_+W_+$ is ill-defined in general. To overcome this difficulty, for suitable values of $\gamma$ and $s$, we restrict $W_+$ to the weighted Sobolev space $H^s_\gamma$ with norm
\begin{equation*}
    \|\varphi\|_{H^s_\gamma}:=\|\langle x\rangle^\gamma\langle\nabla\rangle^s\varphi\|_{L^2}.
\end{equation*}
We will choose parameters ensuring that $H^s_\gamma\hookrightarrow H^s\cap H^{s,p'}$, and hence that $W_+$ is well-defined on a ball in $H^s_\gamma$ with sufficiently small radius. Moreover, restricting the class of admissible potentials to $w\in\mathcal{M}+L^q$ with $1\le q<\frac{d}{2}$, we will show that $W_+$ maps this ball in (a small ball in) $H^s\cap H^{s,p'}$, and likewise for $\Omega_+$. We then have the following result on the invertibility of the wave operator.

\begin{theo}[Right invertibility of the wave operators for short-range interaction potentials]\label{thm:invertibility-intro}
    Let $s\ge\frac d2+1$, $\max\{1,\frac{d}{4}\}\le r<\frac{d}{2}$, $1\le q\le r$ and $\frac{d}{2r}<\gamma<\min\{2,\frac dr-1\}$. Let $w \in \meas + L^q$.  There exists $\delta_w>0$ such that, for all $\varphi\in\mathcal{B}_{H^s_\gamma}(\delta_w)$,
	\begin{align}\label{eq:inverse-intro}
		\Omega_+W_+\varphi=W_+\Omega_+\varphi=\varphi.
	\end{align}
\end{theo}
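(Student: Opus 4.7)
The plan is to reduce both equalities $\Omega_+W_+\varphi = \varphi$ and $W_+\Omega_+\varphi = \varphi$ to uniqueness of solutions of the boson star equation, respectively at $t=0$ and at $t=+\infty$. For this program to make sense one must first ensure that the compositions are well-defined: given $\varphi\in\mathcal{B}_{H^s_\gamma}(\delta_w)$, both $W_+\varphi$ and $\Omega_+\varphi$ have to lie, with small norm, in $H^s\cap H^{s,p'}$, which is the natural domain of the wave operators coming out of Theorem~\ref{th:scattering-short-range-intro} and its $\Omega_+$ counterpart. That $\varphi$ itself belongs to this space follows from Hölder's inequality,
\begin{equation*}
\|\langle\nabla\rangle^s\varphi\|_{L^{p'}} \le \|\langle x\rangle^{-\gamma}\|_{L^{2r}}\,\|\langle x\rangle^\gamma\langle\nabla\rangle^s\varphi\|_{L^2},
\end{equation*}
which is finite as soon as $2r\gamma>d$, i.e.\ under the lower bound $\gamma>d/(2r)$; here $p'=2r/(r+1)$ is the conjugate of the exponent $p=2r/(r-1)$ defined by $1=1/r+2/p$. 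This yields a continuous embedding $H^s_\gamma \hookrightarrow H^s\cap H^{s,p'}$.

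The key technical step is then to show that, after possibly shrinking $\delta_w$, both $W_+$ and $\Omega_+$ send $\mathcal{B}_{H^s_\gamma}(\delta_w)$ into $\mathcal{B}_{H^s\cap H^{s,p'}}(\delta_w)$. For $W_+$, I start from
\begin{equation*}
W_+\varphi = \varphi - i\int_0^\infty e^{i\tau\langle\nabla\rangle}(w*|\psi_\tau|^2)\psi_\tau\,d\tau
\end{equation*}
with $\psi$ the forward solution from Theorem~\ref{th:global-short-range-intro}; the $H^s$ control is already contained in \eqref{eq:estim4_psit}, so only the $H^{s,p'}$ control is at stake. I estimate $\|(w*|\psi_\tau|^2)\psi_\tau\|_{H^{s,p'}}$ by Hölder and Young inequalities (the latter in Lorentz spaces, to cover $w\in\mathcal{M}+L^q$), combined with the dispersive decay \eqref{eq:estim1_psit} and the uniform bound \eqref{eq:estim_psit_energy}. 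The strengthened hypothesis $q\le r$ (sharper than the $q<2d/3$ used in Theorem~\ref{th:global-short-range-intro}), together with the upper bound $\gamma<d/r-1$, is exactly what produces $\tau$-integrability at infinity. The analogous statement for $\Omega_+$ is proved by the same scheme, replacing the forward fixed-point construction by the one at $t=+\infty$ underlying Theorems~\ref{th:global-short-range-infinity}--\ref{cor:scat2}.

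Once this mapping property is established, the identities follow from uniqueness. For $\Omega_+W_+\varphi$: let $\psi\in S^{s,r,p}$ be the forward solution with $\psi_0=\varphi$; then $\psi_+:=W_+\varphi$ lies in $\mathcal{B}_{H^s\cap H^{s,p'}}(\delta_w)$, and $\Omega_+\psi_+$ is, by construction, the unique solution in $S^{s,r,p}$ of \eqref{eq: hartree semir infinity} scattering to $\psi_+$ at $t=+\infty$. But $\psi$ itself enjoys this same scattering property by \eqref{eq:decay-scat0-intro} and lies in the same small ball of $S^{s,r,p}$, so uniqueness of the fixed point forces $\Omega_+\psi_+=\psi_0=\varphi$. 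The identity $W_+\Omega_+\varphi=\varphi$ follows by the symmetric argument using uniqueness of the forward Cauchy problem. I expect the main obstacle to be the mapping step: the propagator $e^{i\tau\langle\nabla\rangle}$ is not an isometry on $L^{p'}$, so one cannot pull the $H^{s,p'}$ norm through the Duhamel integral as in $H^s$, and the precise interplay of the exponents in the hypotheses ($\max\{1,d/4\}\le r<d/2$, $q\le r$, $d/(2r)<\gamma<\min\{2,d/r-1\}$) is exactly what is needed for those $H^{s,p'}$ estimates to close rather than merely for the weaker $H^s$ bound used in the global existence theorem.
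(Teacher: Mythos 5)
Your overall architecture (embed $H^s_\gamma\hookrightarrow H^s\cap H^{s,p'}$ via Hölder using $\gamma>\tfrac{d}{2r}$; show the wave operators land in the domain of each other; conclude by uniqueness of the forward Cauchy problem and of the Cauchy problem at infinity in $S^{s,r,p}$) matches the paper's, and the final uniqueness step is fine. The gap is exactly where you suspect it: the mapping step. Your plan is to control the Duhamel tail $\int_0^\infty e^{i\tau\langle\nabla\rangle}(w*|\psi_\tau|^2)\psi_\tau\,\mathrm{d}\tau$ directly in $H^{s,p'}$ by estimating the integrand $\|(w*|\psi_\tau|^2)\psi_\tau\|_{H^{s,p'}}$ pointwise in $\tau$ and integrating. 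This cannot close as stated, because $e^{i\tau\langle\nabla\rangle}$ is not bounded on $H^{s,p'}$ uniformly in $\tau$ for $p'\neq2$ (there is a $\tau$-growing loss for the half-Klein-Gordon flow on $L^{p'}$), so the norm cannot be pulled inside the integral. Moreover, the integrand bound you would obtain from Hölder/Young and \eqref{eq:estim1_psit}--\eqref{eq:estim_psit_energy} decays like $\langle\tau\rangle^{-\frac dr\theta(\cdot)}$ and does not involve $\gamma$ at all, so your assertion that the hypothesis $\gamma<\frac dr-1$ "is exactly what produces $\tau$-integrability" in that estimate is unfounded: in your scheme $\gamma$ only enters through the initial embedding, and the upper bound on $\gamma$ is never used.

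The paper's resolution is to run the mapping estimate entirely in the weighted space $H^s_\gamma$ rather than in $H^{s,p'}$. The free flow \emph{is} controllable there: Lemma~\ref{lm:xgamma-lin} gives $\|e^{i\tau\langle\nabla\rangle}g\|_{L^2_\gamma}\lesssim\|g\|_{L^2_\gamma}+\langle\tau\rangle^{\gamma}\|g\|_{L^2}$, so the price of the weight is a polynomial factor $\langle\tau\rangle^{\gamma}$. One then needs weighted bounds on the nonlinear solution itself, $\|\psi_\tau\|_{H^s_\gamma}\lesssim\langle\tau\rangle^{\gamma}\|\psi_0\|_{H^s_\gamma}$ and $\|\psi_\tau\|_{L^p_\gamma\cap L^\infty_\gamma}\lesssim\langle\tau\rangle^{\gamma-\frac{d}{2r}}\|\psi_0\|_{H^s_\gamma}$ (Lemmas~\ref{lm:estim1_S} and~\ref{eq:estim_2S}, themselves proved by Gronwall from the weighted multilinear estimates of Lemmas~\ref{lemma: weight multilin} and~\ref{lemma: weight multilin 2}); these combine to give $\|e^{i\tau\langle\nabla\rangle}\langle\nabla\rangle^s[(w*|\psi_\tau|^2)\psi_\tau]\|_{L^2_\gamma}\lesssim\varepsilon_0\langle\tau\rangle^{\gamma-\frac dr}\|\psi_0\|_{H^s_\gamma}$, and it is here --- balancing the $\langle\tau\rangle^{\gamma}$ weight growth against the $\langle\tau\rangle^{-d/r}$ nonlinear decay --- that $\gamma<\frac dr-1$ is genuinely needed. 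This yields $\|W_+\psi_0-\psi_0\|_{H^s_\gamma}\lesssim\varepsilon_0\|\psi_0\|_{H^s_\gamma}$ (and likewise for $\Omega_+$), and only at the very end does the embedding $H^s_\gamma\hookrightarrow H^s\cap H^{s,p'}$ convert this into membership of $W_+\varphi$ in the domain of $\Omega_+$. Your proof needs this weighted detour (or a genuinely different mechanism, e.g.\ a dual dispersive estimate for the time integral, which you do not supply) to make the key step rigorous.
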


\begin{rem}
    The operators $\Omega_+$ and $W_+$ may not map $\mathcal{B}_{H^s_\gamma}(\delta_w)$ into itself. We only have that $\Omega_+$ and $W_+$ map $\mathcal{B}_{H^s_\gamma}(\delta_w)$ into $\mathcal{B}_{H^s_\gamma}(\delta'_w)$ for some $\delta'_w>\delta_w$ (one can take $\delta'_w=\delta_w+C\varepsilon_0$ for some positive constant $C$ and some small enough $\varepsilon_0>0$, see Theorem \ref{thm:invertibility} below for a precise statement). Eq.~\eqref{eq:inverse-intro} therefore shows that $W_+:\mathrm{Ran}\,\Omega_+\to\mathcal{B}_{H^s_\gamma}(\delta_w)$ is a right inverse of $\Omega_+:\mathcal{B}_{H^s_\gamma}(\delta_w)\to\mathrm{Ran}\,\Omega_+$ and vice versa.  
\end{rem}

The proof of Theorem \ref{thm:invertibility-intro} requires to control, in a rather precise way, the asymptotic behavior in weighted $L^p$ spaces of global solutions to  \eqref{eq:hartree_semi_intro} (constructed in Theorem \ref{th:scattering-short-range-intro}). For this reason, the class of allowed potentials in Theorem \ref{thm:invertibility-intro} is more restrictive than that considered in Theorem \ref{th:scattering-short-range-intro}. Our proof will also provide estimates on the operators $W_+-\mathrm{Id}$ and $\Omega_+-\mathrm{Id}$, see Theorem \ref{thm:invertibility} below for a precise statement.  

Theorem \ref{thm:invertibility-intro} will be crucial in order to construct a suitable set of initial states leading to an asymptotic minimal velocity estimate, see Theorem \ref{th:min-vel-intro} below.

\subsection{Asymptotic propagation and minimal velocity estimates}

Our last concern is the asymptotic behavior of the speed of propagation of solutions to \eqref{eq:hartree_semi_intro}. We introduce the ``instantaneous'' velocity operator 
\begin{equation*}
\Theta:=[x,\langle\nabla\rangle]=-i\nabla\langle\nabla\rangle^{-1}.
\end{equation*}
The next theorem shows that, along the evolution associated to \eqref{eq:hartree_semi_intro}, the instantaneous velocity and the average velocity $\frac{x}{t}$ converge to each other.

\begin{theo}[Asymptotic phase-space propagation estimate for short-range interaction potentials]\label{th:propag-est-intro}
Let $f,g\in \mathcal{C}_0^\infty(\mathbb{R})$ be such that $\mathrm{supp}(g)\cap\mathrm{supp}(f)=\emptyset$. Under the conditions of Theorem \ref{th:global-short-range-intro}, and assuming in addition that $\|\langle x\rangle\psi_0\|_{L^2}<\infty$, the global solution $\psi$ to \eqref{eq:hartree_semi_intro} given by Theorem \ref{th:global-short-range-intro} satisfies
\begin{equation}\label{eq:propag1-intro}
\Big\|g\Big(\frac{x^2}{t^2}\Big)f(\Theta^2) \psi_t\Big\|_{L^2} \to0, \qquad t\to\infty.
\end{equation}
\end{theo}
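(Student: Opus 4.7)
The plan is to reduce to the free half-Klein--Gordon evolution via the scattering theorem, exploit the disjoint-support hypothesis through a Heisenberg conjugation together with the Helffer--Sj\"ostrand functional calculus, and control the residual by propagating the assumed first moment to the scattering state $\psi_+$.

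By Theorem~\ref{th:scattering-short-range-intro}, $\|\psi_t-e^{-it\langle\nabla\rangle}\psi_+\|_{L^2}\to 0$; since $g(x^2/t^2)f(\Theta^2)$ is uniformly bounded on $L^2$ by $\|g\|_\infty\|f\|_\infty$, it suffices to show $\|g(x^2/t^2)f(\Theta^2)e^{-it\langle\nabla\rangle}\psi_+\|_{L^2}\to 0$. Since $[\Theta,\langle\nabla\rangle]=0$, $f(\Theta^2)$ commutes with $e^{-it\langle\nabla\rangle}$; the Heisenberg identity $e^{it\langle\nabla\rangle}xe^{-it\langle\nabla\rangle}=x+t\Theta$ then yields, for $g\in\mathcal{C}_0^\infty(\mathbb{R})$,
\[
e^{it\langle\nabla\rangle}g(|x|^2/t^2)e^{-it\langle\nabla\rangle}=g(\Theta^2+R_t), \qquad R_t:=\frac{|x|^2}{t^2}+\frac{x\cdot\Theta+\Theta\cdot x}{t}.
\]
Setting $\varphi_+:=f(\Theta^2)\psi_+$, the disjoint-support condition gives $g(\Theta^2)\varphi_+=(gf)(\Theta^2)\psi_+=0$, reducing the problem to proving $\|[g(\Theta^2+R_t)-g(\Theta^2)]\varphi_+\|_{L^2}\to 0$.

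To this end I would apply the Helffer--Sj\"ostrand formula with an almost-analytic extension $\tilde g\in\mathcal{C}_0^\infty(\mathbb{C})$ of $g$. Since $\Theta^2+R_t$ is self-adjoint (as the Heisenberg conjugate of the self-adjoint operator $|x|^2/t^2$), the resulting resolvent integral, combined with the bound $\|(z-A)^{-1}\|\le|\mathrm{Im}\,z|^{-1}$ for self-adjoint $A$, controls $\|[g(\Theta^2+R_t)-g(\Theta^2)]\varphi_+\|_{L^2}$ by $\int_{\mathbb{C}}|\bar\partial\tilde g(z)||\mathrm{Im}\,z|^{-1}\|R_t(z-\Theta^2)^{-1}f(\Theta^2)\psi_+\|_{L^2}\,dz\wedge d\bar z$. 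For $z\in\mathrm{supp}\,\tilde g$, which stays at positive distance from $\mathrm{supp}(f)$, the operator $(z-\Theta^2)^{-1}f(\Theta^2)=:h_z(\Theta^2)$ is a smooth, bounded Fourier multiplier whose commutators with $x$ are controlled by pseudo-differential calculus. Using the factorization $R_t=(x/t)(x/t+\Theta)+\Theta(x/t)$ together with the free-flow intertwining identity $(x/t-\Theta)e^{-it\langle\nabla\rangle}\varphi=t^{-1}e^{-it\langle\nabla\rangle}(x\varphi)$ to trade each factor $x/t$ against a factor $1/t$, one closes $\|R_t h_z(\Theta^2)\psi_+\|_{L^2}=O(1/t)$ using only the first moment $\|\langle x\rangle\psi_+\|_{L^2}<\infty$.

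This first moment is propagated from $\|\langle x\rangle\psi_0\|_{L^2}<\infty$ via the differential identity $\tfrac{d}{dt}\|\langle x\rangle\psi_t\|_{L^2}^2=-i\langle\psi_t,[\langle x\rangle^2,\langle\nabla\rangle]\psi_t\rangle$, which, combined with the commutator bounds of Appendix~\ref{app: commute} and the uniform $H^s$ estimate \eqref{eq:estim_psit_energy}, gives $\|\langle x\rangle\psi_t\|_{L^2}\lesssim\langle t\rangle$. Inserting this into the Duhamel formula~\eqref{eq:def_psi+} for $\psi_+$, together with the dispersive decay $\|w\ast|\psi_\tau|^2\|_{L^\infty}\lesssim\langle\tau\rangle^{-d/r}$ from~\eqref{eq:estim1_psit} and the commutator bound $\|[\langle x\rangle,e^{i\tau\langle\nabla\rangle}]\|_{\mathcal{B}(L^2)}\lesssim\langle\tau\rangle$, yields $\|\langle x\rangle\psi_+\|_{L^2}<\infty$. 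The main obstacle lies in the treatment of the quadratic piece $|x|^2/t^2$ in $R_t$, which a naive estimate would resolve only under a second-moment assumption on $\psi_+$ not available from the hypotheses: the factorization of $R_t$ and the free-flow intertwining identity are the key technical ingredients that close the estimate with only the first moment assumed in the theorem.
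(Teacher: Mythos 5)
Your overall architecture (disjoint supports, a Helffer--Sj\"ostrand argument comparing $x^2/t^2$ with $\Theta^2$, propagation of a first moment) is the same as the paper's, but two steps as written do not close under the stated hypotheses. First, the claim $\|\langle x\rangle\psi_+\|_{L^2}<\infty$ does not follow from them. In your Duhamel computation the integrand is controlled by $\|(w*|\psi_\tau|^2)\langle x\rangle\psi_\tau\|_{L^2}+\langle\tau\rangle\|(w*|\psi_\tau|^2)\psi_\tau\|_{L^2}\lesssim\langle\tau\rangle^{1-\frac dr\theta(\frac rq)}$ (using $\|\langle x\rangle\psi_\tau\|_{L^2}\lesssim\langle\tau\rangle$), and this is integrable on $[0,\infty)$ only when $\frac dr\theta(\frac rq)>2$, whereas Theorem~\ref{th:global-short-range-intro} only guarantees $\frac dr\theta(\frac rq)>1$ (e.g.\ $d=3$, $r$ close to $3$, $q$ close to $2$). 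The paper never needs a finite first moment of $\psi_+$: it works directly with $\psi_t$ and only requires $t^{-1}\|x\,e^{it\langle\nabla\rangle}\psi_t\|_{L^2}\to0$, i.e.\ \emph{sublinear} growth of the interaction-picture first moment, which is exactly what the exponent $2-\frac dr\theta(\frac rq)<1$ delivers (Proposition~\ref{prop:asympt1}). Your detour through the scattering state therefore introduces an assumption that is not available.

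Second, the bound $\|R_t\,h_z(\Theta^2)\psi_+\|_{L^2}=O(1/t)$ is not attainable from a first moment alone: the piece $(x^2/t^2)h_z(\Theta^2)\psi_+$ genuinely requires a second moment once $R_t$ is detached from the left resolvent, since $h_z(\Theta^2)$ does not improve spatial decay. The intertwining identity $(x/t-\Theta)e^{-it\langle\nabla\rangle}\varphi=t^{-1}e^{-it\langle\nabla\rangle}(x\varphi)$ cannot be invoked here because you have already conjugated the free flow away, and your factorization $R_t=(x/t)\cdot(x/t+\Theta)+\Theta\cdot(x/t)$ still leaves a factor $(x/t+\Theta)$ acting directly on the state, which is only $O(1)$ because of its $\Theta$ part. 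The way to close the estimate with one moment is to keep one unbounded factor glued to the \emph{left} resolvent --- e.g.\ $(z-(x/t+\Theta)^2)^{-1}(x/t+\Theta)$ is bounded with polynomial blow-up in $|\Im z|^{-1}$ --- so that exactly one factor of $x/t$ reaches the state. This is precisely the paper's arrangement in Proposition~\ref{prop:diff-f}: after inserting $f(x^2/t^2)$ for free (by disjointness of supports), the difference $\Theta^2-x^2/t^2$ is factored as $(\Theta+\frac xt)\cdot(\Theta-\frac xt)$, the factor $(\Theta+\frac xt)$ is absorbed by $(\frac{x^2}{t^2}-z)^{-1}$, and the single remaining factor $(\Theta-\frac xt)$ is commuted through $(\Theta^2-z)^{-1}$ onto $\psi_t$, where Proposition~\ref{prop:asympt1} applies. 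With these two corrections your argument would essentially collapse onto the paper's proof.
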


\begin{rem}\label{rem:asympt_max-vel}$ $
\begin{enumerate}
\item Since $0\le \Theta^2\le1$, 
taking $f\in \mathcal{C}_0^\infty(\mathbb{R})$ such that $f=1$ on $[0,1]$, we have $f(\Theta^2)=\mathrm{Id}$. Hence the previous theorem implies that, for all $\varepsilon>0$,
\begin{equation*}
\Big\|\mathbf{1}_{[1+\varepsilon,\infty)}\Big(\frac{x^2}{t^2}\Big) \psi_t\Big\|_{L^2} \to0, \qquad t\to\infty.
\end{equation*}
We thus recover a maximal velocity estimate, in the sense that the boson star cannot propagate faster than the speed of light (equal to $1$ in our unit), asymptotically as $t\to\infty$.
\item Our proof gives an explicit rate of decay in \eqref{eq:propag1-intro}, more precisely we will show that
\begin{equation*}
\Big\|g\Big(\frac{x^2}{t^2}\Big)f(\Theta^2) \psi_t\Big\|_{L^2} \lesssim \big(\langle t \rangle^{-1} + \langle  t\rangle^{1-\frac{d}{r}\min\{1, \frac{r}{q}\}}\big)\|\langle x\rangle\psi_0\|_{L^2},
\end{equation*}
uniformly in $t\ge0$.
\end{enumerate}
\end{rem}

Combining Theorems \ref{th:scattering-short-range-intro} and \ref{th:propag-est-intro}, one can deduce a minimal velocity estimate, in the sense that if the initial state $\psi_0$ is associated to a scattering state $\psi_+$ with an instantaneous velocity localized in $[\alpha,1]$ with $0<\alpha<1$, then 
\begin{equation*}
\Big\|\mathbf{1}_{[0,\alpha)}\Big(\frac{x^2}{t^2}\Big) \psi_t\Big\|_{L^2} \to 0.
\end{equation*}
It is however unclear in general how to identify initial states $\psi_0$'s associated to scattering states $\psi_+$ with an instantaneous velocity localized in $[\alpha,1]$. To overcome this difficulty we restrict the class of admissible potentials and use Theorem \ref{thm:invertibility-intro}. This leads to the following result.

\begin{theo}[Asymptotic minimal velocity estimate for short-range interaction potentials]\label{th:min-vel-intro}
    Let $s\ge\frac  d2+1$, $\max\{1,\frac{d}{4}\}\le r<\frac{d}{2}$, $1\le q\le r$ and $\frac{d}{2r}<\gamma<\min\{2, \frac dr-1\}$. Let $w\in \mathcal{M}+L^{q}$ and $0<\alpha<1$. There exists $\delta_w$ such that, for all initial states
    \begin{equation*}
    \psi_0=\Omega_+\psi_+ \quad\text{ with }\quad \psi_+\in \mathcal{B}_{H^s_\gamma}(\delta_w) \quad\text{ and }\quad \psi_+=\mathbf{1}_{[\alpha,1]}(\Theta^2)\psi_+,
    \end{equation*}
    the global solution $\psi$ to \eqref{eq:hartree_semi_intro} given by Theorem \ref{th:global-short-range-intro} satisfies
    \begin{equation*}
\Big\| \mathbf{1}_{[0,\alpha)}\Big(\frac{x^2}{t^2}\Big) \psi_t\Big\|_{L^2} \to 0, \qquad t\to\infty.
\end{equation*}
\end{theo}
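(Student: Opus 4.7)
The plan is to reduce the bound for the nonlinear evolution to a minimal velocity estimate for the free pseudo-relativistic flow via scattering and the right-invertibility of the wave operators, and then to prove the free estimate by a density plus non-stationary phase argument exploiting the absolute continuity of the spectrum of $\Theta^2$ at $\alpha$. Since $\psi_0=\Omega_+\psi_+$ with $\psi_+\in\mathcal{B}_{H^s_\gamma}(\delta_w)$, Theorem~\ref{thm:invertibility-intro} yields $W_+\psi_0=\psi_+$, and Theorem~\ref{th:scattering-short-range-intro} then gives $\|\psi_t-e^{-it\langle\nabla\rangle}\psi_+\|_{L^2}\to 0$. Since $\mathbf{1}_{[0,\alpha)}(x^2/t^2)$ has operator norm $1$, it suffices to show
\[
\bigl\|\mathbf{1}_{[0,\alpha)}(x^2/t^2)\,e^{-it\langle\nabla\rangle}\psi_+\bigr\|_{L^2}\xrightarrow[t\to\infty]{}0.
\]

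Set $\beta:=\sqrt{\alpha/(1-\alpha)}$. The hypothesis $\psi_+=\mathbf{1}_{[\alpha,1]}(\Theta^2)\psi_+$ translates to $\mathrm{supp}(\widehat{\psi_+})\subseteq\{|\xi|\geq\beta\}$. Choose $\chi_n\in C_c^\infty(\mathbb{R})$ with $\chi_n\equiv 1$ on $[\alpha+\tfrac{1}{n},1]$ and $\mathrm{supp}(\chi_n)\subseteq[\alpha+\tfrac{1}{2n},2]$, and set $\phi_n:=\chi_n(\Theta^2)\psi_+$. Since $\Theta^2=-\Delta\langle\nabla\rangle^{-2}$ is purely absolutely continuous and has no eigenvalue at $\alpha$,
\[
\|\psi_+-\phi_n\|_{L^2}\leq\|\mathbf{1}_{[\alpha,\alpha+1/n)}(\Theta^2)\psi_+\|_{L^2}\xrightarrow[n\to\infty]{}0.
\]
Additional mollification in $\xi$ and high-frequency truncation then produce $\phi_n^\sharp$ with $\widehat{\phi_n^\sharp}\in C_c^\infty(\mathbb{R}^d)$ supported in $\{|\xi|\geq\beta_n\}$ for some $\beta_n>\beta$ and $\|\psi_+-\phi_n^\sharp\|_{L^2}\to 0$ in the iterated limit. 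On $\mathrm{supp}(\widehat{\phi_n^\sharp})$ one has $|\xi/\langle\xi\rangle|\geq\sqrt{\alpha+\eta_n}$ for some $\eta_n>0$, so the phase $\Phi(\xi)=x\cdot\xi-t\langle\xi\rangle$ obeys
\[
|\nabla_\xi\Phi|=|x-t\xi/\langle\xi\rangle|\geq t\bigl(\sqrt{\alpha+\eta_n}-\sqrt{\alpha}\bigr)=:tc_n>0
\]
whenever $|x|<t\sqrt{\alpha}$. Repeated integration by parts in $\xi$ against the operator $L=(\nabla_\xi\Phi)/(i|\nabla_\xi\Phi|^2)\cdot\nabla_\xi$ (which satisfies $Le^{i\Phi}=e^{i\Phi}$) yields $|e^{-it\langle\nabla\rangle}\phi_n^\sharp(x)|\leq C_{N,n}\,\langle t\rangle^{-N}$ uniformly on $\{|x|<t\sqrt{\alpha}\}$ for any $N$. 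Combining this with $|B_{t\sqrt{\alpha}}|\lesssim t^d$ and taking $N>d/2$ gives
\[
\bigl\|\mathbf{1}_{|x|<t\sqrt{\alpha}}\,e^{-it\langle\nabla\rangle}\phi_n^\sharp\bigr\|_{L^2}\lesssim\langle t\rangle^{d/2-N}\xrightarrow[t\to\infty]{}0.
\]

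The triangle inequality and unitarity of the free flow on $L^2$ then give
\[
\bigl\|\mathbf{1}_{[0,\alpha)}(x^2/t^2)\,e^{-it\langle\nabla\rangle}\psi_+\bigr\|_{L^2}\leq\bigl\|\mathbf{1}_{|x|<t\sqrt{\alpha}}\,e^{-it\langle\nabla\rangle}\phi_n^\sharp\bigr\|_{L^2}+\|\psi_+-\phi_n^\sharp\|_{L^2},
\]
and taking $\limsup_{t\to\infty}$ then $n\to\infty$ drives both terms to $0$. The main technical obstacle is that the sharp indicators $\mathbf{1}_{[\alpha,1]}(\Theta^2)$ and $\mathbf{1}_{[0,\alpha)}(x^2/t^2)$ are adjacent at the critical value $\alpha$, so Theorem~\ref{th:propag-est-intro} (whose hypotheses require strictly disjoint smooth supports of $f$ and $g$) cannot be applied verbatim to these two indicators; the absolute continuity of $\Theta^2$ at $\alpha$ is precisely the mechanism that enables the approximation step above to close this gap.
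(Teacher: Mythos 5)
Your proof is correct, but it takes a genuinely different route from the paper. The paper first establishes an intermediate phase-space propagation estimate (Proposition~\ref{th:propag-estim}): it computes $\langle\psi_t,(\tfrac{x}{t}-\Theta)^2\psi_t\rangle$ via the conjugation identity $e^{it\langle\nabla\rangle}x^2e^{-it\langle\nabla\rangle}=(x+t\Theta)^2$ together with Duhamel bounds in the weighted space $L^2_1$, converts this into statements about smooth cutoffs $f(\Theta^2)$, $g(\tfrac{x^2}{t^2})$ by Helffer--Sj\"ostrand calculus, and only then combines with scattering; the minimal velocity estimate follows by choosing suitable $f,g$. You bypass all of this: after using Theorem~\ref{thm:invertibility-intro} and Theorem~\ref{th:scattering-short-range-intro} to reduce to the free flow applied to $\psi_+$, you run a density argument in the spectral parameter of $\Theta^2$ (exploiting that its multiplier $|\xi|^2\langle\xi\rangle^{-2}$ has measure-zero level sets) followed by a standard non-stationary phase estimate on $\{|x|<t\sqrt{\alpha}\}$. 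Your route is more elementary for the final qualitative statement -- it needs only $L^2$ scattering and no weighted propagation bound on $\psi_t$ or functional calculus -- and, as you correctly observe, it resolves the endpoint adjacency at $\alpha$: the paper's prescription $f\equiv1$ on $[\alpha,1]$ and $g\equiv1$ on $[0,\alpha)$ forces $\alpha\in\mathrm{supp}(f)\cap\mathrm{supp}(g)$, so the disjoint-support hypothesis of Proposition~\ref{th:propag-estim} cannot literally be met and an approximation at $\alpha$ of the kind you perform is needed in any case. What the paper's approach buys in exchange is quantitative content: explicit polynomial decay rates for the smoothed localizers and the phase-space estimate of Theorem~\ref{th:propag-est-intro} as a result of independent interest, neither of which survives your purely qualitative density step.
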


\begin{rem}
    It is not difficult to construct states $\psi_+\in \mathcal{B}_{H^s_\gamma}(\delta_w)$ satisfying in addition $\psi_+=\mathbf{1}_{[\alpha,1]}(\Theta^2)\psi_+$. Indeed, given a smooth function $f\in\mathcal{C}_0^\infty(\mathbb{R})$ such that $\mathrm{supp}(f)\subset(\alpha,1)$, it suffices to choose $\tilde\psi_+$ such that $f(\Theta^2)\tilde\psi_+=\tilde\psi_+$ and $\|\tilde\psi_+\|_{H^s_\gamma}$ is small enough. Since one can verify that $\|f(\Theta^2)\psi_+\|_{H^s_\gamma}\le C_f\|\psi_+\|_{H^s_\gamma}$ for some $C_f>0$, the state given by $\psi_+=f(\Theta^2)\tilde\psi^+$ satisfies the conditions of the theorem for $\|\tilde\psi_+\|_{H^s_\gamma}$ small enough. 
\end{rem}

It should be noted that the maximal velocity estimate mentioned in Remark \ref{rem:asympt_max-vel} holds for all initial states, without requiring a construction as in Theorem  \ref{th:min-vel-intro}. This is due to the fact that the instantaneous velocity $\Theta$ is bounded. For the non-relativistic Hartree equation considered in \cite{ArbunichFaupinPusateriSigal23}, proving a maximal velocity estimate does require a suitable construction of initial states; instead of using ``asymptotic energy cutoffs'' as in \cite{ArbunichFaupinPusateriSigal23}, it would be interesting to follow the approach developed in the present paper based on  invertibility properties of the wave operator.

\subsection{Summary of results}
For the reader's convenience, we summarize the conditions required to obtain our results in two tables and three figures. Table~\ref{tab: exist} collects the well-posedness results. The relationships between the regularity of the initial data and the class of admissible potentials in these results is illustrated in Figure~\ref{fig:well-posedness} for a generic spatial dimension $d\ge4$ while Figure~\ref{fig:well-posedness-3d} highlights features specific to dimension $3$, including some endpoint cases. Figure~\ref{fig:well-posedness-short} depicts the relation between the parameters $p$ and $q$ appearing in Theorem~\ref{th:global-short-range-intro}, which concerns the global existence in the case of short-range potentials; this existence theory is the one employed in our analysis of scattering and minimal velocity estimates. Finally, Table~\ref{tab:speed} summarizes our results on the maximal and minimal velocity estimates.
\begin{table}[H]
    \centering
\begin{tabular}{c|rl|c|c|c|c}
\multirow{2}{*}{$s$ in} & \multirow{2}{*}{$w$ in} &  & \multicolumn{2}{c|}{Example of $w$} & Well- & \multirow{2}{*}{Theorem}\tabularnewline
\cline{4-5} \cline{5-5} 
 &  &  & $|x|^{-\alpha}$ , $\alpha$ in & $\;\delta_{0}\;$ & posedness & \tabularnewline
\hline 
\rcell$\big[0,\frac{d}{2}\big)$  & \rcell$L^{\frac{d}{2s},\infty}+L^{\infty}\phantom{,}$  & \rcell & $[0,2s]$\rcell & \rcell & \multirow{5}{*}{Local} & \rcell\ref{th:local-long-range-intro}\tabularnewline
\rcell$\big\{\frac{d}{2}\big\}$  & \rcell$L^{q,\infty}+L^{\infty},$  & \negthickspace{}\negthickspace{}$q>1$\rcell & $[0,d)$\rcell & \rcell &  & \rcell\ref{th:local-long-range-intro}\tabularnewline
\rcell$\big(\frac{d}{2},\infty\big)$ & \rcell$\meas+L^{\infty}\phantom{,}$  & \rcell & $[0,d)$\rcell & $\checkmark$ \rcell &  & \rcell\ref{th:local-long-range-intro}\tabularnewline
\bcell$\big[0,\frac{d+1}{2d-2}\big)$ & \bcell$L^{\frac{d+1}{4s},\infty}+L^{\infty}\phantom{,}$  & \bcell & $\big[0,\frac{4ds}{d+1}\big]$\bcell & \bcell &  & \bcell\ref{th: local_with_Strichartz-intro}\tabularnewline
\bcell$\big[\frac{d+1}{2d-2},\infty\big)$ & \bcell$L^{\frac{d-1}{2},\infty}+L^{\infty}\phantom{,}$  & $^{\dagger}$\bcell & $\big[0,\frac{2d}{d-1}\big)$\bcell & \bcell &  & \bcell\ref{th: local_with_Strichartz-intro} \tabularnewline
\cline{6-6} 
\rcell$\{0\}$  & \rcell$L^{\infty}\phantom{,}$  & \rcell & \rcell$\{0\}$ & \rcell & \multirow{4}{*}{Global} & \rcell\ref{th:local-long-range-intro}\tabularnewline
\rcell$\big[\frac{1}{2},\infty\big)$ & \rcell$L^{d,\infty}+L^{\infty}\phantom{,}$  & \rcell & \rcell$[0,1]$ & \rcell &  & \rcell\ref{th:global-long-range-intro}\tabularnewline
\bcell$\big\{\frac{1}{2}\big\}$  & \bcell$L^{\frac{d+1}{2},\infty}+L^{\infty}\phantom{,}$  & \bcell & \bcell$\big[0,\frac{2d}{d+1}\big]$ & \bcell &  & \bcell\ref{th:global-long-range-intro-small-energy-opt}\tabularnewline
\gcell$\big[\frac{d}{2}+1,\infty\big)$ & \gcell$\meas+L^{q},$  & \negthickspace{}\negthickspace{}$q<\frac{2d}{3}$\gcell & \gcell$\big(\frac{3}{2},d\big)$ & \gcell$\checkmark$  &  & \gcell\ref{th:global-short-range-intro}\tabularnewline
\end{tabular}\caption{Well-posedness results for \eqref{eq:hartree_semi_intro} with
$\psi_{0}\in H^{s}$. We underline that the result of Theorem \ref{th:global-short-range-intro}
requires the additional condition $\psi_{0}\in H^{s,p}$ for suitable
$p$. Moreover, while at fixed $s<\frac{d}{2}$ Theorem \ref{th: local_with_Strichartz-intro}
provides local existence for a larger class of potentials compared
to Theorem \ref{th:local-long-range-intro}, it guarantees uniqueness
of a solution only in a smaller functional space.\protect \\
$^{\dagger}$If $d=3$, then one can actually only take $w$
in $L^{q,\infty}+L^{\infty}$ for any $q>\frac{d-1}{2}$.}
	\label{tab: exist}
\end{table}
\begin{figure}[H]
    \begin{center}
         \includegraphics[angle=0,width=16cm,origin=c]{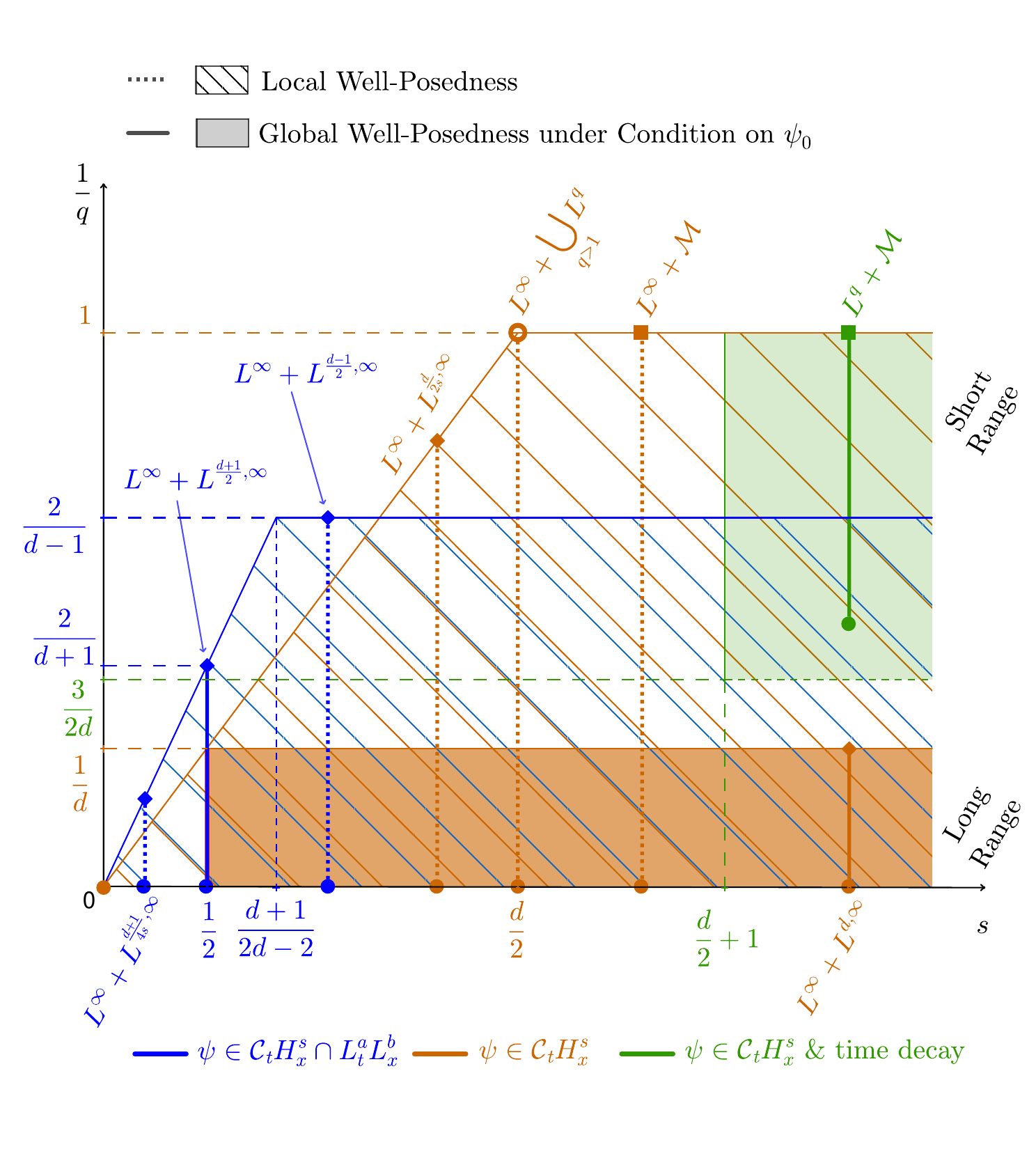}
    \caption{Dimension $d\geq 4$. Well-Posedness of \eqref{eq:hartree_semi_intro}: Admissible class of $w$ depending on the regularity $s$ of $\psi_0$.\\
    A thick vertical line with extremities at $1/q_1$ and $1/q_2$ represents the space $L^{q_1,\tilde{q}_1}+L^{q_2,\tilde{q}_2}$, with $\tilde{q}_j=q_j$ if the extremity is a disk (\mbox{\LARGE $\bullet$}), $\tilde{q}_j=\infty$ if the extremity is diamond (\mbox{\footnotesize \rotatebox[origin=c]{45}{$\blacksquare$}}). There are two special cases: the $L^{q,\tilde{q}}$ space is replaced by $\meas$ if the extremity is a square (\mbox{\footnotesize$\blacksquare$}), and  by $\bigcup_{q>q_j}L^q$ if it is a circle (\mbox{\Large $\boldsymbol{\circ}$}). 
    Solid lines correspond to global well-posedness while dotted lines correspond to local well-posedness.
    }
    \label{fig:well-posedness}
    \end{center}
\end{figure}		
\begin{figure}[H]\label{fig:}
    \begin{center}
         \includegraphics[angle=0,width=16cm,origin=c]{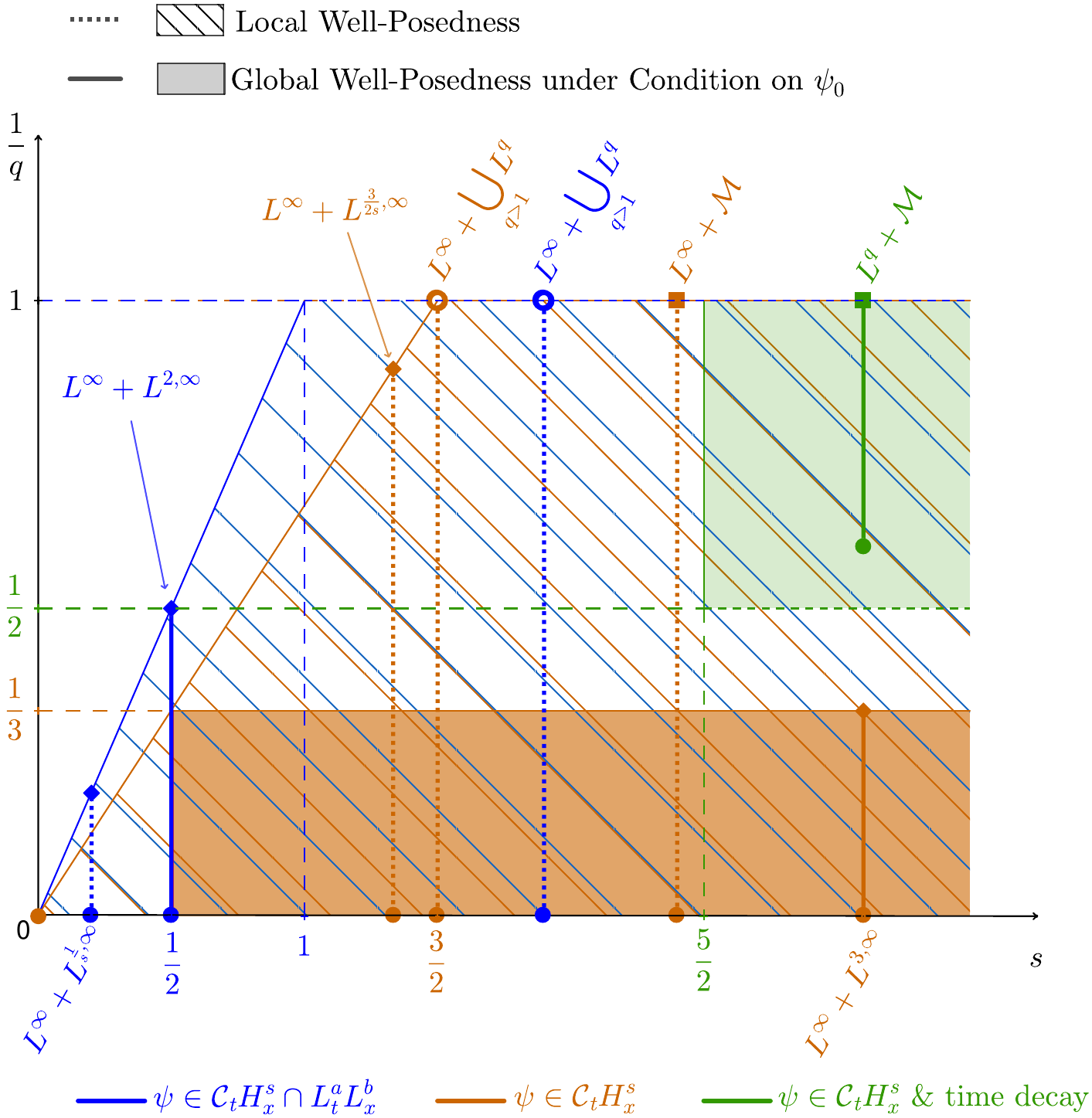}
    \caption{Dimension $d=3$. Well-Posedness of \eqref{eq:hartree_semi_intro}: Admissible class of $w$ depending on the regularity $s$ of $\psi_0$.\\
    A thick vertical line with extremities at $1/q_1$ and $1/q_2$ represents the space $L^{q_1,\tilde{q}_1}+L^{q_2,\tilde{q}_2}$, with $\tilde{q}_j=q_j$ if the extremity is a disk (\mbox{\LARGE$\bullet$}), $\tilde{q}_j=\infty$ if the extremity is diamond (\mbox{\footnotesize \rotatebox[origin=c]{45}{$\blacksquare$}}). There are two special cases: the $L^{q,\tilde{q}}$ space is replaced by $\meas$ if the extremity is a square (\mbox{\footnotesize$\blacksquare$}), and  by $\bigcup_{q>q_j}L^q$ if it is a circle (\mbox{\Large$\boldsymbol{\circ}$}). 
    Solid lines correspond to global well-posedness while dotted lines correspond to local well-posedness.
    }
    \label{fig:well-posedness-3d}
    \end{center}
\end{figure}				
\begin{figure}[H]
    \begin{center}
         \includegraphics[angle=0,width=12cm,origin=c]{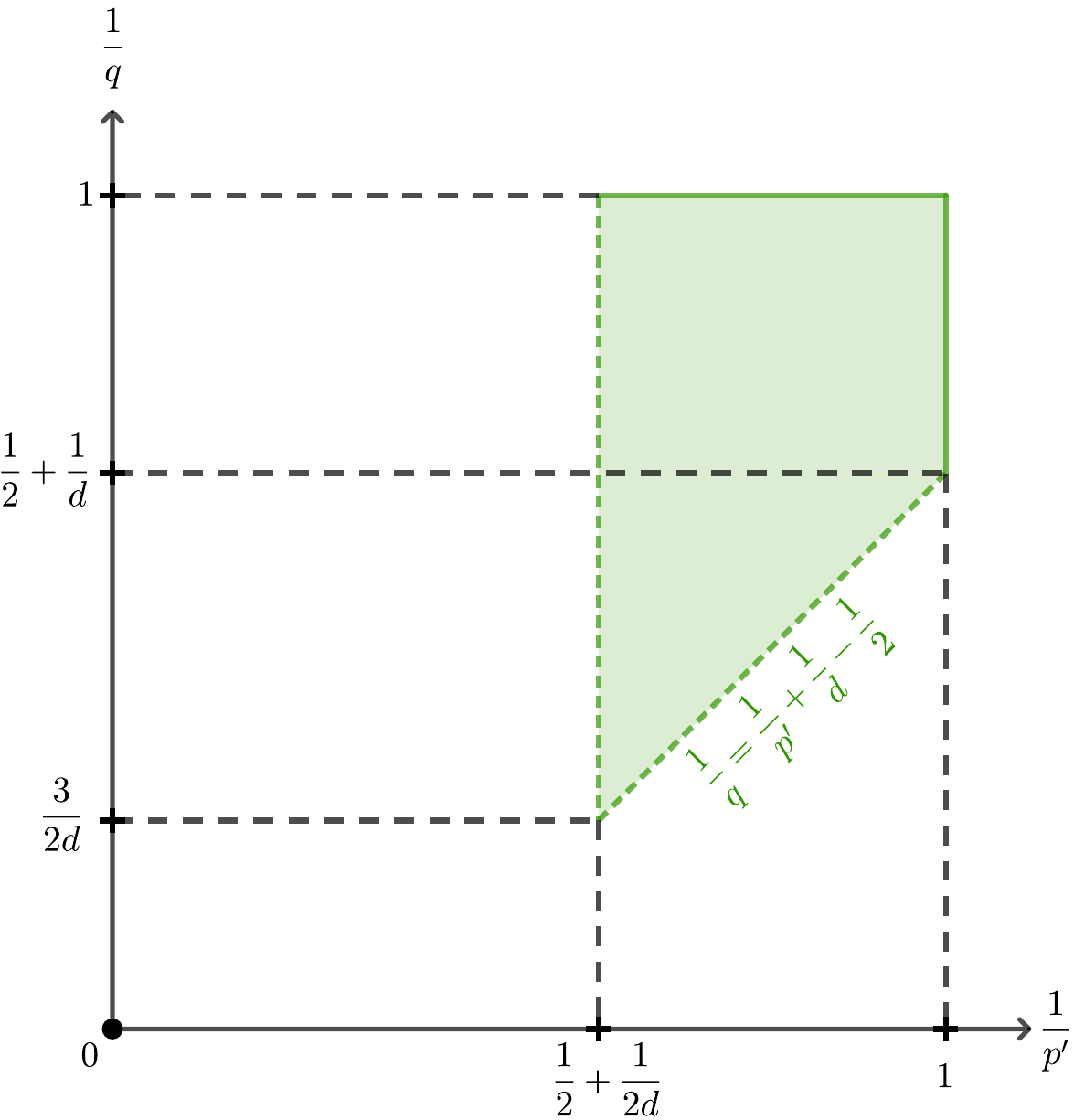}
    \caption{Global Well-Posedness in the Short range case: The green region represents the domain of admissible pairs $(\frac1{p'},\frac1{q})$ such that, if $\psi_0$ is in $H^s\cap H^{s,p'}$ and $w$ in $L^q+\meas$ and their norms are sufficiently small, then \eqref{eq:hartree_semi_intro} admits a global solution, provided by Theorem~\ref{th:global-short-range-intro}. }
    \label{fig:well-posedness-short}
    \end{center}
\end{figure}
\begin{table}[H]
    \centering
	\begin{tabular}{c|rl|c|c|c|c}
\multirow{2}{*}{$s$ in} & \multirow{2}{*}{$w$ in} &  & \multicolumn{2}{c|}{Example of $w$} & Velocity & \multirow{2}{*}{Theorem}\tabularnewline
\cline{4-5} \cline{5-5} 
 &  &  & $|x|^{-\alpha}$, $\alpha$ in & $\ \delta_{0}\ $ & estimate & \tabularnewline
\hline 
$\big[0,\frac{d}{2}\big)$ & $L^{\frac{d}{2s},\infty}+L^{\infty}\text{\ensuremath{\phantom{,}}}$ &  & $[0,2s]$ &  & \multirow{3}{*}{Maximal} & \ref{th:max-vel-intro}\tabularnewline
$\big\{\frac{d}{2}\big\}$ & $L^{q,\infty}+L^{\infty},$ & \negthickspace{}\negthickspace{}$q>1$ & $[0,d)$ &  &  & \ref{th:max-vel-intro}\tabularnewline
$\big(\frac{d}{2},\infty\big)$ & $\meas+L^{\infty}\text{\ensuremath{\phantom{,}}}$ &  & $[0,d)$ & $\checkmark$ &  & \ref{th:max-vel-intro}\tabularnewline
\cline{6-6}
\multirow{2}{*}{$\big[\frac{d}{2}+1,\infty\big)$} & \multirow{2}{*}{$\mathcal{M}+L^{q},$} & \multirow{2}{*}{\negthickspace{}\negthickspace{}$q<\frac{d}{2}$} & \multirow{2}{*}{$(2,d)$} & \multirow{2}{*}{$\checkmark$} & Maximal & \ref{th:max-vel-intro}\tabularnewline
 &  &  &  &  & \& Minimal & \& \ref{th:min-vel-intro}\tabularnewline
\end{tabular}
	\caption{Speed of propagation results for \eqref{eq:hartree_semi_intro} with $\psi_0 \in H^s$. We underline that the minimal velocity result holds under the additional condition that $\psi_0$ is in a weighted $H^s$ space.} 
    \label{tab:speed}
\end{table}
\section*{Acknowledgements}
This research was funded, in whole or in part, by l’Agence Nationale de la Recherche
(ANR), project ANR-22-CE92-0013. We are grateful to I.M.~Sigal for fruitful collaborations.

\section{Preliminaries}\label{sec:prelim}

 In this preliminary subsection, we first introduce functional spaces that will play an important role in our analysis, the Bessel-Sobolev spaces, and we recall a version of the fractional Leibniz rule in this context. Next we derive some multilinear estimates that will allow us to control the non-linearity appearing in the pseudo-relativistic Hartree equation \eqref{eq:hartree_semi_intro}.

\subsection{Weighted Bessel-Sobolev spaces and fractional Leibniz rule}\label{subsec:weighted-Bessel}

We denote by $\mathcal{S}$ the usual Schwartz space of rapidly decreasing functions from $\mathbb{R}^d$ to $\mathbb{C}$ and by $\mathcal{S}'$ the associated space of tempered distributions. For a tempered distribution $f$ in $\mathcal{S}'$ and $s$ in $\mathbb{R}$, we write $\langle \nabla \rangle^s f = \mathcal{F}^{-1}(\langle \xi \rangle^s \mathcal{F} f)$. Recall that $\mathcal{F}$ stands for the Fourier transform normalized such that $\mathcal{F}$ is unitary on $L^2$. For $p\in [1,\infty]$, the Bessel-Sobolev spaces  are defined by
\[
H^{s,p}:=\{f\in \mathcal{S}' \mid \langle \nabla \rangle^s f \in L^p\},
\]
and endowed with the norm $\|f\|_{H^{s,p}} = \|\langle \nabla \rangle^s f\|_{L^p}$.
If $p=2$, we use the shorhand $H^s=H^{s,2}$. When $1<p<\infty$ the Bessel-Sobolev spaces $H^{s,p}$ coincide with the fractional Sobolev spaces $W^{s,p}$ with equivalent norms, but it is not the case when $p=1$ or $p=\infty$, which we we will also need in the sequel. 

In the case of non integer exponents $s$, the Leibniz rule cannot be used, but a useful replacement is the fractional Leibniz rule, see e.g. \cite{grafakos_oh_2014}.

\begin{prop}
\label{prop:Leibniz-fractionnaire}
    Let $1\leq p_0<\infty$, $1<p_j,\tilde{p}_j\leq\infty$ satisfying $\frac{1}{p_0}=\frac{1}{p_j}+\frac{1}{\tilde{p}_j}$ and $j\in\{1,2\}$. If $s\ge0$,  there exists $C>0$ such that, for all $f,g\in\mathcal{S}$,
    \[
    \|fg\|_{H^{s,p_0}}  \leq C \big(  \|f\|_{H^{s,p_1}}\|g\|_{L^{\tilde{p}_1}}+\|f\|_{L^{p_2}}\|g\|_{H^{s,\tilde{p}_2}} \big)\,.
    \]
\end{prop}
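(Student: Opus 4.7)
The plan is to reduce the statement to the classical Kato--Ponce fractional Leibniz inequality, which is the main result of the cited paper of Grafakos and Oh. The only adaptation needed is to pass from the inhomogeneous Bessel potential norm $\|\langle\nabla\rangle^s\cdot\|_{L^p}$ to the homogeneous fractional derivative $\||\nabla|^s\cdot\|_{L^p}$ in which their result is usually stated; this is handled by the pointwise multiplier equivalence $\langle\xi\rangle^s\simeq 1+|\xi|^s$ on the Fourier side, combined with the trivial Hölder bound $\|fg\|_{L^{p_0}}\le\|f\|_{L^{p_j}}\|g\|_{L^{\tilde p_j}}$, so that
\[
\|fg\|_{H^{s,p_0}}\lesssim \|fg\|_{L^{p_0}}+\||\nabla|^s(fg)\|_{L^{p_0}}.
\]

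To sketch the homogeneous statement, I would perform a Littlewood--Paley decomposition $f=\sum_j\Delta_j f$, $g=\sum_k\Delta_k g$ and then use Bony's paraproduct decomposition
\[
fg=\Pi_f g+\Pi_g f+R(f,g),
\]
where $\Pi_f g=\sum_j S_{j-3}(f)\,\Delta_j g$ is the low--high paraproduct and $R(f,g)=\sum_{|j-k|\le 2}\Delta_j f\,\Delta_k g$ is the resonant remainder. For $\Pi_f g$, the Fourier support of $S_{j-3}(f)\Delta_j g$ is contained in an annulus of size $\sim 2^j$, so $|\nabla|^s$ acts as a multiplication by $2^{sj}$ up to bounded multipliers; combining a Fefferman--Stein vector-valued maximal inequality with the square function characterization of $L^{p_0}$ and Hölder's inequality gives
\[
\||\nabla|^s\Pi_f g\|_{L^{p_0}}\lesssim \|f\|_{L^{p_2}}\,\||\nabla|^s g\|_{L^{\tilde p_2}}.
\]
Symmetrically, $\Pi_g f$ is bounded by $\||\nabla|^s f\|_{L^{p_1}}\|g\|_{L^{\tilde p_1}}$, while the resonant term is estimated by summing over $|j-k|\le 2$ and using the same square-function / maximal function machinery, distributing the derivative $|\nabla|^s$ onto whichever factor is convenient and choosing the index split $(p_1,\tilde p_1)$ or $(p_2,\tilde p_2)$ accordingly.

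The main obstacle, and the reason this result is non-trivial, lies in the endpoint cases $p_j=\infty$ or $\tilde p_j=\infty$: the square function characterization fails at $L^\infty$, so one has to argue via the Triebel--Lizorkin space $F^{s}_{\infty,2}$ and the duality/embedding $F^{s}_{\infty,2}\hookrightarrow \text{BMO}$-type substitutes, which is precisely the delicate technical contribution of Grafakos--Oh (and of Muscalu--Schlag and others). Since this is standard and the precise proof is given in the cited reference, I would simply state these reductions and invoke \cite{grafakos_oh_2014} for the endpoint estimates, noting that the condition $p_0<\infty$ together with $1<p_j,\tilde p_j\le\infty$ is exactly the admissible range covered there.
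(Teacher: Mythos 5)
The paper does not actually prove this proposition; it is recalled as a known result with a pointer to \cite{grafakos_oh_2014}, so your strategy of sketching the paraproduct argument and invoking that reference for the endpoint cases is entirely consistent with what the paper does, and your outline of the Bony decomposition and the treatment of the two paraproducts and the resonant term is the standard (correct) one. One small remark: the reduction from $\langle\nabla\rangle^s$ to $|\nabla|^s$ is unnecessary, since Grafakos--Oh prove the inhomogeneous (Bessel) version directly; moreover, your justification of that reduction via the pointwise comparability $\langle\xi\rangle^s\simeq 1+|\xi|^s$ is slightly too quick, because transferring a pointwise symbol equivalence to an equivalence of $L^{p_0}$ norms requires a multiplier theorem for $\langle\xi\rangle^s(1+|\xi|^s)^{-1}$ and its reciprocal, which is not automatic at the admissible endpoint $p_0=1$ (Mikhlin--H\"ormander needs $1<p_0<\infty$). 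Quoting the inhomogeneous statement of the reference directly avoids this issue.
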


\begin{rem}
    We will use Proposition~\ref{prop:Leibniz-fractionnaire} with functions $f$ in $H^{s,p_1}$ and $L^{p_2}$ and $g$ in $H^{s,\tilde{p}_2}$ and $L^{\tilde{p}_1}$,
    which is fine if these spaces are included in the closure of $\mathcal{S}$ for the corresponding norms.
    
    We will sometime have $p_2=\infty$ (respectively $\tilde{p}_1=\infty$). In this case, we will have to make sure that $f$ (respectively $g$) is in the closure of $\mathcal{S}$ with respect to the $L^\infty$ norm, which is the space $\mathcal{C}^0_{\infty}$ of continuous functions vanishing at infinity.
\end{rem}

We will also use the following weighted Bessel-Sobolev spaces, defined as
\begin{equation}\label{eq:def_Hsgamma0}
    H^{s,p}_\gamma := \{ f \in \mathcal{S}' \ |\ \langle x\rangle^\gamma \langle \nabla \rangle^sf \in L^p\}, \quad p \in [1, \infty], \quad s\geq 0, \quad \gamma\ge0,
\end{equation}
and endowed with the norm $\|f\|_{H^{s,p}_\gamma} = \|\langle x\rangle^\gamma \langle \nabla \rangle^sf\|_{L^p}$. We observe that, for all $\gamma\ge0$ and $s\ge0$,
\begin{equation*}
    H^s_\gamma\hookrightarrow H^s,
\end{equation*}
while if $\gamma>\frac{d}{2r}$ wit $1\le r\le\infty$, it follows from H\"older's inequality that 
\begin{equation*}
    H^s_\gamma\hookrightarrow H^{s,p'},
\end{equation*}
with $p'=\frac{2r}{r+1}$.
We also define the shorthands 
\begin{equation}\label{eq:def_Hsgamma}
    H^s_\gamma:= H^{s,2}_\gamma, \quad L^p_\gamma := H^{0,p}_\gamma.
\end{equation}

In Appendix \ref{app: commute} we recall the following property (we only state and prove it for $0\le\gamma\le2$ for simplicity and since we only need it for such $\gamma$ in our context). 
\begin{lemma}
\label{lemma: weight sob norm}
     Let $0\leq \gamma\leq 2$,  $s\geq 0$ and $1\leq p \leq \infty$. There exist $c, c'>0$ such that 
 \begin{equation*}
    c'\,  \| \langle \nabla \rangle^s \langle x \rangle^\gamma f \|_{L^p} \leq \|f\|_{  H^{s,p}_\gamma } \leq c\,  \| \langle \nabla \rangle^s \langle x \rangle^\gamma f \|_{L^p} ,
 \end{equation*}
 for all $f\in H^{s,p}_\gamma$.
\end{lemma}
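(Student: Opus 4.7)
The two inequalities stated are symmetric in the sense that each expresses the near-commutativity of $\langle x\rangle^\gamma$ and $\langle \nabla\rangle^s$ on $L^p$. The plan is to establish only the upper bound
\[
\|\langle \nabla \rangle^s \langle x \rangle^\gamma f\|_{L^p} \lesssim \|\langle x \rangle^\gamma \langle \nabla \rangle^s f\|_{L^p},
\]
the reverse direction following from an analogous argument with the roles of the two factors exchanged. Writing $\langle \nabla\rangle^s \langle x\rangle^\gamma = \langle x\rangle^\gamma \langle \nabla\rangle^s + [\langle \nabla\rangle^s, \langle x\rangle^\gamma]$ reduces the matter to the commutator estimate
\[
\|[\langle \nabla\rangle^s, \langle x\rangle^\gamma] f\|_{L^p} \lesssim \|\langle x\rangle^\gamma \langle \nabla\rangle^s f\|_{L^p}.
\]
Heuristically this should hold because, by symbolic calculus, $[\langle \nabla\rangle^s, \langle x\rangle^\gamma]$ is of total order $(s-1)+(\gamma-1) = s+\gamma-2$, strictly lower than the order $s+\gamma$ of the product, so it is dominated by the full norm $\|\langle x\rangle^\gamma \langle \nabla\rangle^s f\|_{L^p}$ combined with the trivial bound $\|f\|_{L^p} \le \|\langle x\rangle^\gamma \langle\nabla\rangle^s f\|_{L^p}$ (valid since $\langle x\rangle^\gamma \ge 1$ and $\langle\nabla\rangle^{-s}$ is convolution with a positive $L^1$ Bessel kernel for $s \ge 0$).

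The commutator bound I would prove in two stages. For integer $\gamma \in \{0,1,2\}$ (the case $\gamma=0$ being trivial), a direct computation using $[x_j, \langle \nabla\rangle^s] = s\,\partial_j \langle\nabla\rangle^{s-2}$ yields
\[
[\langle x\rangle^2, \langle \nabla\rangle^s] = s \sum_{j=1}^d \bigl(x_j \partial_j + \partial_j x_j\bigr)\langle \nabla\rangle^{s-2},
\]
which is schematically of the form (operator of order $s-1$ in $\nabla$) times (weight of order $\gamma-1$ in $x$). Standard Mikhlin--H\"ormander multiplier bounds on $L^p$, together with $\langle x\rangle^{\gamma-1} \le \langle x\rangle^\gamma$, yield the desired control. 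For non-integer $\gamma \in (0,2)$, I would reduce to the $\gamma=2$ case via the Balakrishnan-type integral representation
\[
\langle x\rangle^\gamma = (\langle x\rangle^2)^{\gamma/2} = c_\gamma \int_0^\infty \frac{\langle x\rangle^2}{\langle x\rangle^2+t}\, t^{\gamma/2-1}\,dt, \qquad \gamma \in (0,2),
\]
together with the resolvent identity
\[
[\langle \nabla\rangle^s, (\langle x\rangle^2+t)^{-1}] = -(\langle x\rangle^2+t)^{-1}\, [\langle \nabla\rangle^s, \langle x\rangle^2]\, (\langle x\rangle^2+t)^{-1},
\]
so that the general commutator becomes an integral over $t$ of conjugates of the integer-$\gamma$ commutator, the two resolvent factors providing the decay in $t$ needed to integrate against $t^{\gamma/2-1}$ on $[0,\infty)$.

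The main technical obstacle I anticipate is at the endpoints $p \in \{1, \infty\}$, where the Mikhlin--H\"ormander theorems underlying the $L^p$-boundedness of symbols of order $-1$ fail. There one must instead work with the explicit kernel of the Bessel-type operator $\langle \nabla\rangle^{s-2}$, which decays fast enough at infinity and has an integrable singularity at the origin, and then use convolution inequalities to obtain $L^1$- and $L^\infty$-boundedness uniformly in the resolvent parameter $t$. These kernel-based commutator estimates between weights and fractional derivatives, valid for every $p \in [1,\infty]$, are precisely the subject of Appendix \ref{app: commute} announced by the authors, and I would invoke them directly to close the argument.
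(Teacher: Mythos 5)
Your proposal follows essentially the same route as the paper's own proof in Appendix~\ref{app: commute}: reduce the norm equivalence to boundedness of the normalized commutator $[\langle x\rangle^\gamma,\langle\nabla\rangle^s]\langle\nabla\rangle^{-s}\langle x\rangle^{-\gamma}$ on $L^p$, treat $\gamma=2$ by explicit computation, reach fractional $\gamma$ through the Balakrishnan subordination formula and the resolvent commutator identity, and handle the endpoints $p\in\{1,\infty\}$ by explicit kernel bounds for $(-\Delta+\lambda)^{-1}$ and the Bessel potentials rather than Mikhlin--H\"ormander (this is exactly the content of Lemmas~\ref{lemma: expr sqrt}--\ref{lemma: [x^gamma, grad^s]}). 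Two cautions. First, your closing sentence defers the endpoint commutator estimates to ``Appendix~\ref{app: commute}'', but that appendix \emph{is} the proof of the present lemma, so as written the argument is circular; what you may legitimately import are only the auxiliary kernel bounds $\|\partial_k^m(-\Delta+\lambda)^{-1}\|_{\mathcal{B}(L^p)}\lesssim\lambda^{-m/2}$ and $\|\partial_k\langle\nabla\rangle^{-s-1}\|_{\mathcal{B}(L^p)}\lesssim1$ (Lemma~\ref{lemma: resolvent Lp}), which suffice. Second, the claim that ``the two resolvent factors provide the decay in $t$ needed to integrate against $t^{\gamma/2-1}$'' is too optimistic for $\gamma\ge1$: after the resolvent identity the integrand is $t^{\gamma/2}R(t)[\langle\nabla\rangle^s,\langle x\rangle^2]R(t)$, the inner commutator carries a factor $x_j$ that costs $(1+t)^{1/2}$ when absorbed into a resolvent, and $\int t^{\gamma/2}(1+t)^{-3/2}\,\mathrm{d}t$ diverges for $\gamma\ge1$; convergence is restored only by also exploiting the trailing weight $\langle x\rangle^{-\gamma}$ (as in the case split $\|R_{x^2}(t)x_k\langle x\rangle^{-\gamma}\|\lesssim(1+t)^{-1}$ for $1\le\gamma<2$ versus $(1+t)^{-1/2}$ for $0\le\gamma<1$ in the paper's proof). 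With those two points repaired your outline closes correctly.
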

Combined with Proposition \ref{prop:Leibniz-fractionnaire}, this lemma implies a weighted version of the fractional Leibniz rule. 
\begin{prop}
\label{prop:weighted leibniz}
    Let $0\leq \gamma\leq 2$, $1\leq p_0<\infty$, $1<p_j,\tilde{p}_j\leq\infty$ satisfying $\frac{1}{p_0}=\frac{1}{p_j}+\frac{1}{\tilde{p}_j}$ and $j\in\{1,2\}$. If $s\geq 0$, then there exists $C>0$ such that, for all $f,g\in\mathcal{S}$,
    \begin{equation*}
        \|fg\|_{H^{s,p_0}_\gamma} \leq C ( \|f \|_{H^{s,p_1}} \| g \|_{L_\gamma^{\tilde{p}_1}} + \|f\|_{L^{p_2}}\| g\|_{H_\gamma^{s,\tilde{p}_2}}) .
    \end{equation*}
\end{prop}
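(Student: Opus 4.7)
The plan is to reduce Proposition \ref{prop:weighted leibniz} to the unweighted fractional Leibniz rule (Proposition \ref{prop:Leibniz-fractionnaire}) by commuting the polynomial weight $\langle x\rangle^\gamma$ past the Bessel derivative $\langle\nabla\rangle^s$, using Lemma \ref{lemma: weight sob norm}. The asymmetry in the statement (the weight appears only on the $g$-factors on the right-hand side) suggests transferring the weight entirely onto $g$ before splitting derivatives.

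The argument I would write proceeds in four short steps. First, by the definition of the weighted Bessel-Sobolev norm and Lemma \ref{lemma: weight sob norm},
\begin{equation*}
    \|fg\|_{H^{s,p_0}_\gamma} = \|\langle x\rangle^\gamma \langle\nabla\rangle^s (fg)\|_{L^{p_0}} \lesssim \|\langle\nabla\rangle^s \langle x\rangle^\gamma (fg)\|_{L^{p_0}}.
\end{equation*}
Second, since $\langle x\rangle^\gamma (fg) = f \cdot (\langle x\rangle^\gamma g)$, the right-hand side equals $\|f\cdot(\langle x\rangle^\gamma g)\|_{H^{s,p_0}}$. Third, applying the (unweighted) fractional Leibniz rule of Proposition \ref{prop:Leibniz-fractionnaire} to the product $f\cdot(\langle x\rangle^\gamma g)$ with the same Hölder exponents gives
\begin{equation*}
    \|f\cdot(\langle x\rangle^\gamma g)\|_{H^{s,p_0}} \lesssim \|f\|_{H^{s,p_1}}\|\langle x\rangle^\gamma g\|_{L^{\tilde p_1}} + \|f\|_{L^{p_2}}\|\langle x\rangle^\gamma g\|_{H^{s,\tilde p_2}}.
\end{equation*}
Fourth, I would identify $\|\langle x\rangle^\gamma g\|_{L^{\tilde p_1}} = \|g\|_{L^{\tilde p_1}_\gamma}$ by definition \eqref{eq:def_Hsgamma0}--\eqref{eq:def_Hsgamma}, and use Lemma \ref{lemma: weight sob norm} once more in the reverse direction to bound $\|\langle x\rangle^\gamma g\|_{H^{s,\tilde p_2}} = \|\langle\nabla\rangle^s \langle x\rangle^\gamma g\|_{L^{\tilde p_2}} \lesssim \|g\|_{H^{s,\tilde p_2}_\gamma}$. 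Chaining the three inequalities yields the claim.

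There is no real obstacle: the substance is entirely contained in Lemma \ref{lemma: weight sob norm}, which permits the $\langle\nabla\rangle^s$ and $\langle x\rangle^\gamma$ factors to be freely reordered at the cost of an equivalence of norms (and this is precisely where the restriction $0\le\gamma\le 2$ enters). One small point to keep in mind is the applicability of Proposition \ref{prop:Leibniz-fractionnaire} when $\tilde p_1 = \infty$ or $p_2 = \infty$: in those endpoint cases one must check that the relevant factor ($\langle x\rangle^\gamma g$ or $f$) lies in the closure of $\mathcal{S}$ in $L^\infty$, i.e.\ in $\mathcal{C}^0_\infty$. Since we are assuming $f,g\in\mathcal{S}$, both $f$ and $\langle x\rangle^\gamma g$ are Schwartz, so this causes no difficulty and the fractional Leibniz rule applies directly.
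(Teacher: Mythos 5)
Your proof is correct and is precisely the intended argument: the paper gives no written proof, stating only that the result follows by combining Lemma~\ref{lemma: weight sob norm} with Proposition~\ref{prop:Leibniz-fractionnaire}, and your four steps (commute $\langle x\rangle^\gamma$ past $\langle\nabla\rangle^s$, absorb the weight into $g$, apply the unweighted Leibniz rule, commute back) are exactly that combination. Your remark that $\langle x\rangle^\gamma g$ remains Schwartz, so the endpoint cases $\tilde p_1=\infty$ or $p_2=\infty$ cause no trouble, is a correct and worthwhile check.
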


\subsection{Functional inequalities in Lorentz spaces }
\label{subsec:Lorentz-ineq}
For $p$ in $(1,\infty)$ and $q$ in $[1,\infty]$ the Lorentz space $L^{p,q}$ can be defined as the real interpolation space $[L^1,L^\infty]_{1-1/p,q}$. The spaces $L^{p,\infty}$ coincide with the weak-Lebesgue spaces, $L^{p,p}=L^p$ and if $1\leq q_1\leq q_2\leq\infty$ then $L^{p,q_1}$ is continuously embedded in $L^{p,q_2}$.

The Lorentz spaces have several interesting features, in particular they include the weak spaces $L^{p,\infty}$, which themselves include $1/|x|^{d/p}$,
Hölder and Young inequalities hold in Lorentz spaces, as well as a refined version of the usual Sobolev embeddings. More precisely we have the following three propositions, which are proven for instance in \cite{LemarieRieusset}*{Chapter 2}.

Similarly as before, for $f$ in $\mathcal{S}'$ (such that $\mathcal{F}f\in L^1_{\mathrm{loc}}$) and $s\ge0$, we write $|\nabla|^s f = \mathcal{F}^{-1}(|\xi|^s \mathcal{F} f)$. 

\begin{prop}[Sobolev inequality in Lorentz spaces]\label{pro:sobolev-lorentz}
Assume that $0<\frac{s}{d}<\frac{1}{p}<1$
and~$1\leq q\leq\infty$.
If $\frac{1}{p}-\frac{s}{d}=\frac{1}{p_0}$~,
then
\[
\|u\|_{L^{p_0,q}} \lesssim \| |\nabla|^s u \|_{L^{p,q}}\,.
\]
In particular
\[
\|u\|_{L^{p_0,p}} \lesssim \| u \|_{H^{s,p}}\,.
\]
\end{prop}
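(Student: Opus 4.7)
The plan is to derive this Sobolev-type embedding in the classical way, by combining the Riesz potential representation of $u$ with the O'Neil--Young convolution inequality in Lorentz spaces (the latter being one of the tools recalled in \cite{LemarieRieusset}*{Chapter 2}, and used implicitly just above in the list of propositions).

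\textbf{Step 1: Riesz potential representation.} For $0<s<d$ (which is guaranteed by $0<s/d<1/p<1$) one has, as tempered distributions, $\mathcal{F}^{-1}(|\xi|^{-s})=c_{d,s}|x|^{s-d}$ with a positive constant $c_{d,s}$. Hence for $u\in\mathcal{S}$ (and then by density with respect to the natural norm),
\[
u = |\nabla|^{-s}|\nabla|^{s}u = c_{d,s}\,|x|^{s-d}\ast \bigl(|\nabla|^{s}u\bigr).
\]
The kernel $|x|^{s-d}$ belongs to the weak space $L^{d/(d-s),\infty}(\R^d)$.

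\textbf{Step 2: Convolution inequality in Lorentz spaces.} I would invoke the generalized Young (O'Neil) inequality: for $p_1,p_2,p_3\in(1,\infty)$ with $\tfrac{1}{p_1}+\tfrac{1}{p_2}=1+\tfrac{1}{p_3}$ and $q_1,q_2,q_3\in[1,\infty]$ with $\tfrac{1}{q_3}\le\tfrac{1}{q_1}+\tfrac{1}{q_2}$,
\[
\|f\ast g\|_{L^{p_3,q_3}}\lesssim \|f\|_{L^{p_1,q_1}}\|g\|_{L^{p_2,q_2}}.
\]
Applying this with $f=|\nabla|^s u$, $g=|x|^{s-d}$, $(p_1,q_1)=(p,q)$, $(p_2,q_2)=(d/(d-s),\infty)$, $(p_3,q_3)=(p_0,q)$, the exponent equation
\[
\frac{1}{p}+\frac{d-s}{d}=1+\frac{1}{p_0}
\]
is exactly the hypothesis $\tfrac{1}{p_0}=\tfrac{1}{p}-\tfrac{s}{d}$, while the $q$-condition is satisfied with $q_3=q$ since $q_2=\infty$. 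The admissibility $p_0\in(1,\infty)$ follows from $0<s/d<1/p<1$: indeed $1/p_0=1/p-s/d$ is strictly positive and strictly less than $1$. This yields
\[
\|u\|_{L^{p_0,q}}\lesssim \bigl\| |\nabla|^s u\bigr\|_{L^{p,q}}.
\]

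\textbf{Step 3: Passage to Bessel--Sobolev norms.} Specializing the first inequality to $q=p$ and using $L^{p,p}=L^p$ gives $\|u\|_{L^{p_0,p}}\lesssim \| |\nabla|^s u\|_{L^p}$. To replace $|\nabla|^s$ by $\langle\nabla\rangle^s$ on the right-hand side, it suffices to observe that $|\nabla|^s\langle\nabla\rangle^{-s}$ is a bounded operator on $L^p$ for every $1<p<\infty$: its symbol $m(\xi)=|\xi|^s(1+|\xi|^2)^{-s/2}$ is bounded, smooth away from the origin, satisfies $m(\xi)\to 1$ at infinity, and a direct computation shows $|\xi|^{|\alpha|}|\partial^\alpha m(\xi)|\lesssim 1$ for every multi-index $\alpha$, so the Mihlin--H\"ormander multiplier theorem applies. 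Hence
\[
\bigl\| |\nabla|^s u\bigr\|_{L^p}\lesssim \|\langle\nabla\rangle^s u\|_{L^p}=\|u\|_{H^{s,p}},
\]
which combined with the previous inequality yields the second claim.

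The only delicate point in this plan is checking that all parameters $(p_1,p_2,p_3)$ and $(q_1,q_2,q_3)$ lie in the admissible range for the O'Neil inequality, which is a direct verification from $0<s/d<1/p<1$; everything else is a combination of classical facts and requires no substantial additional work beyond referring to \cite{LemarieRieusset}*{Chapter 2}.
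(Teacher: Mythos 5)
Your proof is correct. The paper does not actually prove this proposition—it delegates it to \cite{LemarieRieusset}*{Chapter 2}—and your argument (the Riesz potential representation $u=c_{d,s}|x|^{s-d}*|\nabla|^{s}u$ with $|x|^{s-d}\in L^{d/(d-s),\infty}$, the O'Neil--Young convolution inequality already recorded as Proposition~\ref{pro:young-lorentz}, and the Mihlin--H\"ormander theorem to pass from $|\nabla|^{s}$ to $\langle\nabla\rangle^{s}$) is exactly the standard route taken in that reference, with all parameter checks done correctly.
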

\begin{prop}[Hölder inequality in Lorentz spaces]\label{pro:holder-lorentz}
    If $1<p,p_1,p_2<\infty$, $1\leq q,q_1,q_2 \leq \infty$, $\frac{1}{p}+\frac{1}{p'}=1$, $\frac{1}{q}+\frac{1}{q'}=1$, $\frac{1}{p_1}+\frac{1}{p_2}=\frac{1}{p}$, and $\frac{1}{q_1}+\frac{1}{q_2}=\frac{1}{q}$, then
    \begin{align*}
        \|f_1 f_2\|_{L^{p,q}} & \lesssim \|f_1\|_{L^{p,q}} \|f_2\|_{L^{\infty}}\\
        \|f_1 f_2\|_{L^{1}} & \lesssim \|f_1\|_{L^{p,q}} \|f_2\|_{L^{p',q'}} \\
        \|f_1 f_2\|_{L^{p,q}} & \lesssim \|f_1\|_{L^{p_1,q_1}} \|f_2\|_{L^{p_2,q_2}}
    \end{align*}
    provided that the right-hand sides are finite.
\end{prop}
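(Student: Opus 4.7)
The plan is to reduce all three bounds to one-dimensional weighted Hölder inequalities applied to the non-increasing rearrangement $f^*$ on $(0,\infty)$. For $1<p<\infty$ and $1\le q\le\infty$, the real interpolation definition used in the paper yields the equivalent quasi-norm
$$\|f\|_{L^{p,q}}\sim \|t^{1/p}f^*(t)\|_{L^q(\mathbb{R}_+,\mathrm{d}t/t)},$$
with the usual modification $\|t^{1/p}f^*(t)\|_{L^\infty}=\sup_{t>0}t^{1/p}f^*(t)$ when $q=\infty$; this is the form I would work with throughout.

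For the first inequality I would simply use the pointwise bound $|f_1f_2|\le \|f_2\|_{L^\infty}|f_1|$ a.e., which by monotonicity of the rearrangement gives $(f_1f_2)^*(t)\le\|f_2\|_{L^\infty}f_1^*(t)$ for every $t>0$, and then insert this estimate into the quasi-norm above. For the second inequality I would combine the Hardy–Littlewood rearrangement inequality
$$\int_{\mathbb{R}^d}|f_1f_2|(x)\,\mathrm{d}x\le \int_0^\infty f_1^*(t)f_2^*(t)\,\mathrm{d}t$$
with a one-dimensional Hölder inequality on $(\mathbb{R}_+,\mathrm{d}t/t)$ applied to $t^{1/p}f_1^*(t)$ and $t^{1/p'}f_2^*(t)$ with conjugate exponents $q$ and $q'$, noting that $\tfrac1p+\tfrac1{p'}=1$ ensures the weights recombine correctly to $1$.

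The third inequality is the delicate case. I would invoke O'Neil's inequality $(f_1f_2)^{**}(t)\le f_1^{**}(t)\,f_2^{**}(t)$, where $f^{**}(t):=t^{-1}\int_0^tf^*(s)\,\mathrm{d}s$, and then appeal to Hardy's inequality to obtain the equivalence $\|t^{1/p}f^{**}(t)\|_{L^q(\mathrm{d}t/t)}\sim \|f\|_{L^{p,q}}$, which is valid precisely because $p>1$ (and similarly for each couple $(p_j,q_j)$). Splitting $t^{1/p}=t^{1/p_1}\cdot t^{1/p_2}$ and applying the one-dimensional Hölder inequality in $L^{q_1}(\mathrm{d}t/t)\times L^{q_2}(\mathrm{d}t/t)$, with the conjugate relation $\tfrac1{q_1}+\tfrac1{q_2}=\tfrac1q$, then yields the desired bound.

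The main obstacle will be making the third step watertight: the equivalence between the $f^*$- and $f^{**}$-based quasi-norms requires care at the endpoints $q=\infty$ and $q_j=\infty$, and breaks down at $p=1$, which is exactly why the statement excludes that case. As an alternative route I would consider bilinear real interpolation applied to the product map $(f_1,f_2)\mapsto f_1f_2$, starting from the classical Hölder inequalities $L^{p_1}\times L^{p_2}\to L^p$ and the trivial bound $L^\infty\times L^p\to L^p$ (and its transpose); this bypasses O'Neil but would require invoking a bilinear version of the Lions–Peetre theorem, representing essentially equivalent technical work.
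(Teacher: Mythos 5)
Your argument is correct, but note that the paper does not actually prove this proposition: it is stated as a standard fact with a reference to Lemari\'e--Rieusset, Chapter~2, so there is no in-paper proof to compare against. Your rearrangement-based route is essentially the standard textbook proof one would find in that reference: the first bound from $(f_1f_2)^*\le\|f_2\|_{L^\infty}f_1^*$, the second from Hardy--Littlewood plus one-dimensional H\"older on $(0,\infty)$ with measure $\diff t/t$, and the third from O'Neil's product inequality $(f_1f_2)^{**}\le f_1^{**}f_2^{**}$ combined with Hardy's inequality (valid for each $p_j>1$, including the endpoints $q_j=\infty$, so the caveat you raise is harmless). One small simplification worth knowing for the third bound: the pointwise estimate $(f_1f_2)^*(t)\le f_1^*(t/2)\,f_2^*(t/2)$ followed by a change of variables in the $L^q(\diff t/t)$ quasi-norm gives the same conclusion without invoking O'Neil or Hardy at all, and is arguably the most elementary route.
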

\begin{prop}[Young inequality in Lorentz spaces]\label{pro:young-lorentz}
    If $1<p,p_1,p_2<\infty$, $1\leq q,q_1,q_2 \leq \infty$, $\frac{1}{p}+\frac{1}{p'}=1$, $\frac{1}{q}+\frac{1}{q'}=1$, $\frac{1}{p_1}+\frac{1}{p_2}=\frac{1}{p}+1$, $\frac{1}{q_1}+\frac{1}{q_2}=\frac{1}{q}$, then
    \begin{align*}
    \|f_1* f_2\|_{L^{p,q}} &\lesssim \|f_1\|_{L^{p,q}} \|f_2\|_{L^{1}} \\
    \|f_1* f_2\|_{L^{\infty}} &\lesssim \|f_1\|_{L^{p,q}} \|f_2\|_{L^{p',q'}} \\
    \|f_1 *f_2\|_{L^{p,q}} &\lesssim \|f_1\|_{L^{p_1,q_1}} \|f_2\|_{L^{p_2,q_2}}
    \end{align*}
    provided that the right-hand sides are finite.
\end{prop}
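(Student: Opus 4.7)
The plan is to deduce the three Young-type bounds by combining real interpolation, the Hölder inequality in Lorentz spaces (already stated as the previous proposition), and the classical rearrangement inequality of O'Neil.

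For the first inequality, I would fix $f_2\in L^1$ and consider the convolution operator $T_{f_2}:f_1\mapsto f_1*f_2$. The classical Young inequality ensures that $T_{f_2}$ is bounded on $L^{p_0}$ and $L^{p_1}$ with operator norm at most $\|f_2\|_{L^1}$, for any choice of $1\le p_0<p<p_1\le\infty$. Since the Lorentz spaces arise as real interpolation spaces, $L^{p,q}=(L^{p_0},L^{p_1})_{\theta,q}$ with $\frac{1}{p}=\frac{1-\theta}{p_0}+\frac{\theta}{p_1}$, the real interpolation theorem yields $\|T_{f_2}\|_{L^{p,q}\to L^{p,q}}\lesssim\|f_2\|_{L^1}$, which is exactly the first claimed estimate. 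For the second inequality, I would apply the Hölder bound $L^{p,q}\times L^{p',q'}\to L^1$ pointwise: writing $(f_1*f_2)(x)=\int f_1(y)f_2(x-y)\,dy$ gives $|(f_1*f_2)(x)|\lesssim\|f_1\|_{L^{p,q}}\|f_2(x-\cdot)\|_{L^{p',q'}}$, and translation invariance of the decreasing rearrangement (hence of the Lorentz quasi-norm) yields the desired $L^\infty$ bound after taking the supremum in $x$.

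The main obstacle is the third inequality, which is the genuine O'Neil--Young estimate in Lorentz spaces. Here I would rely on O'Neil's rearrangement inequality
\begin{equation*}
(f_1*f_2)^{**}(t)\le t\,f_1^{**}(t)\,f_2^{**}(t)+\int_t^\infty f_1^*(s)\,f_2^*(s)\,ds,\qquad t>0,
\end{equation*}
where $f^*$ denotes the decreasing rearrangement of $f$ and $f^{**}(t)=\frac{1}{t}\int_0^t f^*(s)\,ds$. Combined with the equivalent characterization $\|f\|_{L^{p,q}}\simeq\|t^{\frac{1}{p}}f^{**}(t)\|_{L^q(dt/t)}$, valid for $p>1$, the estimate reduces to controlling the two terms on the right-hand side via Hardy-type inequalities on $L^q(dt/t)$, together with the scaling relations $\frac{1}{p_1}+\frac{1}{p_2}=\frac{1}{p}+1$ and $\frac{1}{q_1}+\frac{1}{q_2}=\frac{1}{q}$. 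These Hardy estimates produce precisely the product of the two Lorentz quasi-norms on the right. An alternative, more abstract route would be to apply a bilinear real interpolation theorem directly to the classical Young inequalities at the endpoints provided by the first two bounds; in either case the conclusion follows. The delicate step is the Hardy-type control of the integral remainder $\int_t^\infty f_1^*(s)f_2^*(s)\,ds$, since one must keep track of the exact exponents $(p,q)$ on both sides, and this is where most of the technical effort concentrates.
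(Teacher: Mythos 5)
Your proposal is correct, but note that the paper does not actually prove this proposition: all three functional inequalities in Lorentz spaces (Sobolev, H\"older, Young) are stated with a citation to Lemari\'e--Rieusset, Chapter~2, and only the measure variant (Proposition~\ref{pro:young-lorentz-measure}) is proved in the text. Your argument for the first bound — freezing $f_2\in L^1$, observing that $T_{f_2}$ has operator norm at most $\|f_2\|_{L^1}$ on two Lebesgue endpoints, and real-interpolating — is precisely the argument the paper spells out for Proposition~\ref{pro:young-lorentz-measure} with $L^1$ replaced by $\meas$, so there you are fully aligned with the authors' own technique. The second bound via the $L^{p,q}\times L^{p',q'}\to L^1$ H\"older estimate applied to $y\mapsto f_1(y)f_2(x-y)$ is fine; just make explicit that $y\mapsto x-y$ is measure-preserving (a reflection composed with a translation), so the decreasing rearrangement, hence the Lorentz quasi-norm, of $f_2(x-\cdot)$ coincides with that of $f_2$. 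The third bound is where you depart from anything in the paper and follow O'Neil's classical route; the exponent bookkeeping checks out: writing $t^{1/p+1}=t^{1/p_1}t^{1/p_2}$ handles the term $t\,f_1^{**}(t)f_2^{**}(t)$ by H\"older in $L^q(dt/t)$ with $\frac1q=\frac1{q_1}+\frac1{q_2}$, and the remainder $\int_t^\infty f_1^*f_2^*\,ds=\int_t^\infty s^{-1/p}\bigl(s^{1/p_1}f_1^*\bigr)\bigl(s^{1/p_2}f_2^*\bigr)\frac{ds}{s}$ is controlled by the dual Hardy inequality since $1/p>0$ (this uses $p>1$, which is in the hypotheses, and $p_1,p_2>1$ for the equivalence $\|f\|_{L^{p_i,q_i}}\simeq\|t^{1/p_i}f^{**}\|_{L^{q_i}(dt/t)}$). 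The only genuinely nontrivial input you quote without proof is O'Neil's rearrangement inequality itself, which is a standard cited result, so this is acceptable; your sketch is a correct, essentially self-contained replacement for the paper's citation.
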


In the next proposition, we remark that the Young inequalities in Lebesgue or Lorentz spaces involving the Lebesgue space $L^1$ can be generalized to the space $\meas$ of finite, signed Radon measures (recall that $\meas$ is equipped with the total variation norm $\|\mu\|_{\meas }=|\mu|(\mathbb{R}^d)$).
\begin{prop}\label{pro:young-lorentz-measure}
    If $1<p<\infty$, $1\leq q \leq \infty$, then
    \begin{align*}
    \|f_1* f_2\|_{L^{1}} 
    &\lesssim 
    \|f_1\|_{L^{1}} 
    \|f_2\|_{\meas}\,,\\
    \|f_1* f_2\|_{L^{\infty}} 
    &\lesssim 
    \|f_1\|_{L^{\infty}} 
    \|f_2\|_{\meas}\,,\\
    \|f_1* f_2\|_{L^{p,q}} 
    &\lesssim 
    \|f_1\|_{L^{p,q}} 
    \|f_2\|_{\meas}\,,
    \end{align*}
provided that the right-hand sides are finite.
\end{prop}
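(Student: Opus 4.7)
The plan is to establish the two endpoint estimates first, from which the Lorentz bound will follow by real interpolation. For a function $f$ and a finite signed Radon measure $\mu$, the convolution is defined by
\[
(f\ast\mu)(x) = \int_{\mathbb{R}^d} f(x-y)\,\mathrm{d}\mu(y),
\]
which gives the pointwise control $|(f\ast\mu)(x)| \le \int |f(x-y)|\,\mathrm{d}|\mu|(y)$. To prove the $L^1$ bound, I would integrate this inequality in $x$ and apply Fubini's theorem, which is justified because $|\mu|$ is a finite measure; this yields $\|f\ast\mu\|_{L^1}\le \|f\|_{L^1}\|\mu\|_{\meas}$. The $L^\infty$ bound is even more direct: the same pointwise estimate immediately gives $|(f\ast\mu)(x)|\le \|f\|_{L^\infty}|\mu|(\mathbb{R}^d) = \|f\|_{L^\infty}\|\mu\|_{\meas}$.

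For the Lorentz inequality, fix $\mu\in\meas$ and consider the linear operator $T_\mu: f \mapsto f\ast\mu$. The two endpoint estimates established above say exactly that $T_\mu$ is bounded both on $L^1$ and on $L^\infty$ with operator norm at most $\|\mu\|_{\meas}$. Since $L^{p,q}$ coincides with the real interpolation space $[L^1,L^\infty]_{1-1/p,q}$ (with equivalent norms, as recalled at the beginning of Subsection~\ref{subsec:Lorentz-ineq}), the real interpolation theorem applied to $T_\mu$ yields its boundedness on $L^{p,q}$ with
\[
\|T_\mu f\|_{L^{p,q}} \lesssim \|\mu\|_{\meas}\,\|f\|_{L^{p,q}},
\]
where the implicit constant depends only on $p$ and $q$. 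This is precisely the third estimate of the proposition.

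The only minor subtlety I foresee is making sense of $f\ast\mu$ when $f$ lies only in $L^{p,q}$, since the defining integral need not converge pointwise everywhere. I would handle this by first restricting $T_\mu$ to the dense subspace $L^1\cap L^\infty\subset L^{p,q}$ (dense for $1<p<\infty$ and $1\le q<\infty$), where the convolution is unambiguously defined, and then extending by continuity thanks to the interpolated estimate; the extension coincides with the almost-everywhere definition whenever the latter makes sense. The case $q=\infty$ is recovered by a standard duality argument, or equivalently by noting that $T_\mu$ acts consistently on the Schwartz class and extending. No genuinely hard step is expected here, since all three inequalities reduce to translation invariance plus the finiteness of $|\mu|(\mathbb{R}^d)$.
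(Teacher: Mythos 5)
Your proof is correct and follows essentially the same route as the paper: both establish the $L^1$ and $L^\infty$ endpoint bounds via the pointwise estimate and Fubini's theorem, then obtain the Lorentz bound by real interpolation using $L^{p,q}=[L^1,L^\infty]_{1-1/p,q}$. The only cosmetic difference is that the paper sidesteps your density/extension discussion by defining the operator $f_1\mapsto f_1*f_2$ directly on $L^1+L^\infty$, which already contains $L^{p,q}$.
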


\begin{proof}
If $f_1\in L^{1}$ and $f_2 \in\meas$, then $(f_1*f_2)(x)$ is defined almost everywhere and defines an element of $L^{1}$. Indeed $\iint|f_1(x-y)|\diff x\,\mathrm{d}|f_2|(y)=\int\|f_1\|_{L^{1}}\,\diff |f_2|(y)=\|f_1\|_{L^{1}}\,|f_2|(\mathbb{R}^{d})<\infty$, hence the Fubini theorem applies and 
\[
\iint|f_1(x-y)|\diff |f_2|(y)\,\diff x<\infty.
\]
This implies that $y\mapsto|f_1(x-y)|$ is $|f_2|$-integrable for almost every $x$. We can thus define the function
\begin{equation}
(f_1*f_2)(x)=\int f_1(x-y)\,\diff f_2(y) \label{eq:def-conv-measure}
\end{equation}
almost everywhere. Using again the Fubini theorem, this function is in $L^1$. 

Actually~\eqref{eq:def-conv-measure} also makes sense when $f_1$ is in $L^\infty$, and the inequality $\|f_1*f_2\|_{L^\infty}\leq \|f_1\|_{L^\infty} \|f_2\|_{\meas}$ holds. Moreover if $f_1$ is in $L^\infty \cap L^1$ both definitions of $f_1*f_2$ coincide. Hence, for $f_2$ in $\meas$, the operator $\Phi_{f_2}$ defined on $L^1+L^\infty$ by $\Phi_{f_2}(f_1)=f_1*f_2$ is well defined, continuous from $L^1$ to $L^1$ and from $L^\infty$ to $L^\infty$ and thus, by real interpolation, it defines a continous operator from $L^{p,q}$ to $L^{p,q}$ for $1<p<\infty$ and $1\leq q\leq \infty$, with an operator norm smaller than $C_{p,q}\|f\|_{\meas}$. This implies the stated Young inequality.
\end{proof}

To conclude this subsection we recall a result about derivatives of convolution products:
\begin{prop}[Young inequality in Bessel-Sobolev spaces]
    If $0\leq s < \infty$, $1\leq p_j \leq \infty$ for $j$ in $\{0,1,2\}$, and $1+\frac1{p_0}=\frac1{p_1}+\frac1{p_2}$ then
    \begin{align}
    \|f_1* f_2\|_{H^{s,p_0}} 
    &\lesssim \|f_1\|_{\meas} \|f_2\|_{H^{s,p_0}} \label{eq:conv-M-Hsp}\\
    \|f_1* f_2\|_{H^{s,p_0}} 
    & \lesssim \|f_1\|_{L^{p_1}} \|f_2\|_{H^{s,p_2}}
    \end{align}
    provided that the right-hand sides are finite.
    
    If, moreover, $1< p_j < \infty$ for $j$ in $\{0,1,2\}$, then
    \begin{equation}
    \|f_1* f_2\|_{H^{s,p_0}} 
     \lesssim \|f_1\|_{L^{p_1,\infty}} \|f_2\|_{H^{s,p_2}}\,.
    \end{equation}
\end{prop}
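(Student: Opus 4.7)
The plan is to reduce all three bounds to the scalar Young-type inequalities already established in Propositions \ref{pro:young-lorentz} and \ref{pro:young-lorentz-measure}, together with the classical Young inequality, by exploiting the fact that the Bessel multiplier $\langle\nabla\rangle^s$ commutes with convolution. The key identity is
\[
\langle\nabla\rangle^s(f_1*f_2)=f_1*(\langle\nabla\rangle^s f_2),
\]
which on the Fourier side amounts to the trivial associativity $\langle\xi\rangle^s(\hat f_1\hat f_2)=\hat f_1(\langle\xi\rangle^s\hat f_2)$. First I would verify that this identity holds in $\mathcal{S}'$ under the hypotheses on $f_1$ and $f_2$, so that the $H^{s,p_0}$ norm on the left can be computed as the $L^{p_0}$ norm of the right-hand side.

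Once the identity is in place, each of the three inequalities follows immediately by taking the $L^{p_0}$ norm and applying the appropriate scalar Young inequality to the convolution of $f_1$ with $\langle\nabla\rangle^s f_2\in L^{p_2}$ (respectively in $L^{p_0}$ for the measure case): for \eqref{eq:conv-M-Hsp}, Proposition \ref{pro:young-lorentz-measure} gives directly $\|f_1*(\langle\nabla\rangle^s f_2)\|_{L^{p_0}}\lesssim\|f_1\|_{\meas}\|\langle\nabla\rangle^s f_2\|_{L^{p_0}}$; for the second bound, the classical Young inequality yields the same with $\|f_1\|_{L^{p_1}}$ on the right; and for the weak case, Proposition \ref{pro:young-lorentz} applied with the identification $L^{p_2}=L^{p_2,p_2}$ and $f_1\in L^{p_1,\infty}$ gives $f_1*(\langle\nabla\rangle^s f_2)\in L^{p_0,p_2}$, after which I would conclude via the embedding $L^{p_0,p_2}\hookrightarrow L^{p_0,p_0}=L^{p_0}$, valid because the scaling relation $1+\frac1{p_0}=\frac1{p_1}+\frac1{p_2}$ rewrites as $\frac1{p_2}-\frac1{p_0}=\frac1{p_1'}\ge0$, so $p_2\le p_0$.

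The main (and essentially the only) obstacle lies in justifying the commutation identity rigorously when $f_1$ is a general finite signed measure, as then neither side is a priori represented by a smooth enough function to move $\langle\nabla\rangle^s$ through the convolution by Fubini. To handle this I would use a mollification argument: introduce $f_1^\varepsilon:=\rho_\varepsilon*f_1\in L^1\cap L^\infty$ for a standard mollifier $\rho_\varepsilon$, for which the identity holds by direct Fourier computation on $\mathcal{S}$; then pass to the limit as $\varepsilon\to0$ in $\mathcal{S}'$, using $f_1^\varepsilon\rightharpoonup f_1$ in the weak-$*$ topology of $\meas$, the continuity of $\langle\nabla\rangle^s$ on $\mathcal{S}'$, and the estimates of the first step applied to $f_1-f_1^\varepsilon$ to control the convolution remainder in $L^{p_0}$. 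No further harmonic-analytic input is required beyond what has already been assembled.
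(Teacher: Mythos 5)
Your proposal is correct and follows essentially the same route as the paper: pass $\langle\nabla\rangle^s$ through the convolution onto $f_2$ via the Fourier transform, then apply the Young inequalities of Propositions~\ref{pro:young-lorentz} and~\ref{pro:young-lorentz-measure} (your bookkeeping for the weak-type case, landing in $L^{p_0,p_2}\hookrightarrow L^{p_0}$ since $p_2\le p_0$, is exactly what is implicit in the paper's one-line conclusion). The extra mollification step you add to justify the commutation identity for a general measure $f_1$ is more careful than the paper, which treats it as a formal Fourier computation, but it is not a different method.
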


\begin{proof}Using the definition of Bessel-Sobolev spaces, the property that the Fourier transform of a convolution product is a constant times the product of the Fourier transforms, we deduce that
   \begin{align*}
       \|f_1*f_2\|_{H^{s,p_0}}
       &= \|\mathcal{F}^{-1}\langle\xi\rangle^s\mathcal{F}(f_1*f_2)\|_{L^{p_0}}\\ \allowdisplaybreaks
       &=(2\pi)^{\frac{d}{2}} \|\mathcal{F}^{-1}(\mathcal{F}(f_1)\langle\xi\rangle^s\mathcal{F}(f_2))\|_{L^{p_0}} \\ \allowdisplaybreaks
       &= \|f_1*(\jnabla^sf_2)\|_{L^{p_0}}.
   \end{align*}
It just remains to apply the Young inequalities in Propositions~\ref{pro:young-lorentz} and \ref{pro:young-lorentz-measure} to obtain the results stated above.
\end{proof}

\subsection{Multilinear estimates}

To handle the non-linearity in the pseudo-relativistic Hartree equation \eqref{eq:hartree_semi_intro}, we will use several times the following estimates.

\begin{lemma}[Multilinear estimates I]\label{lem:bounds-forcing-short-range I}
Assume $s\geq0$, $p>2$ and $1< q\leq \infty$.
Let $w_{1}\in\meas$ and $w_{q}\in L^{q,m}$ with $m \in \{q, \infty\}$.
For $u_j\in H^s\cap L^\sigma$ 
(or $u_j\in H^s\cap L^{\sigma,\alpha}$ for the last inequality),
 $j\in  \mathbb{Z}_3 = \mathbb{Z}/(3\mathbb{Z})$, the following inequalities hold:
\begin{align}
\|(w_{1}*(u_{0}u_{1}))u_{2}\|_{H^{s}} &  \lesssim\|w_{1}\|_{\meas}\sum_{j\in\mathbb{Z}_{3}}\|u_{j}\|_{H^{s}}\|u_{j+1}\|_{L^{\sigma}}\|u_{j+2}\|_{L^{\sigma}} \, \textnormal{with} \, \sigma = \infty \,, \, s>\frac{d}{2}\,,  \label{eq:convolution-Hartree-Hs-M1}\\
 \allowdisplaybreaks
\|(w_{1}*(u_{0}u_{1}))u_{2}\|_{H^{s,p'}} & \lesssim\|w_{1}\|_{\meas}\sum_{j\in\mathbb{Z}_{3}}\|u_{j}\|_{H^{s}} \|u_{j+1}\|_{L^{\sigma}} \|u_{j+2}\|_{L^{\sigma}} \, \textnormal{with} \, \frac{1}{p'} = \frac{1}{2}+\frac{2}{\sigma}\,, \label{eq:convolution-Hartree-Hspprime-M1}\\
\|(w_{q}*(u_{0}u_{1}))u_{2}\|_{H^{s,p'}} & \lesssim\|w_{q}\|_{L^{q,\infty}}\sum_{j\in\mathbb{Z}_{3}}\|u_{j}\|_{H^{s}} \|u_{j+1}\|_{L^{\sigma}} \|u_{j+2}\|_{L^{\sigma}} \, \textnormal{with} \, \frac{1}{p'}+\frac{1}{2}=\frac{1}{q}+\frac{2}{\sigma}, \label{eq:convolution-Hartree-Hspprime-Lq}\\
\|(w_{q}*(u_{0}u_{1}))u_{2}\|_{H^{s}} &  \lesssim\|w_{q}\|_{L^{q,m}}\sum_{j\in\mathbb{Z}_{3}}\|u_{j}\|_{H^{s}} \|u_{j+1}\|_{L^{\sigma,\alpha}} \|u_{j+2}\|_{L^{\sigma,\alpha}}  \, \textnormal{with} \begin{cases}
    1=\frac{1}{q}+\frac{2}{\sigma}\\
    1=\frac{1}{m}+\frac{2}{\alpha} . 
\end{cases}\label{eq:convolution-Hartree-Hs-Lq}
\end{align}
\end{lemma}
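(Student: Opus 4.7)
The plan is to derive all four estimates by the same three-step scheme. Step one is to apply the fractional Leibniz rule (Proposition~\ref{prop:Leibniz-fractionnaire}) to the outer product $(w*(u_{0}u_{1}))\cdot u_{2}$ with target space $H^{s,p_{0}}$ (where $p_{0}=2$ in \eqref{eq:convolution-Hartree-Hs-M1} and \eqref{eq:convolution-Hartree-Hs-Lq}, and $p_{0}=p'$ in \eqref{eq:convolution-Hartree-Hspprime-M1} and \eqref{eq:convolution-Hartree-Hspprime-Lq}); this splits the norm into two symmetric contributions, one controlling $u_{2}$ in a Lebesgue norm $L^{\tilde p_{1}}$ and the other in a Bessel--Sobolev norm $H^{s,\tilde p_{2}}$. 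Step two is to move the convolution off $u_{0}u_{1}$: for $w_{1}\in\meas$ I use \eqref{eq:conv-M-Hsp} and Proposition~\ref{pro:young-lorentz-measure}, while for $w_{q}\in L^{q,m}$ I use Proposition~\ref{pro:young-lorentz} together with the Bessel--Sobolev Young inequality; both convert the convolution into a multiplicative $\|w\|$-loss at the cost of a gain of integrability $\tfrac{1}{q}-1$ (trivial in the measure case). Step three is to expand $u_{0}u_{1}$ by fractional Leibniz when a Bessel--Sobolev norm remains, and by plain Hölder (in Lebesgue or Lorentz spaces, Proposition~\ref{pro:holder-lorentz}) otherwise. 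Each of the two Leibniz splittings generates two symmetric sub-terms, so that the three resulting contributions are exactly the cyclic permutations of $(H^{s},L^{\sigma},L^{\sigma})$ indexed by $j\in\mathbb{Z}_{3}$.

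For \eqref{eq:convolution-Hartree-Hs-M1} the natural choice $(p_{1},\tilde p_{1})=(2,\infty)$ and $(p_{2},\tilde p_{2})=(\infty,2)$ suffices, since convolution with a finite measure acts boundedly on both $H^{s}$ and $L^{\infty}$, and the inner splittings of $u_{0}u_{1}$ use fractional Leibniz in $H^{s}$ and plain Hölder in $L^{\infty}$. For \eqref{eq:convolution-Hartree-Hspprime-M1} and \eqref{eq:convolution-Hartree-Hspprime-Lq} I take instead $\tilde p_{1}=\sigma$ and $\tilde p_{2}=2$ so that $u_{2}$ ends up in either $L^{\sigma}$ or $H^{s}$; the remaining exponents $p_{1},p_{2}$ and the intermediate integrability $r$ produced by the Young step are then forced by $\tfrac{1}{p'}=\tfrac{1}{p_{j}}+\tfrac{1}{\tilde p_{j}}$ and, in the short-range case, by the extra convolution loss $1-\tfrac{1}{q}$. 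Unwinding these linear relations yields precisely the arithmetic condition stated in each line of the lemma.

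The main technical hurdle is \eqref{eq:convolution-Hartree-Hs-Lq}, where the potential $w_{q}$ may lie in the weak space $L^{q,\infty}$ (case $m=\infty$) and the Lebesgue version of Young's inequality fails at the relevant endpoint. I therefore perform the Young and Hölder steps in Lorentz spaces throughout, using Propositions~\ref{pro:young-lorentz} and \ref{pro:holder-lorentz}; the extra condition $\tfrac{1}{m}+\tfrac{2}{\alpha}=1$ is then exactly the constraint balancing the second Lorentz indices in the iterated Young--Hölder chain, mirroring the way $\tfrac{1}{q}+\tfrac{2}{\sigma}=1$ balances the first ones. The delicate bookkeeping will be to verify that all intermediate Lebesgue exponents stay strictly inside $(1,\infty)$ so that both the fractional Leibniz rule and the Lorentz Young inequality apply, and to upgrade the ``Lebesgue'' $L^{\sigma}$-factor on $u_{2}$ coming from the first Leibniz step into the desired Lorentz factor via the embedding $L^{\sigma,\alpha}\hookrightarrow L^{\sigma}$ (valid since $\alpha\le\sigma$ in both cases $m=q$ and $m=\infty$), so that the final bound is uniformly expressed in the cyclic sum over $\mathbb{Z}_{3}$.
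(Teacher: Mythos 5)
Your proposal is correct and follows essentially the same route as the paper: an outer application of the fractional Leibniz rule, then Young's inequality (in $\meas$, Lebesgue, Lorentz, or Bessel--Sobolev form) to peel off the convolution, then an inner fractional Leibniz or H\"older step on $u_0u_1$, with the Lorentz-space versions of Young and H\"older handling \eqref{eq:convolution-Hartree-Hs-Lq}. The exponent bookkeeping you describe, including the embedding $L^{\sigma,\alpha}\hookrightarrow L^{\sigma}$ and the role of the condition $1=\tfrac1m+\tfrac2\alpha$, matches the paper's computation.
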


\begin{proof}Let us first remark that if $u$ is in $H^s$ with $s>d/2$, then $u$ is in $\mathcal{C}^0_\infty$ which is the closure of $\mathcal{S}$ for the $L^\infty$ norm. This will be important for our applications of the fractional Leibniz rule with $L^\infty$. 
For $u_j$, $u_{j+1}$ in $H^s$, as $H^s$ is an algebra for $s>d/2$, $u_ju_{j+1}$ is also in $H^s$ and~\eqref{eq:conv-M-Hsp} shows that $w_1*(u_ju_{j+1})$ lies in $H^s$ which, again, is included in $\mathcal{C}^0_\infty$. This allows us to apply the fractional Leibniz rule with $L^\infty$ below.

The fractional Leibniz rule, along with Young and H\"older's inequalities, yield both (\ref{eq:convolution-Hartree-Hs-M1}) and~(\ref{eq:convolution-Hartree-Hspprime-M1}) by considering $p\geq 2$:

\begin{align*}
 \|(w_{1}*(u_{0}u_{1}))\,u_{2}\|_{H^{s,p'}} & \leq\|w_{1}*(u_{0}u_{1})\|_{H^{s,\frac{1}{\frac{1}{\sigma}+\frac{1}{2}}}}\|u_{2}\|_{L^{\sigma}}+\|w_{1}*(u_{0}u_{1})\|_{L^{\frac{\sigma}{2}}}\|u_{2}\|_{H^{s}}\\
 & \leq\|w_{1}\|_{\meas}\big(\|u_{0}u_{1}\|_{H^{s,\frac{1}{\frac{1}{\sigma}+\frac{1}{2}}}}\|u_{2}\|_{L^{\sigma}}+\|u_{0}u_{1}\|_{L^{\frac{\sigma}{2}}}\|u_{2}\|_{H^{s}}\big)\\
 & \leq\|w_{1}\|_{\meas}\sum_{j\in\mathbb{Z}_{3}}\|u_{j}\|_{H^{s}}\|u_{j+1}\|_{L^{\sigma}}\|u_{j+2}\|_{L^{\sigma}}\,.
\end{align*}

For~\eqref{eq:convolution-Hartree-Hspprime-Lq}, the same inequalities as above yield
\begin{align*}
  \|(w_{q}*(u_{0}u_{1}))\,u_{2}\|_{H^{s,p'}}
 & \leq\|w_{q}*(u_{0}u_{1})\|_{H^{s,\frac{1}{\frac{1}{\sigma}+\frac{1}{q}-\frac{1}{2}}}}\|u_{2}\|_{L^{\sigma}}+\|w_{q}*(u_{0}u_{1})\|_{L^{\frac{1}{\frac{2}{\sigma}+\frac{1}{q}-1}}}\|u_{2}\|_{H^{s}}\\
 & \leq\|w_{q}\|_{L^{q,\infty}}\big(\|u_{0}u_{1}\|_{H^{s,\frac{1}{\frac{1}{\sigma}+\frac{1}{2}}}}\|u_{2}\|_{L^{\sigma}}+\|u_{0}u_{1}\|_{L^{\frac{\sigma}{2}}}\|u_{2}\|_{H^{s}}\big)\\
 & \leq\|w_{q}\|_{L^{q,\infty}}\sum_{j\in\mathbb{Z}_{3}}\|u_{j}\|_{H^{s}}\|u_{j+1}\|_{L^{\sigma}}\|u_{j+2}\|_{L^{\sigma}}\,.
\end{align*}

For~\eqref{eq:convolution-Hartree-Hs-Lq}, the fractional Leibniz rule, along with Young and H\"older's inequalities in Lorentz spaces, yield
\begin{align*}
  \|(w_{q}*(u_{0}u_{1}))\,u_{2}\|_{H^{s}}
 & \leq\|w_{q}*(u_{0}u_{1})\|_{H^{s,\frac{1}{\frac{1}{\sigma}+\frac{1}{q}-\frac{1}{2}}}}\|u_{2}\|_{L^{\sigma}}+\|w_{q}*(u_{0}u_{1})\|_{L^{\infty}}\|u_{2}\|_{H^{s}}\\
 & \leq\|w_{q}\|_{L^{q,m}}\big(\|u_{0}u_{1}\|_{H^{s,\frac{1}{\frac{1}{\sigma}+\frac{1}{2}}}}\|u_{2}\|_{L^{\sigma}}+\|u_{0}u_{1}\|_{L^{\frac{\sigma}{2},\frac{\alpha}{2}}}\|u_{2}\|_{H^{s}}\big)\\
 & \leq\|w_{q}\|_{L^{q,m}}\sum_{j\in\mathbb{Z}_{3}}\|u_{j}\|_{H^{s}}\|u_{j+1}\|_{L^{\sigma,\alpha}}\|u_{j+2}\|_{L^{\sigma,\alpha}}\,.
\end{align*}
This concludes the proof.
\end{proof}

We will also need the following estimates in weighted Sobolev spaces.

\begin{lemma}[Weighted multilinear estimates I]
\label{lemma: weight multilin}
Assume $0\le \gamma \le 2$, $s\ge 0$ and $1 \leq q <\infty$. Let $w _1\in \meas $ and $w_q \in L^q$. For $ u   \in H^s_\gamma \cap L^\sigma_\gamma$ it holds
\begin{align}
    \|(w_1*|u|^2)u \|_{
L^2_\gamma} \lesssim & \|w_1\|_{\meas} \|u\|_{L^\infty} ^2 \|u \|_{L^2_\gamma}, \label{eq: weighted multilin M1 in L2}\\
\|(w_q*|u|^2)u \|_{L^2_\gamma} \lesssim & \|w_q\|_{L^q}
    \|u\|_{L^{2q'}}^2 \|u\|_{L^2_\gamma}, \label{eq: weighted multilin Lq in L2}
    \end{align}
as well as
\begin{align}
\|(w_1*|u|^2)u \|_{H^s_\gamma} \lesssim & \|w_1\|_{\meas} \|u\|_{L^\infty} (\|u\|_{L^\infty}  \|u \|_{H^s_\gamma}  + \|u\|_{H^s}\|u \|_{L^\infty_\gamma})\quad \textrm{if}\quad s>\frac{d}{2}, \label{eq: weighted multilin M in Hs}\\
      \|(w_q*|u|^2)u \|_{H^s_\gamma} \lesssim & \|w_q\|_{L^q}
    \|u\|_{L^{2q'}}(\|u\|_{L^{2q'}}\| u \|_{H^s_\gamma} + \|u \|_{H^s}\| u \|_{L^{2q'}_\gamma}). \label{eq: weighted multilin Lq in Hs}
          \end{align}

\end{lemma}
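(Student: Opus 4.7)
The plan is to treat the four estimates separately, with the $L^{2}_\gamma$ bounds following from elementary Hölder--Young reasoning, while the $H^{s}_\gamma$ bounds reduce, after commuting the weight past the fractional derivative via Lemma~\ref{lemma: weight sob norm}, to applications of the (unweighted) fractional Leibniz rule of Proposition~\ref{prop:Leibniz-fractionnaire}.

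For \eqref{eq: weighted multilin M1 in L2} and \eqref{eq: weighted multilin Lq in L2}, the idea is to observe that by the very definition of $L^{2}_\gamma$,
\[
 \|(w*|u|^2)u\|_{L^{2}_\gamma}=\|(w*|u|^2)(\langle x\rangle^\gamma u)\|_{L^{2}}\leq \|w*|u|^2\|_{L^\infty}\,\|u\|_{L^{2}_\gamma}.
\]
It then suffices to estimate $\|w*|u|^2\|_{L^\infty}$: Proposition~\ref{pro:young-lorentz-measure} gives $\|w_1*|u|^2\|_{L^\infty}\leq\|w_1\|_{\meas}\|u\|_{L^\infty}^2$, while the Young inequality $L^{q}*L^{q'}\hookrightarrow L^\infty$ gives $\|w_q*|u|^2\|_{L^\infty}\leq\|w_q\|_{L^{q}}\||u|^2\|_{L^{q'}}=\|w_q\|_{L^{q}}\|u\|_{L^{2q'}}^2$.

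For the Sobolev estimates, the starting point is to apply Lemma~\ref{lemma: weight sob norm} to write
\[
 \|(w*|u|^2)u\|_{H^{s}_\gamma}\simeq \|\langle\nabla\rangle^{s}\big[(w*|u|^2)(\langle x\rangle^\gamma u)\big]\|_{L^{2}},
\]
thereby placing the weight on the outer factor $u$ and reducing matters to an unweighted fractional Leibniz bound applied to the product $(w*|u|^2)(\langle x\rangle^\gamma u)$. For \eqref{eq: weighted multilin M in Hs}, the choice $(p_1,\tilde p_1)=(2,\infty)$ and $(p_2,\tilde p_2)=(\infty,2)$ in Proposition~\ref{prop:Leibniz-fractionnaire} produces the two summands
\[
 \|w_1*|u|^2\|_{H^{s}}\,\|\langle x\rangle^\gamma u\|_{L^\infty}+\|w_1*|u|^2\|_{L^\infty}\,\|\langle x\rangle^\gamma u\|_{H^{s}}.
\]
The Young inequality $\|w_1*f\|_{X}\leq\|w_1\|_{\meas}\|f\|_{X}$ for $X\in\{H^{s},L^\infty\}$, combined with the $H^{s}$-algebra bound $\||u|^2\|_{H^{s}}\lesssim\|u\|_{L^\infty}\|u\|_{H^{s}}$ (itself an application of Proposition~\ref{prop:Leibniz-fractionnaire} with $(p_1,\tilde p_1)=(2,\infty)$, valid since $s>d/2$), together with the identifications $\|\langle x\rangle^\gamma u\|_{L^\infty}=\|u\|_{L^\infty_\gamma}$ and $\|\langle x\rangle^\gamma u\|_{H^{s}}\simeq\|u\|_{H^{s}_\gamma}$ (a second use of Lemma~\ref{lemma: weight sob norm}), then delivers the claim.

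The main obstacle is the book-keeping of indices in \eqref{eq: weighted multilin Lq in Hs}, where one must produce only the norms $\|u\|_{L^{2q'}}$, $\|u\|_{L^{2q'}_\gamma}$, $\|u\|_{H^{s}}$ and $\|u\|_{H^{s}_\gamma}$. The plan is to use in Proposition~\ref{prop:Leibniz-fractionnaire} the asymmetric splitting $(p_1,\tilde p_1)=(2q,2q')$ and $(p_2,\tilde p_2)=(\infty,2)$. The $(p_2,\tilde p_2)$ summand is handled exactly as in \eqref{eq: weighted multilin Lq in L2}, yielding $\|w_q\|_{L^{q}}\|u\|_{L^{2q'}}^{2}\|u\|_{H^{s}_\gamma}$ after applying Lemma~\ref{lemma: weight sob norm}. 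For the $(p_1,\tilde p_1)$ summand, Young in Bessel--Sobolev spaces gives $\|w_q*|u|^2\|_{H^{s,2q}}\lesssim\|w_q\|_{L^{q}}\||u|^2\|_{H^{s,b}}$ with $b=2q/(2q-1)\in(1,2]$; a final application of the fractional Leibniz rule with $(p_1,\tilde p_1)=(2,2q')$ (which is an admissible pair since $\frac{1}{b}=\frac{1}{2}+\frac{1}{2q'}$) yields $\||u|^2\|_{H^{s,b}}\lesssim\|u\|_{H^{s}}\|u\|_{L^{2q'}}$. Combined with $\|\langle x\rangle^\gamma u\|_{L^{2q'}}=\|u\|_{L^{2q'}_\gamma}$, this produces the remaining term $\|w_q\|_{L^{q}}\|u\|_{H^{s}}\|u\|_{L^{2q'}}\|u\|_{L^{2q'}_\gamma}$, and the two summands factor as in the statement.
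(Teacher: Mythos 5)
Your proof is correct and follows essentially the same route as the paper: the $L^2_\gamma$ bounds via Hölder/Young, and the $H^s_\gamma$ bounds via the weighted fractional Leibniz rule — which you simply unbundle into Lemma~\ref{lemma: weight sob norm} plus Proposition~\ref{prop:Leibniz-fractionnaire}, exactly as the paper's Proposition~\ref{prop:weighted leibniz} is itself obtained. Your index choices for \eqref{eq: weighted multilin Lq in Hs} (the $H^{s,2q}$ norm on the convolution, paired with $L^{2q'}_\gamma$ on the weighted factor) coincide with the paper's, since its exponent $\frac{1}{\sigma}+\frac{1}{q}-\frac12$ with $\sigma=2q'$ equals $\frac{1}{2q}$.
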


\begin{proof}
The first two inequalities are a direct consequence of H\"older and Young inequalities. 
Applying the weighted fractional Leibniz rule as well as Young inequality we have 
\begin{align*}
    \|(w_1*|u|^2)u \|_{H^{s}_\gamma} \lesssim & \|w_1 *|u|^2\|_{L^\infty} \| u \|_{H^{s}_\gamma} + \|w_1*|u|^2\|_{H^s} \| u \|_{L^\infty_\gamma}\\
\lesssim & \|w_1\|_{\meas} \|u\|_{L^\infty} (\|u\|_{L^\infty}  \|u \|_{H^s_\gamma}  + \|u\|_{H^s}\|u \|_{L_\gamma^\infty}),
\end{align*}
which proves the third inequality. 
  We prove similarly the last inequality. Let $\sigma\ge 2$ such that $1 = \frac{1}{q}+ \frac{2}{\sigma}$, we write
    \begin{align*}
        \|(w_q*|u|^2)u \|_{H_\gamma^{s}} \lesssim & \, \|w_q*|u|^2\|_{L^{\infty} }\| u \|_{H_\gamma^s}\\
        &+ \|w_q* (|u|^2)\|_{H^{s,\frac{1}{\frac{1}{\sigma} + \frac{1}{q} - \frac12}}} \|u \|_{L_\gamma^{\sigma}}\\
        \lesssim & \, \|w_q\|_{L^q} (\|u\|^2_{L^{\sigma}} \| u \|_{H^s_\gamma}+ \| |u|^2\|_{H^{s,\frac{1}{\frac{1}{\sigma} + \frac12}}} \| u \|_{L_\gamma^{\sigma}})\\
        \lesssim & \,\|w_q\|_{L^q} \|u\|_{L^{\sigma}} (\|u\|_{L^{\sigma}}\| u \|_{H_\gamma^s} + \|u \|_{H^s}\| u \|_{L_\gamma^{\sigma}}).
    \end{align*}
This concludes the proof.
\end{proof}

In order to study the global existence and asymptotic properties of solutions to \eqref{eq:hartree_semi_intro} in the case where $w$ is short-range (see Sections \ref{sec:global-short} and \ref{sec:min-vel}), we will need slightly refined versions of the previous estimates that we state in the following two lemmas.

\begin{lemma}[Multilinear estimates II]\label{lem:bounds-forcing-short-range}
Assume  $s\geq0$, $p\geq2$, $q,r\geq1$ such that $q\leq2r$, $\frac{2}{p}+\frac{1}{r}=1$, $q\neq\infty$
and $w_{1}\in\meas$, $w_{q}\in L^{q}$. For $u_j\in H^s\cap L^\infty$, $j\in  \mathbb{Z}_3 = \mathbb{Z}/(3\mathbb{Z})$, the inequalities
\begin{align}
\|(w_{q}*(u_{0}u_{1}))u_{2}\|_{H^{s}}  \lesssim\|w_{q}\|_{L^{q}}\sum_{j\in\mathbb{Z}_{3}} &\|u_{j}\|_{H^{s}}\|u_{j+1}\|_{H^s}^{1-\theta(\frac{r}{q})}\|u_{j+2}\|_{H^s}^{1-\theta(\frac{r}{q})} \nonumber\\
&\|u_{j+1}\|_{L^{p}\cap L^{\infty}}^{\theta(\frac{r}{q})}\|u_{j+2}\|_{L^{p}\cap L^{\infty}}^{\theta(\frac{r}{q})} , \label{eq:convolution-Hartree-Hs-Lq-theta}\\
\|(w_{1}*(u_{0}u_{1}))u_{2}\|_{H^{s,p'}}  \lesssim\|w_{1}\|_{\meas}\sum_{j\in\mathbb{Z}_{3}}&\|u_{j}\|_{H^{s}}\|u_{j+1}\|_{H^{s}}^{1-\theta(r-\frac{1}{2})} \|u_{j+2}\|_{H^{s}}^{1-\theta(r-\frac{1}{2})}\nonumber\\
&\|u_{j+1}\|_{L^{p}\cap L^{\infty}}^{\theta(r-\frac{1}{2})}\|u_{j+2}\|_{L^{p}\cap L^{\infty}}^{\theta(r-\frac{1}{2})}, \label{eq:convolution-Hartree-Hspprime-M1-theta}\\
\|(w_{q}*(u_{0}u_{1}))u_{2}\|_{H^{s,p'}}  \lesssim\|w_{q}\|_{L^{q,\infty}}\sum_{j\in\mathbb{Z}_{3}}&\|u_{j}\|_{H^{s}}\|u_{j+1}\|_{H^{s}}^{1-\theta(\frac{r}{q}-\frac{1}{2})} \|u_{j+2}\|_{H^{s}}^{1-\theta(\frac{r}{q}-\frac{1}{2})}\nonumber\\
&\|u_{j+1}\|_{L^{p}\cap L^{\infty}}^{\theta(\frac{r}{q}-\frac{1}{2})}\|u_{j+2}\|_{L^{p}\cap L^{\infty}}^{\theta(\frac{r}{q}-\frac{1}{2})}\label{eq:convolution-Hartree-Hspprime-Lq-theta}
\end{align}
hold, with $\theta(x)=\min\{x,1\}$.
\end{lemma}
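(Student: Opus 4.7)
The plan is to deduce each of \eqref{eq:convolution-Hartree-Hs-Lq-theta}--\eqref{eq:convolution-Hartree-Hspprime-Lq-theta} from the corresponding bound in Lemma \ref{lem:bounds-forcing-short-range I} by interpolating the Lebesgue factors $\|u_{j+1}\|_{L^\sigma}\|u_{j+2}\|_{L^\sigma}$ between $L^2$ (which is controlled by $H^s$ since $s\ge 0$) and $L^p\cap L^\infty$. The first step is to identify, for each of the three situations, the value of $\sigma$ forced by Lemma \ref{lem:bounds-forcing-short-range I}: using the identity $1/p' = 1/2 + 1/(2r)$, equivalent to the defining relation $2/p + 1/r = 1$, one finds respectively $\sigma = 2q/(q-1)$ (applying \eqref{eq:convolution-Hartree-Hs-Lq} with $m=q$), $\sigma = 4r$ (applying \eqref{eq:convolution-Hartree-Hspprime-M1}), and $\sigma$ such that $2/\sigma = 1 + 1/(2r) - 1/q$ (applying \eqref{eq:convolution-Hartree-Hspprime-Lq}); in the last case, the standing hypothesis $q\le 2r$ is exactly what guarantees $\sigma \ge 2$.

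The second step is the interpolation of $\|u\|_{L^\sigma}$, handled through two regimes. When $\sigma \le p$, I would write $\|u\|_{L^\sigma} \le \|u\|_{L^2}^{1-\mu}\|u\|_{L^p}^{\mu}$ with the exponent $\mu$ determined by $1/\sigma = (1-\mu)/2 + \mu/p$; combined with $1/p = 1/2 - 1/(2r)$, a direct calculation yields $\mu = r/q$, $\mu = r - 1/2$, and $\mu = r/q - 1/2$ respectively in the three cases, and the condition $\mu \le 1$ is equivalent to $\sigma \le p$. When $\sigma \ge p$, one interpolates $L^\sigma$ between $L^p$ and $L^\infty$ instead, and the resulting bound $\|u\|_{L^\sigma} \le \|u\|_{L^p \cap L^\infty}$ corresponds to the capped value $\theta(\mu) = 1$. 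Combined with $\|u\|_{L^2} \le \|u\|_{H^s}$ and $\max\{\|u\|_{L^p}, \|u\|_{L^\infty}\} \le \|u\|_{L^p \cap L^\infty}$, in either regime this yields
\begin{equation*}
\|u\|_{L^\sigma} \lesssim \|u\|_{H^s}^{1-\theta(\mu)}\|u\|_{L^p \cap L^\infty}^{\theta(\mu)},
\end{equation*}
which, inserted into the corresponding bound of Lemma \ref{lem:bounds-forcing-short-range I}, produces the three desired inequalities.

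This argument is essentially a bookkeeping exercise in real interpolation of Lebesgue spaces and poses no serious obstacle. The only subtle point is the unified description of both interpolation regimes through the clipped function $\theta(\cdot) = \min\{\cdot, 1\}$, together with the positivity constraint $\mu \ge 0$ in the third case, which is precisely the content of the hypothesis $q \le 2r$.
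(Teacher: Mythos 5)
Your proposal is correct and follows essentially the same route as the paper: each inequality is obtained from the corresponding estimate of Lemma \ref{lem:bounds-forcing-short-range I} (taking $m=q$ in \eqref{eq:convolution-Hartree-Hs-Lq} for the first one), with the $L^\sigma$ norms interpolated between $L^2$ and $L^p$ when $\sigma\le p$ and bounded by $\|u\|_{L^p\cap L^\infty}$ when $\sigma\ge p$, the two regimes being unified by the clipped exponent $\theta$. Your computed values of $\sigma$ and of the interpolation exponents $\mu=\frac{r}{q}$, $r-\frac12$, $\frac{r}{q}-\frac12$ agree with the paper's, as does the observation that $q\le 2r$ is what guarantees $\sigma\ge2$ in the third case.
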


\begin{proof}
Eq.~(\ref{eq:convolution-Hartree-Hs-Lq-theta}) follows from \eqref{eq:convolution-Hartree-Hs-Lq} with $m=q$. This implies $\alpha = \sigma$, 
where $\sigma$ is defined by $1-\frac{1}{q}=\frac{2}{\sigma}$. Hence, 
\begin{itemize}
\item if $r\leq q$, then $2\leq\sigma\leq p$ and we set
$
\frac{1}{\sigma}=\frac{\theta}{p}+\frac{1-\theta}{2}\Leftrightarrow\theta=\frac{r}{q}\,,
$
so that $\|u\|_{L^{\sigma}}\leq\|u\|_{L^{p}}^{\theta}\|u\|_{L^{2}}^{1-\theta}$,
\item else $r>q$, and then $p<\sigma$ and $\|u\|_{L^{\sigma}}\leq\|u\|_{L^{p}\cap L^{\infty}}$.
\end{itemize}

Eq.~(\ref{eq:convolution-Hartree-Hspprime-M1-theta}), follows from~\eqref{eq:convolution-Hartree-Hspprime-M1} where $\sigma$ is defined through $1+\frac{1}{p'}=1+\frac{1}{2}+\frac{2}{\sigma}$ observing that
\begin{itemize}
\item If $r\leq \frac32$ then $2\leq\sigma\leq p$ and we set 
$\theta=r-\frac12$,
which gives $\|u\|_{L^{\sigma}}\leq\|u\|_{L^{p}}^{\theta}\|u\|_{L^{2}}^{1-\theta}$,
\item else $3/2<r$ and then $p\leq\sigma$ and $\|u\|_{L^{\sigma}}\leq\|u\|_{L^{p}\cap L^{\infty}}$\,.
\end{itemize}

Finally, \eqref{eq:convolution-Hartree-Hspprime-Lq-theta}, follows from \eqref{eq:convolution-Hartree-Hspprime-Lq}, with
$1+\frac{1}{p'}-\frac{1}{2}-\frac{1}{q}=\frac{2}{\sigma}$, as
\begin{itemize}
\item if $r\leq\frac{3}{2}q$, then $2\leq\sigma\leq p$ and we set
$\theta=\frac{r}{q}-\frac{1}{2}$, so that $\|u\|_{L^{\sigma}}\leq\|u\|_{L^{p}}^{\theta}\|u\|_{L^{2}}^{1-\theta}$,
\item else $\frac{3}{2}q<r$, and then $p\leq\sigma$ and $\|u\|_{L^{\sigma}}\leq\|u\|_{L^{p}\cap L^{\infty}}$.\qedhere
\end{itemize}
\end{proof}

\begin{lemma}[Weighted multilinear estimates II]
\label{lemma: weight multilin 2}
Assume $s\ge 0$, $0 \leq \gamma \leq 2$, $p\geq2$, $q,r \ge 1$, such that $q\leq 2r$, $\frac{2}{p}+\frac{1}{r}=1$
and $w_{1}\in\meas$, $w_{q}\in L^{q}$. For $u\in H^s_\gamma$, it holds
\begin{align}
\|(w*|u|^2)u \|_{L^2_\gamma} \lesssim & \|w\|_{\meas + L^q} \|u\|_{L^2_\gamma} \|u\|_{L^p\cap L^\infty}^{2\theta (\frac{r}{q})} (\|u\|_{L^2}^{2-2\theta (\frac{r}{q})} + \|u\|_{L^\infty}^{2-2\theta (\frac{r}{q})}), \label{eq:forcing-short-range-L1+Lq-weight-L2}
\end{align}
where $w = w_1 + w_q$ and 
\begin{align}
  \|  (w_q*|u|^2)u \|_{H^s_\gamma}\lesssim & \|w_q\|_{L^q} \|u\|_{H^s}^{2-2\theta (\frac{r}{q})} \|u\|_{L^p\cap L^\infty}^{\theta (\frac{r}{q})}\nonumber\\
  &( \|u\|_{L^p\cap L^\infty}^{\theta (\frac{r}{q})} \|u\|_{H^s_\gamma} + \|u\|_{H^s}^{\theta (\frac{r}{q})} \|u\|_{L^{2q'}_\gamma}), \label{eq:forcing-short-range-Leb-Hs-to-weight-Hs}
\end{align}
with $\theta(x)=\min\{x,1\}$. 
\end{lemma}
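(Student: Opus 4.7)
The plan is to deduce the lemma from the weighted unrefined bounds of Lemma~\ref{lemma: weight multilin} by interpolating in Lebesgue scales, following the same pattern used to pass from Lemma~\ref{lem:bounds-forcing-short-range I} to Lemma~\ref{lem:bounds-forcing-short-range}. All the interpolation is driven by the algebraic identity
\[
\frac{1}{2q'} \,=\, \frac{r/q}{p} + \frac{1 - r/q}{2},
\]
which follows from the hypothesis $\frac{2}{p} + \frac{1}{r} = 1$, together with the equivalence $r \le q \Leftrightarrow 2q' \le p$ and the two corresponding cases $\theta(r/q) = r/q$ versus $\theta(r/q) = 1$.

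For the first inequality I split $w = w_1 + w_q$ and apply \eqref{eq: weighted multilin M1 in L2} to $w_1$ and \eqref{eq: weighted multilin Lq in L2} to $w_q$, reducing the task to controlling $\|u\|_{L^\infty}^2$ and $\|u\|_{L^{2q'}}^2$ by the common factor
\[
\|u\|_{L^p\cap L^\infty}^{2\theta(r/q)}\bigl(\|u\|_{L^2}^{2-2\theta(r/q)} + \|u\|_{L^\infty}^{2-2\theta(r/q)}\bigr).
\]
The $\meas$-contribution is handled by the trivial bound $\|u\|_{L^\infty}^{2\theta(r/q)} \le \|u\|_{L^p\cap L^\infty}^{2\theta(r/q)}$, contributing to the $L^\infty$-branch. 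The $L^q$-contribution uses the interpolation
\[
\|u\|_{L^{2q'}} \le \|u\|_{L^p}^{r/q}\|u\|_{L^2}^{1 - r/q}
\]
when $r \le q$ (so $2q' \le p$), and the straightforward bound $\|u\|_{L^{2q'}} \le \|u\|_{L^p \cap L^\infty}$ when $r > q$ (in which case $\theta(r/q) = 1$ makes the $L^2$-factor trivial). Combining both contributes to the $L^2$-branch of the target, and the factor $\|w\|_{\meas + L^q}$ appears by taking the infimum over the decompositions $w = w_1 + w_q$.

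For the second inequality I start from \eqref{eq: weighted multilin Lq in Hs}, which already splits the estimate into the two multiplicative blocks $\|u\|_{L^{2q'}}^2\|u\|_{H^s_\gamma}$ and $\|u\|_{L^{2q'}}\|u\|_{H^s}\|u\|_{L^{2q'}_\gamma}$. Interpolating $\|u\|_{L^{2q'}}$ as above and using $\|u\|_{L^2} \le \|u\|_{H^s}$ gives, uniformly in the two cases,
\[
\|u\|_{L^{2q'}} \le \|u\|_{L^p \cap L^\infty}^{\theta(r/q)}\|u\|_{H^s}^{1 - \theta(r/q)};
\]
substituting into each of the two blocks and collecting powers produces exactly the factorization claimed in \eqref{eq:forcing-short-range-Leb-Hs-to-weight-Hs}.

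I do not expect a genuine obstacle: this is essentially symbolic bookkeeping once the interpolation identity and its dichotomy are in place, since the weights $\langle x \rangle^\gamma$ sit only inside the factors $\|u\|_{L^2_\gamma}$, $\|u\|_{L^{2q'}_\gamma}$, $\|u\|_{H^s_\gamma}$ that are passed through unchanged from Lemma~\ref{lemma: weight multilin}. The only points that deserve a check are the endpoint $r = q$, where the two prescriptions must coincide (they do, since $\theta(r/q) = 1$ matches the trivial bound), and the somewhat unusual ``$\|u\|_{L^2}^{2-2\theta(r/q)} + \|u\|_{L^\infty}^{2-2\theta(r/q)}$'' shape on the right-hand side of \eqref{eq:forcing-short-range-L1+Lq-weight-L2}; the latter is precisely what allows a single unified statement to absorb both the $\meas$-branch (whose natural interpolation endpoint is $L^\infty$) and the $L^q$-branch (whose natural interpolation endpoint is $L^2$).
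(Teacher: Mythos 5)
Your proposal is correct and follows essentially the same route as the paper: the paper's proof likewise deduces the lemma from Lemma~\ref{lemma: weight multilin} by inserting the interpolation bound $\|u\|_{L^{2q'}}\le \|u\|_{L^2}^{1-\theta(r/q)}\|u\|_{L^p\cap L^\infty}^{\theta(r/q)}$ (with the same $r\le q$ versus $r>q$ dichotomy as in the proof of Lemma~\ref{lem:bounds-forcing-short-range}) into \eqref{eq: weighted multilin M1 in L2}, \eqref{eq: weighted multilin Lq in L2} and \eqref{eq: weighted multilin Lq in Hs}. Your explicit treatment of the $\meas$-branch via $\|u\|_{L^\infty}^{2\theta(r/q)}\le\|u\|_{L^p\cap L^\infty}^{2\theta(r/q)}$, producing the $\|u\|_{L^\infty}^{2-2\theta(r/q)}$ term, is exactly what the paper's terser argument intends.
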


\begin{proof}
  To prove the inequalities we reason analogously to the proof of Lemma \ref{lem:bounds-forcing-short-range}, by using the bound $\|u\|_{L^\sigma}\leq \|u\|_{L^2}^{1-\theta (\frac{r}{q})} \|u\|_{L^p\cap L^\infty}^{\theta (\frac{r}{q})}$ with $\sigma = 2q'$, as in the proof of \eqref{eq:convolution-Hartree-Hs-Lq-theta}. Applying this to \eqref{eq: weighted multilin M1 in L2}-\eqref{eq: weighted multilin Lq in L2} or \eqref{eq: weighted multilin Lq in Hs} we obtain, respectively,  \eqref{eq:forcing-short-range-L1+Lq-weight-L2} and \eqref{eq:forcing-short-range-Leb-Hs-to-weight-Hs}. 
\end{proof}

	\section{Local existence}\label{sec:local}
	
	In this section we prove the local existence results stated in Theorems \ref{th:local-long-range-intro} and \ref{th: local_with_Strichartz-intro}. We actually establish more precise and more general versions of these theorems, stated in Theorems \ref{th: local H^s} and \ref{th: local_with_Strichartz}, respectively. The proofs rely on a standard fixed point argument in $L^\infty_tH^s_x$ for the first result, and in addition the Strichartz estimates of Proposition \ref{prop:Strichartz-Lorentz} for the second result. We give some details for the sake of completeness. 
	
    For $s\ge0$ and $T>0$, we say that $\psi$ is a solution to \eqref{eq:hartree_semi_intro}  on the time interval $[0,T]$ if $\psi$ belongs to $\mathcal{C}^{0}([0,T],H^{s})\cap \mathcal{C}^{1}([0,T],H^{s-1})$ and  satisfies \eqref{eq:hartree_semi_intro} in~$H^{s-1}$.
        In particular, a function~$\psi$ in $\mathcal{C}^{0}([0,T],H^{s})\cap \mathcal{C}^{1}([0,T],H^{s-1})$ is a solution to \eqref{eq:hartree_semi_intro}  if and only if it satisfies $\partial_t\big(e^{it \jnabla}\psi_t\big)=-ie^{it \jnabla}\big((w*|\psi_t|^2)\psi_t\big)$ in $H^{s-1}$ and thus if and only if it satisfies the Duhamel equation
\begin{equation}
\psi_{t}=e^{-it\langle\nabla\rangle}\psi_{0}-i\int_{0}^{t}e^{i(\tau-t)\langle\nabla\rangle}(w*|\psi_\tau|^{2})\psi_\tau\,\mathrm{d}\tau\label{eq:Hartree-Duhamel}
\end{equation}
in $\mathcal{C}^0([0,T],H^s)$.

We introduce the notation $X_T:=\mathcal{C}^0([0,T],H^s)$, endowed with the norm	
	\[
\|\psi\|_{X_{T}}:=\sup_{0\leq t\leq T}\|\psi_{t}\|_{H^s}.
\]
Moreover, for $w\in\mathcal{W}_{d,s}$ (see \eqref{eq:def_Wds}) we set
\begin{equation}
\|w\|:=
\begin{cases}
    \|w\|_{L^{\frac{d}{2s},\infty}+L^\infty} &\text{ if } s<\frac{d}{2},\\
     \inf_{q>1:w\in L^q+L^\infty}\|w\|_{L^q+L^\infty} &\text{ if } s= \frac{d}{2},\\
    \|w\|_{\meas+L^\infty} & \text{ if } s> \frac{d}{2}.
\end{cases}
\end{equation}

The next result readily implies Theorem \ref{th:local-long-range-intro}.

\begin{theo}
\label{th: local H^s} Let $s\geq0$ and $w$ in $\setW$. For every initial datum~$\psi_{0}$ in~$H^{s}$, there exists~$T_{\max}$ in~$(0,\infty]$ such that the Duhamel equation \eqref{eq:Hartree-Duhamel}
associated with \eqref{eq:hartree_semi_intro} admits a solution $\psi$ in~$\mathcal{C}^{0}([0,T_{\max}),H^{s})\cap \mathcal{C}^{1}([0,T_{\max}),H^{s-1})$ where either~$T_{\max}=\infty$ or~$\lim_{t\to T_{\max}}\|\psi_t\|_{H^s} = \infty$.

If~$T>0$ and~$\psi$,~$\tilde{\psi}$ are two solutions in~$X_{T}$ of \eqref{eq:Hartree-Duhamel} with initial data~$\psi_0$ and~$\tilde{\psi}_0$ in~$H^s$, then, for some $C_d>0$,
\begin{equation}\label{eqn:comparaison-solutions-Hartree}
    \|\psi-\tilde{\psi}\|_{X_{T}}
    \leq \|\psi_{0}-\tilde{\psi}_{0}\|_{H^{s}}\exp\big(C_d\|w\|(\|\psi\|_{X_{T}}+\|\tilde{\psi}\|_{X_{T}})^{2}\big).
\end{equation}
In particular, for any $0<T<T_{\max}$, the solution $\psi$ to \eqref{eq:hartree_semi_intro} on $[0,T]$ associated to the initial datum $\psi_0$ in $H^s$ is unique, and the map
\begin{equation*}
    H^s\ni\psi_0\mapsto\psi\in\mathcal{C}^{0}([0,T],H^{s})\cap \mathcal{C}^{1}([0,T],H^{s-1})
\end{equation*}
is continuous.
\end{theo}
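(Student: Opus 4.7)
The plan is to apply a Banach fixed point argument to the Duhamel map
\begin{equation*}
\Phi(\psi)_t := e^{-it\langle\nabla\rangle}\psi_0 - i\int_0^t e^{i(\tau - t)\langle\nabla\rangle}(w*|\psi_\tau|^2)\psi_\tau\,\mathrm{d}\tau
\end{equation*}
on the closed ball $B_R := \{\psi \in X_T : \|\psi\|_{X_T} \leq R\}$ with $R = 2\|\psi_0\|_{H^s}$. Since $e^{-it\langle\nabla\rangle}$ is an isometry on $H^s$, the only non-trivial work is to control the non-linearity $F(u) := (w*|u|^2)u$ in $H^s$. Granted a trilinear bound of the form
\begin{equation}\label{eq:proposal-tri}
\|F(u_1) - F(u_2)\|_{H^s} \lesssim \|w\|(\|u_1\|_{H^s} + \|u_2\|_{H^s})^2 \|u_1 - u_2\|_{H^s},
\end{equation}
a short computation shows that $\Phi$ maps $B_R$ into itself and is a strict contraction as soon as $T$ is chosen small enough (depending only on $\|\psi_0\|_{H^s}$ and $\|w\|$). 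The corresponding fixed point $\psi$ lies in $X_T$, and Duhamel's equation itself places $\psi$ in $\mathcal{C}^1([0,T], H^{s-1})$. Extending step by step as long as $\|\psi_t\|_{H^s}$ stays bounded yields the maximal existence time $T_{\max}$ and the blow-up alternative, by the standard argument that if the $H^s$ norm stayed bounded near $T_{\max}$ one could restart the fixed point past $T_{\max}$, contradicting maximality.

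The key estimate \eqref{eq:proposal-tri} follows from Lemma~\ref{lem:bounds-forcing-short-range I} after decomposing $w = w_\sharp + w_\infty$ with $w_\infty \in L^\infty$ (the $L^\infty$ part is handled trivially by Young's inequality) and combining it with the appropriate Sobolev embedding depending on the regime of $s$. For $s > d/2$, $H^s$ is an algebra embedded in $L^\infty$, hence \eqref{eq:convolution-Hartree-Hs-M1} applies with $\sigma = \infty$ and $w_\sharp \in \mathcal{M}$. For $s < d/2$, Proposition~\ref{pro:sobolev-lorentz} gives $H^s \hookrightarrow L^{\sigma,2}$ with $1/\sigma = 1/2 - s/d$, so that \eqref{eq:convolution-Hartree-Hs-Lq} with $m = \infty$, $\alpha = 2$ and $q = d/(2s)$ (which satisfies $1/q + 2/\sigma = 1$) closes the estimate for $w_\sharp \in L^{d/(2s),\infty}$. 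In the critical case $s = d/2$, one uses the inclusion $L^{q,\infty} + L^\infty \subset L^{q_1} + L^\infty$ for $q_1 < q$ and applies \eqref{eq:convolution-Hartree-Hs-Lq} with $m = q_1$ and $\sigma = 2q_1'$, combined with the embedding $H^{d/2} \hookrightarrow L^\sigma$ which holds for any $\sigma < \infty$.

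The Lipschitz bound \eqref{eqn:comparaison-solutions-Hartree}, which in particular implies uniqueness in $X_T$ and continuity of the flow in the $\mathcal{C}^0_tH^s_x$ norm (continuity in $\mathcal{C}^1_tH^{s-1}_x$ then follows by inserting back into the equation and using \eqref{eq:proposal-tri} again), comes from subtracting the Duhamel equations satisfied by $\psi$ and $\tilde\psi$ and exploiting the polarisation
\begin{equation*}
F(\psi) - F(\tilde\psi) = \bigl(w*((\psi - \tilde\psi)\bar\psi)\bigr)\psi + \bigl(w*(\tilde\psi\,\overline{\psi - \tilde\psi})\bigr)\psi + (w*|\tilde\psi|^2)(\psi - \tilde\psi).
\end{equation*}
Each summand is trilinear with exactly one slot occupied by $\psi - \tilde\psi$, so the same multilinear estimate applies, and combined with the unitarity of the propagator produces
\begin{equation*}
\|\psi_t - \tilde\psi_t\|_{H^s} \leq \|\psi_0 - \tilde\psi_0\|_{H^s} + C_d\|w\|(\|\psi\|_{X_T} + \|\tilde\psi\|_{X_T})^2 \int_0^t \|\psi_\tau - \tilde\psi_\tau\|_{H^s}\,\mathrm{d}\tau,
\end{equation*}
so that Gronwall's lemma yields \eqref{eqn:comparaison-solutions-Hartree}. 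The main technical subtlety I anticipate is the critical case $s = d/2$, where $H^s$ fails to embed into $L^\infty$; the slight loss of integrability $q \rightsquigarrow q_1 < q$ used above is precisely what the very definition of $\mathcal{W}_{d,d/2}$ is designed to accommodate, and no other difficulty should arise.
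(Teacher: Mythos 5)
Your proposal is correct and follows essentially the same route as the paper: a contraction argument for the Duhamel map in $X_T$, with the trilinear bound obtained from Lemma~\ref{lem:bounds-forcing-short-range I} combined with the Sobolev embeddings into Lorentz spaces in each regime of $s$, then Gronwall for uniqueness and continuous dependence. The only (harmless) deviation is in the critical case $s=d/2$, where you trade $L^{q,\infty}$ for $L^{q_1}$ with $q_1<q$ and use $H^{d/2}\hookrightarrow L^{2q_1'}$, whereas the paper keeps the weak norm and embeds $H^{d/2}$ into $H^{v}\hookrightarrow L^{\sigma_q,2}$ for a suitable $v<d/2$; both arguments close identically.
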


\begin{proof}
We first show the existence of a  solution for short times $T$ by a
standard contraction argument. Let $T>0$. The operator $\langle\nabla\rangle$
is selfadjoint with form domain $H^{1/2}$ and hence generates a strongly
continuous unitary group $e^{-it\langle\nabla\rangle}$ acting on $L^{2}$. For~$\psi_{0}$ in~$H^{s}$
and~$\psi$ in $\mathcal{C}^{0}([0,T),H^{s})$ we write 
\[
\label{eq: integral op}
\psi^{(0)}_t=e^{-it\langle\nabla\rangle}\psi_{0}\,,\quad I(t)=\int_{0}^{t}e^{i(\tau-t)\langle\nabla\rangle}(w*|\psi(\tau)|^{2})\psi(\tau)\,\mathrm{d}\tau\,,\quad\mathcal{T}(\psi)=\psi^{(0)}-iI(\psi)\,.
\]
The Duhamel formula \eqref{eq:Hartree-Duhamel} then reads $\psi=\mathcal{T}(\psi)\,.$
The proof relies on usual contraction arguments.

We consider first the case $0\leq s<\frac{d}{2}~$. We define $\sigma_{q}$
by $1=\frac{1}{q}+\frac{2}{\sigma_{q}}$. For $q\in\{\infty,\frac{d}{2s}\}$,
as $\frac{1}{2}=\frac{s}{d}+\frac{1}{\sigma_{d/(2s)}}$~, we get
$\|u\|_{L^{\sigma_{q},2}}\lesssim\|u\|_{H^{s}}$ and thus, using \eqref{eq:convolution-Hartree-Hs-Lq},
we get, for $u_{j}$ in $H^{s}$, $j\in\mathbb{Z}_{3}=\mathbb{Z}/(3\mathbb{Z})$,
\begin{align}
\|(w*(u_{0}u_{1}))u_{2}\|_{H^{s}} & \leq\sum_{q\in\{\infty,\frac{d}{2s}\}}\|w\|_{L^{q,\infty}}\sum_{j\in\mathbb{Z}_{3}}\|u_{j}\|_{H^{s}}\|u_{j+1}\|_{L^{\sigma_{q},2}}\|u_{j+2}\|_{L^{\sigma_{q},2}}\nonumber \\
 & \lesssim\|w\|_{L^{\frac{d}{2s},\infty}+L^{\infty}}\prod_{j\in\mathbb{Z}_{3}}\|u_{j}\|_{H^{s}}\,.\label{eq:bound-non-linear-term-in-Hs-with-Ld/2s-Linfty}
\end{align}
Thus, for $R=\|\psi^{(0)}\|_{X_T}=\|\psi_0\|_{H^s}$ and $\psi$ in $\overline{B(\psi^{(0)},R)}$ in $X_{T}$, 
\eqref{eq:bound-non-linear-term-in-Hs-with-Ld/2s-Linfty} yields
\begin{align*}
\|\mathcal{T}(\psi)_{t}-\psi_{t}^{(0)}\|_{H^{s}}  =\|I(\psi)_{t}\|_{H^{s}}
 & \leq\int_{0}^{t}\|(w*|\psi_{\tau}|^{2})\psi_{\tau}\|_{H^{s}}\mathrm{d}\tau\\
 & \lesssim T \, \|w\|_{L^{\frac{d}{2s},\infty}+L^{\infty}} \sup_{0\leq \tau\leq T}\|\psi_{\tau}\|_{H^{s}}^{3}\\
 & \lesssim TR^{3}\|w\|_{L^{\frac{d}{2s},\infty}+L^{\infty}}\,.
\end{align*}
Hence, for $T$ sufficiently small, $\mathcal{T}$ sends $\overline{B(\psi^{(0)},R)}$
into itself.

Consider $\psi_{1}$ and $\psi_{2}$ in $\overline{B(\psi^{(0)},R)}$,
then again by \eqref{eq:bound-non-linear-term-in-Hs-with-Ld/2s-Linfty},
\begin{align*}
\|\mathcal{T}(\psi_{1})_{t}-\mathcal{T}(\psi_{2})_{t}\|_{H^{s}} & \leq\|I(\psi_{1})_{t}-I(\psi_{2})_{t}\|_{H^{s}}\\
 & \leq\int_{0}^{t}\|(w*|\psi_{1,\tau}|^{2})\psi_{1,\tau}-(w*|\psi_{2,\tau}|^{2})\psi_{2,\tau}\|_{H^{s}}\mathrm{d}\tau\\
 & \lesssim T\|w\|_{L^{\frac{d}{2s},\infty}+L^{\infty}}\|\psi_{1,\tau}-\psi_{2,\tau}\|_{H^{s}}(\|\psi_{1,\tau}\|_{H^{s}}+\|\psi_{2,\tau}\|_{H^{s}})^{2}\\
 & \lesssim TR^{2}\|w\|_{L^{\frac{d}{2s},\infty}+L^{\infty}}\|\psi_{1,\tau}-\psi_{2,\tau}\|_{H^{s}}\,.
\end{align*}
Hence, for $T$ sufficiently small, $\mathcal{T}$ is strictly contractive
on $\overline{B(\psi^{(0)},R)}$.

Similarly, for $s = \frac{d}{2}$ by assumption $ w\in L^{q, \infty} + L^\infty, q>1.$ We write $q = \frac{d}{2v}$ for some $v<\frac{d}{2}$. By definition of $v$ and  $\sigma_q$ (which we recall is such that $1=\frac{1}{q}+\frac{2}{\sigma_{q}}$), for $q\in\{\infty,\frac{d}{2v}\}$ it holds $\|u\|_{L^{\sigma_q,2}}\lesssim \|u\|_{H^v}\lesssim \|u\|_{H^{\frac{d}{2}}}$. Hence applying \eqref{eq:convolution-Hartree-Hs-Lq} again for $q\in\{\infty,\frac{d}{2v}\}$, we have 
\begin{align}
\label{eq:bound-non-linear-term-in-Hd/2-with-Lq-Linfty}
    \|(w*(u_{0}u_{1}))u_{2}\|_{H^{ \frac{d}{2}}} \lesssim\|w\|_{L^{\frac{d}{2v},\infty}+L^{\infty}}\prod_{j\in\mathbb{Z}_{3}}\|u_{j}\|_{H^{ \frac{d}{2}}}\,
\end{align}
for some $v<\frac{d}{2}$. We can then repeat the arguments of the case $s<\frac{d}{2}$ to obtain the stability and contraction properties of $\mathcal{T}$, where $\|w\|_{L^{\frac{d}{2s},\infty}+L^{\infty}}$ will be replaced by $\|w\|_{L^{q,\infty}+L^{\infty}}, q=\frac{d}{2v}$.

Finally, for $s>d/2$, using \eqref{eq:convolution-Hartree-Hs-Lq}, \eqref{eq:convolution-Hartree-Hs-M1} and $\|u\|_{L^{\infty}}\lesssim\|u\|_{H^{s}}$
we obtain
\begin{align}
\|&(w*(u_{0}u_{1}))u_{2}\|_{H^{s}} \nonumber \\ &\leq\|w\|_{\meas}\sum_{j\in\mathbb{Z}_{3}}\|u_{j}\|_{H^{s}}\|u_{j+1}\|_{L^{\infty}}\|u_{j+2}\|_{L^{\infty}} +\|w\|_{L^{\infty}}\sum_{j\in\mathbb{Z}_{3}}\|u_{j}\|_{H^{s}}\|u_{j+1}\|_{L^{2}}\|u_{j+2}\|_{L^{2}}\nonumber \\
 & \lesssim\|w\|_{\meas+L^{\infty}}\prod_{j\in\mathbb{Z}_{3}}\|u_{j}\|_{H^{s}}\,.\label{eq:eq:bound-non-linear-term-in-Hs-with-measure-Linfty}
\end{align}
The proofs of the stability and contraction properties then go along
the same lines as for~$0\leq s<\frac{d}{2}$ but with~$\|w\|_{L^{\frac{d}{2s},\infty}+L^{\infty}}$
replaced by~$\|w\|_{\meas+L^{\infty}}$.

By standard arguments a solution~$\psi$ of~\eqref{eq:Hartree-Duhamel} can be extended to a solution (again denoted by~$\psi$) in~$\mathcal{C}^0([0,T_{\max}),H^s)$ with a ``maximal'' time of existence~$T_{\max}$, such that either $T_{\max} = \infty$, or~$\lim_{t\to T_{\max}}\|\psi_t\|_{H^s}=\infty$.

We now prove the uniqueness of the solution
and the continuity with respect to the initial data. Consider
two initial data $\psi_{0}$ and $\tilde{\psi}_{0}$ and corresponding
solutions $\psi$ and $\tilde{\psi}$ with maximal times $T_{\max}$
and $\tilde{T}_{\max}$, and $0<T<\min\{T_{\max},\tilde{T}_{\max}\}$. Then, using the Duhamel formula, \eqref{eq:bound-non-linear-term-in-Hs-with-Ld/2s-Linfty} or \eqref{eq:eq:bound-non-linear-term-in-Hs-with-measure-Linfty},
\begin{align*}
\|\psi_{t}-\tilde{\psi}_{t}\|_{H^{s}} & \lesssim\|\psi_{0}-\tilde{\psi}_{0}\|_{H^{s}}+\|w\|\int_{0}^{t}(\|\psi_{\tau}\|_{H^{s}}+\|\tilde{\psi}_{\tau}\|_{H^{s}})^{2}\|\psi_{\tau}-\tilde{\psi}_{\tau}\|_{H^{s}}\mathrm{d}\tau\\
 & \lesssim\|\psi_{0}-\tilde{\psi}_{0}\|_{H^{s}}+\|w\|(\|\psi\|_{X_{T}}+\|\tilde{\psi}\|_{X_{T}})^{2}\int_{0}^{t}\|\psi_{\tau}-\tilde{\psi}_{\tau}\|_{H^{s}}\mathrm{d}\tau\,.
\end{align*}
Grönwall's inequality then yields for~$t$ in~$[0,T]$:
\[
\|\psi_{t}-\tilde{\psi}_{t}\|_{H^{s}}\leq\|\psi_{0}-\tilde{\psi}_{0}\|_{H^{s}}\exp\big(C_d\|w\|(\|\psi\|_{X_{T}}+\|\tilde{\psi}\|_{X_{T}})^{2}\big)\,.
\]
Hence \eqref{eqn:comparaison-solutions-Hartree} holds for any $0<T<\min\{T_{\max} ,\tilde{T}_{\max}\}$.

Then it makes sense to define the maximal time of existence $T_{\max}$ of the solution with a given initial datum.

Finally we show that $\psi$ belongs to $\mathcal{C}^{1}([0,T_{\max}),H^{s-1})$.
We first remark that for any $f\in H^{s}$ we have 
\[
\partial_{t}(e^{-i\langle\nabla\rangle t}f)=-ie^{-i\langle\nabla\rangle t}\langle\nabla\rangle f
\]
is continuous and bounded as a function of $t$ with values in $H^{s-1}$,
as $\langle\nabla\rangle f\in H^{s-1}$ and $t\mapsto e^{-i\langle\nabla\rangle t}$
is strongly continuous and bounded on $H^{s-1}$. We have then found $e^{-i\langle\nabla\rangle t}f\in \mathcal{C}^{1}([0,T),H^{s-1})$.
By assumption $\psi_{0}\in H^{s}$ and therefore $e^{-i\langle\nabla\rangle t}\psi_{0}\in \mathcal{C}^{1}([0,T],H^{s-1})$.
Similarly, we know that $\psi\in \mathcal{C}^{0}([0,T],H^{s})$, hence $(w*|\psi_{\tau}|^{2})\psi_{\tau}\in H^{s}$
uniformly in $\tau\in[0,T]$ for any $T<T_{\max}$ by \eqref{eq:bound-non-linear-term-in-Hs-with-Ld/2s-Linfty}, \eqref{eq:bound-non-linear-term-in-Hd/2-with-Lq-Linfty}
or \eqref{eq:eq:bound-non-linear-term-in-Hs-with-measure-Linfty},
and as $H^{s}\subset H^{s-1}$ and
\begin{align*}
\partial_{t}\int_{0}^{t}e^{i(\tau-t)\langle\nabla\rangle}(w*|\psi_{\tau}|^{2})\psi_{\tau}\,\mathrm{d}\tau= & (w*|\psi_{t}|^{2})\psi_{t}+\int_{0}^{t}\partial_{t}(\,e^{i(\tau-t)\langle\nabla\rangle}(w*|\psi_{\tau}|^{2})\psi_{\tau}\,)\,\mathrm{d}\tau\in H^{s-1}
\end{align*}
we get that $I(\psi)\in \mathcal{C}^{1}([0,T_{\max}),H^{s-1})$. We conclude
that both terms in \eqref{eq:Hartree-Duhamel} are then in $\mathcal{C}^{1}([0,T_{\max}),H^{s-1})$
and therefore $\psi$ belongs to~$\mathcal{C}^{1}([0,T_{\max}),H^{s-1})$ as
claimed.
\end{proof}

Now we turn to the proof of  Theorem \ref{th: local_with_Strichartz-intro}. The argument is similar to that used in the previous proof, using in addition the Strichartz estimates given in Proposition \ref{prop:Strichartz-Lorentz}. We have the following result involving any admissible pair $(a,b)$ for the Strichartz estimates (either wave admissible or Schrödinger admissible). For $s<(d+1)/(2d+2)$, it implies  Theorem \ref{th: local_with_Strichartz-intro} by taking $\beta=0$, $\frac1a=\frac{(d-1)s}{d+1}$ and $\frac1b=\frac12-\frac{2s}{d+1}$, and likewise for $s\ge(d+1)/(2d+2)$.

	\begin{theo}\label{th: local_with_Strichartz}
	    Let $0\le\beta\le1$ and $(a,b)$ be an admissible pair for the Strichartz estimates of Proposition \ref{prop:Strichartz-Lorentz} and $s\ge\frac12+\frac1a-\frac1b$. Let $w\in L^{b/(b-2),\infty}+L^\infty$. For every initial datum~$\psi_{0}$ in~$H^{s}$ there exists~$T_{\max}$ in~$(0,\infty]$ such that, for all $0<T<T_{\max}$, the Duhamel equation \eqref{eq:Hartree-Duhamel}
associated with \eqref{eq:hartree_semi_intro} admits a unique solution 
\begin{equation*}
\psi\in\mathcal{C}^{0}([0,T],H^{s})\cap \mathcal{C}^{1}([0,T],H^{s-1})\cap L^a([0,T],L^{b,2}),
\end{equation*}
where either~$T_{\max}=\infty$ or~$\lim_{t\to T_{\max}}\|\psi_t\|_{H^s}+\|\psi\|_{L^a([0,T_{\max}),L^{b,2})} = \infty$. Moreover, for any $0<T<T_{\max}$, the solution $\psi$ to \eqref{eq:hartree_semi_intro} on $[0,T]$ associated to the initial datum $\psi_0$ in $H^s$ is unique, and the map
\begin{equation*}
    H^s\ni\psi_0\mapsto\psi\in\mathcal{C}^{0}([0,T],H^{s})\cap \mathcal{C}^{1}([0,T],H^{s-1})\cap L^a([0,T],L^{b,2})
\end{equation*}
is continuous.
	\end{theo}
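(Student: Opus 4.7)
The plan is to extend the contraction scheme used in the proof of Theorem~\ref{th: local H^s} by adding the Strichartz norm to the working space. First I will set, for $T>0$,
\[
Y_T:=\mathcal{C}^0([0,T],H^s)\cap L^a([0,T],L^{b,2}),\qquad \|\psi\|_{Y_T}:=\sup_{0\le t\le T}\|\psi_t\|_{H^s}+\|\psi\|_{L^a([0,T],L^{b,2})},
\]
and then look for a fixed point of the Duhamel map $\mathcal{T}$ from~\eqref{eq:Hartree-Duhamel} on a suitable ball of $Y_T$. The homogeneous and inhomogeneous Strichartz estimates of Proposition~\ref{prop:Strichartz-Lorentz}, combined with the hypothesis $s\ge\tfrac12+\tfrac1a-\tfrac1b$ that exactly absorbs the Strichartz loss of derivatives for the half-wave operator $\jnabla$, are expected to give
\[
\|\mathcal{T}(\psi)\|_{Y_T}\lesssim \|\psi_0\|_{H^s}+\|F\|_{L^1([0,T],H^s)},\qquad F:=(w*|\psi|^2)\psi,
\]
once the forcing is placed in the dual Strichartz pair $L^1([0,T],H^s)$ corresponding to the trivial admissible pair $(\infty,2)$.

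To estimate $F$, I will invoke Lemma~\ref{lem:bounds-forcing-short-range I}. Writing $w=w_q+w_\infty\in L^{q,\infty}+L^\infty$ with $q=b/(b-2)$, the identity $1/q+2/b=1$ forces $\sigma=b$ in the relation $1=1/q+2/\sigma$ of~\eqref{eq:convolution-Hartree-Hs-Lq}, producing precisely the Lorentz space $L^{b,2}$ entering $Y_T$; combined with $q=m=\infty$, $\sigma=2$ for the $L^\infty$-piece, this yields the pointwise-in-$\tau$ bound
\[
\|(w*|\psi_\tau|^2)\psi_\tau\|_{H^s}\lesssim \|w\|_{L^{q,\infty}+L^\infty}\,\|\psi_\tau\|_{H^s}\bigl(\|\psi_\tau\|_{L^{b,2}}^{2}+\|\psi_\tau\|_{L^2}^{2}\bigr).
\]
Integrating in time and applying H\"older's inequality, $\int_0^T\|\psi_\tau\|_{L^{b,2}}^2\,\diff\tau\le T^{1-2/a}\|\psi\|_{L^a([0,T],L^{b,2})}^2$, one obtains $\|F\|_{L^1([0,T],H^s)}\lesssim \|w\|\,T^\eta\,\|\psi\|_{Y_T}^3$ with $\eta=\min\{1-2/a,1\}\ge 0$. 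When $a>2$ one has $\eta>0$ and a standard contraction on a ball of radius proportional to $\|\psi_0\|_{H^s}$ closes for $T$ sufficiently small; in the endpoint case $a=2$ I expect to replace the factor $T^0$ by the quantity $\|e^{-it\jnabla}\psi_0\|_{L^a([0,T],L^{b,2})}$, which tends to $0$ as $T\to 0$ by dominated convergence, again providing the needed smallness.

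The remaining points should be routine. A standard continuation argument will produce a maximal time $T_{\max}$ with the blow-up alternative $\lim_{t\to T_{\max}}\bigl(\|\psi_t\|_{H^s}+\|\psi\|_{L^a([0,t],L^{b,2})}\bigr)=\infty$; uniqueness and the Lipschitz continuity of $\psi_0\mapsto\psi$ will then follow by applying the same Strichartz-plus-multilinear estimate to the difference of two solutions and invoking Gr\"onwall's lemma. The $\mathcal{C}^1([0,T],H^{s-1})$-regularity is obtained by differentiating the Duhamel formula exactly as at the end of the proof of Theorem~\ref{th: local H^s}, since the forcing is continuous in $\tau$ with values in $H^s\hookrightarrow H^{s-1}$.

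The main obstacle is the matching game performed in the middle paragraph: one must simultaneously accommodate the weak Lebesgue space $L^{b/(b-2),\infty}$ on the side of $w$, produce the Lorentz norm $L^{b,2}$ appearing in the Strichartz estimate on the side of $\psi$, and still gain a positive power of $T$ upon time-integration. The multilinear estimate~\eqref{eq:convolution-Hartree-Hs-Lq} is formulated precisely so that these three requirements become compatible when $q=b/(b-2)$, which is what motivates the hypothesis imposed on $w$ in the statement.
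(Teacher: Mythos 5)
Your proposal is correct and follows essentially the same route as the paper: a contraction in $Y_T=\mathcal{C}^0([0,T],H^s)\cap L^a([0,T],L^{b,2})$, Strichartz estimates reducing the $L^aL^{b,2}$ control to $\int_0^T\|(w*|\psi_\tau|^2)\psi_\tau\|_{H^s}\,\mathrm{d}\tau$, the multilinear bound \eqref{eq:convolution-Hartree-Hs-Lq} with $q=b/(b-2)$, $\sigma=b$, $\alpha=2$, and H\"older in time to gain $T^{1-2/a}$, exactly as in the paper's estimate \eqref{eq:w_ad/4}. The only superfluous point is your discussion of the endpoint $a=2$, which is excluded from the admissible pairs of Proposition \ref{prop:Strichartz-Lorentz}, so $1-2/a>0$ always holds.
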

	
	\begin{proof}
	As in the proof of the previous theorem, it suffices to solve the Duhamel equation \eqref{eq:Hartree-Duhamel} using a fixed point argument. More precisely one verifies that the map $\mathcal{T}$ defined in the proof of Theorem \ref{th: local H^s} is a contraction in a suitable ball contained in the space
	\begin{equation*}
	    Y_T:=\mathcal{C}^{0}([0,T],H^{s})\cap L^a([0,T],L^{b,2})
	\end{equation*}
	endowed with the norm
	\begin{equation*}
	    \|\psi\|_{Y_T}:=\sup_{0\le t\le T}\|\psi_t\|_{H_s}+\Big[\int_0^T\|\psi_\tau\|^a_{L^{b,2}}d\tau\Big]^{1/a}.
	\end{equation*}
	Using the Strichartz estimates from Proposition \ref{prop:Strichartz-Lorentz}, we see that (with the notations used in the proof of Theorem \ref{th: local H^s})
	\begin{align*}
\|\mathcal{T}(\psi)-\psi^{(0)}\|_{L^a([0,T],L^{b,2})} \lesssim\int_{0}^{T}\|(w*|\psi_{\tau}|^{2})\psi_{\tau}\|_{H^{s}}\mathrm{d}\tau.
 \end{align*}
  We can then proceed exactly as in the proof of Theorem \ref{th: local H^s}. The only difference is the estimate of the part $w_{b/(b-2)}$ of $w$ belonging to $L^{b/(b-2),\infty}$. Using Hölder and Young inequalities in Lorentz spaces recalled in Section \ref{subsec:Lorentz-ineq}, it can be handled as follows:
	   \begin{align}\label{eq:w_ad/4}
	   \int_{0}^{T}\|(w_{b/(b-2)}*|\psi_{\tau}|^{2})\psi_{\tau}\|_{H^{s}}\mathrm{d}\tau&\lesssim\|w_{b/(b-2)}\|_{L^{b/(b-2),\infty}}\int_{0}^{T}\|\psi_\tau\|_{H^s}\|\psi_\tau\|^2_{L^{b,2}}\mathrm{d}\tau\lesssim T^{1-\frac2a}\|\psi\|_{Y_T}^3.
	   \end{align}
The contraction property of $\mathcal{T}$ can be proven similarly. The rest of the proof follows the arguments used at the end of the proof of Theorem \ref{th: local H^s}.   
	\end{proof}

	\section{Global existence for long-range potentials}\label{sec:global-long}

In this section we prove Theorems \ref{th:global-long-range-intro} and \ref{th:global-long-range-intro-small-energy-opt}. To do so, we use the conservation of the mass and energy (see \eqref{eq:def_mass_intro} and \eqref{eq:def_energy_intro} for the definitions).

The proof of the next lemma is analogous to that in \cite{lenzmann_semirel}*{Lemma 2}. We do not reproduce it here.

\begin{lemma}
\label{lemma: en conservation}
Let $s\ge0$, $w$ as in Theorem \ref{th: local H^s} or \ref{th: local_with_Strichartz}, and let $\psi$ be a local solution to \eqref{eq:hartree_semi_intro} given by Theorem \ref{th: local H^s} or \ref{th: local_with_Strichartz}. Then, for all $t \in  [0,T_{\max})$ it holds
\begin{equation*}
    M(\psi_t)= M(\psi_0). 
\end{equation*}
Moreover, if $s\ge\frac12$, for all $t \in  [0,T_{\max})$ it holds
\begin{equation*}
     E(\psi_t)= E(\psi_0). 
\end{equation*}
\end{lemma}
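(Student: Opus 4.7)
The plan is to establish the two identities first for Schwartz initial data, by a direct differentiation-under-the-integral computation, and then to propagate them to arbitrary $\psi_0\in H^s$ (with $s\ge 0$ for mass and $s\ge 1/2$ for energy) by density together with the continuity of the solution map with respect to the initial data.

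In the smooth case, if $\psi_0\in\mathcal{S}$ then Theorem \ref{th: local H^s} provides a solution belonging to $\mathcal{C}^1([0,T_{\max}),H^{N-1})$ for every $N$, so every $L^2$ pairing below is meaningful. Setting $H_t:=\langle\nabla\rangle+w*|\psi_t|^2$, the operator $\langle\nabla\rangle$ is self-adjoint and $w*|\psi_t|^2$ is a real-valued bounded multiplier (since $w\in\meas+L^\infty$ is real), so $H_t$ is self-adjoint on $H^1$. Using $\partial_t\psi_t=-iH_t\psi_t$, I would compute
\begin{equation*}
\tfrac{d}{dt}M(\psi_t)=2\,\mathrm{Re}\,\langle\partial_t\psi_t,\psi_t\rangle_{L^2}=2\,\mathrm{Re}\,\langle -iH_t\psi_t,\psi_t\rangle_{L^2}=0.
\end{equation*}
For the energy, the evenness of $w$ symmetrizes the time derivative of the potential part, yielding $\tfrac{d}{dt}\bigl[\tfrac14\int(w*|\psi_t|^2)|\psi_t|^2\bigr]=\mathrm{Re}\,\langle(w*|\psi_t|^2)\psi_t,\partial_t\psi_t\rangle_{L^2}$; combined with $\tfrac{d}{dt}\tfrac12\|\langle\nabla\rangle^{1/2}\psi_t\|_{L^2}^2=\mathrm{Re}\,\langle\langle\nabla\rangle\psi_t,\partial_t\psi_t\rangle_{L^2}$ this gives
\begin{equation*}
\tfrac{d}{dt}E(\psi_t)=\mathrm{Re}\,\langle H_t\psi_t,-iH_t\psi_t\rangle_{L^2}=0.
\end{equation*}

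For general $\psi_0\in H^s$, I would choose $\psi_0^{(n)}\in\mathcal{S}$ with $\psi_0^{(n)}\to\psi_0$ in $H^s$. The bound \eqref{eqn:comparaison-solutions-Hartree} (or its Strichartz counterpart from the proof of Theorem \ref{th: local_with_Strichartz}) combined with the blow-up alternative implies that, on any compact $[0,T]\subset[0,T_{\max})$, the solutions $\psi^{(n)}$ are defined for all large $n$ and converge to $\psi$ in $\mathcal{C}^0([0,T],H^s)$. The continuity of $M$ on $L^2\supset H^s$ lets me pass to the limit in $M(\psi^{(n)}_t)=M(\psi^{(n)}_0)$ to conclude mass conservation for $s\ge 0$. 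For $s\ge 1/2$ the kinetic part of $E$ is manifestly continuous on $H^{1/2}$, and the quartic potential part $\psi\mapsto\int(w*|\psi|^2)|\psi|^2$ is continuous on $H^{1/2}$ via the multilinear bounds \eqref{eq:convolution-Hartree-Hs-M1}--\eqref{eq:convolution-Hartree-Hs-Lq} of Section \ref{sec:prelim} combined with the Sobolev embedding $H^{1/2}\hookrightarrow L^{2d/(d-1),2}$; hence $E(\psi^{(n)}_t)=E(\psi^{(n)}_0)$ also passes to the limit.

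The main technical obstacle is establishing this continuity of the potential part of the energy on $H^{1/2}$ under the weak integrability hypotheses relevant to Theorems \ref{th:global-long-range-intro} and \ref{th:global-long-range-intro-small-energy-opt}, namely $w\in L^{d,\infty}+L^\infty$ or $w\in L^{(d+1)/2,\infty}+L^\infty$. I would obtain it through a telescoping decomposition of $\int(w*|\psi|^2)|\psi|^2-\int(w*|\varphi|^2)|\varphi|^2$ followed by Hölder and Young inequalities in Lorentz spaces (Propositions \ref{pro:holder-lorentz} and \ref{pro:young-lorentz}), leading to a local Lipschitz estimate of the form $\|w\|\bigl(\|\psi\|_{H^{1/2}}+\|\varphi\|_{H^{1/2}}\bigr)^3\|\psi-\varphi\|_{H^{1/2}}$, which closes the density argument.
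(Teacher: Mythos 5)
The paper does not actually write out a proof of this lemma; it refers to \cite{lenzmann_semirel}*{Lemma 2}, whose argument is exactly the one you propose: verify the identities by direct differentiation for regular solutions, then transfer them to rough data by density and continuous dependence. Your two derivative computations are correct (symmetry of $H_t$ for the mass; evenness of $w$ to symmetrize the quartic term for the energy), and your local Lipschitz estimate for the potential energy, via H\"older, Young and Sobolev in Lorentz spaces, is the right way to pass to the limit in $E(\psi_t^{(n)})=E(\psi_0^{(n)})$ (for the more singular potentials in $\setW$ with $\frac12\le s<\frac d2$ one simply replaces $H^{1/2}$ by $H^{s}$ and $L^{2d/(d-1),2}$ by the corresponding Sobolev exponent).

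The one step you assert without justification is the crux of the approximation argument: that for $\psi_0^{(n)}\in\mathcal{S}$ the solution lies in $\mathcal{C}^1([0,T_{\max}),H^{N-1})$ for every $N$ \emph{with the same} $T_{\max}$, hence exists with high regularity on all of the fixed interval $[0,T]$. Theorem \ref{th: local H^s} gives, for each $N$, its own lifespan, which a priori may decrease with $N$; and since $\|\psi_0^{(n)}\|_{H^{N}}$ is not controlled as $n\to\infty$, you cannot exclude that the high-regularity lifespans of the approximants drop below $T$. In that case the differentiation argument only yields conservation on a too-short interval and the limit passage at a fixed $t\in[0,T]$ fails. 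The gap is filled by a persistence-of-regularity (tame) estimate: \eqref{eq:convolution-Hartree-Hs-Lq} with two factors measured in $L^{\sigma,2}$ (controlled by $H^{s}$ via Proposition \ref{pro:sobolev-lorentz}) and one factor in $H^{N}$ gives $\|(w*|u|^{2})u\|_{H^{N}}\lesssim\|w\|\,\|u\|_{H^{N}}\,\|u\|_{H^{s}}^{2}$, and similarly in the other regimes of $\setW$; Gronwall applied to the Duhamel formula then shows that the $H^{N}$ norm remains finite as long as the $H^{s}$ norm does, so the lifespans coincide. With this addition (and the analogous remark in the Strichartz setting) your proof closes.
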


Now we can prove the global existence of solutions to \eqref{eq:hartree_semi_intro} stated in Theorem \ref{th:global-long-range-intro}.

\begin{proof}[Proof of Theorem \ref{th:global-long-range-intro}]
Let $s\ge\frac12$, $\psi_0\in H^s$ and $w\in \mathcal{W}_{d,1/2}=L^{d,\infty}+L^\infty$. Since $\mathcal{W}_{d,1/2}\subset\mathcal{W}_{d,s}$, Theorem \ref{th: local H^s} shows that \eqref{eq:hartree_semi_intro} admits a unique local solution $\psi \in \mathcal{C}^0 ([0,T_{\max}), H^s)\cap\mathcal{C}^1 ([0,T_{\max}), H^{s-1})$, for some $T_{\max}>0$, and that either $T_{\max}= \infty$ or $\|\psi_t\|_{H^s}\to \infty$ as $t\to T_{\max}$. Therefore is suffices to show that $\|\psi_t\|_{H^s}$ is uniformly bounded for $t\in[0,T_{\max})$.

Since $\psi_t$ is a local solution in $H^s$ with $s\ge\frac12$, we can apply Lemma \ref{lemma: en conservation}. This yields
\begin{align}
\label{comp: energy}
    E(\psi_0) = E(\psi_\tau)\geq  \frac12 \| \langle \nabla\rangle ^{\frac12} \psi_\tau\|^2_{L^2} - \frac14 \int (w_-*|\psi_\tau|^2 )|\psi_\tau|^2\,  .
\end{align}
Writing $w_-$ in the form $w_-= (w_d)_-  + (w_\infty)_- \in L^{d, \infty} + L^\infty$, we use the inequalities
\begin{align*}
    \|(w_\infty)_-*|\psi_\tau|^2 \|_{L^\infty} \leq \|(w_\infty)_-\|_{L^\infty} \|\psi_\tau\|_{L^2}^2 = \|(w_\infty)_-\|_{L^\infty}\|\psi_0\|_{L^2}^2 
\end{align*}
and 
\begin{align*}
    \|(w_d)_-*|\psi_\tau|^2 \|_{L^\infty} &\leq \|(w_d)_-\|_{L^{d,\infty }}\||\psi_\tau|^2\|_{L^{\frac{1}{1-\frac{1}{d}},1}} =  \|(w_d)_-\|_{L^{d,\infty }}\|\psi_\tau\|_{L^{\frac{2d}{d-1}, 2}}^2\\
    &\leq  C_S\|(w_d)_-\|_{L^{d,\infty }}\|\psi_\tau\|_{H^{\frac12}}^2
\end{align*}
to bound the integral
\begin{align*}
    \left| \int (w_-*|\psi_\tau|^2 )|\psi_\tau|^2\,   \right| \leq \|(w_\infty)_-\|_{L^\infty}\|\psi_0\|_{L^2}^4+ C_S\|(w_d)_-\|_{L^{d,\infty }}\|\psi_\tau\|_{H^{\frac12}}^2\|\psi_0\|_{L^2}^2,
\end{align*}
where $C_S$ is the constant in the Sobolev embedding $H^{\frac12}\hookrightarrow L^{\frac{2d}{d-1}, 2}$. 
Applying this bound in \eqref{comp: energy}, we obtain 
\begin{equation*}
    E(\psi_0) +\frac14\|(w_\infty)_-\|_{L^\infty}\|\psi_0\|_{L^2}^4 \geq \frac12 \|\psi_\tau\|_{H^{\frac12}}^2 (1- \frac12 C_S \|(w_d)_-\|_{L^{d,\infty }}\|\psi_0\|_{L^2}^2).
\end{equation*}
Hence, if $\|(w_d)_-\|_{L^{d,\infty }}\|\psi_0\|_{L^2}^2 < 2C_S^{-1}$, we have 
\begin{equation}\label{eq:unif-bound-H12}
    \sup_{\tau\ge 0} \|\psi_\tau\|_{H^{\frac12}}^2 \lesssim E(\psi_0)+\frac14\|(w_\infty)_-\|_{L^\infty}\|\psi_0\|_{L^2}^4. 
\end{equation}

Now, by Duhamel's formula \eqref{eq:Hartree-Duhamel} and the estimates as above with $w$ instead of $w_-$,  we have, for all $t \in [0, T]$ with $T< T_{\max}$
\begin{align}
\label{comp: Hs bound for glob long r}
    \|\psi_t\|_{H^s} \lesssim \|\psi_0\|_{H^s} + \|w\|_{L^{d, \infty}+ L^\infty} \int_0^t \|\psi_\tau\|_{H^{\frac12}}^2  \|\psi_\tau\|_{H^s}\, \mathrm{d}\tau.
\end{align}
Together with \eqref{eq:unif-bound-H12}, this yields
\begin{equation*}
    \|\psi_t\|_{H^s} \lesssim \|\psi_0\|_{H^s} + \|w\|_{L^{d, \infty}+ L^\infty} \int_0^t   \|\psi_\tau\|_{H^s}\, \mathrm{d}\tau,
\end{equation*}
and by Gronwall's inequality we obtain that for some $C>0$ and all $t\in[0,T]$, it holds
\begin{equation*}
    \|\psi_t\|_{H^s} \lesssim \|\psi_0\|_{H^s} e^{Ct}.
\end{equation*}
Therefore, we have a uniform bound on $\|\psi_t\|_{H^s}$ on any arbitrary time interval $[0,T]$. As argued above, this implies that the solution $\psi$ is globally defined. 
\end{proof}

For small energy initial data, the class of admissible potentials can be extended using the Strichartz estimates of Proposition \ref{prop:Strichartz-Lorentz}. The next result implies Theorem \ref{th:global-long-range-intro-small-energy-opt} by taking $\beta=0$ and  $\frac1a=\frac1b=\frac{d-1}{2d+2}$. 

\begin{theo}\label{th:global-long-range-intro-small-energy}
	Let $s=\frac12$. Let $0\le\beta\le1$ and let $(a,b)$ be an admissible pair for the Strichartz estimates of Proposition \ref{prop:Strichartz-Lorentz} with $b\le a$. There exists a universal constant $C_0>0$ such that, for all $w$ even of the form $w = w_{d/2} + w_\infty \in L^{b/(b-2),\infty} + L^\infty\subset L^{d/2}+L^\infty$ and $\psi_0 \in H^{\frac12}$ verifying
	\begin{equation}
	\label{eq: glob ex w_1 small-energy-intro}
	    \| (w_{d/2})_-\|_{L^{\frac{d}2}}(E(\psi_0)+\|(w_\infty)_-\|_{L^\infty}\|\psi_0\|^4_{L^{2}})<C_0 ,
	\end{equation}
	and
	\begin{equation}\label{eq: glob ex w_1 small-energy-intro2}
	    \| (w_{d/2})_-\|_{L^{\frac{d}2}}\|\psi_0\|_{H^{\frac12}}^2<C_0,
	\end{equation}
	Eq. \eqref{eq:hartree_semi_intro} admits a unique solution 
	\begin{equation*}
		\psi_t \in \mathcal{C}^0([0,\infty), H^{\frac12}) \cap \mathcal{C}^1 ([0,\infty), H^{-\frac12})\cap L^a_{\mathrm{loc}}([0,\infty),L^{b,2}).
	\end{equation*}
\end{theo}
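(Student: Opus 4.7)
The strategy mirrors that of Theorem \ref{th:global-long-range-intro}, replacing the Gr\"onwall-type absorption step by a genuinely nonlinear bootstrap that exploits \emph{both} smallness conditions.

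First, since $w \in L^{b/(b-2),\infty}+L^\infty$, I would invoke Theorem \ref{th: local_with_Strichartz} with $s=1/2$ to obtain a local solution $\psi$ in $\mathcal{C}^0([0,T_{\max}),H^{1/2})\cap\mathcal{C}^1([0,T_{\max}),H^{-1/2})\cap L^a_{\mathrm{loc}}([0,T_{\max}),L^{b,2})$, together with the blow-up alternative that either $T_{\max}=\infty$ or $\|\psi_t\|_{H^{1/2}}+\|\psi\|_{L^a([0,t],L^{b,2})}\to\infty$. The proof then reduces to ruling out blow-up of both quantities on every finite interval.

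For the $H^{1/2}$ bound I would use conservation of mass and energy (Lemma \ref{lemma: en conservation}). Splitting $w_{-}\le (w_{d/2})_{-}+(w_\infty)_{-}$ and bounding the negative interaction term via H\"older, Young's inequality and the Sobolev embedding $H^{1/2}\hookrightarrow L^{2d/(d-1)}$ yields
\[
\tfrac12\|\psi_t\|_{H^{1/2}}^2\le E(\psi_0)+\tfrac14\|(w_\infty)_{-}\|_{L^\infty}\|\psi_0\|_{L^2}^4+\tfrac{C}{4}\|(w_{d/2})_{-}\|_{L^{d/2}}\|\psi_t\|_{H^{1/2}}^4
\]
for a universal constant $C>0$. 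Setting $f(t):=\|\psi_t\|_{H^{1/2}}^2$, $A:=2E(\psi_0)+\tfrac12\|(w_\infty)_{-}\|_{L^\infty}\|\psi_0\|_{L^2}^4$, and $B:=\tfrac{C}{2}\|(w_{d/2})_{-}\|_{L^{d/2}}$, this rewrites as the quadratic constraint $B f(t)^2-f(t)+A\ge 0$ for every $t\in[0,T_{\max})$. Choosing $C_0$ small enough, \eqref{eq: glob ex w_1 small-energy-intro} forces $4AB<1$, so the polynomial $By^2-y+A$ has two real roots $0<y_{-}<y_{+}$ and is nonnegative precisely outside $(y_{-},y_{+})$; hence at each time either $f(t)\le y_{-}$ or $f(t)\ge y_{+}$. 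At $t=0$ the inequality holds automatically from the definition of $E(\psi_0)$ and the Sobolev bound applied at the initial time, and, shrinking $C_0$ further, \eqref{eq: glob ex w_1 small-energy-intro2} gives $Bf(0)<1/2\le(1+\sqrt{1-4AB})/2=B y_{+}$, so $f(0)<y_{+}$ and therefore $f(0)\le y_{-}$. Continuity of $f$ together with the fact that it cannot enter the forbidden interval $(y_{-},y_{+})$ then imply $f(t)\le y_{-}$ on $[0,T_{\max})$, yielding $M:=\sup_{t\in[0,T_{\max})}\|\psi_t\|_{H^{1/2}}<\infty$.

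Finally, I would extend the solution to $[0,\infty)$ by iterating the local theory. Inspecting the contraction argument of Theorem \ref{th: local_with_Strichartz}, and in particular the estimate \eqref{eq:w_ad/4}, the local time of existence depends only on $\|w\|_{L^{b/(b-2),\infty}+L^\infty}$ and the $H^{1/2}$ norm of the initial datum; since the latter is uniformly bounded by $M$ at every restart time, the solution extends by a fixed time step $T_0=T_0(M,w)>0$, contradicting $T_{\max}<\infty$. Concatenating the Strichartz norms over successive intervals of length $T_0$ gives $\|\psi\|_{L^a([0,T],L^{b,2})}\lesssim(1+T/T_0)^{1/a}$ for every $T>0$, hence $\psi\in L^a_{\mathrm{loc}}([0,\infty),L^{b,2})$. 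The main obstacle is the $H^{1/2}$-criticality of the exponent $d/2$: the attractive part of the potential energy scales exactly like the kinetic term, which is precisely why Gr\"onwall is unavailable and why the two smallness conditions truly cannot be merged into one.
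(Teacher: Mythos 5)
Your proposal is correct and follows essentially the same route as the paper: local existence via the Strichartz-based Theorem \ref{th: local_with_Strichartz}, then conservation of mass and energy plus the Sobolev embedding $H^{1/2}\hookrightarrow L^{2d/(d-1)}$ to get the quadratic constraint $Bf(t)^2-f(t)+A\ge 0$, with \eqref{eq: glob ex w_1 small-energy-intro} separating the roots and \eqref{eq: glob ex w_1 small-energy-intro2} placing $f(0)$ on the lower branch, so that continuity traps $\|\psi_t\|_{H^{1/2}}^2$ there. The only cosmetic difference is the final step, where the paper bounds $\|\psi\|_{L^a([0,T_{\max}),L^{b,2})}$ directly by summing the Strichartz estimate over finitely many subintervals and invokes the blow-up alternative, while you restart the local theory with a uniform time step; the two are equivalent.
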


\begin{proof}
   Let $w\in L^{b/(b-2),\infty}+L^\infty$ and let $\psi$ be a local solution to \eqref{eq:hartree_semi_intro} on $[0,T_{\max})$ given by Theorem \ref{th: local_with_Strichartz} (note that $\psi$ exists since $\psi_0\in H^{\frac12}$ and $\frac12\ge\frac12+\frac1a-\frac1b$). By the blow-up alternative stated in Theorem \ref{th: local_with_Strichartz}, it suffices to show that $t\mapsto\|\psi_t\|_{H^{1/2}}$ is uniformly bounded on $[0,T_{\max})$ and that $t\mapsto\|\psi_t\|_{L^b}$ belongs to $L^a([0,T_{\max}))$.
   
   Using the energy estimate \eqref{comp: energy} from the previous proof and Hölder's and Young's inequality, we obtain, for all $t\in[0,T_{\max})$,
   \begin{align*}
       \|\psi_t\|^2_{H^\frac12}&\le E(\psi_t)+\|(w_{d/2})_-\|_{L^{d/2}}\|\psi_t\|^4_{H^\frac12}+\|(w_\infty)_-\|_{L^\infty}\|\psi_t\|^4_{L^2}\\
       &=E(\psi_0)+\|(w_{d/2})_-\|_{L^{d/2}}\|\psi_t\|^4_{H^\frac12}+\|(w_\infty)_-\|_{L^\infty}\|\psi_0\|^4_{L^2},
   \end{align*}
where we used the conservation of the mass and energy (see Lemma \ref{lemma: en conservation}) in the equality.

Setting
\begin{equation*}
    X_t:=\|\psi_t\|^2_{H^\frac12}, \quad A:=\|(w_{d/2})_-\|_{L^{d/2}}\quad  C:=E(\psi_0)+\|(w_\infty)_-\|_{L^\infty}\|\psi_0\|^4_{L^2},
\end{equation*}
the previous inequality gives that $AX_t^2-X_t+C\ge0$ for all $t\in[0,T_{\max})$. If $(w_{d/2})_-=0$, then $\|\psi_t\|^2_{H^\frac12}$ is uniformly bounded on $[0,T_{\max})$ by $C$. Otherwise, letting $\delta:=1-4AC$ and choosing $C_0>0$ small enough, \eqref{eq: glob ex w_1 small-energy-intro} implies that $0<\delta<1$. Hence, for all $t\in[0,T]$, $X_t$ belongs to $[0,(1-\delta^{\frac12})/(2A)]\cup[(1+\delta^{\frac12})/(2A),\infty)$. Now if $C_0$ is small enough, one easily verifies using \eqref{eq: glob ex w_1 small-energy-intro2} that $X_0<(1-\delta^{\frac12})/(2A)$. By continuity of $t\mapsto X_t$, this implies that $X_t\in [0,(1-\delta^{\frac12})/(2A)]$ for all $t\in[0,T_{\max})$.

We have thus shown that $\|\psi_t\|_{H^{\frac12}}$ is uniformly bounded by $(2\|(w_{d/2})_-\|_{L^{d/2}})^{-1}$ on $[0,T_{\max})$ (or by $C$ if $(w_{d/2})_-=0$). To prove that $\|\psi_t\|_{L^{b,2}}$ belongs to $L^a([0,T_{\max}))$, we use again the Strichartz estimates from Proposition \ref{prop:Strichartz-Lorentz} together with \eqref{eq:w_ad/4}, obtaining for all $0<T<T_{\max}$,
\begin{align*}
    \|\psi\|_{L^a([0,T],L^{b,2})}&\lesssim\|\psi_0\|_{H^{\frac12}}+\int_0^T\|(w*|\psi_\tau|^2)\psi_\tau\|_{H^{\frac12}}\mathrm{d}\tau\\
    &\lesssim\|\psi_0\|_{H^{\frac12}}+\int_0^T\big(\|w_\infty\|_{L^\infty}\|\psi_0\|^2_{L^2}\|\psi_\tau\|_{H^{\frac12}}+\|w\|_{L^{b/(b-2),\infty}}\|\psi_\tau\|_{L^b}^2\|\psi_\tau\|_{H^{\frac12}}\big)\mathrm{d}\tau\\
    &\lesssim C_1(w,M(\psi_0),E(\psi_0))(1+T)+C_2(w,M(\psi_0),E(\psi_0))T^{1-\frac{2}{a}}\|\psi\|_{L^a([0,T],L^{b,2})}^2,
\end{align*}
for some positive constants $C_j(w,M(\psi_0),E(\psi_0))$ depending on $w$ and the mass and energy of $\psi_0$. Fixing $T_0>0$ small enough (depending only again on $w$ and $M(\psi_0)$, $E(\psi_0)$) and arguing as before, we deduce that
\begin{equation*}
    \|\psi\|_{L^ a([0,T_0],L^{b,2})}\lesssim C_3(w,M(\psi_0),E(\psi_0)).
\end{equation*}
Choosing an integer $n$ large enough so that $T_0>\frac1nT_{\max}=:T_1$ yields
\begin{equation*}
    \|\psi\|_{L^a([0,T_{\max}),L^{b,2})}^a=\sum_{j=0}^{n-1}\|\psi\|^a_{L^a([jT_1,(j+1)T_1),L^{b,2})}<\infty,
\end{equation*}
which by the blowup alternative stated in Theorem \ref{th: local_with_Strichartz}, conludes the proof.
\end{proof}

	\section{Global existence and pointwise time-decay for short range potentials}\label{sec:global-short}

In this section we prove the global well-posedness and time-decay properties of  \eqref{eq:hartree_semi_intro} stated in Theorem \ref{th:global-short-range-intro}. We use the time-decay properties of the free flow associated to \eqref{eq:hartree_semi_intro}. This will allow us to consider interacting potentials $w \in \meas + L^{q}$, $ 1\leq q <\frac{2d}{3}$ for $d\geq 3$.

In Appendix \ref{app: pointw dec} (see Lemma \ref{lemma: free t dec interp}), we recall that the solutions of the linear equation corresponding to the free dynamics associated to \eqref{eq:hartree_semi_intro} satisfy the following time-decay estimates for all $f \in H^{s,p' }\cap H^s$:
	\begin{equation}\label{eq:a3}
		\|e^{-it\langle \nabla \rangle} f\|_{L^{p}\cap L^\infty} \lesssim \langle t\rangle^{ -\frac{d}{2r}}\|f\|_{H^{s,p'}\cap H^s},
	\end{equation}
for any $2\le p \le\infty$, $s\geq \frac d2 +1$, with $\frac1{p}+\frac1{p'}=1$ and $\frac1p+\frac1{2r}=\frac12$.

Now we are ready to prove  Theorem \ref{th:global-short-range-intro}.

\begin{proof}[Proof of Theorem \ref{th:global-short-range-intro}]
We will construct a solution to the Duhamel equation
\begin{equation}\label{eq:Duhamel}
\psi_{t}=\psi^{(0)}_{t}-i\int_{0}^{t}e^{i(\tau-t)\langle\nabla\rangle}(w*|\psi_{\tau}|^{2}))\,\psi_{\tau}\,\mathrm{d}\tau\,.
\end{equation}
To simplify the notations, given $s,r,p$ as in the statement of the theorem, we set $S:=S^{s,r,p}$ where $S^{s,r,p}$ was defined in~\eqref{eq:def-Ssrp}.

Eq. \eqref{eq:a3} ensures that the free evolution $\psi_t^{(0)}$ is in $S$.
For $\varphi_{j}\in S$, $j\in\mathbb{Z}_{3}=\{0,1,2\}$, we set
\[
I(\varphi_{0},\varphi_{1},\varphi_{2})(t)=\int_{0}^{t}e^{i(\tau-t)\langle\nabla\rangle}(w*(\varphi_{0,\tau}\varphi_{1,\tau}))\,\varphi_{2,\tau}\,\mathrm{d}\tau\,.
\]
We first show that $I$ is well-defined on $S^{3}$, with values in $S$. It follows from  \eqref{eq:a3} that
\begin{align}
\|I(\varphi_{0},\varphi_{1},\varphi_{2})\|_{S} & \leq\sup_{t>0}\int_{0}^{t}\|(w*(\varphi_{0,\tau}\varphi_{1,\tau}))\,\varphi_{2,\tau}\|_{H^{s}}\,\mathrm{d}\tau\notag\\
 & \qquad+\sup_{t>0}\langle t\rangle^{\frac{d}{2r}}\int_{0}^{t}\langle t-\tau\rangle^{-\frac{d}{2r}}\|(w*(\varphi_{0,\tau}\varphi_{1,\tau}))\,\varphi_{2,\tau}\|_{H^{s}\cap H^{s,p'}}\,\mathrm{d}\tau\,. \label{eq:estim_I3}
\end{align}
Let $w_1\in\meas $ and $w_q\in L^q$ be such that $w=w_1+w_q$. Using~(\ref{eq:convolution-Hartree-Hs-M1}), we obtain
\begin{align*}
\int_{0}^{t}\|(w_{1}*(\varphi_{0,\tau}\varphi_{1,\tau}))\,\varphi_{2,\tau}\|_{H^{s}} \mathrm{d}\tau& \lesssim\|w_{1}\|_{\meas}\int_{0}^{t}\sum_{j\in\mathbb{Z}_{3}}\|\varphi_{j,\tau}\|_{H^{s}}\|\varphi_{j+1,\tau}\|_{L^{\infty}}\|\varphi_{j+2,\tau}\|_{L^{\infty}}\mathrm{d}\tau\\
 & \lesssim\|w_{1}\|_{\meas}\,\prod_{j\in\mathbb{Z}_{3}}\|\varphi_{j}\|_{S}\,\int_{0}^{t}\langle\tau\rangle^{-\frac{d}{r}}\mathrm{d}\tau
\end{align*}
and similarly, using~(\ref{eq:convolution-Hartree-Hs-Lq-theta}) and the fact that $\|\varphi_{j,\tau}\|_{H^s}^{1-\theta(\frac{r}{q})}\|\varphi_{j,\tau}\|^{\theta(\frac{r}{q})}_{L^{p}\cap L^{\infty}}\lesssim \langle \tau \rangle^{-\frac{d}{2r}\theta(\frac{r}{q})} \|\varphi_{j}\|_{S} $ we have
\begin{align*}
\int_{0}^{t}\|(w_{q}*(\varphi_{0,\tau}\varphi_{1,\tau}))\,\varphi_{2,\tau}\|_{H^{s}} \mathrm{d}\tau
 & \lesssim\|w_{q}\|_{L^{q}}\,\prod_{j\in\mathbb{Z}_{3}}\|\varphi_{j}\|_{S}\,\int_{0}^{t}\langle\tau\rangle^{-\frac{d}{r}\theta(\frac{r}{q})}\,\mathrm{d}\tau\,,
\end{align*}
where we recall that $\theta(u)=\min\{1,u\}$. We can proceed in the same way to estimate the second term in the right-hand side of \eqref{eq:estim_I3}. Using the bound~(\ref{eq:convolution-Hartree-Hspprime-M1-theta}) twice, we obtain
\begin{multline*}
\langle t\rangle^{\frac{d}{2r}}\int_{0}^{t}\langle t-\tau\rangle^{-\frac{d}{2r}}\|(w_{1}*(\varphi_{0,\tau}\varphi_{1,\tau}))\,\varphi_{2,\tau}\|_{H^{s}\cap H^{s,p'}}\mathrm{d}\tau\\
\lesssim\|w_{1}\|_{\meas}\,\prod_{j\in\mathbb{Z}_{3}}\|\varphi_{j}\|_{S}\,\langle t\rangle^{\frac{d}{2r}}\int_{0}^{t}\langle t-\tau\rangle^{-\frac{d}{2r}}\left(\langle\tau\rangle^{-\frac{d}{r}}+\langle\tau\rangle^{-\frac{d}{r}\theta(r-\frac{1}{2})}\right)\,\mathrm{d}\tau,
\end{multline*}
while~(\ref{eq:convolution-Hartree-Hs-Lq-theta})
and~(\ref{eq:convolution-Hartree-Hspprime-Lq-theta}) lead to
\begin{multline*}
\langle t\rangle^{\frac{d}{2r}}\int_{0}^{t}\|(w_{q}*(\varphi_{0,\tau}\varphi_{1,\tau}))\,\varphi_{2,\tau}\|_{H^{s}\cap H^{s,p'}}\mathrm{d}\tau\\
\lesssim\|w_{q}\|_{L^{q}}\,\prod_{j\in\mathbb{Z}_{3}}\|\varphi_{j}\|_{S}\,\langle t\rangle^{\frac{d}{2r}}\int_{0}^{t}\langle t-\tau\rangle^{-\frac{d}{2r}}\left(\langle\tau\rangle^{-\frac{d}{r}\theta(\frac{r}{q})}+\langle\tau\rangle^{-\frac{d}{r}\theta(\frac{r}{q}-\frac{1}{2})}\right)\,\mathrm{d}\tau\,.
\end{multline*}
We then observe that if $a\geq0,b>1$, then 
\begin{equation}
\sup_{t>0}\langle t\rangle^{a}\int_{0}^{t}\langle t-\tau\rangle^{-a}\langle\tau\rangle^{-b}\,\mathrm{d}\tau<\infty\,,\label{eq:integrale-finie}
\end{equation}
which we can apply with $a=\frac{d}{2r}$ and $b\in\left\{\frac dr, \frac{d}{r}\theta(\frac{r}{q}),\frac{d}{r}\theta(\frac{r}{q}-\frac{1}{2}),\frac{d}{r}\theta(r-\frac{1}{2})\right\} $.
Indeed, since $r\geq1$, we have
\begin{align*}
\min\left\{\frac dr, \frac{d}{r}\theta(\frac{r}{q}),\frac{d}{r}\theta(\frac{r}{q}-\frac{1}{2}),\frac{d}{r}\theta(r-\frac{1}{2})\right\} >1 & \Leftrightarrow\min\left\{ \frac{1}{r},\frac{1}{q},\frac{1}{q}-\frac{1}{2r},1-\frac{1}{2r}\right\} >\frac{1}{d}\\
 & \Leftrightarrow\min\left\{ \frac{1}{r},\frac{1}{q},\frac{1}{q}-\frac{1}{2r}\right\} >\frac{1}{d}\\
 & \Leftrightarrow\frac{3}{2d}<\frac{1}{2r}+\frac{1}{d}<\frac{1}{q}\,,
\end{align*}
which is true by assumption. Therefore we have shown that $I$ is well-defined on $S^{3}$, with values in $S$. Moreover, taking the infimum over all the decompositions $w=w_1+w_q$ with $w_1\in\meas $ and $w_q\in L^q$, we deduce that 
\begin{equation}
\|I(\varphi_{0},\varphi_{1},\varphi_{2})\|_{S}\leq C\|w\|_{\meas+L^{q}}\prod_{j\in\mathbb{Z}_{3}}\|\varphi_{j}\|_{S}\,,\label{eq:borne-integrale-trilineaire}
\end{equation}
for some positive constant $C$.

Now we seek for a fixed point of the map $\mathcal{T}:S\to S$ defined by
\[
\mathcal{T}(\psi)=\psi^{(0)}-iI(\psi,\bar{\psi},\psi).
\]
We consider the restriction of $\mathcal{T}$ to the closed ball
\begin{equation*}
    \bar{B}:=\overline{B(\psi^{(0)},\|\psi^{(0)}\|_{S})}.
\end{equation*}
The bound~(\ref{eq:borne-integrale-trilineaire}), for $\psi$ in~$\bar{B}$, gives
\[
\|\mathcal{T}(\psi)-\psi^{(0)}\|_S=\|I(\psi,\bar{\psi},\psi)\|_{S}\leq C\|w\|_{\meas+L^{q}}\|\psi\|_{S}^{3}\leq C\|w\|_{\meas+L^{q}}8\|\psi^{(0)}\|_{S}^{3}\,.
\]
It follows that if $\|w\|_{\meas+L^{q}}\|\psi\|_{S}^{2}\leq\frac{1}{8C}$~, then  $\mathcal{T}$ sends $\bar{B}$ into itself.

Now we verify that $\mathcal{T}$ is a contraction on $\bar{B}$. Using again~(\ref{eq:borne-integrale-trilineaire}) for $\psi_{1},\psi_{2}\in\bar{B}$ yields
\begin{align*}
\|\mathcal{T}(\psi_{1})-\mathcal{T}(\psi_{2})\|_{S} & \leq\|I(\psi_{1},\bar{\psi}_{1},\psi_{1}-\psi_{2})\|_{S}+\|I(\psi_{1},\bar{\psi}_{1}-\bar{\psi}_{2},\psi_{2})\|_{S}+\|I(\psi_{1}-\psi_{2},\bar{\psi}_{2},\psi_{2})\|_{S}\\
 & \leq3C\|w\|_{\meas+L^{q}}\|\psi_{1}-\psi_{2}\|_{S}(\|\psi_{1}\|_{S}+\|\psi_{2}\|_{S})^{2}\\
 & \leq 48C\|w\|_{\meas+L^{q}}\|\psi_{1}-\psi_{2}\|_{S}\|\psi^{(0)}\|_{S}^{2}.
\end{align*}
It is thus sufficient to have $\|w\|_{\meas+L^{q}}\|\psi^{(0)}\|_{S}^{2}\leq\frac{1}{48C}$
in order for $ \mathcal{T}$ to be a contraction on $\bar{B}$~,
this is true for $\|w\|_{\meas+L^{q}}\|\psi_{0}\|_{H^{s}\cap H^{s,p'}}^{2}$
sufficiently small as the assumption requires.

We have thus proven that $\mathcal{T}$ has a fixed point in $\bar{B}$, which gives the existence of a solution $\psi$ to Duhamel's equation \eqref{eq:Duhamel}. The fact that this solution $\psi$ belongs to $\mathcal{C}^0([0,\infty),H^{s})\cap \mathcal{C}^1([0,\infty),H^{s-1})$ follows from the previous estimates and standard arguments. Inequalities \eqref{eq:estim1_psit}--\eqref{eq:estim4_psit} are also direct consequences of the previous estimates.

Uniqueness of the solution follows from the blow-up alternative in Theorem \ref{th:local-long-range-intro}.
\end{proof}

\section{Maximal velocity estimates }\label{sec:max-vel}

In this section we prove our first result on the speed of propagation of the solution to~\eqref{eq:hartree_semi_intro}. We first prove Theorem~\ref{th:max-vel-intro}, which holds under a convexity assumption, in Subsection \ref{subsec:convex} and then Proposition~\ref{th:max-vel-general-intro1} in the case of general sets in Subsection \ref{subsec:general}.

\subsection{Convex sets}\label{subsec:convex}
In this section we prove Theorem~\ref{th:max-vel-intro}. If the initial data $\psi_0$ were in $H^s$ and the potential $w$ in $\mathcal{W}_{d,s}$ with $s\geq 1$, then Theorem~\ref{th:max-vel-intro} would be a corollary of our article on maximal velocity estimate for non-autonomous pseudo-relativistic Schrödinger equation \cite{BFG2a}, since in such case we could write a solution to \eqref{eq:hartree_semi_intro} as $\psi_t=U_t\psi_0$ with $U_t=U_{t,0}$ the propagator generated by $\langle\nabla\rangle+w*|\psi_t|^2$. In this section we prove that the maximal velocity estimate for \eqref{eq:hartree_semi_intro} as stated in Theorem~\ref{th:max-vel-intro} holds even for $s\geq 1/2$, assuming that the sets $X$ and $Y$ are convex and using several lemmata from  \cite{BFG2a} together with a final argument which avoids the use of the concept of propagator. 
Restricting to more regular initial states $\psi_0$ in $H^s$ with $s\geq1$ will allow us, in the next subsection, to prove the maximal velocity estimate for general subsets $X$ and $Y$ as stated in Proposition \ref{th:max-vel-general-intro1} by using the results of \cite{BFG2a}.

The main idea of the proof  of Theorem~\ref{th:max-vel-intro} is that, for a well chosen function $\ell$ it holds 
\begin{align}
\|\mathbf{1}_Y \psi_t\|_{L^2}
    \le \underset{\le\, \exp\big(-\frac{\dXY}{2}\big)}{\underbrace{\|\mathbf{1}_Ye^{\ell(x)}\|_{\mathcal{B}(L^2)}}} \quad
    \underset{\le\, \exp(t)\exp\big(-\frac{\dXY}{2}\big)\|\psi_0\|_{L^2}}{\underbrace{\|e^{-\ell(x)}\psi_t\|_{L^2}}}
   \,,\label{eq:idee-borne-de-vitesse-maximale}
\end{align}
where the $e^{-\ell(x)}\psi_t$ part is estimated through a Gronwall argument.
 To reach those estimates, we need several lemmata from \cite{BFG2a} that we recall without proof. The first one is a quantitative separation lemma which allows us to introduce the function $\ell$ appearing in \eqref{eq:idee-borne-de-vitesse-maximale}.
	\begin{lemma}[\cite{BFG2a}]\label{lm:convex}
		Let $X,Y$ two convex subsets of~$\mathbb{R}^d$ such that~$\dXY>0$. 
		There exist $x_0$ in $\mathbb{R}^d$ and a unit vector $n$  in $\mathbb{R}^d$ such that the affine functional~$\ell(x)=n\cdot (x-x_0)$ satisfies
		\begin{equation*}
		    \forall x\in X,\quad \ell(x)\ge\frac12 \dXY \qquad \text{and} \qquad \forall x\in Y,\quad \ell(x)\le-\frac12 \dXY\,.
		\end{equation*}
\end{lemma}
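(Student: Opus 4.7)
My plan is to build the separating affine functional from the Hilbert space projection theorem applied to the Minkowski difference $X-Y$, which avoids any compactness hypothesis on $X$ or $Y$.

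First, I would introduce the set $Z:=X-Y=\{x-y:x\in X,\ y\in Y\}$, which is convex as the sum of the convex sets $X$ and $-Y$. By the very definition of the distance between subsets, $\inf_{z\in Z}|z|=\mathrm{dist}(X,Y)=:d>0$. Passing to the closure $\overline{Z}$ preserves convexity and the infimum $\mathrm{dist}(0,\overline{Z})=d$, so in particular $0\notin \overline{Z}$. This closure step is the key technical move, because without it the infimum might not be attained when $X$ and $Y$ are unbounded.

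Next, I would apply the projection theorem on the nonempty closed convex set $\overline{Z}\subset\mathbb{R}^d$ to produce a unique $z_0\in\overline{Z}$ realizing $|z_0|=d$. The variational characterization of the projection reads $(z-z_0)\cdot(0-z_0)\le 0$ for every $z\in\overline{Z}$, which rewrites as
\begin{equation*}
z\cdot z_0\ge |z_0|^2=d^2,\qquad \forall\,z\in\overline{Z}.
\end{equation*}
Setting $n:=z_0/|z_0|$, a unit vector, and applying this to $z=x-y\in Z\subset\overline{Z}$ for arbitrary $x\in X$ and $y\in Y$, I obtain $n\cdot x-n\cdot y\ge d$.

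Finally, define $\alpha:=\inf_{x\in X}n\cdot x$ and $\beta:=\sup_{y\in Y}n\cdot y$. Fixing any $y_*\in Y$ gives $\alpha\ge n\cdot y_*+d>-\infty$, and symmetrically $\beta<+\infty$, so both quantities are finite real numbers and satisfy $\alpha\ge\beta+d$. I would then choose $x_0:=\tfrac{\alpha+\beta}{2}\,n$, so that $n\cdot x_0=\tfrac{\alpha+\beta}{2}$. The inequality $\alpha-\beta\ge d$ yields the two bounds $n\cdot x_0\le\alpha-d/2$ and $n\cdot x_0\ge\beta+d/2$, whence for every $x\in X$ and every $y\in Y$,
\begin{equation*}
\ell(x)=n\cdot x-n\cdot x_0\ge\alpha-n\cdot x_0\ge\tfrac{d}{2},\qquad \ell(y)=n\cdot y-n\cdot x_0\le\beta-n\cdot x_0\le-\tfrac{d}{2}.
\end{equation*}
This gives the stated separation with the sharp factor $\tfrac12\mathrm{dist}(X,Y)$. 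The only delicate point is the one already highlighted, namely ensuring that a minimizer of the distance to the origin exists; the passage to the closed convex set $\overline{Z}$ resolves it without any compactness, and everything else reduces to elementary inequalities.
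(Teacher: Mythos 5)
Your proof is correct and complete: forming the convex Minkowski difference $Z=X-Y$, projecting the origin onto $\overline{Z}$ to get the direction $n$, and then centering the hyperplane at $x_0=\tfrac{\alpha+\beta}{2}n$ with $\alpha-\beta\ge\mathrm{dist}(X,Y)$ yields exactly the stated two-sided bound with the sharp constant $\tfrac12$. Note that the paper itself states this lemma without proof, deferring to the companion paper \cite{BFG2a}; your argument is the standard quantitative separation argument for convex sets at positive distance (the closure step you highlight is indeed the point that dispenses with any compactness assumption), so there is nothing to fix.
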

	
	From now on, in this section, we consider $\ell$ as in Lemma~\ref{lm:convex} and for all $\varepsilon>0$, we introduce a bounded regularization of $\ell$ by setting
	    \begin{equation*}
      \ell_\varepsilon(x):=f_\varepsilon(\ell(x))=f_\varepsilon(n\cdot(x-x_0)),
	    \end{equation*}
	    where $f_\varepsilon(r)=f(\varepsilon r)$, $f\in C^\infty(\mathbb{R})$, $f(r)=r$ on $[-1,1]$, $0\le f'\le1$ and $f'$ is compactly supported.

	For all $\varepsilon>0$, we define the operator $\Tepsilon$ on $H^1(\mathbb{R}^d)$ by
	    \begin{equation}\label{eq:defTeps}
	        \Tepsilon:=\mathrm{Im}\big(e^{\ell_\varepsilon(x)} \langle\nabla\rangle e^{-\ell_\varepsilon(x)}\big)=\frac{1}{2i}\big(e^{\ell_\varepsilon(x)} \langle\nabla\rangle e^{-\ell_\varepsilon(x)}-e^{-\ell_\varepsilon(x)} \langle\nabla\rangle e^{\ell_\varepsilon(x)}\big).
	    \end{equation}

	\begin{lemma}[\cite{BFG2a}]\label{lm:unif-bound}
	    For all $\varepsilon>0$, $\Tepsilon$ extends to a bounded operator on $L^2$, with
	    \begin{equation*}
	        \sup_{\varepsilon>0}\|\Tepsilon\|_{\mathcal{B}(L^2)}<\infty.
	    \end{equation*}
	\end{lemma}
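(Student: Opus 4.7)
The plan is to rewrite $G_\varepsilon$ as the imaginary part of a commutator which, thanks to a structural cancellation of the (possibly large) multiplicative factors $e^{\pm\ell_\varepsilon}$, is a bounded pseudodifferential operator of order zero with uniform-in-$\varepsilon$ norm.

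Since $\langle\nabla\rangle$ is self-adjoint and $\ell_\varepsilon$ is real, $(e^{\ell_\varepsilon}\langle\nabla\rangle e^{-\ell_\varepsilon})^*=e^{-\ell_\varepsilon}\langle\nabla\rangle e^{\ell_\varepsilon}$ and $\mathrm{Im}(\langle\nabla\rangle)=0$, hence
\begin{equation*}
G_\varepsilon=\mathrm{Im}(B_\varepsilon),\qquad B_\varepsilon:=e^{\ell_\varepsilon}\langle\nabla\rangle e^{-\ell_\varepsilon}-\langle\nabla\rangle=e^{\ell_\varepsilon}[\langle\nabla\rangle,e^{-\ell_\varepsilon}],
\end{equation*}
as operators on $H^1$, so it suffices to prove $\|B_\varepsilon\|_{\mathcal{B}(L^2)}\le C$ uniformly in $\varepsilon$. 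At the level of symbols, $e^{\ell_\varepsilon}\langle\nabla\rangle e^{-\ell_\varepsilon}$ corresponds to the quantization of $\sqrt{1+(\xi-i\nabla\ell_\varepsilon)^2}$, so $B_\varepsilon$ has principal symbol $-i\,\xi\cdot\nabla\ell_\varepsilon/\langle\xi\rangle$---crucially, the exponentials cancel at leading order---which is bounded by $\|\nabla\ell_\varepsilon\|_{L^\infty}$. The construction of $\ell_\varepsilon$ from the cutoff $f$ ensures that $\nabla\ell_\varepsilon$, and more generally every $\partial^\alpha\ell_\varepsilon$ with $|\alpha|\ge 1$, is uniformly bounded in $\varepsilon$.

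To make this rigorous, I would employ the Helffer--Sj\"ostrand representation of $\langle\nabla\rangle$: for a suitable almost-analytic extension $\tilde F$ of $t\mapsto\sqrt{t}$,
\begin{equation*}
\langle\nabla\rangle=\frac{1}{\pi}\int_{\mathbb{C}}\bar\partial\tilde F(z)\,R_z\,d\mathrm{vol}(z),\qquad R_z:=(z-(1-\Delta))^{-1}.
\end{equation*}
Substituting this into $B_\varepsilon=e^{\ell_\varepsilon}[\langle\nabla\rangle,e^{-\ell_\varepsilon}]$, expanding $[R_z,e^{-\ell_\varepsilon}]=-R_z[-\Delta,e^{-\ell_\varepsilon}]R_z$ with $[-\Delta,e^{-\ell_\varepsilon}]=-(\Delta e^{-\ell_\varepsilon})-2(\nabla e^{-\ell_\varepsilon})\cdot\nabla$, and using the resolvent bounds $\|R_z\|\lesssim|\mathrm{Im}\,z|^{-1}$ and $\|\nabla R_z\|\lesssim|\mathrm{Im}\,z|^{-1/2}$ together with the rapid decay of $\bar\partial\tilde F$ off the real axis, one obtains a convergent $z$-integral whose norm is controlled by a polynomial in $\|\ell_\varepsilon\|_{L^\infty}$, $\|\nabla\ell_\varepsilon\|_{L^\infty}$, and $\|\Delta\ell_\varepsilon\|_{L^\infty}$---the exponential prefactors being absorbed along the way into terms involving only derivatives of $\ell_\varepsilon$. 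This commutator estimate is established in our companion paper \cite{BFG2a} and can be invoked here directly.

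The main obstacle is precisely this last step: the $L^\infty$-norms of $e^{\pm\ell_\varepsilon}$ need not be uniformly bounded in $\varepsilon$, so any estimation that factors them out as standalone prefactors (e.g.\ the tempting identity $B_\varepsilon=\int_0^1 e^{s\ell_\varepsilon}[\ell_\varepsilon,\langle\nabla\rangle]e^{-s\ell_\varepsilon}ds$) would lose uniformity. Likewise, a na\"\i ve use of the real-line Balakrishnan formula $\langle\nabla\rangle=\pi^{-1}\int_0^\infty\lambda^{-1/2}(1-\Delta)(1-\Delta+\lambda)^{-1}d\lambda$ produces a non-integrable $\lambda$-tail from the $\nabla\ell_\varepsilon\cdot\nabla$ part of the inner commutator; passing to a complex contour via the Helffer--Sj\"ostrand representation is what simultaneously restores convergence of the integral and preserves the cancellation of the exponential factors.
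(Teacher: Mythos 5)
The paper itself does not prove this lemma: it is explicitly recalled from the companion paper \cite{BFG2a} without proof, so your attempt has to stand on its own. Your opening reduction is correct ($\Tepsilon=\mathrm{Im}(B_\varepsilon)$ with $B_\varepsilon=e^{\ell_\varepsilon}[\langle\nabla\rangle,e^{-\ell_\varepsilon}]$, and $\|\mathrm{Im}(B_\varepsilon)\|\le\|B_\varepsilon\|$), and you have correctly diagnosed the two dangers: the $L^\infty$-norms of $e^{\pm\ell_\varepsilon}$ are not uniformly bounded, and the naive Balakrishnan estimate produces a logarithmically divergent tail. The problem is that the proposed cure does not remove either obstruction.

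The fatal step is the treatment of the conjugated resolvent. Once you expand $[R_z,e^{-\ell_\varepsilon}]=R_z[1-\Delta,e^{-\ell_\varepsilon}]R_z$ and try to cancel the exponentials, the factor $e^{-\ell_\varepsilon}$ sitting inside $[1-\Delta,e^{-\ell_\varepsilon}]$ can only meet the outer $e^{\ell_\varepsilon}$ by forming $e^{\ell_\varepsilon}R_ze^{-\ell_\varepsilon}=(z-H_\varepsilon)^{-1}$ with $H_\varepsilon=1-(\nabla-\nabla\ell_\varepsilon)^2$; there is no third option between this and leaving a standalone unbounded prefactor. The numerical range of $H_\varepsilon$ fills a parabolic neighborhood $\{|\mathrm{Im}\,w|\lesssim\sqrt{\mathrm{Re}\,w}\}$ of $[1,\infty)$ (in the model case $\ell(x)=n\cdot x$ the spectrum of the conjugated operator is exactly $\{u+iv:u\ge0,\ |v|\le2\sqrt{u}\}$), so for $z$ in that region --- which is precisely where the Helffer--Sj\"ostrand integral concentrates --- the only available bound on $e^{\ell_\varepsilon}R_ze^{-\ell_\varepsilon}$ is $e^{2\|\ell_\varepsilon\|_{L^\infty}}|\mathrm{Im}\,z|^{-1}$, and $\|\ell_\varepsilon\|_{L^\infty}\to\infty$ as $\varepsilon\to0$. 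Your own conclusion, a bound ``polynomial in $\|\ell_\varepsilon\|_{L^\infty}$,'' concedes exactly this dependence and therefore does not yield the uniform bound that is the entire content of the lemma. (Two further issues: the stated bound $\|\nabla R_z\|\lesssim|\mathrm{Im}\,z|^{-1/2}$ is false near the spectrum, the correct one being $\lesssim 1+\langle\mathrm{Re}\,z\rangle^{1/2}|\mathrm{Im}\,z|^{-1}$; and the closing appeal to ``this commutator estimate is established in \cite{BFG2a}'' is circular, since that estimate \emph{is} the lemma.) It is worth realizing that no argument which estimates the integrand in operator norm pointwise in the spectral parameter can succeed here: already in the Fourier-multiplier limit one has $\sup_\xi\int_0^\infty u^{1/2}\,|\xi|\,(1+u+|\xi|^2)^{-2}\,\mathrm{d}u<\infty$ while $\int_0^\infty u^{1/2}\sup_\xi|\xi|\,(1+u+|\xi|^2)^{-2}\,\mathrm{d}u=\infty$, so the uniform bound rests on cancellation \emph{across} the spectral parameter (or on a direct analysis of the limiting symbol $f_\pm$ and the kernel of $\langle\nabla\rangle$, which is what the definition \eqref{eq:defT0} of $\Tzero$ and Lemma \ref{lm:bound_fpm} suggest the companion paper exploits). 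The sharpness of the situation --- $|\nabla\ell_\varepsilon|\le1$ sits exactly at the boundary of the strip of analyticity of $\xi\mapsto\langle\xi\rangle$, and $f_+$ is not even smooth at $\xi=0$ --- means this cancellation cannot be obtained from soft pseudodifferential or resolvent estimates of the kind you invoke.
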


For all $z\in\mathbb{C}\setminus(-\infty,0)$, we write $\sqrt{z}=\sqrt{|z|}e^{\frac{i}{2}\mathrm{Arg}(z)}$ with $-\pi<\mathrm{Arg}(z)<\pi$ and for all $\xi\in\mathbb{R}^d$, we set
	\begin{align}\label{eq:def_fpm}
    f_\pm(\xi):=\sqrt{|\xi\pm in|^2+1}=\sqrt{|\xi|^2\pm 2in\cdot\xi}.
\end{align}

\begin{lemma}[\cite{BFG2a}]\label{lm:bound_fpm}
    For all $\xi\in\mathbb{R}^d$, the bound
    \begin{equation*}
        \big|\mathrm{Im}f_\pm(\xi)\big|\le1
    \end{equation*}
    holds.
\end{lemma}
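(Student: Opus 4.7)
The plan is to reduce the inequality to an elementary consequence of Cauchy--Schwarz by computing $(\mathrm{Im}\, f_\pm(\xi))^2$ explicitly. First I would set $z := |\xi|^2 \pm 2 i\, n\cdot\xi$, so that $f_\pm(\xi) = \sqrt{z}$ via the principal branch defined just before the lemma. Since $\mathrm{Re}(z) = |\xi|^2 \ge 0$, the complex number $z$ lies in the closed right half-plane, hence stays in the domain of $\sqrt{\cdot}$, and moreover $\mathrm{Re}\,\sqrt{z} \ge 0$. The case $\xi=0$ is trivial since then $f_\pm(\xi)=0$, so I may assume $\xi \neq 0$ below.

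Writing $f_\pm(\xi) = a + i b$ with $a,b \in \mathbb{R}$ and $a \ge 0$, I would then square the defining identity $f_\pm(\xi)^2 = z$ and equate real and imaginary parts to obtain
\begin{equation*}
  a^2 - b^2 = |\xi|^2 \qquad \text{and} \qquad 2 a b = \pm\, 2\, n\cdot\xi .
\end{equation*}
Combining these with $|f_\pm(\xi)|^2 = |\sqrt{z}|^2 = |z| = \sqrt{|\xi|^4 + 4(n\cdot\xi)^2}$, which coincides with $a^2 + b^2$, and then subtracting the first identity, I get the explicit formula
\begin{equation*}
  2 \, (\mathrm{Im}\, f_\pm(\xi))^2 \;=\; 2 b^2 \;=\; \sqrt{|\xi|^4 + 4(n\cdot\xi)^2} \,-\, |\xi|^2 .
\end{equation*}

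It remains to bound the right-hand side by $2$. The inequality $b^2 \le 1$ is equivalent to $\sqrt{|\xi|^4 + 4(n\cdot\xi)^2} \le 2 + |\xi|^2$, and squaring both (non-negative) sides reduces it further to $(n\cdot\xi)^2 \le 1 + |\xi|^2$. Since $n$ is a unit vector, Cauchy--Schwarz gives $(n\cdot\xi)^2 \le |\xi|^2 \le 1 + |\xi|^2$, which concludes the argument. There is no real obstacle here; the only point requiring some care is to ensure that the principal branch of the square root is used consistently, so that $a \ge 0$ and the identity $|f_\pm(\xi)|^2 = |z|$ is legitimate when extracting real and imaginary parts.
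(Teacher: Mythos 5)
Your argument is correct. The paper itself states Lemma~\ref{lm:bound_fpm} without proof, importing it from the companion paper \cite{BFG2a}, so there is no in-text proof to compare against; but your derivation is complete and self-contained. The key identity $2(\mathrm{Im}\sqrt{z})^2=|z|-\mathrm{Re}(z)$, valid for the principal branch whenever $\mathrm{Re}(z)\ge0$, is obtained correctly from $a^2-b^2=\mathrm{Re}(z)$ and $a^2+b^2=|z|$, and the reduction of $b^2\le1$ to $(n\cdot\xi)^2\le1+|\xi|^2$, settled by Cauchy--Schwarz with $|n|=1$, is exactly the right elementary step. The branch bookkeeping is also handled properly: for $\xi\ne0$ one has $\mathrm{Re}(z)=|\xi|^2>0$, so $z$ avoids the cut, $(\sqrt{z})^2=z$, and $|\sqrt{z}|^2=|z|$, while $\xi=0$ gives $f_\pm(0)=0$ trivially.
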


We define the operator $\Tzero$ on $L^2$ by
\begin{equation}\label{eq:defT0}
    \Tzero := \mathrm{Im}(f_+(-i\nabla)) = \mathcal{F} \, \mathrm{Im}(f_+(\xi)) \,  \mathcal{F}^{-1}.
\end{equation}
It then follows from Lemma \ref{lm:bound_fpm} that
\begin{equation}
    \|\Tzero\|_{\mathcal{B}(L^2)}\le1.\label{eq:borne-T0}
\end{equation}
The next lemma shows that $\Tzero$ is the weak limit of $\Tepsilon$ (defined in \eqref{eq:defTeps}) as $\varepsilon\to0$.

\begin{lemma}[\cite{BFG2a}]\label{lm:strongconv}
We have
\begin{equation*}
    \Tepsilon\to \Tzero, \quad \varepsilon\to0,
\end{equation*}
weakly in $\mathcal{B}(L^2)$.
\end{lemma}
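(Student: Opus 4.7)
The plan is to combine a density argument with an explicit Fourier-side (equivalently, kernel-side) representation of the matrix elements, and then pass to the limit by dominated convergence.

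First, by Lemma~\ref{lm:unif-bound} and~\eqref{eq:borne-T0}, both the family $(G_\varepsilon)_{\varepsilon>0}$ and the operator $G_0$ are uniformly bounded in $\mathcal{B}(L^2)$. A standard $3\varepsilon$-argument therefore reduces the weak convergence to proving $\langle v,G_\varepsilon u\rangle\to\langle v,G_0 u\rangle$ for $u,v$ in any convenient dense subspace of $L^2$. I would choose $\mathcal{D}:=\{f\in L^2:\hat f\in\mathcal{C}_c^\infty(\mathbb{R}^d)\}$, the Paley--Wiener class: its elements are Schwartz functions on $\mathbb{R}^d$ extending to entire functions of exponential type on $\mathbb{C}^d$, a property that will be useful when computing the Fourier symbol of the limit.

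For $u,v\in\mathcal{D}$, the products $e^{\pm\ell_\varepsilon}u$ and $e^{\pm\ell_\varepsilon}v$ belong to $\mathcal{S}$ (since $e^{\pm\ell_\varepsilon}$ is smooth and bounded together with all its derivatives). Applying Parseval's identity in the definition~\eqref{eq:defTeps} of $G_\varepsilon$ yields
\begin{equation*}
    2i\langle v,G_\varepsilon u\rangle=\int\overline{\mathcal{F}(e^{\ell_\varepsilon}v)(\xi)}\,\langle\xi\rangle\,\mathcal{F}(e^{-\ell_\varepsilon}u)(\xi)\,\mathrm{d}\xi-\int\overline{\mathcal{F}(e^{-\ell_\varepsilon}v)(\xi)}\,\langle\xi\rangle\,\mathcal{F}(e^{\ell_\varepsilon}u)(\xi)\,\mathrm{d}\xi.
\end{equation*}
Equivalently, introducing the real, even, distributional kernel $K:=\mathcal{F}^{-1}\langle\xi\rangle$ of $\langle\nabla\rangle$ (which is smooth and exponentially decaying away from $0$), the same expression rewrites as
\begin{equation*}
    \langle v,G_\varepsilon u\rangle=\frac{1}{i}\bigl\langle\overline{v}(x)\,u(y),\,K(x-y)\sinh\bigl(\ell_\varepsilon(x)-\ell_\varepsilon(y)\bigr)\bigr\rangle_{\mathbb{R}^{2d}}.
\end{equation*}

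I would then pass to the limit $\varepsilon\to 0$ by dominated convergence. Since $\ell_\varepsilon(x)\to\ell(x)=n\cdot(x-x_0)$ pointwise and $|\ell_\varepsilon(x)-\ell_\varepsilon(y)|\le|n\cdot(x-y)|\le|x-y|$ uniformly in $\varepsilon$, one has $|\sinh(\ell_\varepsilon(x)-\ell_\varepsilon(y))|\le e^{|x-y|}$ uniformly. This exponential growth is exactly compensated by the exponential decay of $K(x-y)$ at infinity, and combined with the Schwartz decay of $\overline{v}(x)u(y)$ one obtains an integrable dominating function on $\mathbb{R}^{2d}$. Hence
\begin{equation*}
    \langle v,G_\varepsilon u\rangle\;\longrightarrow\;\frac{1}{i}\bigl\langle\overline{v}(x)\,u(y),\,K(x-y)\sinh\bigl(n\cdot(x-y)\bigr)\bigr\rangle_{\mathbb{R}^{2d}},
\end{equation*}
and it remains to identify the right-hand side with $\langle v,G_0 u\rangle$. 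This is a direct Fourier computation: the translation-invariant convolution with kernel $K(\cdot)\sinh(n\cdot)/i$ has symbol $(\langle\xi+in\rangle-\langle\xi-in\rangle)/(2i)=(f_+(\xi)-f_-(\xi))/(2i)=\mathrm{Im}(f_+(\xi))$, matching the definition~\eqref{eq:defT0} of $G_0$.

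The main obstacle will be handling the distributional singularity of $K$ at the diagonal $x=y$, which is not locally integrable as a function. This is tamed by the linear vanishing of $\sinh(\ell_\varepsilon(x)-\ell_\varepsilon(y))$ on the diagonal, uniformly in $\varepsilon$, which provides enough regularization both to make the distributional pairing well-defined and to produce a dominating function controlling the non-diagonal contributions uniformly in $\varepsilon$.
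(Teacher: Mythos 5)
The paper does not prove this lemma here: it is imported verbatim from the companion paper \cite{BFG2a} (the text explicitly recalls it ``without proof''), so there is no in-paper argument to compare against. Judged on its own, your strategy is reasonable: the density reduction via the uniform bounds of Lemma~\ref{lm:unif-bound} and \eqref{eq:borne-T0} is standard, the kernel representation of $\langle v,\Tepsilon u\rangle$ in terms of $K(x-y)\sinh(\ell_\varepsilon(x)-\ell_\varepsilon(y))$ is correct, and the final symbol computation correctly identifies the limit with $\mathrm{Im}\,f_+(\xi)$, i.e.\ with $\Tzero$ as defined in \eqref{eq:defT0}.

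There is, however, a genuine gap at the diagonal, precisely at the point you flag as ``the main obstacle''. The distribution $K=\mathcal{F}^{-1}\langle\xi\rangle$ has singular part $c_d\,\mathrm{f.p.}\,|x|^{-d-1}$ near the origin (the inverse Fourier transform of $|\xi|$). Linear vanishing of $\sinh(\ell_\varepsilon(x)-\ell_\varepsilon(y))$ on the diagonal only upgrades $|x-y|^{-d-1}$ to $|x-y|^{-d}$, which is \emph{not} locally integrable on $\mathbb{R}^d$; hence the dominating function you invoke does not exist, and dominated convergence cannot be applied to the full pairing as written. The pairing is in fact well defined, but for a different reason: either because the even finite-part distribution annihilates the odd linear term of the test function (so one effectively gains quadratic vanishing), or, more robustly, because one can write $K=(1-\Delta)G_1$ with $G_1=\mathcal{F}^{-1}\langle\xi\rangle^{-1}$ locally integrable, integrate by parts to put $(1-\Delta_x)$ on the smooth factor $\bar v(x)\sinh(\ell_\varepsilon(x)-\ell_\varepsilon(y))$, and only then apply dominated convergence (the derivatives of $\ell_\varepsilon$ are bounded uniformly in $\varepsilon$, so the resulting integrands are still controlled by $e^{|x-y|}$). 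A second, smaller point needing justification is the domination at infinity: it is borderline, since $|\sinh(\ell_\varepsilon(x)-\ell_\varepsilon(y))|\le\tfrac12 e^{|x-y|}$ must be beaten by the decay of $K$, and generic ``exponential decay at some rate'' would not suffice; one needs the precise Macdonald-function asymptotics $K(x)\sim C e^{-|x|}|x|^{-(d+2)/2}$ (decay rate exactly $1$, with a polynomial gain) for the product to be integrable. With these two repairs the argument closes, but as stated the diagonal step would fail.
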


We are now in the position to prove Theorem \ref{th:max-vel-intro}.

\begin{proof}[Proof of Theorem \ref{th:max-vel-intro}]
For $\tilde{\psi}_{0}$ in $\mathcal{C}_{0}^{\infty}$ whose support is included in $X$,
the integral form of Hartree's equation \eqref{eq:hartree_semi_intro} yields:
\begin{align}
\|e^{-\ell_{\varepsilon}}\tilde{\psi}_{t}\|_{L^{2}}^{2} & =\|e^{-\ell_{\varepsilon}}\tilde{\psi}_{0}\|_{L^{2}}^{2}+2\Im\int_{0}^{t}\langle e^{-\ell_{\varepsilon}}\tilde{\psi}_{\tau},e^{-\ell_{\varepsilon}}[\jnabla+w*|\tilde{\psi}_{\tau}|^{2}]\tilde{\psi}_{\tau}\rangle_{L^{2}}\diff\tau\nonumber \\
 & =\|e^{-\ell_{\varepsilon}}\tilde{\psi}_{0}\|_{L^{2}}^{2}+2\Im\int_{0}^{t}\langle e^{-\ell_{\varepsilon}}\tilde{\psi}_{\tau},e^{-\ell_{\varepsilon}}\jnabla e^{\ell_{\varepsilon}}e^{-\ell_{\varepsilon}}\tilde{\psi}_{\tau}\rangle_{L^{2}}\diff\tau\nonumber \\
 & =\|e^{-\ell_{\varepsilon}}\mathbf{1}_{X}\tilde{\psi}_{0}\|_{L^{2}}^{2}+2\int_{0}^{t}\langle e^{-\ell_{\varepsilon}}\tilde{\psi}_{\tau},\Tepsilon e^{-\ell_{\varepsilon}}\tilde{\psi}_{\tau}\rangle_{L^{2}}\diff\tau\,.\label{eq:borne-integrale-exp-ell-eps-psi}
\end{align}
Lemma~\ref{lm:unif-bound} provides a uniform control of $\|\Tepsilon \|_{\mathcal{B}(L^{2})}$, and thus there exists $C>0$ such that
\[
C=\sup_{\varepsilon\in(0,1)}\|\Tepsilon \|_{\mathcal{B}(L^{2})}<+\infty\,.
\]
This, along with Lemma~\ref{lm:convex} about the separation of the convex sets $X$ and $Y$ gives
\begin{align*}
\|e^{-\ell_{\varepsilon}}\tilde{\psi}_{t}\|_{L^2}^{2} & \leq e^{-\dXY}\|\tilde{\psi}_{0}\|_{L^2}^{2}+2C\int_{0}^{t}\|e^{-\ell_{\varepsilon}}\tilde{\psi}_{\tau}\|_{L^2}^{2}\diff\tau\,.
\end{align*}
It is then possible to use Gronwall's Lemma to get
\begin{equation}
\|e^{-\ell_{\varepsilon}}\tilde{\psi}_{t}\|_{L^2}^{2}\leq e^{2Ct-\dXY}\|\tilde{\psi}_{0}\|_{L^2}^{2}\, \label{eq:borne-unif-exp(-leps)-psi}
\end{equation}
which resembles the estimate in \eqref{eq:idee-borne-de-vitesse-maximale}. It remains to take the limit as $\varepsilon$ goes to $0$, obtain 1 instead of $C$ and replace $\tilde{\psi}_{0}$ by $\psi_{0}$ in $H^{s}$.

The square of the $L^{2}$ norm of $e^{-\ell_{\varepsilon}}\tilde{\psi}_{t}$ can be split into two terms:
\[
\int_{\mathbb{R}^{d}}e^{-2\ell_{\varepsilon}}|\tilde{\psi}_{t}|^{2}=\int_{\ell(x)\geq0}e^{-2\ell_{\varepsilon}}|\tilde{\psi}_{t}|^{2}+\int_{\ell(x)<0}e^{-2\ell_{\varepsilon}}|\tilde{\psi}_{t}|^{2}\,.
\]
The first term on the right-hand side converges to  $\int_{\ell(x)\geq0}e^{-2\ell}|\tilde{\psi}_{t}|^{2}$
by Lebesgue's dominated convergence Theorem while the second term on the right-hand side converges to $\int_{\ell(x)<0}e^{-2\ell}|\tilde{\psi}_{t}|^{2}$
by Beppo--Levi's monotone convergence Theorem. Hence, the uniform bound in~\eqref{eq:borne-unif-exp(-leps)-psi}
implies that~$e^{-\ell}\tilde{\psi}_{t}$ lies in $L^{2}$, with bound
\[
\|e^{-\ell}\tilde{\psi}_{t}\|_{L^2}^{2}\leq e^{2Ct-\dXY}\|\tilde{\psi}_{0}\|_{L^2}^{2}
\]
 and 
\begin{equation}
\|e^{-\ell_{\varepsilon}}\tilde{\psi}_{t}-e^{-\ell}\tilde{\psi}_{t}\|_{L^{2}}\xrightarrow[\varepsilon\to0]{}0\,. \label{eq:convergence-exp(-ell-eps)-psi}
\end{equation}
We wish to take the limit $\varepsilon$ to $0$ in \eqref{eq:borne-integrale-exp-ell-eps-psi}. We remark that
\begin{multline*}
    |  \langle e^{-\ell_{\varepsilon}}\tilde{\psi}_{\tau},\Tepsilon e^{-\ell_{\varepsilon}}\tilde{\psi}_{\tau}\rangle_{L^{2}}-\langle e^{-\ell}\tilde{\psi}_{\tau},\Tzero e^{-\ell}\tilde{\psi}_{\tau}\rangle_{L^{2}}|\\
  \leq\|e^{-\ell_{\varepsilon}}\tilde{\psi}_{\tau}-e^{-\ell}\tilde{\psi}_{\tau}\|_{L^{2}}\|\Tepsilon \|_{\mathcal{B}(L^{2})}\|e^{-\ell_{\varepsilon}}\tilde{\psi}_{\tau}\|_{L^{2}}
  +\|e^{-\ell}\tilde{\psi}_{\tau}\|_{L^{2}}\|\Tepsilon \|_{\mathcal{B}(L^{2})}\|e^{-\ell_{\varepsilon}}\tilde{\psi}_{\tau}-e^{-\ell}\tilde{\psi}_{\tau}\|_{L^{2}}\\
  +|\langle e^{-\ell}\tilde{\psi}_{\tau},(\Tepsilon -\Tzero )e^{-\ell}\tilde{\psi}_{\tau}\rangle_{L^{2}}|\,.
\end{multline*}
The first two terms on the right-hand side converge to $0$ as $\varepsilon$ goes to 0 using (\ref{eq:convergence-exp(-ell-eps)-psi}), 
Lemma~\ref{lm:unif-bound}, which provides a uniform bound on $\Tepsilon $,
and \eqref{eq:borne-unif-exp(-leps)-psi}, while the third term goes to 0 by  Lemma~\ref{lm:strongconv}, which states the convergence of $\Tepsilon $ towards $\Tzero $ for the weak operator topology.

 As a result, we can take the limit as $\varepsilon$ goes to 0 in~\eqref{eq:borne-integrale-exp-ell-eps-psi} and obtain
\[
\|e^{-\ell}\tilde{\psi}_{t}\|_{L^{2}}^{2}=\|e^{-\ell}\mathbf{1}_{X}\tilde{\psi}_{0}\|_{L^{2}}^{2}+2\int_{0}^{t}\langle e^{-\ell}\tilde{\psi}_{\tau},\Tzero e^{-\ell}\tilde{\psi}_{\tau}\rangle_{L^{2}}\diff\tau\,.
\]
A new application of Gronwall's Lemma along with~\eqref{eq:borne-T0} to control $\|\Tzero\|_{\mathcal{B}(L^2)}$ yields:
\[
\|e^{-\ell}\tilde{\psi}_{t}\|_{L^{2}}^{2}\leq e^{2t-\dXY}\|\tilde{\psi}_{0}\|_{L^{2}}^{2}\,.
\]
Using a second time  Lemma~\ref{lm:convex}, we obtain
\[
\|1_{Y}\tilde{\psi}_{t}\|_{L^{2}}\leq \|1_{Y}e^\ell\|_{\mathcal{B}(L^{2})}  \|e^{-\ell}\tilde{\psi}_{t}\|_{L^{2}}\leq e^{t-\dXY}\|\tilde{\psi}_{0}\|_{L^{2}}\,.
\]

We now consider $\psi_{0}$ in $H^{s}$ such that $\boldsymbol{1}_{X}\psi_{0}=\psi_{0}$.
Then, for all $\delta>0$ and all $T<T_{\max}(\psi_{0})$, Theorem~\ref{th: local H^s} implies that we can find $\tilde{\psi_{0}}$ in $\mathcal{C}_{0}^{\infty}$
such that $T_{\max}(\tilde{\psi}_{0})>T$ and $\sup_{t\in[0,T]}\|\psi_{t}-\tilde{\psi}_{t}\|_{H^{s}}\leq\delta$,
hence 
\[
\|\boldsymbol{1}_{Y}\psi_{t}\|_{L^{2}}\leq\|\boldsymbol{1}_{Y}\tilde{\psi}_{t}\|_{L^{2}}+\delta\leq e^{t-\dXY}\|\tilde{\psi}_{0}\|_{L^{2}}+\delta\leq e^{t-\dXY}(\|\psi_{0}\|_{L^{2}}+\delta)+\delta,
\]
for any $\delta>0$ which implies the bound in Theorem~\ref{th:max-vel-intro}.
\end{proof}

\subsection{General sets}\label{subsec:general}

We now turn to the proof of Proposition~\ref{th:max-vel-general-intro1} giving a maximal velocity estimate in the case of non-convex sets $X$ and $Y$. As mentioned before, we need to impose more regularity on the initial states $\psi_0$ in order to apply results from our companion paper \cite{BFG2a}. More precisely we must work in a setting where the family of operators $(\jnabla +w*|\psi_t|^2)_t$ defines a unitary propagator in the following sense.

\begin{definition} Let $I$ a compact interval of $\mathbb{R}$ and $(A_t)_{t\in I}$ a family of self-adjoint operators on $L^2$ such that $\mathcal{D}(A_t)\cap H^{1/2}$ is dense in $H^{1/2}$ and $A_t$ are continuously extendable to~$\mathcal{B}(H^{1/2},H^{-1/2})$. 
The map $I\times I \ni (t,s)\mapsto U(t,s)$ is a \emph{unitary propagator} associated to
\begin{equation}\label{Cauchy-pb-propagateur}
                i\partial_t \psi_t = A_t \psi_t\,, \quad t\in I,
\end{equation}
if and only if
\begin{enumerate}
    \item $U(t,s)$ is unitary on $L^2$ for all $t,s$ in $I$.
    \item $U(t,t)=\mathbf{1}_{L^2}$ for all $t$ in $I$ and $U(t,s)U(s,r)=U(t,r)$ for all $t,s,r$ in $I$.
    \item For all $s$ in $I$, the map $t\ni I \mapsto U(t,s)$ belongs to $\mathcal{C}^0(I,\mathcal{B}(H^{1/2})_{\mathrm{str}}) \cap \mathcal{C}^1(I,\mathcal{B}(H^{1/2},H^{-1/2})_{\mathrm{str}})$ and satisfies
    \[\forall t,s\in I\,,\,\forall \psi \in H^{1/2}\,, \quad i\partial_t U(t,s)\psi = A_t U(t,s)\psi\,,\]
\end{enumerate}
as an equality in $H^{-1/2}$. The index ``$\mathrm{str}$'' indicates that the considered topology  is the strong operator topology.
\end{definition}

Proposition~\ref{th:max-vel-general-intro1} will then be a consequence of the following theorem from~\cite{BFG2a}.

\begin{theo}[Corollary 1.4 in \cite{BFG2a}]\label{cor:max-vel-bounds-general-subsets}There exists $C_d>0$ such that, if   $(V_t)_{0\leq t\leq T}$, with $T>0$, is a family of real-valued potentials such that $(U(t,s))_{0\leq t,s\leq T}$ is a unitary propagator associated to $(\langle\nabla\rangle+V_t)_{0\leq t\leq T}$, then, for any measurable subsets  $X$ and $Y$ of $\mathbb{R}^{d}$, it holds
\[
\forall t\in[0,T]\,,\qquad \|\mathbf{1}_{Y}U(t,0)\mathbf{1}_{X}\|_{\mathcal{B}(L^{2})}\leq C_{d} \, e^{t-\dXY} \,  \langle\dXY\rangle^{d}\,,
\]
with $\langle r \rangle = \sqrt{1+r^2}$.
\end{theo}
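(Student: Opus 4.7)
The plan is to reduce to a convex-case estimate by a lattice covering and Schur test argument in the spirit of \cite{SigalWu}. Write $N := \dXY$; the bound is trivial when $N \leq 2\sqrt d$ since $U(t,0)$ is unitary and $\langle N\rangle^d \geq 1$, so I will assume $N > 2\sqrt d$. I would tile $\mathbb{R}^d$ by a disjoint partition $\{Q_j\}_{j \in \mathbb{Z}^d}$ of (half-open) unit cubes, each of diameter $\sqrt d$, and decompose $\mathbf{1}_X = \sum_k \mathbf{1}_{Q_k \cap X}$, $\mathbf{1}_Y = \sum_j \mathbf{1}_{Q_j \cap Y}$. Only indices with $Q_k \cap X \neq \emptyset$ and $Q_j \cap Y \neq \emptyset$ contribute, and for every such pair the triangle inequality yields $\mathrm{dist}(Q_j, Q_k) \geq N - 2\sqrt d$.

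The first key input, which I would establish before the covering step, is the linear convex-case bound: for any convex $Q, Q' \subset \mathbb{R}^d$,
\[
\|\mathbf{1}_Q U(t,0) \mathbf{1}_{Q'}\|_{\mathcal{B}(L^2)} \leq e^{t - \mathrm{dist}(Q, Q')}.
\]
This is the time-dependent linear analog of Theorem~\ref{th:max-vel-intro}, and its proof is precisely the one given there with $w * |\psi_\tau|^2$ replaced by $V_\tau$: since $V_\tau$ is real, it drops out of $\mathrm{Im}(e^{\ell_\varepsilon(x)}(\langle\nabla\rangle + V_\tau)e^{-\ell_\varepsilon(x)}) = \Tepsilon$, so the Gronwall computation on $\|e^{-\ell_\varepsilon(x)} U(\tau,0)\varphi\|_{L^2}^2$ closes verbatim, and Lemmas~\ref{lm:unif-bound}, \ref{lm:bound_fpm} and \ref{lm:strongconv} apply unchanged to yield the sharp constant $1$ in the exponent in the limit $\varepsilon \to 0$. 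This is the linear version established in our companion paper \cite{BFG2a}.

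Combining these ingredients, the disjointness of $\{Q_j\}$ and $\mathbf{1}_{Q_j \cap Y} \leq \mathbf{1}_{Q_j}$ give, for $\psi \in L^2$,
\[
\|\mathbf{1}_Y U(t,0) \mathbf{1}_X \psi\|_{L^2}^2 = \sum_j \|\mathbf{1}_{Q_j \cap Y} U(t,0) \mathbf{1}_X \psi\|_{L^2}^2 \leq e^{2t}\sum_j \Big(\sum_k e^{-\mathrm{dist}(Q_j, Q_k)} c_k\Big)^2,
\]
with $c_k := \|\mathbf{1}_{Q_k \cap X}\psi\|_{L^2}$ satisfying $\|c\|_{\ell^2(\mathbb{Z}^d)} \leq \|\psi\|_{L^2}$. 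Applying Schur's test to the nonnegative kernel $K(j,k) := e^{-\mathrm{dist}(Q_j, Q_k)}$ restricted to the relevant indices, and using that the number of shifted unit cubes $Q_k$ at distance in $[n, n+1]$ from a fixed $Q_j$ is $O((1+n)^{d-1})$,
\[
\sup_j \sum_k K(j,k) \leq \sum_{n \geq \lfloor N - 2\sqrt d\rfloor} C_d (1+n)^{d-1} e^{-n} \leq C_d' \langle N\rangle^d e^{-N},
\]
and symmetrically for $\sup_k \sum_j K(j,k)$. Schur's test then yields $\|\mathbf{1}_Y U(t,0) \mathbf{1}_X \psi\|_{L^2} \leq C_d e^{t}\langle N\rangle^{d} e^{-N} \|\psi\|_{L^2}$, which is the claimed bound. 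The main technical obstacle is making the annular counting estimate sharp enough to recover the polynomial factor exactly $\langle N\rangle^d$ rather than something larger (and handling the small-$N$ regime cleanly so the constants match); all remaining steps are bookkeeping, with the analytic content entirely contained in the convex-case input.
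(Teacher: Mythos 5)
This theorem is not proved in the present paper: it is quoted as Corollary 1.4 of the companion paper \cite{BFG2a}, so there is no in-paper proof to compare against; the introduction only indicates that the passage from convex to general sets uses ``a covering argument in the spirit of \cite{SigalWu}''. Your reconstruction --- unit-cube tiling, the sharp convex-case bound $\|\mathbf{1}_{Q}U(t,0)\mathbf{1}_{Q'}\|_{\mathcal{B}(L^2)}\le e^{t-\mathrm{dist}(Q,Q')}$ for the linear non-autonomous flow, and a Schur test with the $O((1+n)^{d-1})$ shell count --- is exactly that kind of argument, and I find no gap in it: the reduction $\mathrm{dist}(Q_j,Q_k)\ge \dXY-2\sqrt{d}$, the $\ell^2$ bookkeeping, and the small-distance regime are handled correctly, and your shell sum in fact yields the slightly stronger power $\langle\dXY\rangle^{d-1}$, so the stated $\langle\dXY\rangle^{d}$ follows a fortiori. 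The one point to flag is that the convex-case input for the propagator is itself a result of \cite{BFG2a}: adapting the proof of Theorem \ref{th:max-vel-intro} to $U(t,s)$ with merely $L^2$ data requires the regularity/density argument for the propagator carried out there, not only the observation that a real $V_\tau$ cancels in the imaginary part. So what you have written is a derivation of Corollary 1.4 from the companion paper's convex (half-space) estimate rather than a self-contained proof --- which is legitimate and consistent with how the two papers divide the work.
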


To apply Theorem~\ref{cor:max-vel-bounds-general-subsets} with $V_t=w*|\psi_t|^2$, it is sufficient to check that it satisfies the assumptions of the following result from~\cite{BFG2a}. (Note that this result remains true in dimension $d=2$.)

\begin{prop}[Remark 1.8 in \cite{BFG2a}]\label{cor:existence-unitary-propagator-for-time-dependent-potential} There exists $c_d>0$, such that for any $T>0$, the following holds: Suppose that for all $t$ in $[0,T]$,  $V_t=V_{\infty,t}+V_{d,t}$ with $V_{\infty,t}$ in $L^\infty$, $V_{d,t}$ in $L^{d,\infty}$ and
\begin{enumerate}
    \item \label{item:borne-Vd} for all $t$ in $[0,T]$, $\|V_{d,t}\|_{L^{d,\infty}}<c_{d}$\,,
    \item \label{item:borne-Vinfty} $\sup_{t\in [0,T]}\|V_{\infty,t}\|_{L^\infty}<\infty$\,,
    \item \label{item:borne-dtV}$ \sup_{t\in [0,T]}\|\partial_t V_t\|_{L^{d,\infty}+L^\infty} < \infty$\,.
\end{enumerate}
Then the non-autonomous equation \eqref{Cauchy-pb-propagateur}, with the self-adjoint operator $A_t = \jnabla +V_t$ defined through the KLMN theorem,  admits a unique unitary propagator.
\end{prop}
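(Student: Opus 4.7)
The plan is to (i) establish self-adjointness of each $A_t=\langle\nabla\rangle + V_t$ via the KLMN theorem with form domain $H^{1/2}$ independent of $t$, (ii) invoke an abstract existence result for Lipschitz families of form-type self-adjoint operators with a common form domain to construct the propagator, and (iii) deduce uniqueness by a Gronwall argument using the symmetry of $A_t$.

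First I would verify the KLMN hypotheses. By the Sobolev embedding in Lorentz spaces $H^{1/2}\hookrightarrow L^{2d/(d-1),2}$ (Proposition~\ref{pro:sobolev-lorentz}) and the Hölder inequality in Lorentz spaces (Proposition~\ref{pro:holder-lorentz}), there exists $C>0$ such that, for all $\psi\in H^{1/2}$,
\[
|\langle\psi,V_{d,t}\psi\rangle_{L^2}|\le \|V_{d,t}\|_{L^{d,\infty}}\,\|\psi\|_{L^{2d/(d-1),2}}^2 \le C\|V_{d,t}\|_{L^{d,\infty}}\,\|\langle\nabla\rangle^{1/2}\psi\|_{L^2}^2\,.
\]
Taking $c_d:=1/(2C)$ thus forces the form of $V_{d,t}$ to have $\langle\nabla\rangle$-relative form bound strictly less than $1$, uniformly in $t$; the bounded perturbation $V_{\infty,t}$ preserves this. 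KLMN then defines $A_t$ as a self-adjoint operator whose form domain is $H^{1/2}$, with form norms uniformly comparable to $\|\cdot\|_{H^{1/2}}^2$ across $t\in[0,T]$; equivalently, $A_t$ extends to a bounded operator in $\mathcal{B}(H^{1/2},H^{-1/2})$ with uniform-in-$t$ bounds.

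Next, condition~\ref{item:borne-dtV} combined with the same Sobolev/Hölder estimates implies that $t\mapsto A_t$ is Lipschitz continuous from $[0,T]$ into $\mathcal{B}(H^{1/2},H^{-1/2})$, with $\|\partial_t A_t\|_{\mathcal{B}(H^{1/2},H^{-1/2})}$ uniformly bounded. This places us in the setting of a Kato--Kisyński-type abstract theorem for time-dependent form-regular self-adjoint operators with a common form domain, of which the companion paper~\cite{BFG2a} contains a version tailored to the present framework. Invoking it produces a unitary propagator $U(t,s)$ preserving $H^{1/2}$ and solving~\eqref{Cauchy-pb-propagateur} strongly in $H^{-1/2}$. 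The underlying construction approximates $V_t$ by piecewise-constant-in-$t$ potentials $V^{(n)}_t$, defines $U^{(n)}(t,s)$ as a product of the unitary groups $e^{-i(t_k-t_{k-1})A_{t_k}}$ generated by the (autonomous) self-adjoint operators $A_{t_k}$, and passes to the limit using uniform energy estimates in $H^{1/2}$ derived from the Lipschitz bound together with the uniform form coercivity from Step~1. The main technical obstacle here is to verify the strong continuity of $t\mapsto U(t,s)$ on $H^{1/2}$ and its strong differentiability in $\mathcal{B}(H^{1/2},H^{-1/2})$---ensuring that the form domain does not escape during the limiting procedure---which is precisely what the uniform equivalence of form norms delivers.

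Finally, uniqueness is immediate: given two unitary propagators $U_1,U_2$ and $\varphi\in H^{1/2}$, the function $\psi_t:=(U_1(t,0)-U_2(t,0))\varphi$ lies in $\mathcal{C}^0([0,T],H^{1/2})\cap\mathcal{C}^1([0,T],H^{-1/2})$ with $\psi_0=0$ and $i\partial_t\psi_t=A_t\psi_t$ in $H^{-1/2}$. Since $V_t$ is real-valued, $A_t$ is symmetric as an element of $\mathcal{B}(H^{1/2},H^{-1/2})$, so
\[
\tfrac{d}{dt}\|\psi_t\|_{L^2}^2 = 2\,\mathrm{Im}\,\langle\psi_t,A_t\psi_t\rangle_{H^{1/2},H^{-1/2}} = 0\,,
\]
and direct integration yields $\psi_t\equiv 0$, hence $U_1=U_2$.
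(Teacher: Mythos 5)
The paper does not actually prove this proposition: it is imported verbatim from the companion paper \cite{BFG2a} (Remark 1.8 there), so there is no internal proof to compare your attempt against. Judged on its own terms, your proposal gets the peripheral steps right. The KLMN verification is correct: by Propositions \ref{pro:sobolev-lorentz} and \ref{pro:holder-lorentz} one has $|\langle\psi,V_{d,t}\psi\rangle_{L^2}|\le C\|V_{d,t}\|_{L^{d,\infty}}\|\psi\|_{H^{1/2}}^2$, so hypothesis \ref{item:borne-Vd} with $c_d=1/(2C)$ gives a relative form bound uniformly smaller than $1$, the bounded part $V_{\infty,t}$ is harmless, and hypothesis \ref{item:borne-dtV} yields, by the same duality estimate, a uniform bound on $\|\partial_tA_t\|_{\mathcal{B}(H^{1/2},H^{-1/2})}$. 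Your uniqueness argument — differentiating $\|\psi_t\|_{L^2}^2$ through the $H^{1/2}$--$H^{-1/2}$ pairing for $\psi_t=(U_1(t,0)-U_2(t,0))\varphi$ and using the symmetry of $A_t$ — is also sound, given item (3) of the definition of a unitary propagator.

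The gap is in the existence half. You reduce it to ``a Kato--Kisy\'nski-type abstract theorem \dots of which the companion paper \cite{BFG2a} contains a version tailored to the present framework''; but the statement you are asked to prove \emph{is} the companion paper's result, so this appeal is circular and proves nothing. If instead you mean to invoke the classical theorem (Kisy\'nski, or Kato's quadratic-form formulation), you must state it and check its hypotheses precisely: those results are typically phrased for $t\mapsto A_t$ continuously differentiable into $\mathcal{B}(H^{1/2},H^{-1/2})$ with uniformly equivalent form norms, and one has to confirm both that the mere boundedness of $\partial_tV_t$ in hypothesis \ref{item:borne-dtV} suffices and that the propagator obtained has exactly the properties (1)--(3) of the definition (unitarity on $L^2$, strong continuity in $\mathcal{B}(H^{1/2})$, strong differentiability into $\mathcal{B}(H^{1/2},H^{-1/2})$ with the equation holding in $H^{-1/2}$). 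Your sketch of the piecewise-constant-in-time approximation correctly names the delicate points — uniform $H^{1/2}$ energy estimates, non-escape of the form domain, convergence of the approximating propagators and of their time derivatives — but does not carry any of them out; asserting that ``the uniform equivalence of form norms delivers'' them is a placeholder, not an argument. As written, the core of the existence proof is an unproven (or circularly cited) black box.
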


It is thus sufficient to prove the following estimates.

\begin{prop}\label{cor:estimates-conv-potential-for-unitary-propagator}Let $s\geq 1$, $\psi_0$ in $H^{s}$ and $w$ in $\mathcal{W}_{d,s}$. Let  $\psi$ in $\mathcal{C}^0([0,T_{\mathrm{max}}),H^s) \cap \mathcal{C}^1([0,T_{\mathrm{max}}),H^{s-1})$ be the local solution to \eqref{eq:hartree_semi_intro} given by Theorem \ref{th:local-long-range-intro}. Then, for any $0<T<T_{\mathrm{max}}$, it holds
\begin{enumerate}
    \item  $\sup_{t\in [0,T]}\|w*|\psi_t|^2\|_{L^\infty}<\infty$\,,
    \item $ \sup_{t\in [0,T]}\|\partial_t (w*|\psi_t|^2)\|_{L^{d,\infty}+L^\infty} < \infty$\,.
\end{enumerate}
As a consequence, for all $T$ in $[0,T_{\mathrm{max}})$,  $(\langle\nabla\rangle + w*|\psi_t|^2)_{t\in[0,T]}$ generates a unitary propagator.
\end{prop}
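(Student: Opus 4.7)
The plan is first to establish the two uniform bounds on $V_t:=w*|\psi_t|^2$ and then invoke Proposition~\ref{cor:existence-unitary-propagator-for-time-dependent-potential} with the decomposition $V_{\infty,t}=V_t$, $V_{d,t}=0$, which trivially satisfies its smallness condition. Throughout, the continuity of the flow on any compact $[0,T]\subset[0,T_{\max})$ provides $M_0:=\sup_{t\in[0,T]}\|\psi_t\|_{H^s}<\infty$ and $M_1:=\sup_{t\in[0,T]}\|\partial_t\psi_t\|_{H^{s-1}}<\infty$ (the latter since $\psi\in\mathcal{C}^1([0,T_{\max}),H^{s-1})$), together with mass conservation (Lemma~\ref{lemma: en conservation}) giving $\|\psi_t\|_{L^2}=\|\psi_0\|_{L^2}$.

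For claim~(1), I would decompose $w=w_\infty+w_*$ where $w_*$ lies in the small-scale part of $\mathcal{W}_{d,s}$: $L^{d/(2s),\infty}$ when $s<d/2$, $L^{q,\infty}$ for some $q>1$ when $s=d/2$, and $\meas$ when $s>d/2$. The $L^\infty$ contribution is immediate: $\|w_\infty*|\psi_t|^2\|_{L^\infty}\leq\|w_\infty\|_{L^\infty}\|\psi_0\|_{L^2}^2$. For $w_*$ with $s<d/2$, Sobolev in Lorentz (Proposition~\ref{pro:sobolev-lorentz}) gives $\psi_t\in L^{2d/(d-2s),2}$, so $|\psi_t|^2\in L^{d/(d-2s),1}$ by Hölder in Lorentz, and Young in Lorentz (Proposition~\ref{pro:young-lorentz}) yields $\|w_**|\psi_t|^2\|_{L^\infty}\lesssim\|w_*\|_{L^{d/(2s),\infty}}M_0^2$. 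The case $s=d/2$ follows the same pattern with $q>1$ replacing $d/(2s)$; when $s>d/2$, $\psi_t\in L^\infty$ by Sobolev, so Young for measures (Proposition~\ref{pro:young-lorentz-measure}) gives the $L^\infty$ bound directly.

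For claim~(2), I would write $\partial_t(w*|\psi_t|^2)=2\mathrm{Re}(w*(\bar\psi_t\partial_t\psi_t))$ and control $f_t:=\bar\psi_t\partial_t\psi_t$ in a suitable $L^p$ (or Lorentz) space. The $L^\infty$-part is handled by $\|w_\infty*f_t\|_{L^\infty}\leq\|w_\infty\|_{L^\infty}\|\psi_0\|_{L^2}M_1$, finite since $s-1\geq 0$ implies $\partial_t\psi_t\in L^2$. For $w_*$ with $s<d/2$, Sobolev in Lorentz gives $\psi_t\in L^{2d/(d-2s),2}$ and $\partial_t\psi_t\in L^{2d/(d-2s+2),2}$ (note $s-1\in[0,d/2)$), so $f_t\in L^{d/(d+1-2s)}$ by Hölder; Young in Lorentz with the matching exponents $2s/d+(d+1-2s)/d=1+1/d$ then produces $w_**f_t\in L^{d,\infty}$. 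The case $s=d/2$ is analogous with $q>1$ in place of $d/(2s)$, using $\psi_t\in L^{q'}$ and $\partial_t\psi_t\in L^{d,2}$. When $s>d/2$, $\psi_t\in L^\infty$ and $\partial_t\psi_t\in L^p$ for some $p\geq d$ (Sobolev on $H^{s-1}$ when $s<d/2+1$ yields $p=2d/(d-2s+2)\in[d,\infty)$; directly $L^\infty$ when $s\geq d/2+1$), so $f_t\in L^p$, and Young for measures gives $w_**f_t\in L^p\hookrightarrow L^{d,\infty}+L^\infty$ via the standard splitting $g=g\mathbf{1}_{|g|\leq 1}+g\mathbf{1}_{|g|>1}$ (the second term in $L^\infty$ with norm $\leq 1$, the first in $L^{d,\infty}$ by Chebyshev since $p\geq d$).

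With claims~(1) and~(2) in hand, condition~2 of Proposition~\ref{cor:existence-unitary-propagator-for-time-dependent-potential} is exactly claim~(1), condition~3 is claim~(2), and condition~1 is trivial for the chosen decomposition with $V_{d,t}=0$; the proposition then produces the unitary propagator generated by $(\langle\nabla\rangle+w*|\psi_t|^2)_{t\in[0,T]}$. The hard part will be the careful matching of Sobolev and Young exponents in the transitional regimes $1\leq s<d/2$ and $d/2<s\leq d/2+1$ for claim~(2); the embedding $L^p\hookrightarrow L^{d,\infty}+L^\infty$ for $p\geq d$ provides the clean bridge in the measure case $s>d/2$, and the whole scheme hinges on the gain of integrability of $\partial_t\psi_t$ inherited from $H^{s-1}$ via Sobolev, which is precisely why we need $s\geq 1$.
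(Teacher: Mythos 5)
Your proposal is correct and follows essentially the same route as the paper: the same case split on $s$ versus $d/2$, the same Lorentz-space Sobolev/H\"older/Young estimates to place $w*|\psi_t|^2$ in $L^\infty$ and $\partial_t(w*|\psi_t|^2)$ in $L^{d,\infty}+L^\infty$ via the continuity of $t\mapsto\|\psi_t\|_{H^s}$ and $t\mapsto\|\partial_t\psi_t\|_{H^{s-1}}$, and the same final appeal to Proposition~\ref{cor:existence-unitary-propagator-for-time-dependent-potential} with the singular part $V_{d,t}=0$ (the paper handles $s=d/2$ by reducing to some $s'<d/2$ and the case $s>d/2$ via $L^{d/(1-\varepsilon),\infty}$ rather than your $L^p\hookrightarrow L^{d,\infty}+L^\infty$ embedding, but these are cosmetic differences). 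Two harmless slips: at the endpoint $s=d/2+1$ one has $\partial_t\psi_t\in H^{d/2}\not\hookrightarrow L^\infty$, but any finite $p\ge d$ still works; and in your splitting $g=g\mathbf{1}_{|g|\le1}+g\mathbf{1}_{|g|>1}$ the roles of the two terms are swapped.
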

\begin{proof}
   Let us first remark that if $w_\infty$ is in $L^\infty$, then:
   \begin{itemize}
       \item $\|w_\infty*|\psi_t|^2\|_{L^\infty}$ is uniformly bounded on $[0,T]$ as
   \begin{equation}
       \|w_\infty*|\psi_t|^2\|_{L^\infty} 
        \lesssim \|w_\infty\|_{L^\infty}  \|\psi_t\|_{L^2}^2=\|w_\infty\|_{L^\infty}  \|\psi_0\|_{L^2}^2\,.
    \end{equation}
        \item $\|\partial_t(w_\infty*|\psi_t|^2)\|_{L^\infty}$ is uniformly bounded on $[0,T]$ as
   \begin{align*}
       \|\partial_t(w_\infty*|\psi_t|^2)\|_{L^\infty} 
       &\lesssim \|w_\infty*(\bar\psi_t \partial_t\psi_t)\|_{L^\infty} \\
       & \lesssim \|w_\infty\|_{L^\infty} \|\psi_t\|_{L^2} \|\partial_t\psi_t\|_{L^2} \\
       & \lesssim \|w_\infty\|_{L^\infty} \|\psi_t\|_{H^s} \|\partial_t\psi_t\|_{H^{s-1}}
   \end{align*}
   and the right-hand side is uniformly bounded on the compact interval $[0,T]$ using the continuity of $t\mapsto\|\psi_t\|_{H^s}$ and $t\mapsto\|\partial_t\psi_t\|_{H^{s-1}}$ on $[0,T]$ given by Theorem \ref{th:local-long-range-intro}.
   \end{itemize}
   
   We will now treat the singular part of $w$ distinguishing three cases, depending on  the parameter~$s$, as in the definition of $\setW$. In any case we will have that $w*|\psi_t|^2$ belongs to $L^\infty$ and therefore the first point of Proposition \ref{cor:existence-unitary-propagator-for-time-dependent-potential} obviously holds. 
   
   \paragraph{Case $1\leq s< \frac{d}{2}$.}
   We consider a decomposition $w=w_{d/(2s)}+w_\infty$ with $w_{d/(2s)}$ in $L^{d/(2s),\infty}$ and $w_\infty$ in $L^\infty$. The convolution product with $w_\infty$ can be estimated as above while for the convolution involving $w_{d/(2s)}$ we use the Young inequality in Lorentz spaces (Proposition \ref{pro:young-lorentz}) in the special case of a $L^\infty$ function, followed by the Hölder and Sobolev inequalities in Lorentz spaces (Propositions \ref{pro:holder-lorentz} and \ref{pro:sobolev-lorentz}):
   \begin{align*}
       \|w_{\frac{d}{2s}}*|\psi_t|^2\|_{L^{\infty}} 
        &\lesssim \|w_{\frac{d}{2s}}\|_{L^{\frac{d}{2s},\infty}} \| |\psi_t|^2 \|_{L^{\frac{d}{d-2s},1}} \\
        &\lesssim \|w_{\frac{d}{2s}}\|_{L^{\frac{d}{2s},\infty}} \|\psi_t\|_{L^{\frac{2d}{d-2s},2}}^2 \\
        &\lesssim \|w_{\frac{d}{2s}}\|_{L^{\frac{d}{2s},\infty}} \|\psi_t\|_{H^s}^2
   \end{align*}  
   and the right hand side is uniformly bounded on $[0,T]$ by Theorem \ref{th:local-long-range-intro}.

   We proceed similarly for the time derivative of the term involving the $L^{d/(2s),\infty}$ part of the potential:
   \begin{align*}
       \|\partial_t(w_{\frac{d}{2s}}*|\psi_t|^2)\|_{L^{d,\infty}}
       &\lesssim \|w_{\frac{d}{2s}}*(\bar\psi_t \partial_t\psi_t)\|_{L^{d,\infty}} \\
       & \lesssim \|w_{\frac{d}{2s}}\|_{L^{\frac{d}{2s},\infty}} \|\bar\psi_t \partial_t\psi_t\|_{L^{\frac{2d}{2d-4s+2},1}} \\
       & \lesssim \|w_{\frac{d}{2s}}\|_{L^{\frac{d}{2s},\infty}} \|\psi_t\|_{L^{\frac{2d}{d-2s},2}} \|\partial_t\psi_t\|_{L^{\frac{2d}{d-2(s-1)},2}} \\
       & \lesssim \|w_{\frac{d}{2s}}\|_{L^{\frac{d}{2s},\infty}} \|\psi_t\|_{H^{s}} \|\partial_t\psi_t\|_{H^{s-1}}
   \end{align*}
and the last line is uniformly bounded on $[0,T]$ by Theorem \ref{th:local-long-range-intro}.

\paragraph{Case $s=\frac{d}{2}$.}
    In this case $w$ is in $L^q+L^\infty$ for some $q>1$ and we can assume without loss of generality that $q$ is in $(1,d/2]$. Thus we can consider  $1\le s'<\frac{d}{2}$ such that $\frac{d}{2s'}=q$. In particular $\psi_0$ is in $H^{s'}$ and $w$ is in $\mathcal{W}_{d,s'}$. Since $s'$ is in $[1,\frac{d}{2})$, it suffices to apply the previous case.
   
\paragraph{Case $\frac{d}{2}<s$.}
      For the $\meas$ part $w_1$ of the potential $w$, we have
   \begin{equation}
       \|w_{1}*|\psi_t|^2\|_{L^{\infty}} 
        \lesssim \|w_{1}\|_{\meas} \|\psi_t\|_{L^{\infty}}^2 \lesssim \|w_{1}\|_{\meas} \|\psi_t\|_{H^s}^2
   \end{equation}
   and the right hand side is uniformly bounded in $[0,T]$ by Theorem \ref{th:local-long-range-intro}.

    Moreover, with $0<\varepsilon<\min\{s-\frac{d}{2},1\}$, we deduce from Propositions \ref{pro:young-lorentz-measure}, \ref{pro:holder-lorentz} and \ref{pro:sobolev-lorentz} that
       \begin{align*}
       \|\partial_t(w_{1}*|\psi_t|^2)\|_{L^{d,\infty}+L^\infty}
       &\lesssim \|\partial_t(w_{1}*|\psi_t|^2)\|_{L^{\frac{d}{1-\varepsilon},\infty}} \\
       &\lesssim \|w_{1}*(\bar\psi_t \partial_t\psi_t)\|_{L^{\frac{d}{1-\varepsilon},\infty}} \\
       & \lesssim \|w_{1}\|_{\meas} \|\bar\psi_t \partial_t\psi_t\|_{L^{\frac{d}{1-\varepsilon},\infty}} \\
       & \lesssim \|w_{1}\|_{\meas} \|\psi_t\|_{L^{\infty}} \|\partial_t\psi_t\|_{L^{\frac{d}{1-\varepsilon},\infty}} \\
       & \lesssim \|w_{1}\|_{\meas} \|\psi_t\|_{H^{s}} \|\partial_t\psi_t\|_{H^{\frac{d}{2} -1+\varepsilon}} \\
       & \lesssim \|w_{1}\|_{\meas} \|\psi_t\|_{H^{s}} \|\partial_t\psi_t\|_{H^{s-1}}\,.
   \end{align*}
Again the last line is uniformly bounded on $[0,T]$ by Theorem \ref{th:local-long-range-intro}.

Applying Proposition~\ref{cor:existence-unitary-propagator-for-time-dependent-potential} with $V_t=w*|\psi_t|^2$ concludes the proof.
\end{proof}

We can now conclude the proof of Proposition~\ref{th:max-vel-general-intro1}.
\begin{proof}[Proof of Proposition~\ref{th:max-vel-general-intro1}]
Let $\psi$ be the local solution to \eqref{eq:hartree_semi_intro} on $[0,T_{\mathrm{max}})$ given by Theorem \ref{th:local-long-range-intro}. By Proposition~\ref{cor:estimates-conv-potential-for-unitary-propagator}, $(\jnabla + w*|\psi_t|^2)_{t\in[0,T]}$ generates a unitary propagator $U(t,s)_{t,s\in[0,T]}$, for any $T$ in $(0,T_{\max})$. By uniqueness of the solution $\psi$ to \eqref{eq:hartree_semi_intro}, we have
\begin{equation*}
    \psi_t=U(t,0)\psi_0=U(t,0)\mathbf{1}_X\psi_0,
\end{equation*}
for all $t$ in $[0,T]$. 
Applying then Theorem~\ref{cor:max-vel-bounds-general-subsets} yields Proposition~\ref{th:max-vel-general-intro1}.
\end{proof}

\section{Scattering and asymptotic minimal velocity estimate}\label{sec:min-vel}

In this section we prove Theorems \ref{th:scattering-short-range-intro}, \ref{thm:invertibility-intro}, \ref{th:propag-est-intro} and \ref{th:min-vel-intro} on the scattering and long-time behavior of solutions to the pseudo-relativistic Hartree equation \eqref{eq:hartree_semi_intro} in the case of short-range interaction potentials. 

\subsection{Estimates on the solution  in weighted Sobolev spaces}\label{subsec:estimates_growth}

We begin with a preliminary subsection where we prove several estimates on the norm of the solutions to \eqref{eq:hartree_semi_intro} in weighted Sobolev spaces. These estimates subsequently play an important role in our analysis.

We first estimate the growth of the first moment of a solution in the $L^2$-norm. The proof is rather straightforward, see \cite{frohlich_lenzm07}*{Lemma A.1} for a slightly different argument. We recall the notation $\theta(x)=\min\{1,x\}$.

\begin{lemma}\label{lem:x-evol}
Let $0\le\gamma\le 2$. Under the conditions of Theorem \ref{th:global-short-range-intro}, assuming in addition that $\psi_0\in L^2_\gamma$, 
the global solution $\psi$ to \eqref{eq:hartree_semi_intro} given by Theorem \ref{th:global-short-range-intro} satisfies
\begin{equation}\label{eq:boundL2-0}
\|\psi_t\|_{L^2_\gamma}\lesssim\|\psi_0\|_{L^2_\gamma}+\langle t\rangle^\gamma\|\psi_0\|_{L^2},
\end{equation}
uniformly in $t\in[0,\infty)$.
\end{lemma}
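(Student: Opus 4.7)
The plan is to set $F(t) := \|\langle x\rangle^{\gamma}\psi_{t}\|_{L^{2}}^{2}$ and run a Grönwall-type argument obtained by differentiating $F$ in time. The first observation will be that the Hartree potential $V_{t} := w*|\psi_{t}|^{2}$ is real-valued (since $w$ is real and $|\psi_{t}|^{2}\geq 0$) and, being a multiplication operator in $x$, commutes with $\langle x\rangle^{2\gamma}$; consequently $V_{t}$ contributes nothing to $F'(t)$, which reduces to the kinetic commutator
\begin{equation*}
F'(t) = -i\,\langle\psi_{t},[\langle\nabla\rangle,\langle x\rangle^{2\gamma}]\psi_{t}\rangle.
\end{equation*}
Throughout, mass conservation gives $\|\psi_{t}\|_{L^{2}} = \|\psi_{0}\|_{L^{2}}$.

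The commutator will then be estimated using the velocity operator $\Theta := [x,\langle\nabla\rangle] = -i\nabla\langle\nabla\rangle^{-1}$, which is a bounded Fourier multiplier with $\|\Theta\|_{\mathcal{B}(L^{2})} \leq 1$; this is the operator-theoretic embodiment of the speed-of-light bound for the half-Klein--Gordon flow. For the benchmark $\gamma = 1$, an explicit computation yields $[\langle\nabla\rangle,\langle x\rangle^{2}] = -i\,(\Theta\cdot x + x\cdot\Theta)$, and thus
\begin{equation*}
|F'(t)| \leq 2\|\Theta\psi_{t}\|_{L^{2}}\|x\psi_{t}\|_{L^{2}} \leq 2\|\psi_{0}\|_{L^{2}}\sqrt{F(t)},
\end{equation*}
which integrates to $\sqrt{F(t)} \leq \sqrt{F(0)} + t\|\psi_{0}\|_{L^{2}}$, i.e.\ \eqref{eq:boundL2-0} for $\gamma = 1$. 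For $\gamma = 2$, I would expand $[\langle\nabla\rangle,\langle x\rangle^{4}]$ via the Leibniz rule as the anticommutator $\{\langle x\rangle^{2},\Theta\cdot x + x\cdot\Theta\}$ (plus order-zero remainders coming from $[x,\Theta]$), insert the $\gamma = 1$ bound to control the factor $\|\langle x\rangle\psi_{t}\|_{L^{2}}$, and use Young's inequality to absorb the resulting cross-term $t\|\psi_{0}\|_{L^{2}_{1}}$ into $\|\psi_{0}\|_{L^{2}_{2}} + t^{2}\|\psi_{0}\|_{L^{2}}$. Intermediate values $0 \leq \gamma \leq 2$ then follow by a standard Hölder interpolation in the weight between the endpoints $\gamma = 0$ (trivial, from mass conservation) and $\gamma = 2$.

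The main technical obstacle is twofold. First, the formal differentiation of $F$ must be justified at the level of $\psi_{t}$, for which I would introduce a bounded regularized weight $\chi_{R}(x) := \langle x/R\rangle^{\gamma} \in L^{\infty}$, derive uniform-in-$R$ commutator estimates using that $[\chi_{R}^{2},\langle\nabla\rangle]$ enjoys the same structural form as the formal expression $[\langle x\rangle^{2\gamma},\langle\nabla\rangle]$, and pass to $R\to\infty$ by monotone/dominated convergence (the required integrability on the limit side is furnished by the regularity $\psi_{0}\in H^{s}\cap H^{s,p'}$ with $s\geq d/2+1$ guaranteed by Theorem \ref{th:global-short-range-intro}). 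Second, the commutator algebra for $\gamma = 2$ must be carried out carefully so that all order-zero remainders (arising from the non-commutativity of $x$ and $\Theta$) are dominated by the same $\sqrt{F(t)}\cdot(\text{lower moment})$ structure; otherwise the Grönwall step would only close with a strictly weaker rate than the sharp $\langle t\rangle^{\gamma}$.
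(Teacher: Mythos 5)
Your commutator/Grönwall strategy is genuinely different from the paper's proof, which instead writes the Duhamel formula, applies the weighted bound for the \emph{free} flow (Lemma \ref{lm:xgamma-lin}), controls the nonlinear term by the weighted multilinear estimate \eqref{eq:forcing-short-range-L1+Lq-weight-L2} combined with the dispersive decay $\|\psi_\tau\|_{L^p\cap L^\infty}\lesssim\langle\tau\rangle^{-d/(2r)}$ from Theorem \ref{th:global-short-range-intro}, and closes with Grönwall using $\tfrac{d}{r}\theta(\tfrac rq)>1$. Your route has the appeal of not using the smallness or the decay of the solution at all (only that $w*|\psi_t|^2$ is a real multiplication operator and that mass is conserved), and your endpoint computations are correct: for $\gamma=1$ the symmetric bound $|F'|\le 2\|\Theta\psi_t\|\,\|x\psi_t\|$ closes sharply, and for $\gamma=2$ the cross term $t\|\psi_0\|_{L^2_1}$ is indeed absorbed via $\|\psi_0\|_{L^2_1}^2\le\|\psi_0\|_{L^2}\|\psi_0\|_{L^2_2}$.

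The gap is the final step for non-integer $\gamma$. ``Hölder interpolation in the weight'' applied to $\psi_t$ at fixed $t$ gives $\|\langle x\rangle^\gamma\psi_t\|_{L^2}\le\|\psi_t\|_{L^2}^{1-\gamma/2}\|\langle x\rangle^2\psi_t\|_{L^2}^{\gamma/2}$, and inserting the $\gamma=2$ bound produces $\|\psi_0\|_{L^2}^{1-\gamma/2}\|\psi_0\|_{L^2_2}^{\gamma/2}+t^\gamma\|\psi_0\|_{L^2}$ on the right-hand side. The first term requires $\psi_0\in L^2_2$, which is not assumed (the lemma only assumes $\psi_0\in L^2_\gamma$), and it cannot be dominated by $\|\psi_0\|_{L^2_\gamma}$ — the Hölder inequality between moments goes the wrong way. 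Nor can you interpolate the solution \emph{map} between the endpoint estimates (as the paper does for $e^{-it\langle\nabla\rangle}$ in Lemma \ref{lm:xgamma-lin} via Riesz--Thorin with change of weights), because the flow is nonlinear. To close your argument for fractional $\gamma$ you would have to estimate the commutator $[\langle x\rangle^{2\gamma},\langle\nabla\rangle]$ directly (e.g.\ via the resolvent representation of Lemma \ref{lemma: expr sqrt}, as in Appendix \ref{app: commute}), show it loses exactly one power of the weight, and keep the resulting quadratic form symmetric so that $F'\lesssim\|\langle x\rangle^{\gamma-\frac12}\psi_t\|^2\le F^{1-\frac{1}{2\gamma}}\|\psi_0\|_{L^2}^{1/\gamma}$, which integrates to the sharp rate $t^\gamma$ for $\gamma\ge\tfrac12$; the regime $\gamma<\tfrac12$ needs yet another idea, since there the crude bound only yields $t^{1/2}$ growth. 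A minor point: your regularizer $\chi_R(x)=\langle x/R\rangle^\gamma$ tends to $1$ as $R\to\infty$, not to $\langle x\rangle^\gamma$; you presumably mean something like $\langle x\rangle^\gamma(1+\langle x\rangle^2/R^2)^{-\gamma/2}$.
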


\begin{proof}
From Duhamel's formula
\begin{equation}\label{eq:Duhamel_secscatt}
\psi_{t}=e^{-it\langle\nabla\rangle}\psi_{0}-i\int_{0}^{t}e^{i(\tau-t)\langle\nabla\rangle}(w*|\psi_{\tau}|^{2}))\,\psi_{\tau}\,\mathrm{d}\tau\,,
\end{equation}
and Lemma \ref{lm:xgamma-lin}, we obtain
\begin{align*}
\|\psi_t\|_{L^2_\gamma}&\le \big\| e^{-it\langle\nabla\rangle}\psi_0\big\|_{L^2_\gamma}+ \int_0^t \big\| e^{-i(t-\tau)\langle\nabla\rangle} (w * |\psi_\tau|^2)\psi_\tau\big\|_{L^2_\gamma}\, \mathrm{d}\tau \\
&\lesssim\|\psi_0\|_{L^2_\gamma}+\langle t\rangle^\gamma\|\psi_0\|_{L^2}\\
&\quad+\int_0^t \Big(\big\| (w * |\psi_\tau|^2)\psi_\tau\big\|_{L^2_\gamma}+\langle t-\tau\rangle^\gamma\big\| (w * |\psi_\tau|^2)\psi_\tau\big\|_{L^2}\Big)\, \mathrm{d}\tau.
\end{align*}
Using \eqref{eq:forcing-short-range-L1+Lq-weight-L2} and Theorem \ref{th:global-short-range-intro}, we estimate the first term in the integral as \begin{align}
\big\|(w * |\psi_\tau|^2)\psi_\tau\big\|_{L^2_\gamma}&\lesssim \|w\|_{\meas +L^{q}} \|\psi_\tau\|_{L^{p}\cap L^\infty}^{2\theta(\frac{r}{q})} \|\psi_\tau\|_{L^{2}_\gamma} \|\psi_0\|_{H^s\cap H^{s,p'}}^{2-2\theta(\frac{r}{q})}\notag \\
&\lesssim \langle \tau\rangle^{-\frac dr\theta(\frac{r}{q})} \|w\|_{\meas +L^{q}}\|\psi_0\|_{H^s\cap H^{s,p'}}^2 \|\psi_\tau\|_{L^{2}_\gamma}\notag\\
&\lesssim \varepsilon_0\langle \tau\rangle^{-\frac dr \theta(\frac{r}{q}) }  \|\psi_\tau\|_{L^{2}_\gamma},\label{eq:boundL2-1}
\end{align}
and likewise, applying Lemma \ref{lem:bounds-forcing-short-range} and Theorem \ref{th:global-short-range-intro} to the second term we obtain
\begin{align}
\big\| (w * |\psi_\tau|^2)\psi_\tau\big\|_{L^2}
&\lesssim \varepsilon_0\langle \tau\rangle^{-\frac dr\theta(\frac{r}{q})}  \|\psi_\tau\|_{L^2} = \varepsilon_0\langle \tau\rangle^{-\frac dr\theta(\frac{r}{q})}  \|\psi_0\|_{L^2}. \label{eq:boundL2-2}
\end{align}
Since $r<d$, we thus obtain that
\begin{align*}
\|\psi_t\|_{L^2_\gamma}&\lesssim\|\psi_0\|_{L^2_\gamma}+\langle t\rangle^\gamma(1+\varepsilon_0)\|\psi_0\|_{L^2} +\varepsilon_0\int_0^t\langle \tau\rangle^{-\frac dr \theta(\frac{r}{q}) } \|\psi_\tau\|_{L^2_\gamma}\mathrm{d}\tau. 
\end{align*}
Using again that $r<d$ and $q<d$ imply $\frac dr \theta(\frac{r}{q}) >1$, the statement of the lemma follows from Gronwall's inequality.
\end{proof}

We now need to estimate the norm of a solution to \eqref{eq:hartree_semi_intro}  in the weighted Sobolev space $H^s_\gamma$ defined in \eqref{eq:def_Hsgamma0}--\eqref{eq:def_Hsgamma}. 

 In the remainder of this subsection we restrict the class of admissible interaction potentials to $w\in \meas +L^q$ with $1\le q<\frac d2$.
Recall that, for $\gamma>\frac{d}{2r}$, $H^s_\gamma\hookrightarrow H^s\cap H^{s,p'}$ (where, as before $p=\frac{2r}{r-1}$, $p'=\frac{2r}{r+1}$). Recall also that Lemma \ref{lemma: weight sob norm} implies $  \|\langle\nabla\rangle^s \langle x\rangle^\gamma\varphi\|_{L^2}\lesssim \|\varphi\|_{H^s_\gamma}$.

\begin{lemma}\label{lm:estim1_S}
    Let $s\ge\frac{d}{2}+1$, $\max \{\frac d4, 1\}\le r <\frac d2$ and $1 \le q< \frac d2$. Let  $\gamma\ge0$ be such that $\frac{d}{2r}<\gamma\le2$. There exists $\varepsilon_0>0$ such that, for all  $w \in \meas+L^{q}$ and $\psi_0\in H^s_\gamma\hookrightarrow H^s\cap H^{s,p'}$ (with $p=\frac{2r}{r-1}$) satisfying
		\begin{equation*}
			\|w\|_{\meas +L^{q}} \|\psi_0\|^2_{H^s\cap H^{s,p'}}\le\varepsilon_0,
		\end{equation*} 
		the global solution $\psi$ to \eqref{eq:hartree_semi_intro} given by Theorem \ref{th:global-short-range-intro} satisfies
		\begin{equation*}
		  \|\psi_t\|_{H^s_\gamma}\lesssim\|\psi_0\|_{H^s_\gamma}+\langle t\rangle^\gamma\|\psi_0\|_{H^s\cap H^{s,p'}},
		\end{equation*}
		uniformly in $t\in[0,\infty)$.
\end{lemma}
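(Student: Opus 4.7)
The plan is to follow the structure of the proof of Lemma \ref{lem:x-evol}, now working at the level of the $H^s_\gamma$ norm. Since $\langle\nabla\rangle^s$ and $e^{-i\sigma\langle\nabla\rangle}$ commute as Fourier multipliers, Lemma \ref{lemma: weight sob norm} combined with the $L^2_\gamma$ linear estimate (Lemma \ref{lm:xgamma-lin}) yields
\[
\|e^{-i\sigma\langle\nabla\rangle}f\|_{H^s_\gamma}\lesssim \|f\|_{H^s_\gamma}+\langle \sigma\rangle^\gamma\|f\|_{H^s},\qquad \sigma\ge 0.
\]
Applied to Duhamel's formula \eqref{eq:Duhamel_secscatt} and using $\langle t-\tau\rangle\le\langle t\rangle$ for $0\le\tau\le t$, this gives
\[
\|\psi_t\|_{H^s_\gamma}\lesssim \|\psi_0\|_{H^s_\gamma}+\langle t\rangle^\gamma\|\psi_0\|_{H^s}+\int_0^t\|N(\psi_\tau)\|_{H^s_\gamma}\mathrm{d}\tau+\langle t\rangle^\gamma\int_0^t\|N(\psi_\tau)\|_{H^s}\mathrm{d}\tau,
\]
where $N(\psi):=(w*|\psi|^2)\psi$.

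To bound the non-linear contributions, I would split $w=w_1+w_q$ with $w_1\in\meas$ and $w_q\in L^q$, and apply the weighted multilinear estimates \eqref{eq: weighted multilin M in Hs} and \eqref{eq:forcing-short-range-Leb-Hs-to-weight-Hs} for $\|N(\psi_\tau)\|_{H^s_\gamma}$, together with \eqref{eq:convolution-Hartree-Hs-M1} and \eqref{eq:convolution-Hartree-Hs-Lq-theta} for $\|N(\psi_\tau)\|_{H^s}$. The factors $\|\psi_\tau\|_{L^\infty_\gamma}$ and $\|\psi_\tau\|_{L^{2q'}_\gamma}$ appearing in \eqref{eq: weighted multilin M in Hs} and \eqref{eq:forcing-short-range-Leb-Hs-to-weight-Hs} are absorbed into $\|\psi_\tau\|_{H^s_\gamma}$ via the Sobolev embedding $H^s\hookrightarrow L^{2q'}\cap L^\infty$ (valid because $s\ge d/2+1$) combined with Lemma \ref{lemma: weight sob norm}. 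Inserting the uniform bounds
\[
\|\psi_\tau\|_{H^s}\lesssim\|\psi_0\|_{H^s\cap H^{s,p'}},\qquad \|\psi_\tau\|_{L^p\cap L^\infty}\lesssim\langle\tau\rangle^{-\frac{d}{2r}}\|\psi_0\|_{H^s\cap H^{s,p'}}
\]
from Theorem \ref{th:global-short-range-intro}, and collecting the time-dependent factors, I expect to arrive at
\[
\|\psi_t\|_{H^s_\gamma}\lesssim \|\psi_0\|_{H^s_\gamma}+\langle t\rangle^\gamma\|\psi_0\|_{H^s\cap H^{s,p'}}+\varepsilon_0\int_0^t\langle\tau\rangle^{-\frac{d}{2r}\theta(\frac{r}{q})}\|\psi_\tau\|_{H^s_\gamma}\mathrm{d}\tau.
\]

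The key verification is that the exponent $\frac{d}{2r}\theta(\frac{r}{q})=\min\{\frac{d}{2r},\frac{d}{2q}\}$ is strictly greater than $1$, which is ensured by $r<d/2$ and $q<d/2$; a standard application of Gronwall's inequality then produces the desired bound. The main subtlety I anticipate is the bookkeeping of time-decay factors in \eqref{eq:forcing-short-range-Leb-Hs-to-weight-Hs}: because $\|\psi_\tau\|_{L^{2q'}_\gamma}$ carries no decay, only one of the two $\langle\tau\rangle^{-\frac{d}{2r}\theta(\frac{r}{q})}$ factors survives in front of $\|\psi_\tau\|_{H^s_\gamma}$ (whereas $\|N(\psi_\tau)\|_{H^s}$ enjoys the full decay $\langle\tau\rangle^{-\frac{d}{r}\theta(\frac{r}{q})}$), and this loss of a factor is precisely compensated by the assumption $q<d/2$ in the hypotheses, which guarantees integrability of the Gronwall kernel.
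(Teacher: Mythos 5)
Your proposal is correct and follows essentially the same route as the paper: Duhamel plus Lemma \ref{lm:xgamma-lin} at the level of $\langle\nabla\rangle^s\psi_t$, the weighted multilinear estimates \eqref{eq: weighted multilin M in Hs} and \eqref{eq:forcing-short-range-Leb-Hs-to-weight-Hs} combined with the Sobolev embeddings and Lemma \ref{lemma: weight sob norm} to absorb the weighted Lebesgue factors into $\|\psi_\tau\|_{H^s_\gamma}$, the decay bounds from Theorem \ref{th:global-short-range-intro}, and Gronwall with kernel $\langle\tau\rangle^{-\frac{d}{2r}\theta(\frac{r}{q})}$. Your remark that only one decay factor survives in front of $\|\psi_\tau\|_{H^s_\gamma}$ and that integrability is rescued by $q<\frac d2$ (together with $r<\frac d2$) is exactly the point the paper makes.
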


\begin{proof}

Duhamel's formula \eqref{eq:Duhamel_secscatt} and Lemma \ref{lm:xgamma-lin} give
\begin{align*}
\|\psi_t\|_{H^s_\gamma}&=\big\|\langle x\rangle^\gamma\langle\nabla\rangle^s\psi_t\big\|_{L^2}\\
&\le \big\| e^{-it\langle\nabla\rangle}\langle\nabla\rangle^s\psi_0\big\|_{L^2_\gamma}
+ \int_0^t \big\| e^{-i(t-\tau)\langle\nabla\rangle} \langle\nabla\rangle^s\big[(w * |\psi_\tau|^2)\psi_\tau\big]\big\|_{L^2_\gamma}\, \mathrm{d}\tau \\
&\lesssim  \|\psi_0\|_{H^s_\gamma}+t^\gamma\|\psi_0\|_{H^s}\\
&\quad+ \int_0^t \Big(\big\|(w * |\psi_\tau|^2)\psi_\tau\big\|_{H^s_\gamma} +  \langle t-\tau\rangle^\gamma\big\| (w * |\psi_\tau|^2)\psi_\tau\big\|_{H^s}\Big)\mathrm{d}\tau. 
\end{align*}
Using \eqref{eq: weighted multilin M in Hs} from Lemma \ref{lemma: weight multilin} and \eqref{eq:forcing-short-range-Leb-Hs-to-weight-Hs}  from Lemma \ref{lemma: weight multilin 2} and the embeddings $H^s \hookrightarrow L^2$, $H^s \hookrightarrow L^{2q'}$, $H^s\hookrightarrow L^p\cap L^\infty \hookrightarrow L^\infty$ (since $p, 2q'\geq 2$), we estimate the first term in the integral as
\begin{align*}
&\big\|(w * |\psi_\tau|^2)\psi_\tau\big\|_{H^s_\gamma}\lesssim \|w\|_{\meas+L^{q}} \|\psi_\tau\|_{H^s}^{2-\theta(\frac{r}{q})} \|\psi_\tau\|_{L^p\cap L^\infty}^{\theta(\frac{r}{q})}  \|\psi_\tau\|_{H^s_\gamma} .
\end{align*}
Applying \eqref{eq:estim1_psit} and \eqref{eq:estim_psit_energy} from Theorem \ref{th:global-short-range-intro} therefore implies that
    \begin{align*}
\big\|(w * |\psi_\tau|^2)\psi_\tau\big\|_{H^s_\gamma}&\lesssim \langle \tau\rangle^{-\frac{d}{2r}\theta(\frac{r}{q})}\|w\|_{\meas+L^{q}} \|\psi_0\|_{H^s\cap H^{s,p'}}^2 \|\psi_\tau\|_{H^s_\gamma}\\
&\lesssim \varepsilon_0\langle \tau\rangle^{-\frac{d}{2r}\theta(\frac{r}{q})}  \|\psi_\tau\|_{H^s_\gamma}.
\end{align*}
Likewise, applying Lemma \ref{lem:bounds-forcing-short-range} and Theorem \ref{th:global-short-range-intro} to the second term in the integral above we have
\begin{align}
\big\| (w * |\psi_\tau|^2)\psi_\tau\big\|_{H^s}&\lesssim\|w\|_{\meas+L^{q}} \|\psi_\tau\|_{L^{p}\cap L^\infty}^{2\theta(\frac{r}{q})} \|\psi_\tau\|_{H^{s}}^{3-2\theta(\frac{r}{q})}\nonumber\\
&\lesssim \langle\tau\rangle^{-\frac dr \theta(\frac{r}{q})}\|w\|_{\meas+L^{q}} \|\psi_0\|_{H^s\cap H^{s,p'}}^3. \label{comp:for bound of G gamma}
\end{align}
Hence, since $r<d$ and $q<d$, it holds
\begin{align*}
    \int_0^t \langle t-\tau\rangle^\gamma\big\|(w * |\psi_\tau|^2)\psi_\tau\big\|_{H^s}\mathrm{d}\tau &\lesssim  \langle t\rangle^\gamma   \|w\|_{\meas+L^{q}} \|\psi_0\|_{H^s\cap H^{s,p'}}^3 \int_0^t\langle\tau\rangle^{-\frac dr\theta(\frac{r}{q})}\mathrm{d}\tau \\
    &\lesssim \varepsilon_0 \langle t\rangle^\gamma  \|\psi_0\|_{H^s\cap H^{s,p'}}. 
\end{align*}
Therefore we have shown that 
\begin{align*}
    \|\psi_t\|_{H^s_\gamma} &\lesssim \|\psi_0\|_{H^s_\gamma} +(1+\varepsilon_0)\langle t\rangle^\gamma\|\psi_0\|_{H^s\cap H^{s,p'}}
      +\varepsilon_0 \int_0^t \langle \tau\rangle^{-\frac{d}{2r}\theta(\frac{r}{q})} \|\psi_\tau\|_{H^s_\gamma} \mathrm{d}\tau. 
\end{align*}
Since $r<\frac d2$ and $q<\frac d2$ imply $\frac{d}{2r}\theta(\frac{r}{q})>1$, the statement of the lemma follows from Gronwall's inequality.
\end{proof}

We will need yet another related, auxiliary result, which will be used in our proof of the invertibility of the wave operators in Theorem \ref{thm:invertibility-intro}.

\begin{lemma}\label{eq:estim_2S}

   Let $s\ge\frac{d}{2}+1$, $\max \{ \frac{d}{4}, 1\}\le r <\frac d2$ and $1\le q\le r$.   Let  $\gamma\ge0$ be such that $\frac{d}{2r}<\gamma\le2$. There exists $\varepsilon_0>0$ such that, for all  $w \in \meas+L^{q}$ and $\psi_0\in H^s_\gamma\hookrightarrow H^s\cap H^{s,p'}$ (with $p=\frac{2r}{r-1}$) satisfying
		\begin{equation*}
			\|w\|_{\meas+L^{q}} \| \psi_0\|^2_{H^s\cap H^{s,p'}}\le\varepsilon_0,
		\end{equation*} 
		the global solution $\psi$ to \eqref{eq:hartree_semi_intro} given by Theorem \ref{th:global-short-range-intro} satisfies
		\begin{equation}\label{eq:estimate-Lpgamma}
		  \|\psi_t\|_{L^{p}_\gamma\cap L^\infty_\gamma}\lesssim\langle t\rangle^{\gamma-\frac{d}{2r}}\| \psi_0\|_{H^s_\gamma},
		\end{equation}
		uniformly in $t\in[0,\infty)$.
\end{lemma}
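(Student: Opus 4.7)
The proof mirrors that of Lemma \ref{lm:estim1_S}. I would begin from the Duhamel representation \eqref{eq:Duhamel_secscatt}, apply it in the space $L^p_\gamma \cap L^\infty_\gamma$, and estimate separately the free and nonlinear contributions. The key technical ingredient is a weighted version of the dispersive estimate \eqref{eq:a3}, namely
\begin{equation*}
\|e^{-it\langle \nabla \rangle} f\|_{L^p_\gamma \cap L^\infty_\gamma} \lesssim \langle t\rangle^{\gamma - \frac{d}{2r}} \|f\|_{H^s_\gamma}.
\end{equation*}
To obtain this, I would write $\langle x\rangle^\gamma e^{-it\langle \nabla\rangle} = e^{-it\langle \nabla \rangle}\,e^{it\langle \nabla\rangle}\langle x\rangle^\gamma e^{-it\langle \nabla\rangle}$ and exploit the fact that the Heisenberg-evolved position satisfies the pointwise-in-operators bound $e^{it\langle \nabla\rangle}\langle x\rangle^\gamma e^{-it\langle \nabla\rangle} \lesssim \langle x\rangle^\gamma + \langle t\rangle^\gamma$, stemming from $\|\Theta\|_{\mathcal{B}(L^2)}\le 1$ and the same commutator analysis already used for Lemma \ref{lm:xgamma-lin}. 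Combining this with \eqref{eq:a3} applied to $\langle x\rangle^\gamma f$ and to $f$ separately, and using that $H^s_\gamma \hookrightarrow H^s \cap H^{s,p'}$ (valid since $\gamma > \frac{d}{2r}$), produces the estimate.

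Applying this weighted dispersive bound to Duhamel's formula yields
\begin{equation*}
\|\psi_t\|_{L^p_\gamma \cap L^\infty_\gamma} \lesssim \langle t\rangle^{\gamma - \frac{d}{2r}} \|\psi_0\|_{H^s_\gamma} + \int_0^t \langle t-\tau\rangle^{\gamma - \frac{d}{2r}} \big\|(w*|\psi_\tau|^2)\psi_\tau\big\|_{H^s_\gamma}\, \mathrm{d}\tau.
\end{equation*}
To bound the forcing term, I would decompose $w = w_1 + w_q$ and apply the weighted multilinear estimates \eqref{eq: weighted multilin M in Hs} and \eqref{eq:forcing-short-range-Leb-Hs-to-weight-Hs}, noting that $q \le r$ gives $\theta(r/q) = 1$, together with the Sobolev-type embeddings $H^s_\gamma \hookrightarrow L^\infty_\gamma \cap L^{2q'}_\gamma$ (which follow from $s > d/2$ and Lemma \ref{lemma: weight sob norm}). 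Using the decay bounds on $\|\psi_\tau\|_{L^p \cap L^\infty}$ and $\|\psi_\tau\|_{H^s}$ from Theorem \ref{th:global-short-range-intro} and the polynomial growth $\|\psi_\tau\|_{H^s_\gamma} \lesssim \langle \tau\rangle^\gamma \|\psi_0\|_{H^s_\gamma}$ from Lemma \ref{lm:estim1_S}, and choosing $\varepsilon_0$ small enough, this produces
\begin{equation*}
\big\|(w*|\psi_\tau|^2)\psi_\tau\big\|_{H^s_\gamma} \lesssim \varepsilon_0\, \langle \tau\rangle^{\gamma - \frac{d}{r}}\,\|\psi_0\|_{H^s_\gamma}.
\end{equation*}

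Finally, a standard convolution lemma for integrals of the form $\int_0^t \langle t-\tau\rangle^{\alpha}\langle \tau\rangle^{-\beta}\,\mathrm{d}\tau$, exploiting that $r < \frac{d}{2}$ ensures sufficient integrability of $\langle\tau\rangle^{\gamma-\frac{d}{r}}$ at infinity under the hypotheses, closes the bound by $\langle t\rangle^{\gamma - \frac{d}{2r}}\|\psi_0\|_{H^s_\gamma}$, without needing a Gronwall step since no $\|\psi_\tau\|_{L^p_\gamma \cap L^\infty_\gamma}$-dependence remains in the right-hand side. The principal obstacle is the rigorous justification of the weighted dispersive estimate in Step 1, where $\langle x\rangle^\gamma$ must be commuted through the unbounded generator $\langle\nabla\rangle$; this requires either an approximation by bounded weights $\langle x\rangle^\gamma_\varepsilon$ (as in Section \ref{sec:max-vel}) or a direct functional calculus argument to make the Heisenberg-evolution bound $e^{it\langle\nabla\rangle}\langle x\rangle^\gamma e^{-it\langle\nabla\rangle} \lesssim \langle x\rangle^\gamma + \langle t\rangle^\gamma$ precise at the level of quadratic forms.
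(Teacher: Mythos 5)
Your overall architecture (Duhamel, weighted dispersive estimate for the free flow, weighted multilinear estimates for the forcing) matches the paper's, but there is a genuine gap at the point where you claim $\big\|(w*|\psi_\tau|^2)\psi_\tau\big\|_{H^s_\gamma}\lesssim\varepsilon_0\langle\tau\rangle^{\gamma-\frac dr}\|\psi_0\|_{H^s_\gamma}$ and conclude that no Gronwall step is needed. The weighted Leibniz estimates \eqref{eq: weighted multilin M in Hs} and \eqref{eq:forcing-short-range-Leb-Hs-to-weight-Hs} produce, besides the good term $\|\psi_\tau\|_{L^p\cap L^\infty}^2\|\psi_\tau\|_{H^s_\gamma}\lesssim\langle\tau\rangle^{\gamma-\frac dr}\|\psi_0\|_{H^s_\gamma}$, a cross term of the form $\|\psi_\tau\|_{L^p\cap L^\infty}\|\psi_\tau\|_{H^s}\|\psi_\tau\|_{L^\infty_\gamma}$ (resp.\ $\|\psi_\tau\|_{L^{2q'}_\gamma}$). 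If, as you propose, you absorb $\|\psi_\tau\|_{L^\infty_\gamma}$ via $H^s_\gamma\hookrightarrow L^\infty_\gamma$ and Lemma \ref{lm:estim1_S}, you only get $\|\psi_\tau\|_{L^\infty_\gamma}\lesssim\langle\tau\rangle^{\gamma}\|\psi_0\|_{H^s_\gamma}$, so this cross term decays like $\langle\tau\rangle^{\gamma-\frac{d}{2r}}$, not $\langle\tau\rangle^{\gamma-\frac dr}$; since $\gamma>\frac{d}{2r}$ its time integral grows like $\langle t\rangle^{1+\gamma-\frac{d}{2r}}$, which overshoots the target $\langle t\rangle^{\gamma-\frac{d}{2r}}$ by a full power of $t$. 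The only way to close is to leave $\|\psi_\tau\|_{L^p_\gamma\cap L^\infty_\gamma}$ — the very quantity being estimated — in the bound with the prefactor $\langle\tau\rangle^{-\frac{d}{2r}}$ and then apply Gronwall's inequality (using $\frac{d}{2r}>1$), which is exactly what the paper does; your assertion that the argument avoids Gronwall is therefore incorrect. Relatedly, your claim that $r<\frac d2$ makes $\langle\tau\rangle^{\gamma-\frac dr}$ integrable at infinity is false in part of the admissible range (e.g.\ $d=5$, $r=2$, $\gamma=2$ gives exponent $-1/2$); the paper closes that term instead by $\int_0^t\langle\tau\rangle^{\gamma-\frac dr}\mathrm{d}\tau\lesssim\langle t\rangle^{1+\gamma-\frac dr}\le\langle t\rangle^{\gamma-\frac{d}{2r}}$, which uses $\frac{d}{2r}\ge1$.

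Two further remarks on the linear input. First, the paper's Lemma \ref{lm:linear-bound-weighted-Lp} gives the sharper form $\|e^{-it\jnabla}f\|_{L^p_\gamma\cap L^\infty_\gamma}\lesssim\|f\|_{H^s_\gamma}+\langle t\rangle^{\gamma-\frac{d}{2r}}\|f\|_{H^s\cap H^{s,p'}}$; collapsing this to $\langle t\rangle^{\gamma-\frac{d}{2r}}\|f\|_{H^s_\gamma}$ is harmless for the free term, but inside Duhamel it attaches the \emph{growing} factor $\langle t-\tau\rangle^{\gamma-\frac{d}{2r}}$ to the weighted norm of the forcing, and the resulting convolution integral no longer closes when $\gamma>\frac dr-1$ (which is permitted here, since the lemma only assumes $\frac{d}{2r}<\gamma\le2$). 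Second, your sketch of the weighted dispersive estimate — applying \eqref{eq:a3} to $\langle x\rangle^\gamma f$ — would require controlling $\|\langle x\rangle^\gamma f\|_{H^{s,p'}}$ by $\|f\|_{H^s_\gamma}$, which fails: the embedding $H^s_\gamma\hookrightarrow H^{s,p'}$ consumes the weight, so one would need roughly $H^s_{2\gamma}$. The paper instead proves the weighted dispersive bound via the kernel representation $e^{-it\jnabla}f=(e^{-it\jnabla}\phi_d)*(\jnabla^{s}f)$, distributing the weight over the convolution and interpolating (Lemmas \ref{lemma: ancona f} and \ref{lm:linear-bound-weighted-Lp}); the Heisenberg-evolution identity you invoke is only used at the $L^2$ level (Lemma \ref{lm:xgamma-lin}).
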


\begin{proof}
Starting with Duhamel's formula \eqref{eq:Duhamel_secscatt} and applying Lemma \ref{lm:linear-bound-weighted-Lp}, we estimate
\begin{align}
\big\|\psi_t\big\|_{L^{p}_\gamma\cap L^\infty_\gamma}&\le\big\| e^{-it\langle\nabla\rangle}\psi_0\big\|_{L^{p}_\gamma\cap L^\infty_\gamma}+ \int_0^t \big\| e^{-i(t-\tau)\langle\nabla\rangle} \big[(w * |\psi_\tau|^2)\psi_\tau\big]\big\|_{L^{p}_\gamma\cap L^\infty_\gamma}\, \mathrm{d}\tau\notag \\
&\lesssim \|\psi_0\|_{H^s_\gamma}+\langle t\rangle ^{\gamma-\frac{d}{2r}}\|\psi_0\|_{H^s\cap H^{s,p'}}\notag\\
&\quad+ \int_0^t \Big(\big\|(w * |\psi_\tau|^2)\psi_\tau\big\|_{H^s_\gamma} +\langle t-\tau\rangle^{\gamma-\frac{d}{2r}}\big\| (w * |\psi_\tau|^2)\psi_\tau\big\|_{H^s\cap H^{s,p'}}\Big)\, \mathrm{d}\tau .\label{eq:estim1-weigted-Lp}
\end{align}
Using \eqref{eq: weighted multilin M in Hs} from Lemma \ref{lemma: weight multilin} and \eqref{eq:forcing-short-range-Leb-Hs-to-weight-Hs} from Lemma \ref{lemma: weight multilin 2} we estimate the first term in the integral as
\begin{align*}
\big\|(w * |\psi_\tau|^2)\psi_\tau\big\|_{H^s_\gamma}\lesssim \|w\|_{\meas+L^{q}} \|\psi_\tau\|_{L^p\cap L^\infty}\big( \|\psi_\tau\|_{L^p\cap L^\infty} \|\psi_\tau\|_{H^s_\gamma}+\|\psi_\tau\|_{H^s}\|\psi_\tau\|_{L^p_\gamma\cap L^\infty_\gamma}\big) ,
\end{align*}
where we used that $r\ge q$ and $2q'\ge p$ under our assumptions. Now, as mentioned above, $\gamma>\frac{d}{2r}$ implies $\|\psi_0\|_{H^{s,p'}\cap H^s}\lesssim \|\psi_0\|_{H^s_\gamma}$ and hence Lemma \ref{lm:estim1_S} yields $ \|\psi_\tau\|_{H^s_\gamma} \lesssim \langle \tau \rangle^\gamma \|\psi_0\|_{H^s_\gamma}$. Applying this property and \eqref{eq:estim1_psit} and \eqref{eq:estim_psit_energy} from Theorem \ref{th:global-short-range-intro} to the previous inequality we obtain 
\begin{align}
   \big\|(w * |\psi_\tau|^2)\psi_\tau\big\|_{H^s_\gamma} &\lesssim  \|w\|_{\meas+L^{q}}\|\psi_0\|^2_{H^s\cap H^{s,p'}}\big(\langle \tau\rangle^{\gamma-\frac{d}{r}} \|\psi_0\|_{H^s_\gamma}+\langle\tau\rangle^{-\frac{d}{2r}}\|\psi_\tau\|_{L^p_\gamma\cap L^\infty_\gamma}\big)\notag\\
  &\lesssim \varepsilon_0\big(\langle \tau\rangle^{\gamma-\frac{d}{r}} \|\psi_0\|_{H^s_\gamma}+\langle\tau\rangle^{-\frac{d}{2r}}\|\psi_\tau\|_{L^p_\gamma\cap L^\infty_\gamma}\big) \label{eq:bound G gamma,s-G gamma,s}.
\end{align}

We can proceed similarly to estimate the second term of the integral in \eqref{eq:estim1-weigted-Lp}. Indeed, applying \eqref{eq:convolution-Hartree-Hs-Lq-theta}-\eqref{eq:convolution-Hartree-Hspprime-Lq-theta} from Lemma \ref{lem:bounds-forcing-short-range} we have 
\begin{align}
\|(w * |\psi_\tau|^2)\psi_\tau\|_{H^s} &\lesssim  \|w\|_{\meas+L^{q}} \|\psi_\tau\|_{H^s} \|\psi_\tau\|_{L^p\cap L^\infty}^2 \nonumber\\
&\lesssim \|w\|_{\meas+L^{q}} \langle \tau \rangle^{-\frac{d}{r}} \|\psi_0 \|_{H^s\cap H^{s,p'}}^3\notag\\
&\lesssim \varepsilon_0\langle \tau \rangle^{-\frac{d}{r}} \|\psi_0 \|_{H^s_\gamma} \label{eq:bound Hs-G gamma,s},
\end{align}
where we applied again \eqref{eq:estim1_psit} and \eqref{eq:estim_psit_energy} from Theorem \ref{th:global-short-range-intro} in the second inequality, and used the embedding $H^s_\gamma \hookrightarrow H^s \cap H^{s, p'}$ in the last one. Analogously, thanks to \eqref{eq:convolution-Hartree-Hspprime-M1-theta}-\eqref{eq:convolution-Hartree-Hspprime-Lq-theta} from Lemma \ref{lem:bounds-forcing-short-range} and Theorem \ref{th:global-short-range-intro} we obtain 
\begin{align*}
    \|(w * |\psi_\tau|^2)\psi_\tau\|_{H^{s, p'}} &\lesssim  \|w\|_{\meas+L^{q}} \|\psi_0 \|_{H^s\cap H^{s,p'}}^3 (\langle \tau \rangle^{-\frac{d}{r} \theta(r-\frac{1}{2})}  + \langle \tau \rangle^{-\frac{d}{r} \theta(\frac{r}{q}- \frac12)} ).
\end{align*}
Now we remark that, for $d>3$, we have $r-\frac12>1$ and $\frac{r}{q}-\frac12> 1$, while if $d=3$, then by assumption it holds $\frac12\le r-\frac12<1$ and $\frac12\le \frac{r}{q}-\frac12\leq 1$ and therefore $\langle \tau \rangle^{-\frac{d}{r} \theta(r-\frac{1}{2})}  +\langle \tau \rangle^{-\frac{d}{r} \theta(\frac{r}{q}- \frac12)} \lesssim  \langle \tau \rangle^{-\frac{d}{2r}} $. With these properties and the embedding $H^s_\gamma \hookrightarrow H^s \cap H^{s, p'}$ we conclude that 
\begin{align}
     \|(w * |\psi_\tau|^2)\psi_\tau\|_{H^{s, p'}} 
    & \lesssim \varepsilon_0 \langle \tau \rangle^{-\frac{d}{2r}} \|\psi_0 \|_{H^s_\gamma} .\label{eq:bound Hsprime-G gamma,s}
\end{align}

Combining \eqref{eq:estim1-weigted-Lp} with \eqref{eq:bound G gamma,s-G gamma,s}-\eqref{eq:bound Hsprime-G gamma,s}, we obtain
\begin{align*}
    \big\|\psi_t\big\|_{L^{p}_\gamma\cap L^\infty_\gamma} &\lesssim  \langle t\rangle^{\gamma-\frac{d}{2r}}  \|\psi_0\|_{H^s_\gamma}+ \varepsilon_0 \|\psi_0\|_{H^s_\gamma}\Big( \int_0^t\langle t-\tau\rangle^{\gamma-\frac{d}{2r}}\langle\tau\rangle^{-\frac {d}{2r}} \mathrm{d}\tau+\int_0^t\langle\tau\rangle^{\gamma-\frac{d}{r}}\mathrm{d}\tau\Big)\\
    &\quad+\varepsilon_0 \int_0^t\langle\tau\rangle^{-\frac{ d}{2r}}\|\psi_\tau\|_{L^p_\gamma\cap L^\infty_\gamma}\mathrm{d}\tau\\
    &\lesssim  \langle t\rangle^{\gamma-\frac{d}{2r}}  \|\psi_0\|_{H^s_\gamma}+\varepsilon_0 \int_0^t\langle\tau\rangle^{-\frac{d}{2r}}\|\psi_\tau\|_{L^p_\gamma\cap L^\infty_\gamma}\mathrm{d}\tau,
\end{align*}
where in the second inequality, we used the property 
\begin{equation*}
    \sup_{t>0} \langle t \rangle^a \int_0^t \langle t-\tau\rangle^{-a} \langle\tau \rangle^{-b}\, \mathrm{d}\tau<\infty \  \quad \textnormal{for any }a\geq0,\ b>1
\end{equation*}
with $b=\frac d{2r}$.
Using again that $\frac{d}{2r}>1$, the statement of the lemma follows from Gronwall's inequality.
\end{proof}

\subsection{Scattering states and wave operators}
\label{sect: wave op}

In this subsection we establish the scattering results stated in Theorems \ref{th:scattering-short-range-intro} and \ref{thm:invertibility-intro}. 
We begin with showing that, for short-range interaction potentials, any global solutions to \eqref{eq:hartree_semi_intro} given by Theorem \ref{th:global-short-range-intro} scatters to a free solution. The proof straightforwardly follows from the multinilear estimates stated in Lemma \ref{lem:bounds-forcing-short-range} and the regularity and decay properties of a solution obtained in \eqref{eq:estim1_psit}--\eqref{eq:estim4_psit}.

	\begin{proof}[Proof of Theorem \ref{th:scattering-short-range-intro}]
	We recall that $\psi_+$ is defined as 
	\begin{equation*}
	    	\psi_+ := \psi_0 - i \int_0^\infty e^{i\tau\langle\nabla\rangle} (w * |\psi_\tau|^2)\psi_\tau\, \mathrm{d}\tau.
	\end{equation*}
		First we note that $\psi_+$ is well-defined in $H^s$ since $\psi_0\in H^s$ and, using Lemma \ref{lem:bounds-forcing-short-range} and Theorem \ref{th:global-short-range-intro}, we have
		\begin{align}\label{eq:decay-scat}
		 \|e^{i\tau\langle\nabla\rangle}(w * |\psi_\tau|^2)\psi_\tau\|_{H^s}\lesssim \|w\|_{\meas+L^{q}} \|\psi_\tau\|_{L^{p}\cap L^\infty}^{2 \theta(\frac{r}{q})} \|\psi_\tau\|_{H^{s}}^{3-2 \theta(\frac{r}{q})}\lesssim\varepsilon_0\langle \tau\rangle^{-\frac dr\theta (\frac{r}{q})}\|\psi_0\|_{H^s\cap H^{s,p'}},
		\end{align}	
		which is integrable over $[0,\infty)$ since $r<d$ and $q<d$ imply $\frac{d}{r}\theta (\frac{r}{q})>1$.
		Next it follows from Duhamel's formula that
		\begin{align*}
			\psi_t - e^{-it\langle\nabla\rangle} \psi_+=i\int_t^\infty e^{i\tau\langle\nabla\rangle} (w * |\psi_\tau|^2)\psi_\tau\, \mathrm{d}\tau.
		\end{align*}
		Applying \eqref{eq:decay-scat} gives \eqref{eq:decay-scat0-intro} (or more precisely \eqref{eq:decay-scat0-intro-rate} in Remark \ref{rk:decay-scat0-intro-rate}). 
		
		To prove the continuity of $W_+$, let $\psi_0$, $\tilde\psi_0$ in $\mathcal{B}_{H^s\cap H^{s,p'}}(\delta_w)$ and let $\psi_t$, $\tilde\psi_t$ be the corresponding solutions to \eqref{eq:hartree_semi_intro} given by Theorem \ref{th:global-short-range-intro}. Then applying again Lemma \ref{lem:bounds-forcing-short-range} and arguing as in the proof of Theorem \ref{th:global-short-range-intro}, we obtain
		\begin{align*}
		& \big\|e^{i\tau\langle\nabla\rangle}(w * |\psi_\tau|^2)\psi_\tau-e^{i\tau\langle\nabla\rangle}(w * |\tilde\psi_\tau|^2)\tilde\psi_\tau\big\|_{H^s}\\
		&\lesssim\varepsilon_0\langle \tau\rangle^{-\frac{d}{r}\theta(\frac{r}{q}) }\|\psi_0-\tilde\psi_0\|_{H^s\cap H^{s,p'}},
		\end{align*}
		which in turn implies that
		\begin{align*}
		    \|W_+\psi_0-W_+\tilde\psi_0\|_{H^s}\lesssim\|\psi_0-\tilde\psi_0\|_{H^s\cap H^{s,p'}}.
		\end{align*}
		This implies the continuity of $W_+$.
	\end{proof}

In order to prove the invertibility properties of $W_+$ stated in Theorem \ref{thm:invertibility-intro}, we first have to justify the existence of the wave operator $\Omega_+$. As mentioned in the introduction, the Cauchy problem at $\infty$ for the pseudo-relativistic Hartree equation,  Eq. \eqref{eq: hartree semir infinity}, can be solved exactly as in Theorem \ref{th:global-short-range-intro}, considering the integral equation  
	\begin{equation*}
		\psi_t = e^{-it\langle \nabla \rangle}\psi_+ +i \int_t^\infty e^{-i(t-\tau)\langle \nabla \rangle} (w * |\psi_\tau|^2)\psi_\tau\, \mathrm{d}\tau,
	\end{equation*}
and then applying a fixed point argument in $S^{s,r,p}$ (recall that the norm in $S^{s,r,p}$ has been defined in \eqref{eq:def-Ssrp}). This leads to the following result.

	\begin{prop}[Solutions to \eqref{eq: hartree semir infinity}]
		\label{th:global-short-range-infinity}
	Let $s\ge\frac{d}{2}+  1$, $1\le r <d$ and $1\le q<\frac{2d}{3}$ be such that $\frac{1}{d}+\frac{1}{2r}<\frac{1}{q}$. Define $p$ by the relation $1=\frac{1}{r}+\frac{2}{p}$.

There exists $\varepsilon_0>0$ such that the following holds: for all $w\in\meas+L^{q}$ and~$\psi_{+}\in H^{s}\cap H^{s,p'}$ satisfying
\begin{equation}\label{eq:cond_w_psi0}
\|w\|_{\meas+L^{q}}\,\|\psi_{+}\|_{H^{s}\cap H^{s,p'}}^{2}\le\varepsilon_0,
\end{equation}
 Eq.~\eqref{eq: hartree semir infinity} has a unique solution $\psi$ in $S^{s,r,p}$.
	\end{prop}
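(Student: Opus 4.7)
The plan is to adapt the fixed-point argument from the proof of Theorem~\ref{th:global-short-range-intro} by replacing the forward Duhamel integral with a backward one starting from $t=\infty$. Concretely, I would introduce the map
\begin{equation*}
\mathcal{T}(\psi)_t := e^{-it\langle\nabla\rangle}\psi_+ + i\int_t^\infty e^{-i(t-\tau)\langle\nabla\rangle}(w*|\psi_\tau|^2)\psi_\tau\,\mathrm{d}\tau,
\end{equation*}
whose fixed points are precisely the solutions of the integral equation naturally associated to~\eqref{eq: hartree semir infinity}. The goal is then to apply the contraction mapping principle on the closed ball of $S^{s,r,p}$ of radius $\|\psi^{(+)}\|_{S^{s,r,p}}$ centered at $\psi^{(+)}_t:=e^{-it\langle\nabla\rangle}\psi_+$, which itself lies in $S^{s,r,p}$ with $\|\psi^{(+)}\|_{S^{s,r,p}}\lesssim\|\psi_+\|_{H^s\cap H^{s,p'}}$ by the dispersive estimate~\eqref{eq:a3}.

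The only new analytic ingredient is a trilinear bound for
\begin{equation*}
J(\varphi_0,\varphi_1,\varphi_2)(t) := \int_t^\infty e^{-i(t-\tau)\langle\nabla\rangle}\bigl(w*(\varphi_{0,\tau}\varphi_{1,\tau})\bigr)\varphi_{2,\tau}\,\mathrm{d}\tau,
\end{equation*}
analogous to~\eqref{eq:borne-integrale-trilineaire}, namely
\begin{equation*}
\|J(\varphi_0,\varphi_1,\varphi_2)\|_{S^{s,r,p}} \lesssim \|w\|_{\meas+L^q}\prod_{j\in\mathbb{Z}_3}\|\varphi_j\|_{S^{s,r,p}}.
\end{equation*}
The pointwise-in-time multilinear estimates of Lemma~\ref{lem:bounds-forcing-short-range} and the dispersive bound~\eqref{eq:a3} carry over verbatim, so the entire argument reduces to checking the tail analog of~\eqref{eq:integrale-finie}:
\begin{equation*}
\sup_{t\ge 0}\langle t\rangle^{a}\int_t^\infty \langle t-\tau\rangle^{-a}\langle\tau\rangle^{-b}\,\mathrm{d}\tau < \infty,
\end{equation*}
for $a\in\{0,\tfrac{d}{2r}\}$ and each $b\in\bigl\{\tfrac{d}{r},\tfrac{d}{r}\theta(\tfrac{r}{q}),\tfrac{d}{r}\theta(\tfrac{r}{q}-\tfrac{1}{2}),\tfrac{d}{r}\theta(r-\tfrac{1}{2})\bigr\}$. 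This is the main technical point of the proof. The plan is to split the integral at $\tau=2t$: on $[t,2t]$, the bound $\langle\tau\rangle\ge\langle t\rangle$ factors out $\langle t\rangle^{-b}$, while on $[2t,\infty)$, the inequality $\langle\tau-t\rangle\gtrsim\langle\tau\rangle$ reduces the integrand to $\langle\tau\rangle^{-a-b}$. The assumption $\tfrac1d+\tfrac1{2r}<\tfrac1q$, which is the same as in Theorem~\ref{th:global-short-range-intro}, is shown there to be equivalent to $b>1$ in every relevant case, and this is exactly what makes both pieces integrable and the supremum finite.

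Once the trilinear bound is established, the stability and contraction properties of $\mathcal{T}$ on the chosen ball follow under the smallness condition~\eqref{eq:cond_w_psi0} exactly as in the proof of Theorem~\ref{th:global-short-range-intro}, yielding a unique fixed point $\psi\in S^{s,r,p}$. Finally, the boundary condition $\|\psi_t - e^{-it\langle\nabla\rangle}\psi_+\|_{L^2}\to 0$ will follow immediately from the defining fixed-point identity together with the integrability over $[t,\infty)$ of $\tau\mapsto\|(w*|\psi_\tau|^2)\psi_\tau\|_{L^2}$, whose decay rate $\langle\tau\rangle^{-\frac{d}{r}\theta(\tfrac{r}{q})}$ is integrable by the same condition $\tfrac1d+\tfrac1{2r}<\tfrac1q$.
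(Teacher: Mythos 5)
Your proposal coincides with the paper's own argument: Proposition \ref{th:global-short-range-infinity} is proved there precisely by solving the backward Duhamel equation with the same fixed-point scheme in $S^{s,r,p}$ as in Theorem \ref{th:global-short-range-intro}, reusing the multilinear estimates of Lemma \ref{lem:bounds-forcing-short-range} and the dispersive bound \eqref{eq:a3}. The only new ingredient you identify, the tail analogue of \eqref{eq:integrale-finie}, is indeed the right reduction and holds under exactly the same conditions as the forward version (the substitution $\tau=t+u$ turns it into the same convolution bound), so your plan matches the intended proof.
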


Given this theorem one can mimic the proof of Theorem \ref{th:scattering-short-range-intro} to obtain the existence and continuity of the wave operator mapping any scattering state $\psi_+$ to the corresponding initial state $\psi_0$.

	\begin{theo}[Wave operator]\label{cor:scat2}
Under the conditions of Proposition \ref{th:global-short-range-infinity},  there exists $\delta_w>0$ such that the (non-linear) wave operator 
		\begin{align*}
    \Omega_+:\mathcal{B}_{H^s\cap H^{s,p'}}(\delta_w)&\to H^s, \\ 
    \psi_+&\mapsto  \Omega_+\psi_+= \psi_+ + i \int_0^\infty e^{i\tau\langle\nabla\rangle} (w * |\psi_\tau|^2)\psi_\tau\, \mathrm{d}\tau,
\end{align*}
is well-defined and continuous.
	\end{theo}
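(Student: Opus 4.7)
The plan is to closely follow the proof of Theorem \ref{th:scattering-short-range-intro}, with the backward-in-time solution constructed in Proposition \ref{th:global-short-range-infinity} now playing the role of the forward solution used there. First I would fix $\delta_w > 0$ small enough so that the smallness condition \eqref{eq:cond_w_psi0} is satisfied for every $\psi_+ \in \mathcal{B}_{H^s \cap H^{s,p'}}(\delta_w)$, and then revisit the fixed-point construction underlying Proposition \ref{th:global-short-range-infinity}, carried out on the closed ball $\overline{B(\psi^{(0)}, \|\psi^{(0)}\|_{S^{s,r,p}})}$ of $S^{s,r,p}$ with $\psi^{(0)}_\tau = e^{-i\tau\langle\nabla\rangle}\psi_+$, in order to extract the quantitative analogues of \eqref{eq:estim1_psit}--\eqref{eq:estim4_psit} for the associated solution $\psi$, namely
\begin{equation*}
\|\psi_\tau\|_{L^p \cap L^\infty} \lesssim \langle\tau\rangle^{-\frac{d}{2r}}\|\psi_+\|_{H^s \cap H^{s,p'}}, \qquad \|\psi_\tau\|_{H^s} \lesssim \|\psi_+\|_{H^s \cap H^{s,p'}},
\end{equation*}
uniformly in $\tau \ge 0$, together with Lipschitz dependence of $\psi \in S^{s,r,p}$ on $\psi_+$.

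With these bounds in hand, the well-definedness of $\Omega_+\psi_+$ in $H^s$ is immediate. Using the unitarity of $e^{i\tau\langle\nabla\rangle}$ on $H^s$ and combining \eqref{eq:convolution-Hartree-Hs-M1} with \eqref{eq:convolution-Hartree-Hs-Lq-theta} in Lemma \ref{lem:bounds-forcing-short-range}, I obtain
\begin{equation*}
\big\|e^{i\tau\langle\nabla\rangle}(w*|\psi_\tau|^2)\psi_\tau\big\|_{H^s} \lesssim \|w\|_{\meas + L^q}\|\psi_\tau\|_{L^p \cap L^\infty}^{2\theta(r/q)}\|\psi_\tau\|_{H^s}^{3-2\theta(r/q)} \lesssim \varepsilon_0 \langle\tau\rangle^{-\frac{d}{r}\theta(r/q)}\|\psi_+\|_{H^s \cap H^{s,p'}}.
\end{equation*}
The assumption $\frac{1}{d} + \frac{1}{2r} < \frac{1}{q}$, combined with $r < d$, forces $\frac{d}{r}\theta(r/q) > 1$, so the right-hand side is integrable on $[0,\infty)$ and the defining integral of $\Omega_+\psi_+$ converges absolutely in $H^s$.

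For continuity, I would proceed exactly as in the corresponding part of the proof of Theorem \ref{th:scattering-short-range-intro}: given $\psi_+, \tilde\psi_+ \in \mathcal{B}_{H^s \cap H^{s,p'}}(\delta_w)$ with corresponding solutions $\psi, \tilde\psi \in S^{s,r,p}$, the same multilinear estimates applied to the telescopic decomposition of $(w*|\psi_\tau|^2)\psi_\tau - (w*|\tilde\psi_\tau|^2)\tilde\psi_\tau$ yield a pointwise bound proportional to $\varepsilon_0 \langle\tau\rangle^{-\frac{d}{r}\theta(r/q)}\|\psi - \tilde\psi\|_{S^{s,r,p}}$. The Lipschitz estimate $\|\psi - \tilde\psi\|_{S^{s,r,p}} \lesssim \|\psi_+ - \tilde\psi_+\|_{H^s \cap H^{s,p'}}$, inherited from the contraction property of the fixed-point map in Proposition \ref{th:global-short-range-infinity}, then gives $\|\Omega_+\psi_+ - \Omega_+\tilde\psi_+\|_{H^s} \lesssim \|\psi_+ - \tilde\psi_+\|_{H^s \cap H^{s,p'}}$, which is the desired continuity.

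The only nontrivial bookkeeping is in the first paragraph: Proposition \ref{th:global-short-range-infinity} as stated only asserts existence and uniqueness of $\psi$ in $S^{s,r,p}$, so one must re-inspect its proof---itself a time-reversed copy of the proof of Theorem \ref{th:global-short-range-intro}---to read off the uniform and pointwise-decay bounds on $\psi_\tau$ in terms of $\|\psi_+\|_{H^s \cap H^{s,p'}}$, as well as the Lipschitz dependence of $\psi_+ \mapsto \psi$. Once those are in hand, everything else is a direct transcription of the argument used for the inverse wave operator $W_+$.
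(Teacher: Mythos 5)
Your proposal is correct and follows essentially the same route as the paper, which itself only says that one should mimic the proof of Theorem \ref{th:scattering-short-range-intro} once Proposition \ref{th:global-short-range-infinity} is in hand: choose $\delta_w$ so that \eqref{eq:cond_w_psi0} holds on the ball, read off the $S^{s,r,p}$ decay bounds for the solution of the Cauchy problem at infinity from the fixed-point construction, use Lemma \ref{lem:bounds-forcing-short-range} to get the integrable $\langle\tau\rangle^{-\frac{d}{r}\theta(r/q)}$ bound, and deduce continuity from the Lipschitz dependence of $\psi$ on $\psi_+$ inherited from the contraction. Your verification that $\frac{d}{r}\theta(r/q)>1$ under the stated hypotheses is also the one the paper uses.
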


Our next purpose is to identify a subset \begin{equation*}
    \mathcal{A}\subset\mathcal{B}_{H^s\cap H^{s,p'}}(\delta_w)  
\end{equation*} 
such that $W_+:\mathcal{A}\to\mathcal{B}_{H^s\cap H^{s,p'}}(\delta_w)$ and $\Omega_+:\mathcal{A}\to\mathcal{B}_{H^s\cap H^{s,p'}}(\delta_w)$, which allows us to give a meaning to the expressions $\Omega_+W_+\varphi$ and $W_+\Omega_+\varphi$ for $\varphi\in\mathcal{A}$. The next theorem shows that, provided we restrict the class of potentials to $w \in \meas+L^{q}$ with $1\le q<\frac d2$, it suffices to consider for $\mathcal{A}$ a closed ball with sufficiently small radius in $H^s_\gamma$, where we recall that
\begin{equation*}
    \|\varphi\|_{H^s_\gamma}=\|\langle x\rangle^\gamma\langle\nabla\rangle^s\varphi\|_{L^2}.
\end{equation*}
Note that Theorem \ref{thm:invertibility} readily implies Theorem \ref{thm:invertibility-intro}. The proof relies on the estimates obtained in Section \ref{subsec:estimates_growth}.

\begin{theo}\label{thm:invertibility}
    Let $s\ge\frac d2 +1$, $ \max \{1, \frac d4\} \leq r <\frac d2$, $1\le q\le r $ and $\frac{d}{2r}<\gamma<\min \{2, \frac dr-1\}$. There exists $\varepsilon_0>0$ such that, for all $w \in \meas+L^{q}$ and $\psi_0\in H^s_\gamma$ such that
		\begin{equation*}
			\|w\|_{\meas+L^{q}} \| \psi_0\|_{H^s\cap H^{s,p'}}^2\le\varepsilon_0,
		\end{equation*}
	we have
	\begin{equation}\label{eq:boundW+}
	    \|W_+\psi_0-\psi_0\|_{H^s_\gamma}\lesssim\varepsilon_0\|\psi_0\|_{H^s_\gamma}.
	\end{equation}
	Likewise, for all $\psi_+\in H^s_\gamma$ such that
		\begin{equation*}
			\|w\|_{\meas+L^{q}} \| \psi_+\|_{H^s\cap H^{s,p'}}^2\le\varepsilon_0,
		\end{equation*}
	we have
	\begin{equation}\label{eq:boundOmega+}
	    \|\Omega_+\psi_+-\psi_+\|_{H^s_\gamma}\lesssim\varepsilon_0\|\psi_+\|_{H^s_\gamma}.
	\end{equation}
	In particular, there exists $\delta_w>0$ such that 
	\begin{equation*}
	\Omega_+W_+\varphi=W_+\Omega_+\varphi \quad\text{ for all }\quad\varphi\in\mathcal{B}_{H^s_\gamma}(\delta_w).
	\end{equation*}
\end{theo}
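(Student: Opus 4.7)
The starting point is the representation $W_+\psi_0-\psi_0 = -i\int_0^\infty e^{i\tau\langle\nabla\rangle}((w*|\psi_\tau|^2)\psi_\tau)\,\diff\tau$, where $\psi$ is the global solution to \eqref{eq:hartree_semi_intro} given by Theorem~\ref{th:global-short-range-intro} (which applies because $H^s_\gamma\hookrightarrow H^s\cap H^{s,p'}$ under the hypothesis $\gamma>d/(2r)$, so in particular $\|\psi_0\|_{H^s\cap H^{s,p'}}\lesssim\|\psi_0\|_{H^s_\gamma}$). The plan is to bound the integrand in the $H^s_\gamma$ norm by a function of $\tau$ that is integrable on $[0,\infty)$. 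Since $e^{i\tau\langle\nabla\rangle}$ commutes with $\langle\nabla\rangle^s$ but not with $\langle x\rangle^\gamma$, I first use Lemma~\ref{lm:xgamma-lin} (as in Section~\ref{subsec:estimates_growth}) applied to $\langle\nabla\rangle^s(w*|\psi_\tau|^2)\psi_\tau$ to split
\[
\|e^{i\tau\langle\nabla\rangle}(w*|\psi_\tau|^2)\psi_\tau\|_{H^s_\gamma}\lesssim \|(w*|\psi_\tau|^2)\psi_\tau\|_{H^s_\gamma}+\langle\tau\rangle^\gamma\|(w*|\psi_\tau|^2)\psi_\tau\|_{H^s}\,.
\]

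To control the first term I will apply the weighted multilinear estimates~\eqref{eq: weighted multilin M in Hs} and~\eqref{eq:forcing-short-range-Leb-Hs-to-weight-Hs} (with $\theta(r/q)=1$ since $q\le r$, and using $\|\cdot\|_{L^{2q'}_\gamma}\lesssim\|\cdot\|_{L^p_\gamma\cap L^\infty_\gamma}$ which holds because $p\le 2q'$), then insert the decay bounds $\|\psi_\tau\|_{L^p\cap L^\infty}\lesssim\langle\tau\rangle^{-d/(2r)}\|\psi_0\|_{H^s_\gamma}$ and $\|\psi_\tau\|_{H^s}\lesssim\|\psi_0\|_{H^s_\gamma}$ from Theorem~\ref{th:global-short-range-intro}, together with $\|\psi_\tau\|_{H^s_\gamma}\lesssim\langle\tau\rangle^\gamma\|\psi_0\|_{H^s_\gamma}$ from Lemma~\ref{lm:estim1_S} and $\|\psi_\tau\|_{L^p_\gamma\cap L^\infty_\gamma}\lesssim\langle\tau\rangle^{\gamma-d/(2r)}\|\psi_0\|_{H^s_\gamma}$ from Lemma~\ref{eq:estim_2S}. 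A short calculation then yields $\|(w*|\psi_\tau|^2)\psi_\tau\|_{H^s_\gamma}\lesssim\varepsilon_0\langle\tau\rangle^{\gamma-d/r}\|\psi_0\|_{H^s_\gamma}$. For the second term, the estimate~\eqref{eq:bound Hs-G gamma,s} (which uses Lemma~\ref{lem:bounds-forcing-short-range} together with the same pointwise decay) gives $\|(w*|\psi_\tau|^2)\psi_\tau\|_{H^s}\lesssim\varepsilon_0\langle\tau\rangle^{-d/r}\|\psi_0\|_{H^s_\gamma}$, which multiplied by $\langle\tau\rangle^\gamma$ produces the same rate $\varepsilon_0\langle\tau\rangle^{\gamma-d/r}\|\psi_0\|_{H^s_\gamma}$. \emph{The key point} is then the integrability of $\langle\tau\rangle^{\gamma-d/r}$ on $[0,\infty)$: this forces $\gamma-d/r<-1$, that is $\gamma<d/r-1$, which is precisely the upper bound on $\gamma$ assumed in the theorem. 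Integrating delivers~\eqref{eq:boundW+}.

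The bound~\eqref{eq:boundOmega+} is proven in parallel. One starts from $\Omega_+\psi_+-\psi_+=i\int_0^\infty e^{i\tau\langle\nabla\rangle}((w*|\psi_\tau|^2)\psi_\tau)\,\diff\tau$ with $\psi$ the solution of \eqref{eq: hartree semir infinity} provided by Proposition~\ref{th:global-short-range-infinity}; the first task is to establish the analogues of Lemmas~\ref{lm:estim1_S} and~\ref{eq:estim_2S} for this Cauchy problem at infinity, by running the same Duhamel--Gr\"onwall scheme with $\int_t^\infty$ in place of $\int_0^t$ and $\psi_+$ in place of $\psi_0$. The integration argument above then yields the claim with $\psi_+$ replacing $\psi_0$.

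Finally, the composition identities follow from uniqueness in $S^{s,r,p}$. Using~\eqref{eq:boundW+}--\eqref{eq:boundOmega+}, I choose $\delta_w>0$ small enough so that both $W_+$ and $\Omega_+$ send $\mathcal{B}_{H^s_\gamma}(\delta_w)$ into some $\mathcal{B}_{H^s_\gamma}(\delta'_w)$ lying inside the common domain of definition of both operators. Given $\varphi\in\mathcal{B}_{H^s_\gamma}(\delta_w)$, let $\psi$ be the solution of \eqref{eq:hartree_semi_intro} with $\psi|_{t=0}=\varphi$ given by Theorem~\ref{th:global-short-range-intro}; by construction of $W_+\varphi$, $\psi$ lies in $S^{s,r,p}$ and satisfies $\psi_t-e^{-it\langle\nabla\rangle}W_+\varphi\to 0$ in $H^s$ as $t\to\infty$. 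Therefore $\psi$ must coincide with the unique solution in $S^{s,r,p}$ of~\eqref{eq: hartree semir infinity} with asymptotic datum $W_+\varphi$ provided by Proposition~\ref{th:global-short-range-infinity}. Evaluating at $t=0$ gives $\Omega_+W_+\varphi=\psi|_{t=0}=\varphi$, and the reverse identity $W_+\Omega_+\varphi=\varphi$ follows by the symmetric argument.
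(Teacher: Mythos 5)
Your proposal is correct and follows essentially the same route as the paper: the same splitting via Lemma~\ref{lm:xgamma-lin}, the same combination of the weighted multilinear estimates with Lemmas~\ref{lm:estim1_S} and~\ref{eq:estim_2S} to reach the integrable rate $\varepsilon_0\langle\tau\rangle^{\gamma-d/r}$, and the same use of $\gamma<\frac{d}{r}-1$ to integrate. Your added remarks — that the $\Omega_+$ bound requires rerunning the weighted Gr\"onwall estimates for the Cauchy problem at infinity, and that the composition identities rest on uniqueness in $S^{s,r,p}$ — make explicit two points the paper dispatches with ``identical'' and ``follows from the definitions,'' but do not change the argument.
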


\begin{proof}
Since $1\le q\le r<\frac d2$, the conditions in  Theorem \ref{th:scattering-short-range-intro} are satisfied and hence the expression of $W_+\psi_0$ gives
    \begin{align*}
			\|W_+\psi_0-\psi_0\|_{H^s_\gamma} \le  \int_0^\infty \big\| e^{i\tau\langle\nabla\rangle} \langle\nabla\rangle^s\big[(w * |\psi_\tau|^2)\psi_\tau\big]\big\|_{L^2_\gamma}\, \mathrm{d}\tau.
    \end{align*}
  Using first Lemma \ref{lm:xgamma-lin}, we estimate the integrated norm as
    \begin{align*}
        \big\| e^{i\tau\langle\nabla\rangle} \langle\nabla\rangle^s\big[(w * |\psi_\tau|^2)\psi_\tau\big]\big\|_{L^2_\gamma}&\lesssim \big\| (w * |\psi_\tau|^2)\psi_\tau\big\|_{H^s_\gamma} + \langle\tau\rangle^{\gamma} \big\|(w * |\psi_\tau|^2)\psi_\tau\big\|_{H^s}.
    \end{align*}
Combining \eqref{eq:estimate-Lpgamma} and \eqref{eq:bound G gamma,s-G gamma,s}, we obtain
\begin{align*}
   \big\|(w * |\psi_\tau|^2)\psi_\tau\big\|_{H^s_\gamma} &\lesssim  \varepsilon_0\langle \tau\rangle^{\gamma-\frac{d}{r}} \|\psi_0\|_{H^s_\gamma}.
\end{align*}
Similarly, by \eqref{eq:bound Hs-G gamma,s} we have
\begin{align*}
\langle\tau\rangle^{\gamma}\|(w * |\psi_\tau|^2)\psi_\tau\|_{H^s} \lesssim \varepsilon_0\langle \tau\rangle^{\gamma-\frac{d}{r}} \|\psi_0\|_{H^s_\gamma}.
\end{align*}
Since $\gamma<\frac dr-1$, this implies \eqref{eq:boundW+}. The proof of \eqref{eq:boundOmega+} is identical.

Now since $H^s_\gamma\hookrightarrow H^s\cap H^{s,p'}$, we deduce from \eqref{eq:boundW+} and Theorem \ref{cor:scat2} that, for $\delta_w>0$ small enough and for all $\varphi\in\mathcal{B}_{H^s_\gamma}(\delta_w)$, $\Omega_+W_+\varphi$ is well-defined. The fact that $\Omega_+W_+\varphi=\varphi$ then follows from the definitions of $W_+$ and $\Omega_+$. The same holds if one inverts the roles of $W_+$ and $\Omega_+$.
\end{proof}

\subsection{Average velocity and instantaneous velocity}

This subsection contains first justifications that the average velocity and the ``instantaneous'' velocity converge to each other as $t\to\infty$. The results obtained here will then be used in the next subsection to prove the propagation estimates stated in Theorems \ref{th:propag-est-intro} and \ref{th:min-vel-intro}.

We recall the definition of the instantaneous velocity operator $\Theta:=-i\nabla\langle\nabla\rangle^{-1}$. Note the following identity 
\begin{equation}\label{eq:b1}
e^{it\langle\nabla\rangle}x^2e^{-it\langle\nabla\rangle}=(x+t\Theta)^2,
\end{equation}
which follows from direct computations, and  recall that $L^2_1$ is the weighted $L^2$ space, with weight $\langle x \rangle$.

\begin{prop}\label{prop:asympt1}
Under the conditions of Theorem \ref{th:global-short-range-intro}, assuming in addition that $\psi_0\in L^2_1$,  
the global solution $\psi$ to \eqref{eq:hartree_semi_intro} given by Theorem \ref{th:global-short-range-intro} satisfies
\begin{equation*}
\Big\langle\psi_t,\Big(\frac{x}{t}-\Theta\Big)^2\psi_t\Big\rangle_{L^2}\lesssim \big(\langle t\rangle^{-2}+\langle t\rangle^{2-\frac{ 2d}{r}\theta (\frac{r}{q})}\big)\|\psi_0\|^2_{L^2_1} \to 0 , \quad \text{as } t\to\infty.
\end{equation*}
\end{prop}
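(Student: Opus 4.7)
The plan is to reduce the quadratic form $\langle\psi_t,(x/t-\Theta)^2\psi_t\rangle$ to the first moment in $L^2$ of the ``interaction picture'' state, and then exploit the Duhamel formula together with the bounds of Subsection~\ref{subsec:estimates_growth}.

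\medskip
\textbf{Step 1: Reduction via the interaction picture.} From \eqref{eq:b1}, conjugating by $e^{-it\langle\nabla\rangle}$ gives the identity
\[
e^{it\langle\nabla\rangle}(x-t\Theta)^2 e^{-it\langle\nabla\rangle}=x^2,
\]
(equivalently $e^{it\langle\nabla\rangle}(x-t\Theta)e^{-it\langle\nabla\rangle}=x$, since $\Theta$ commutes with $\langle\nabla\rangle$). Setting $\varphi_t:=e^{it\langle\nabla\rangle}\psi_t$, we therefore obtain
\[
\Big\langle\psi_t,\Big(\frac{x}{t}-\Theta\Big)^2\psi_t\Big\rangle_{L^2}=\frac{1}{t^2}\,\|x\varphi_t\|_{L^2}^2.
\]
So it suffices to control $\|x\varphi_t\|_{L^2}$.

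\medskip
\textbf{Step 2: Duhamel and commutation with $x$.} By Duhamel's formula \eqref{eq:Duhamel_secscatt},
\[
\varphi_t=\psi_0-i\int_0^t e^{i\tau\langle\nabla\rangle}(w*|\psi_\tau|^2)\psi_\tau\,\mathrm{d}\tau,
\]
and the Heisenberg relation $e^{-i\tau\langle\nabla\rangle}x\,e^{i\tau\langle\nabla\rangle}=x-\tau\Theta$ gives $x\,e^{i\tau\langle\nabla\rangle}=e^{i\tau\langle\nabla\rangle}(x-\tau\Theta)$. Multiplying the Duhamel formula by $x$, taking $L^2$ norms, using $\|\Theta\|_{\mathcal{B}(L^2)}\le 1$ and $\|x f\|_{L^2}\le\|f\|_{L^2_1}$, we obtain
\[
\|x\varphi_t\|_{L^2}\le\|\psi_0\|_{L^2_1}+\int_0^t\|(w*|\psi_\tau|^2)\psi_\tau\|_{L^2_1}\,\mathrm{d}\tau+\int_0^t \tau\,\|(w*|\psi_\tau|^2)\psi_\tau\|_{L^2}\,\mathrm{d}\tau.
\]

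\medskip
\textbf{Step 3: Application of the weighted bounds.} The second integral is handled directly by \eqref{eq:boundL2-2}, while for the first we use \eqref{eq:boundL2-1} with $\gamma=1$ combined with Lemma~\ref{lem:x-evol} at $\gamma=1$, which gives $\|\psi_\tau\|_{L^2_1}\lesssim\|\psi_0\|_{L^2_1}+\langle\tau\rangle\|\psi_0\|_{L^2}$. This yields
\[
\|(w*|\psi_\tau|^2)\psi_\tau\|_{L^2_1}\lesssim\varepsilon_0\langle\tau\rangle^{-\frac{d}{r}\theta(\frac{r}{q})}\big(\|\psi_0\|_{L^2_1}+\langle\tau\rangle\|\psi_0\|_{L^2}\big),
\]
\[
\tau\|(w*|\psi_\tau|^2)\psi_\tau\|_{L^2}\lesssim\varepsilon_0\langle\tau\rangle^{1-\frac{d}{r}\theta(\frac{r}{q})}\|\psi_0\|_{L^2}.
\]
Since $r<d$ and $q<d$ imply $\frac{d}{r}\theta(\frac{r}{q})=\min\{\frac{d}{r},\frac{d}{q}\}>1$, integrating over $[0,t]$ gives
\[
\|x\varphi_t\|_{L^2}\lesssim\|\psi_0\|_{L^2_1}\big(1+\langle t\rangle^{2-\frac{d}{r}\theta(\frac{r}{q})}\big).
\]

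\medskip
\textbf{Step 4: Conclusion.} Squaring and dividing by $t^2$ yields
\[
\Big\langle\psi_t,\Big(\frac{x}{t}-\Theta\Big)^2\psi_t\Big\rangle_{L^2}\lesssim\big(\langle t\rangle^{-2}+\langle t\rangle^{2-\frac{2d}{r}\theta(\frac{r}{q})}\big)\|\psi_0\|_{L^2_1}^2,
\]
and the convergence to zero follows from $\frac{d}{r}\theta(\frac{r}{q})>1$.

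\medskip
The argument is essentially a direct assembly of already available ingredients; the only subtlety is to recognize that the key identity~\eqref{eq:b1} reduces the phase-space quantity on the left-hand side to a purely spatial moment in the interaction picture. No significant obstacle is expected.
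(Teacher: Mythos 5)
Your proposal is correct and follows essentially the same route as the paper: the identity \eqref{eq:b1} reduces the quadratic form to $t^{-2}\|x\,e^{it\langle\nabla\rangle}\psi_t\|_{L^2}^2$, Duhamel's formula is applied in the interaction picture, and the weighted bounds \eqref{eq:boundL2-0}--\eqref{eq:boundL2-2} close the estimate. The only cosmetic difference is that you re-derive the $\gamma=1$ case of Lemma~\ref{lm:xgamma-lin} directly from the Heisenberg relation $x\,e^{i\tau\langle\nabla\rangle}=e^{i\tau\langle\nabla\rangle}(x-\tau\Theta)$ rather than citing the lemma, which is exactly its content.
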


\begin{proof}
Using \eqref{eq:b1}, we rewrite
\begin{equation}\label{eq:b2}
\Big\langle\psi_t,\Big(\frac{x}{t}-\Theta\Big)^2\psi_t\Big\rangle_{L^2}=\frac{1}{t^2}\big\langle e^{it\langle\nabla\rangle}\psi_t,x^2e^{it\langle\nabla\rangle}\psi_t\big\rangle_{L^2}.
\end{equation}
Duhamel's formula gives
\begin{equation}\label{eq:b3}
e^{it\langle\nabla\rangle}\psi_t=\psi_0-i \int_0^t e^{i\tau\langle\nabla\rangle} (w * |\psi_\tau|^2)\psi_\tau\, \mathrm{d}\tau ,
\end{equation}
and we can therefore estimate
\begin{equation}\label{eq:b4}
\big\| e^{it\langle\nabla\rangle}\psi_t\big\|_{L^2_1}\le \|\psi_0\|_{L^2_1}+ \int_0^t \big\| e^{i\tau\langle\nabla\rangle} (w * |\psi_\tau|^2)\psi_\tau\big\|_{L^2_1}\, \mathrm{d}\tau .
\end{equation}
Applying Lemma \ref{lm:xgamma-lin} gives
\begin{align*}
\big\| e^{i\tau\langle\nabla\rangle} (w * |\psi_\tau|^2)\psi_\tau\big\|_{L^2_1}\lesssim \big\|(w * |\psi_\tau|^2)\langle x\rangle\psi_\tau\big\|_{L^2}+\langle \tau\rangle\big\|(w * |\psi_\tau|^2)\psi_\tau\big\|_{L^2}.
\end{align*}
Combining \eqref{eq:boundL2-0}, \eqref{eq:boundL2-1} and \eqref{eq:boundL2-2}, we then obtain 
\begin{align*}
\big\| e^{i\tau\langle\nabla\rangle} (w * |\psi_\tau|^2)\psi_\tau\big\|_{L^2_1}\lesssim\varepsilon_0\langle\tau\rangle^{1-\frac dr\theta (\frac{r}{q})}\|\psi_0\|_{L^2_1}.
\end{align*}
Since $\frac dr\theta (\frac{r}{q})>1$, this implies that
\begin{equation*}
 \int_0^t \big\| e^{i\tau\langle\nabla\rangle} (w * |\psi_\tau|^2)\psi_\tau\big\|_{L^2_1}\, \mathrm{d}\tau \lesssim\varepsilon_0\langle t\rangle^{2-\frac dr \theta (\frac{r}{q})}\|\psi_0\|_{L^2_1}.
\end{equation*}
Together with \eqref{eq:b2}--\eqref{eq:b4}, this gives the statement of the proposition.
\end{proof}

The previous proposition, combined with functional calculus, implies the following result.

\begin{prop}\label{prop:diff-f}
Under the conditions of Theorem \ref{th:global-short-range-intro}, assuming in addition that $\psi_0\in L_1^2$,  
the global solution $\psi$ to \eqref{eq:hartree_semi_intro} given by Theorem \ref{th:global-short-range-intro} satisfies
\begin{equation*}
\Big\|\Big(f\Big(\frac{x^2}{t^2}\Big)-f(\Theta^2)\Big)\psi_t\Big\|_{L^2}\lesssim (\langle t \rangle^{-1} + \langle  t\rangle^{1-\frac dr \theta (\frac{r}{q})}) \|\psi_0\|_{L^2_1} \to 0 , \quad \text{as } t\to\infty,
\end{equation*}
for all $f\in \mathcal{C}_0^\infty(\mathbb{R})$.
\end{prop}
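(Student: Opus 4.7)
The idea is to rewrite the difference via Fourier analysis, reducing it to the quantity $\frac{x}{t}-\Theta$ already controlled in Proposition~\ref{prop:asympt1}. Set $g(y):=f(|y|^{2})$ for $y\in\mathbb{R}^d$; then $g\in\mathcal{C}_0^\infty(\mathbb{R}^d)$ and $\hat g\in\mathcal{S}(\mathbb{R}^d)$. Since $\Theta=(\Theta_1,\dots,\Theta_d)$ is a commuting family of bounded self-adjoint operators with joint spectrum in the closed unit ball, joint functional calculus yields $f(\Theta^2)=g(\Theta)$; trivially $f(x^2/t^2)=g(x/t)$ as a multiplication operator. Fourier inversion then provides the norm-convergent representation
\[
    [g(x/t)-g(\Theta)]\psi_t=(2\pi)^{-d/2}\int_{\mathbb{R}^d}\hat g(\xi)\bigl[e^{i\xi\cdot x/t}-e^{i\xi\cdot\Theta}\bigr]\psi_t\,\mathrm{d}\xi.
\]

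Writing $A_j:=x_j/t-\Theta_j$ and $A=(A_1,\dots,A_d)$, differentiating $s\mapsto e^{is\xi\cdot x/t}e^{i(1-s)\xi\cdot\Theta}$ (and using that $\xi\cdot x/t$ and $\xi\cdot\Theta$ commute with the corresponding one-parameter groups) produces the Duhamel-type identity
\[
    e^{i\xi\cdot x/t}-e^{i\xi\cdot\Theta}=i\int_0^1 e^{is\xi\cdot x/t}\,\xi\cdot A\,e^{i(1-s)\xi\cdot\Theta}\,\mathrm{d}s.
\]
To bring $A_j$ next to $\psi_t$, I commute past $e^{i(1-s)\xi\cdot\Theta}$; the key observation is that, by the standard rule $[x_j,M_m]=\mathcal{F}^{-1}(i\partial_{\eta_j}m)\mathcal{F}$ for a Fourier multiplier $M_m$, the commutator $[x_j,e^{i(1-s)\xi\cdot\Theta}]$ is the Fourier multiplier with symbol
\[
    -(1-s)\Bigl[\tfrac{\xi_j}{\langle\eta\rangle}-\tfrac{(\xi\cdot\eta)\eta_j}{\langle\eta\rangle^3}\Bigr]e^{i(1-s)\xi\cdot\eta/\langle\eta\rangle},
\]
whose modulus is bounded uniformly in $\eta$ and $s\in[0,1]$ by $2|\xi|$, using $|\eta|^2/\langle\eta\rangle^3\leq\langle\eta\rangle^{-1}\leq1$. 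Hence $\|[x_j,e^{i(1-s)\xi\cdot\Theta}]\|_{\mathcal{B}(L^2)}\lesssim|\xi|$, and combining with mass conservation $\|\psi_t\|_{L^2}=\|\psi_0\|_{L^2}$ and Cauchy-Schwarz on $\sum_j|\xi_j|\|A_j\psi_t\|_{L^2}$ gives
\[
    \bigl\|[e^{i\xi\cdot x/t}-e^{i\xi\cdot\Theta}]\psi_t\bigr\|_{L^2}\lesssim|\xi|\,\|A\psi_t\|_{L^2}+\tfrac{|\xi|^2}{t}\|\psi_0\|_{L^2}.
\]

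Integrating this bound against $|\hat g(\xi)|$ (finite moments since $\hat g\in\mathcal{S}$) and applying Proposition~\ref{prop:asympt1} to $\|A\psi_t\|_{L^2}^2=\langle\psi_t,A^2\psi_t\rangle_{L^2}$ yields the claimed estimate for $t\geq1$; for $t\in(0,1]$ the trivial bound $\|(f(x^2/t^2)-f(\Theta^2))\psi_t\|_{L^2}\leq 2\|f\|_{L^\infty}\|\psi_0\|_{L^2}$ is absorbed into the $\langle t\rangle^{-1}\|\psi_0\|_{L^2_1}$ term. The principal obstacle is the commutator bound in the second paragraph: it works precisely because the symbol $\eta/\langle\eta\rangle$ of $\Theta$ is smooth with bounded $\eta$-derivatives, producing a clean $|\xi|/t$ error rather than something that would grow in $\eta$ and spoil the $\xi$-integration.
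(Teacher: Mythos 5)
Your proof is correct, but it takes a genuinely different route from the paper's. The paper localizes $f$ via the Helffer--Sj\"ostrand formula with an almost analytic extension, factors $\Theta^2-\frac{x^2}{t^2}=\big(\Theta+\frac{x}{t}\big)\cdot\big(\Theta-\frac{x}{t}\big)$ inside the resolvent integral up to $\frac1t[\Theta,x]$-type errors, and commutes $\Theta-\frac{x}{t}$ through $(\Theta^2-z)^{-1}$ using boundedness of $\Theta$ and $[\Theta,x]$, arriving at $\|(f(\frac{x^2}{t^2})-f(\Theta^2))\psi_t\|_{L^2}\le C_f t^{-1}\|\psi_t\|_{L^2}+C\|(\Theta-\frac{x}{t})\psi_t\|_{L^2}$ before invoking Proposition~\ref{prop:asympt1}. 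You instead pass to $g(y)=f(|y|^2)$, use Fourier inversion of $\hat g\in\mathcal{S}$, interpolate between the two exponentials by the Duhamel identity, and control the commutator $[x_j,e^{i(1-s)\xi\cdot\Theta}]$ directly as a Fourier multiplier with symbol bounded by $2|\xi|$; this lands on exactly the same two terms ($|\xi|\,\|(\frac{x}{t}-\Theta)\psi_t\|_{L^2}$ and $\frac{|\xi|^2}{t}\|\psi_0\|_{L^2}$, integrable against $|\hat g|$). Your argument is more elementary — no almost analytic extensions, just the explicit smoothness and boundedness of the symbol $\eta/\langle\eta\rangle$, which you correctly identify as the crucial point. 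The Helffer--Sj\"ostrand route is somewhat more robust (it does not require $f(|\cdot|^2)$ to have an integrable, moment-bearing Fourier transform and generalizes to less regular $f$ or to functions of non-commuting families), but for $f\in\mathcal{C}_0^\infty$ the two are equivalent in strength and yield the same rate. One small point worth making explicit: applying the unbounded factor $\xi\cdot x/t$ to $e^{i(1-s)\xi\cdot\Theta}\psi_t$ requires that this vector lie in $L^2_1$, which follows from $\psi_t\in L^2_1$ (Lemma~\ref{lem:x-evol}) together with the very commutator bound you prove, so the manipulation is legitimate; also note that $\|A\psi_t\|_{L^2}^2=\langle\psi_t,A^2\psi_t\rangle$ uses the self-adjointness of each $A_j=x_j/t-\Theta_j$, which holds since the symbols are real.
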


\begin{proof}
Let $F$ be an almost analytic extension of $f$. This means that $F \in \mathcal{C}_0^\infty(\mathbb{C})$, $F|_{\mathbb{R}}=f$ and, for all $n\in\mathbb{N}$, $|\frac{\partial F}{\partial \bar z}(z)|\le C_n|\mathrm{Im}(z)|^n$, with $C_n>0$. Using the Helffer-Sj\"ostrand representation (see e.g. \cite{derezinski_ger}), we write
\begin{align*}
	&	f\Big(\frac{x^2}{t^2}\Big)-f(\Theta^2)\\
	&=\frac1\pi\int \frac{\partial F}{\partial \bar z}(z) \Big(\frac{x^2}{t^2}-z\Big)^{-1}\Big(\Theta^2-\frac{x^2}{t^2}\Big)(\Theta^2-z)^{-1} \mathrm{d}\mathrm{Re}(z)\mathrm{d}\mathrm{Im}(z) \\
		&=\int \frac{\partial F}{\partial \bar z}(z) \Big(\frac{x^2}{t^2}-z\Big)^{-1}\Big(\Theta+\frac{x}{t}\Big)\cdot\Big(\Theta-\frac{x}{t}\Big)(\Theta^2-z)^{-1} \mathrm{d}\mathrm{Re}(z)d\mathrm{Im}(z)\\
		&\quad-\frac{1}{t}\int \frac{\partial F}{\partial \bar z}(z) \Big(\frac{x^2}{t^2}-z\Big)^{-1}[\Theta,x](\Theta^2-z)^{-1} \mathrm{d}\mathrm{Re}(z)\mathrm{d}\mathrm{Im}(z),
\end{align*}
where $[\Theta,x]:=\sum_j[\Theta_j,x_j]$ is a bounded operator on $L^2$. In particular, from the properties of the almost analytic extension $F$, we see that the last term is a bounded operator, with bound $C_ft^{-1}$. For the first term on the right-hand side, we commute $\Theta-\frac{x}{t}$ through $(\Theta^2-z)^{-1}$, obtaining
\begin{align*}
		&\int \frac{\partial F}{\partial \bar z}(z) \Big(\frac{x^2}{t^2}-z\Big)^{-1}\Big(\Theta+\frac{x}{t}\Big)\cdot\Big(\Theta-\frac{x}{t}\Big)(\Theta^2-z)^{-1} \mathrm{d}\mathrm{Re}(z)\mathrm{d}\mathrm{Im}(z)\\
		&=\int \frac{\partial F}{\partial \bar z}(z) \Big(\frac{x^2}{t^2}-z\Big)^{-1}\Big(\Theta+\frac{x}{t}\Big)\cdot(\Theta^2-z)^{-1}\Big(\Theta-\frac{x}{t}\Big) \mathrm{d}\mathrm{Re}(z)\mathrm{d}\mathrm{Im}(z)\\
		&\quad+\frac{1}{t}\int \frac{\partial F}{\partial \bar z}(z) \Big(\frac{x^2}{t^2}-z\Big)^{-1}\\
		&  \qquad \qquad\Big(\Theta+\frac{x}{t}\Big)\cdot(\Theta^2-z)^{-1}(\Theta[\Theta,x]+[\Theta,x]\Theta)(\Theta^2-z)^{-1} \mathrm{d}\mathrm{Re}(z)\mathrm{d}\mathrm{Im}(z).
\end{align*}
As before, using that $\Theta$ and $[\Theta,x]$ are bounded operators on $L^2$, one easily sees that the last term is a bounded operator, with bound $C_ft^{-1}$. Taking the expectation in $\psi_t$ and using again the properties of the almost analytic extension $F$, it follows that
\begin{align*}
\Big\|\Big(f\Big(\frac{x^2}{t^2}\Big)-f(\Theta^2)\Big)\psi_t\Big\|_{L^2}\le C_f t^{-1} \|\psi_t\|_{L^2} + C\Big\|\Big(\Theta-\frac{x}{t}\Big)\psi_t\Big\|_{L^2}.
\end{align*}
Applying Proposition \ref{prop:asympt1} concludes the proof.
\end{proof}

\subsection{Phase-space and minimal velocity estimates}

We are now in position to prove Theorems \ref{th:propag-est-intro} and \ref{th:min-vel-intro}. We begin with the following proposition, which implies Theorem \ref{th:propag-est-intro} and provides a further result.

\begin{prop}\label{th:propag-estim}
Let $f,g\in \mathcal{C}_0^\infty(\mathbb{R})$ be such that $\mathrm{supp}(g)\,\cap\,\mathrm{supp}(f)=\emptyset$. Under the conditions of Theorem \ref{th:global-short-range-intro}, assuming in addition that $\psi_0\in L_1^2$,  
the global solution $\psi$ to \eqref{eq:hartree_semi_intro} given by Theorem \ref{th:global-short-range-intro} satisfies
\begin{equation}\label{eq:propag1}
\Big\|g\Big(\frac{x^2}{t^2}\Big)f(\Theta^2) \psi_t\Big\|_{L^2} \lesssim(\langle t \rangle^{-1} + \langle  t\rangle^{1-\frac dr \theta(\frac{r}{q})})\|\psi_0\|_{L_1^2}.
\end{equation}
Additionally, if $\psi_0$ is chosen such that the associated scattering state $\psi_+$ given by Theorem \ref{th:scattering-short-range-intro} satisfies $f(\Theta^2)\psi_+=\psi_+$, then
\begin{equation}\label{eq:propag2}
\Big\|g\Big(\frac{x^2}{t^2}\Big) \psi_t\Big\|_{L^2} \lesssim (\langle t \rangle^{-1} + \langle  t\rangle^{1-\frac dr})\|\psi_0\|_{L^2_1} + \langle t\rangle^{1-\frac dr \theta(\frac{r}{q})} \|\psi_0\|_{H^s\cap H^{s,p'}}.
\end{equation}
\end{prop}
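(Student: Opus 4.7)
The plan is to split Proposition \ref{th:propag-estim} into its two parts and, in each, reduce the estimate to results already established via a disjoint-supports algebraic identity. For \textbf{Part 1}, the hypothesis $\mathrm{supp}(g)\cap\mathrm{supp}(f)=\emptyset$ gives $g\cdot f=0$ as a function, hence $g(x^2/t^2)f(x^2/t^2)=0$ by the functional calculus of the commuting multiplier $x^2/t^2$. Consequently,
\[
g\Bigl(\tfrac{x^2}{t^2}\Bigr)f(\Theta^2)\psi_t=g\Bigl(\tfrac{x^2}{t^2}\Bigr)\Bigl[f(\Theta^2)-f\Bigl(\tfrac{x^2}{t^2}\Bigr)\Bigr]\psi_t,
\]
and since $\|g(x^2/t^2)\|_{\mathcal{B}(L^2)}\le\|g\|_{L^\infty}$, Proposition \ref{prop:diff-f} applied to $\psi_t$ immediately yields \eqref{eq:propag1}.

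For \textbf{Part 2}, I would decompose $\psi_t=\varphi_t+\eta_t$ with $\varphi_t:=e^{-it\langle\nabla\rangle}\psi_+$ the free evolution of the scattering state and $\eta_t:=\psi_t-\varphi_t$ the scattering remainder. The remainder piece is controlled by the quantitative scattering rate of Remark \ref{rk:decay-scat0-intro-rate}, giving
\[
\bigl\|g(x^2/t^2)\eta_t\bigr\|_{L^2}\le\|g\|_{L^\infty}\|\eta_t\|_{H^s}\lesssim\langle t\rangle^{1-\frac{d}{r}\theta(r/q)}\|\psi_0\|_{H^s\cap H^{s,p'}},
\]
which accounts for the second term on the right-hand side of \eqref{eq:propag2}. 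For the free-flow piece, since $f(\Theta^2)$ is a Fourier multiplier it commutes with $e^{-it\langle\nabla\rangle}$, so the assumption $f(\Theta^2)\psi_+=\psi_+$ propagates to $\varphi_t=f(\Theta^2)\varphi_t$. Using the disjoint supports a second time one obtains $g(x^2/t^2)\varphi_t=g(x^2/t^2)[f(\Theta^2)-f(x^2/t^2)]\varphi_t$, and reproducing the Helffer--Sj\"ostrand computation from the proof of Proposition \ref{prop:diff-f}, now with $\varphi_t$ in place of $\psi_t$, bounds this by $t^{-1}\|\varphi_t\|_{L^2}+\|(x/t-\Theta)\varphi_t\|_{L^2}$. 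The crucial simplification in the free case is the exact identity $(x-t\Theta)e^{-it\langle\nabla\rangle}=e^{-it\langle\nabla\rangle}x$, an immediate consequence of \eqref{eq:b1}, which yields $\|(x/t-\Theta)\varphi_t\|_{L^2}=t^{-1}\|x\psi_+\|_{L^2}$ and therefore $\|g(x^2/t^2)\varphi_t\|_{L^2}\lesssim t^{-1}\|\psi_+\|_{L^2_1}$.

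It then remains to control $\|\psi_+\|_{L^2_1}$ in terms of $\|\psi_0\|_{L^2_1}$ (plus possibly $\|\psi_0\|_{H^s\cap H^{s,p'}}$). I would use the integral representation $\psi_+=\psi_0-i\int_0^\infty e^{i\tau\langle\nabla\rangle}(w*|\psi_\tau|^2)\psi_\tau\,\mathrm{d}\tau$, apply Lemma \ref{lm:xgamma-lin} to trade the linear flow $e^{i\tau\langle\nabla\rangle}$ for an $L^2_1$-norm plus an $\langle\tau\rangle$-weighted $L^2$ term, and then invoke the weighted multilinear estimates of Lemma \ref{lemma: weight multilin 2} together with the decay bounds of Theorem \ref{th:global-short-range-intro} and Lemma \ref{lem:x-evol}, exactly as in the proof of Proposition \ref{prop:asympt1}. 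The \textbf{main obstacle} is precisely to track the exponents carefully enough to recover the stated rate $\langle t\rangle^{1-d/r}$ rather than the weaker $\langle t\rangle^{1-\frac{d}{r}\theta(r/q)}$ that would come from routing the $L^2_1$ term through $\psi_t$ directly; the improvement is obtained by exploiting the free-flow structure of $\varphi_t$, where the asymptotic velocity is controlled by the exact conjugation identity and the full nonlinear interaction only enters through $\eta_t$, whose decay is already captured at the scattering rate.
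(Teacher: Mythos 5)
Your treatment of \eqref{eq:propag1} is exactly the paper's: insert $f(\Theta^2)-f(x^2/t^2)$ using $gf=0$ and apply Proposition \ref{prop:diff-f}. For \eqref{eq:propag2}, however, you take a genuinely different route (free evolution of $\psi_+$ plus scattering remainder, instead of the paper's splitting $\psi_t=\tilde f(\Theta^2)\psi_t+(1-\tilde f)(\Theta^2)\psi_t$ with $\tilde f=1$ on $\mathrm{supp}f$ and $\mathrm{supp}\tilde f\cap\mathrm{supp}g=\emptyset$), and this route has a genuine gap. Your bound on the free piece is $t^{-1}\|x\psi_+\|_{L^2}$, so everything hinges on $\psi_+\in L^2_1$ together with a quantitative bound $\|\psi_+\|_{L^2_1}\lesssim\|\psi_0\|_{L^2_1}+\dots$. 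But under the hypotheses of Theorem \ref{th:global-short-range-intro} this is false in general: running your own plan (Lemma \ref{lm:xgamma-lin} plus the weighted multilinear estimates, as in Proposition \ref{prop:asympt1}) gives
\begin{equation*}
\big\|e^{i\tau\jnabla}(w*|\psi_\tau|^2)\psi_\tau\big\|_{L^2_1}\lesssim\varepsilon_0\,\langle\tau\rangle^{1-\frac dr\theta(\frac rq)}\|\psi_0\|_{L^2_1},
\end{equation*}
and $\int_0^\infty\langle\tau\rangle^{1-\frac dr\theta(r/q)}\,\mathrm{d}\tau$ diverges whenever $\frac dr\theta(\frac rq)\le2$, which the hypotheses allow (e.g.\ $d=3$, $r=2$, $q=1$ gives exponent $1-\frac32=-\frac12$). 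This is why Proposition \ref{prop:asympt1} only asserts $\|e^{it\jnabla}\psi_t\|_{L^2_1}\lesssim\langle t\rangle^{2-\frac dr\theta(r/q)}\|\psi_0\|_{L^2_1}$, which may grow: the weighted norm need not pass to the limit $\psi_+$. So the obstacle you flag at the end is not a matter of ``tracking exponents''; the quantity $\|x\psi_+\|_{L^2}$ that your free-piece estimate produces can be infinite, and the argument does not close. (It would close under the stronger hypotheses $r<\frac d2$, $q\le r$, $\gamma>\frac d{2r}$ of Theorem \ref{thm:invertibility}, where $\psi_+\in H^s_\gamma$ is established — but that is a strictly smaller class than the proposition covers.)

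The paper's proof avoids weighted control of $\psi_+$ altogether. Writing $\psi_t=\tilde f(\Theta^2)\psi_t+(1-\tilde f)(\Theta^2)\psi_t$, the first term is handled by \eqref{eq:propag1} applied to the pair $(\tilde f,g)$, which only uses $\psi_t\in L^2_1$ (available from Lemma \ref{lem:x-evol}); for the second term, $(1-\tilde f)(\Theta^2)\psi_+=0$ and unitarity give
\begin{equation*}
\big\|(1-\tilde f)(\Theta^2)\psi_t\big\|_{L^2}=\big\|(1-\tilde f)(\Theta^2)\big(e^{it\jnabla}\psi_t-\psi_+\big)\big\|_{L^2}\lesssim\|\psi_t-e^{-it\jnabla}\psi_+\|_{H^s},
\end{equation*}
which is exactly the unweighted scattering rate of Remark \ref{rk:decay-scat0-intro-rate}. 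If you want to keep your decomposition, you must replace the free-piece estimate $t^{-1}\|x\psi_+\|_{L^2}$ by something that only sees $\psi_+$ through spectral localization in $\Theta$, which is essentially what the $\tilde f$ splitting accomplishes.
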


\begin{proof}
To prove \eqref{eq:propag1}, it suffices to use that $\mathrm{supp}(g)\cap\mathrm{supp}(f)=\emptyset$, which gives
\begin{equation*}
\Big\|g\Big(\frac{x^2}{t^2}\Big)f(\Theta^2) \psi_t\Big\|_{L^2}=\Big\| g\Big(\frac{x^2}{t^2}\Big)\Big(f(\Theta^2)-f\Big(\frac{x^2}{t^2}\Big)\Big)\psi_t\Big\|_{L^2},
\end{equation*}
and then apply Proposition \ref{prop:diff-f}.

To prove \eqref{eq:propag2}, we consider a function $\tilde f\in \mathcal{C}_0^\infty(\mathbb{R})$ such that $\tilde f=1$ on $\mathrm{supp}(f)$ and $\mathrm{supp}(\tilde f)\cap\mathrm{supp}(g)=\emptyset$. Then we write
\begin{equation*}
\Big\|g\Big(\frac{x^2}{t^2}\Big) \psi_t\Big\|_{L^2} \le\Big\|g\Big(\frac{x^2}{t^2}\Big)\tilde f(\Theta^2) \psi_t\Big\|_{L^2}+\Big\|g\Big(\frac{x^2}{t^2}\Big)(1-\tilde f)(\Theta^2) \psi_t\Big\|_{L^2}.
\end{equation*}
The first term is estimated by \eqref{eq:propag1}, while for the second term, we have
\begin{align*}
\Big\|g\Big(\frac{x^2}{t^2}\Big)(1-\tilde f)(\Theta^2) \psi_t\Big\|_{L^2}&\lesssim \big\|(1-\tilde f)(\Theta^2) \psi_t\big\|_{L^2} \\
&=\big\|(1-\tilde f)(\Theta^2) e^{it\langle\nabla\rangle}\psi_t\big\|_{L^2} \\
&= \big\|(1-\tilde f)(\Theta^2) (e^{it\langle\nabla\rangle}\psi_t-\psi_+)\big\|_{L^2},
\end{align*}
where we used the unitarity of $e^{it\langle\nabla\rangle}$, and that $f(\Theta^2)\psi_+=\psi_+$ in the last inequality. This gives the result by the unitarity of $e^{it\langle\nabla\rangle}$ and Theorem \ref{th:scattering-short-range-intro}.
\end{proof}

Using the previous proposition, we can now easily prove Theorem \ref{th:min-vel-intro}.

\begin{proof}[Proof of Theorem \ref{th:min-vel-intro}]
Let $\psi_0$ be as in the statement of Theorem \ref{th:min-vel-intro}. By Theorem \ref{thm:invertibility}, we have $\psi_0\in \mathcal{B}_{H^s_\gamma}(2\delta_w)$ provided that $\delta_w$ is small enough. Since $H^s_\gamma\hookrightarrow H^s\cap H^{s,p'}$, we can apply Theorem \ref{th:global-short-range-intro} to obtain a solution $\psi$ to \eqref{eq:hartree_semi_intro} associated to the initial state $\psi_0$. Now we can consider $f,g\in \mathcal{C}_0^\infty(\mathbb{R})$ such that $\mathrm{supp}(f)\cap\mathrm{supp}(g)=\emptyset$, $\mathbf{1}_{[\alpha,1]}=\mathbf{1}_{[\alpha,1]}f$ and $\mathbf{1}_{[0,\alpha)}=\mathbf{1}_{[0,\alpha)}g$. Theorem \ref{th:min-vel-intro} then follows from Proposition \ref{th:propag-estim}.
\end{proof}

	\appendix
	
	\section{Commutation of weights and derivatives}

\label{app: commute}

In this appendix we prove the boundedness in $L^p$ spaces of some Fourier multipliers used in the main text, and we next deduce from it the equivalence of the norms $\|f\|_{H^{s,p}_\gamma}=\|\langle x \rangle ^\gamma \langle \nabla \rangle^s f \|_{L^p}$ and $\|\langle \nabla \rangle^s \langle x \rangle ^\gamma f \|_{L^p},$ for all $0\leq \gamma\leq 2$, $s\geq 0 $ and $1\le p \le \infty$. Most of the results recalled here are well-known (see e.g. \cite{Loefstroem}), especially in the case where $1<p<\infty$. We provide some details for self-containedness.

We will first need the following representation formulas.

\begin{lemma}
		\label{lemma: expr sqrt}
		Let $0\leq \alpha <2 $. Then there exists $c_\alpha>0$ such that 
		\begin{align*}
		    \langle \nabla \rangle^\alpha =&\, \frac{1}{c_\alpha} \int_0^\infty \big(\, u^{\frac\alpha2-1}- u^{\frac\alpha2} (-\Delta + 1 + u )^{-1}\, \big) \mathrm{d}u ,\\
		     \langle x \rangle^\alpha= &\,\frac{1}{c_\alpha} \int_0^\infty \big(\,u^{\frac\alpha2-1}- u^{\frac\alpha2} (x^2 +1+ u )^{-1}\, \big) \mathrm{d}u,
		\end{align*}
as operators on $\mathcal{S}'(\mathbb{R}^d)$.
	\end{lemma}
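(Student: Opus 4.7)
The plan is to reduce both identities to a single scalar formula. Namely, for any $\lambda > 0$ and $0 < \alpha < 2$, I will verify the elementary identity
\[
\lambda^{\alpha/2} = \frac{1}{c_\alpha}\int_0^\infty \Big(u^{\alpha/2-1} - u^{\alpha/2}(\lambda+u)^{-1}\Big)\,\mathrm{d}u,
\qquad c_\alpha := \int_0^\infty \frac{v^{\alpha/2-1}}{1+v}\,\mathrm{d}v = \frac{\pi}{\sin(\pi\alpha/2)}.
\]
The case $\alpha = 0$ is trivial (both sides of each claimed formula equal $\mathrm{Id}$ once interpreted correctly), so I focus on $0 < \alpha < 2$.

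The first step is to combine the two terms in the integrand into
\[
u^{\alpha/2-1} - u^{\alpha/2}(\lambda+u)^{-1} \;=\; \frac{\lambda\, u^{\alpha/2-1}}{\lambda+u}\,,
\]
which is nonnegative and manifestly integrable on $(0,\infty)$ for $0<\alpha<2$ (behaving like $u^{\alpha/2-1}$ at $0$ and like $u^{\alpha/2-2}$ at $\infty$). The substitution $u = \lambda v$ then gives $\int_0^\infty u^{\alpha/2-1}\lambda/(\lambda+u)\,\mathrm{d}u = c_\alpha \lambda^{\alpha/2}$, and the standard beta-function evaluation yields the explicit value of $c_\alpha$.

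The second step lifts this scalar identity to the two operator identities by functional calculus. For $\langle\nabla\rangle^\alpha$, I apply the scalar formula pointwise in Fourier space with $\lambda = 1+|\xi|^2$: since the Fourier multiplier of $(-\Delta+1+u)^{-1}$ is $(1+|\xi|^2+u)^{-1}$, the identity
\[
(1+|\xi|^2)^{\alpha/2} = \frac{1}{c_\alpha}\int_0^\infty \Big(u^{\alpha/2-1} - u^{\alpha/2}(1+|\xi|^2+u)^{-1}\Big)\,\mathrm{d}u
\]
holds pointwise in $\xi$. Pairing with any $\varphi,\chi \in \mathcal{S}$ via $\langle \langle\nabla\rangle^\alpha \varphi,\chi\rangle = \int (1+|\xi|^2)^{\alpha/2}\widehat\varphi(\xi)\overline{\widehat\chi(\xi)}\,\mathrm{d}\xi$ and applying Fubini (justified by the uniform bound $|u^{\alpha/2-1}\lambda/(\lambda+u)| \leq u^{\alpha/2-1}$ for $u \leq 1$ and $\leq \lambda\, u^{\alpha/2-2}$ for $u\geq 1$, combined with the rapid decay of $\widehat\varphi\overline{\widehat\chi}$) transfers the identity to $\mathcal{S}'$. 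The argument for $\langle x\rangle^\alpha$ is identical but even simpler, since $\langle x\rangle^\alpha$ and $(x^2+1+u)^{-1}$ are multiplication operators: the scalar identity with $\lambda = 1+|x|^2$ applied pointwise in $x$, followed by Fubini against a test function, gives the second formula.

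The only subtle point is the justification of the interchange of integration, but the bounds above make it a routine application of Fubini on $\mathcal{S}\times(0,\infty)$; there is no real obstacle here, and the proof is essentially a bookkeeping of the scalar identity together with the observation that both $\langle\nabla\rangle^\alpha$ and $\langle x\rangle^\alpha$ are functions of positive operators with explicit symbolic/multiplication calculus.
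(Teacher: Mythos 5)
Your proof is correct and follows essentially the same route as the paper's: establish the scalar identity $b^{\alpha/2}=\frac{1}{c_\alpha}\int_0^\infty\big(u^{\alpha/2-1}-u^{\alpha/2}(b+u)^{-1}\big)\,\mathrm{d}u$ by the substitution $u=bv$, then substitute $b=1+|\xi|^2$ (in Fourier space) and $b=1+|x|^2$. Your additional care with the Fubini justification and the explicit value of $c_\alpha$ are fine but not a different method.
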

	\begin{proof}
		Let $ b>0$, then it holds
		\begin{align*}
			\int_0^\infty \frac{1}{u^{1-\frac\alpha2}(b+u)} \mathrm{d}u = \int_0^\infty \frac{1}{u^{1-\frac\alpha2}(1+u/b)}\frac{\mathrm{d}u}{b}= \frac{1}{b^{1-\frac\alpha2}}\int_0^\infty \frac{1}{t^{1-\frac\alpha2}(1+t)}\mathrm{d}t
		\end{align*}
		from which we obtain 
		\begin{equation*}
			b^{\alpha/2} = \frac{b}{c_\alpha} \int_0^\infty \frac{1}{u^{1-\frac\alpha2}(b+u)} \mathrm{d}u =  \frac{1}{c_\alpha} \int_0^\infty \big(\,u^{\frac\alpha2-1}- u^{\frac\alpha2} (b+ u )^{-1}\, \big) \mathrm{d}u 
		\end{equation*}
		where $c_\alpha=\int_0^\infty \frac{1}{t^{1-\alpha/2}(1+t)}\mathrm{d}t$. The statement follows replacing $b$ by $\langle x \rangle$ and $\langle \nabla \rangle$ (working in Fourier space for $\langle\nabla\rangle$).
	\end{proof}

	\begin{lemma}
	\label{lemma: resolvent Lp}
	    Let $1\leq p\leq \infty$ and $\lambda,s >0$, then it holds 
	    \begin{equation}\label{eq:bound-resolv_Lp}
	        \|(-\Delta + \lambda )^{-1} \|_{\mathcal{B}(L^p)} \lesssim \lambda^{-1}, \quad  \|\partial_k^m(-\Delta + \lambda )^{-1} \|_{\mathcal{B}(L^p)} \lesssim \lambda^{-m/2}, \quad m =1,2 
	    \end{equation}
	    and 
	    \begin{equation}\label{eq:bound-bessel_Lp}
	        \|\langle \nabla \rangle^{-s}\|_{\mathcal{B}(L^p)} \lesssim 1 , \quad \|\partial_k \langle \nabla \rangle^{-s-1}\|_{\mathcal{B}(L^p)} \lesssim 1
	    \end{equation}
	    for all $k =1,\dots,d$. If $1<p<\infty$, we also have
	    \begin{equation}\label{eq:bound-bessel2_Lp}
	        \|\partial_k \langle \nabla \rangle^{-1}\|_{\mathcal{B}(L^p)} \lesssim 1.
	    \end{equation}
	\end{lemma}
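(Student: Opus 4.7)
The plan is to treat each of the four estimates as a Fourier multiplier bound and reduce it to two standard tools: Young's convolution inequality with an $L^1$ kernel, and the Mikhlin multiplier theorem for cases where the natural kernel fails to be integrable.

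For \eqref{eq:bound-resolv_Lp}, I would first use a scaling argument to reduce to $\lambda=1$. Writing $T_{\lambda,m} := \partial_k^m(-\Delta+\lambda)^{-1}$ and substituting $\xi = \sqrt{\lambda}\,\eta$ in the Fourier symbol $(i\xi_k)^m/(|\xi|^2+\lambda)$ identifies the kernel as $K_\lambda(x) = \lambda^{(m-2+d)/2} K_1(\sqrt{\lambda}\,x)$, so $\|K_\lambda\|_{L^1} = \lambda^{(m-2)/2}\|K_1\|_{L^1}$. Thus, whenever $K_1 \in L^1$, Young's inequality gives the $L^p$-bound with the natural $\lambda$-scaling. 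It remains to handle $K_1 = \partial_k^m G_1$, where $G_1 = \mathcal{F}^{-1}(\langle\xi\rangle^{-2})$ is the classical Bessel--Green kernel.

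For $m\in\{0,1\}$, the near-origin asymptotics $\partial_k^m G_1(x) \sim |x|^{2-d-m}$, together with exponential decay at infinity, ensure $\partial_k^m G_1 \in L^1$ for $d\ge 2$, and Young's inequality yields the bound on all $L^p$ with $1\le p\le\infty$. The same strategy applies to \eqref{eq:bound-bessel_Lp}: the Bessel kernel $G_s$ and the derivative $\partial_k G_{s+1}$ have integrable singularities (of order $|x|^{s-d}$) at the origin and exponential tails, hence lie in $L^1$ for $s>0$, and Young's inequality delivers the bounds uniformly in~$p$.

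The subtler cases are $\partial_k^2$ in \eqref{eq:bound-resolv_Lp} and \eqref{eq:bound-bessel2_Lp}, because the associated kernels fail to be $L^1$: $\partial_k^2 G_1$ carries a Calder\'on--Zygmund singularity $|x|^{-d}$ at the origin, and the multiplier $i\xi_k/\langle\xi\rangle$ does not vanish at infinity. Here I would invoke the Mikhlin multiplier theorem: both symbols $-\xi_k^2/(|\xi|^2+1)$ and $i\xi_k/\langle\xi\rangle$ are smooth and satisfy $|\partial^\alpha m(\xi)|\lesssim \langle\xi\rangle^{-|\alpha|}$, yielding boundedness on $L^p$ for $1<p<\infty$. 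The explicit exclusion of the endpoints $p=1,\infty$ in \eqref{eq:bound-bessel2_Lp} reflects exactly this limitation. The main technical point throughout is the careful verification of the near-origin asymptotics and the exponential tails of the Bessel kernels, classical facts obtained from their integral representations via modified Bessel functions.
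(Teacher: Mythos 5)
Your overall strategy coincides with the paper's: for every estimate except the two genuinely Calder\'on--Zygmund ones you reduce to an $L^1$ bound on the convolution kernel and apply Young's inequality, and you invoke the Mikhlin multiplier theorem for $\partial_k\langle\nabla\rangle^{-1}$ on $1<p<\infty$, exactly as the paper does. The only cosmetic difference is that you rescale to $\lambda=1$ and quote the classical near-origin asymptotics and exponential tails of the Bessel--Green kernels, whereas the paper keeps $\lambda$ and changes variables directly inside the subordination integrals $G_\lambda(x)=\int_0^\infty(4\pi t)^{-d/2}e^{-|x|^2/4t-\lambda t}\,\mathrm{d}t$ and $J_s(x)=\Gamma(s/2)^{-1}\int_0^\infty(4\pi t)^{-d/2}e^{-|x|^2/4t-t}t^{s/2-1}\,\mathrm{d}t$; the two computations are interchangeable.

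The one substantive point is the case $m=2$ of \eqref{eq:bound-resolv_Lp}. Your argument (Mikhlin applied to $-\xi_k^2/(|\xi|^2+\lambda)$) yields boundedness only for $1<p<\infty$ and only with a $\lambda$-independent constant, and your scaling identity $\|K_\lambda\|_{L^1}=\lambda^{(m-2)/2}\|K_1\|_{L^1}$ likewise produces $\lambda^{0}$ rather than $\lambda^{-1}$ when $m=2$; so you do not prove the statement as written. You should say so explicitly, because the statement as written is in fact false: on $L^2$ the operator norm of $\partial_k^2(-\Delta+\lambda)^{-1}$ equals $\sup_\xi\xi_k^2/(|\xi|^2+\lambda)=1$ for every $\lambda>0$, so no decay in $\lambda$ is possible, and your own observation that $\partial_k^2G_\lambda$ carries a non-integrable $|x|^{-d}$ singularity shows that the endpoints $p\in\{1,\infty\}$ fail as well (the operator is the squared Riesz transform $R_k^2$ composed with the bounded multiplier $(-\Delta)(-\Delta+\lambda)^{-1}$). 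The paper's own proof shares this defect: it disposes of $m=2$ with ``similarly by estimating $\|\partial_k^2G_\lambda\|_{L^1}$'', which is precisely the quantity that is infinite. What is actually used downstream, in the proof of Lemma~\ref{lemma: [x^gamma, grad^s]}, is only the full Laplacian $\Delta(-\Delta+1+u)^{-1}=-\mathrm{Id}+(1+u)(-\Delta+1+u)^{-1}$, which is bounded on every $L^p$, $1\le p\le\infty$, with norm $\lesssim1$ by the $m=0$ case. The cleanest repair is to replace the $m=2$ claim by that statement; your write-up should record this correction rather than silently establishing something weaker than the lemma asserts.
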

	
\begin{proof}
The properties of the resolvent can be derived by direct computations, using the explicit expression of its kernel, namely \begin{equation*}
    (-\Delta + \lambda )^{-1} f= G_\lambda*f, \qquad G_\lambda(x)=\int_0^\infty(4\pi t)^{-\frac{d}2}\mathrm{exp}\Big(-\frac{x^2}{4t}-\lambda t\Big)\mathrm{d}t,
\end{equation*} 
see e.g. \cite{LiebLoss}*{Theorem 6.23}. The first inequality in \eqref{eq:bound-resolv_Lp} then directly follows from Young's inequality, since
\begin{equation*}
    \|G_\lambda\|_{L^1}=\int_{\mathbb{R}^d}G_\lambda(x)\mathrm{d}x=(2\pi)^{\frac{d}{2}}\mathcal{F}(G_\lambda)(0)=(2\pi)^{\frac{d}{2}}\lambda^{-1}.
\end{equation*}
When $m=1$, to prove the second bound in \eqref{eq:bound-resolv_Lp}, we estimate
\begin{equation*}
    \|\partial_kG_\lambda\|_{L^1}\lesssim\int_{\mathbb{R}^d}\int_0^\infty|x_k|t^{-\frac{d}{2}}\mathrm{exp}\Big(-\frac{x^2}{4t}-\lambda t\Big)\frac{\mathrm{d}t}{t}.
\end{equation*}
Changing variables $x=\lambda^{-\frac12}y$ next $t=\lambda^{-1}u$, we obtain $\|\partial_kG_\lambda\|_{L^1}\lesssim\lambda^{-\frac12}$, which implies the second bound in \eqref{eq:bound-resolv_Lp} by Young's inequality. The case $m=2$ can be proved similarly by estimating $\|\partial_k^2 G_\lambda\|_{L^1}$.

Now we prove \eqref{eq:bound-bessel_Lp}. For $1 \leq p \leq \infty $, the operator $\langle \nabla \rangle^{-s}$ is bounded on $L^p$ by \cite{grafakos_modern}*{Corollary 6.1.6}. The proof is based as before on the explicit expression
\begin{equation*}
    \langle\nabla\rangle^{-s} f= J_s*f, \qquad J_s(x)=\frac{1}{\Gamma(\frac{s}{2})}\int_0^\infty (4\pi t)^{-\frac{d}{2}}\mathrm{exp}\Big(-\frac{|x|^2}{4t}-t\Big)t^{\frac{s}{2}-1}\mathrm{d}t.
\end{equation*} 
Using
\begin{equation*}
    \|J_s\|_{L^1}=\int_{\mathbb{R}^d}J_s(x)\mathrm{d}x=(2\pi)^{\frac{d}{2}}\mathcal{F}(J_s)(0)=(2\pi)^{\frac{d}{2}},
\end{equation*}
we obtain the first bound in \eqref{eq:bound-bessel_Lp} by Young's inequality. Moreover we can estimate
\begin{equation*}
    \|\partial_kJ_{s+1}\|_{L^1}\lesssim\int_{\mathbb{R}^d}\int_0^\infty|x_j|t^{-\frac{d}{2}}\mathrm{exp}\Big(-\frac{x^2}{4t}-t\Big)t^{\frac{s+1}{2}-2}\mathrm{d}t.
\end{equation*}
Changing variables $x=t^{\frac12}y$ in the integral over $x$ and using that $s>0$, we deduce that $\|\partial_kJ_{s+1}\|_{L^1}\lesssim1$, from which the second bound in \eqref{eq:bound-bessel_Lp} follows by Young's inequality.

Finally, since $1<p<\infty$, \eqref{eq:bound-bessel2_Lp} follows from the Mihlin–Hörmander Multiplier Theorem, see e.g. \cite{grafakos}*{Theorem 5.2.7}.
\end{proof}

\begin{lemma}
\label{lemma: [x^gamma, grad^s]}
    Let $0\leq \gamma\leq 2$,  $s\geq 0$ and $1\leq p \leq \infty$. Then it holds 
    \begin{equation*}
        \|[\langle x \rangle^\gamma , \langle \nabla \rangle^s]\langle x \rangle^{-\gamma} \langle \nabla \rangle^{-s} \|_{\mathcal{B}(L^p)}\lesssim 1, \quad \|[\langle x \rangle^\gamma , \langle \nabla \rangle^s]\langle \nabla \rangle^{-s}   \langle x \rangle^{-\gamma} \|_{\mathcal{B}(L^p)}\lesssim 1 .
    \end{equation*}
\end{lemma}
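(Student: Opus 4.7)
The strategy is to first treat the subcritical range $s\in[0,2)$ via the integral representation of Lemma~\ref{lemma: expr sqrt}, and then bootstrap to arbitrary $s\ge 0$ by peeling off integer powers of $\langle\nabla\rangle^2=1-\Delta$. The cases $s=0$ or $\gamma=0$ being trivial, I focus on $s,\gamma>0$.

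For $0<s<2$ and $0<\gamma\le 2$, the plan is to apply Lemma~\ref{lemma: expr sqrt} to $\langle\nabla\rangle^s$ and exploit the identity
\begin{equation*}
[\langle x\rangle^\gamma,\langle\nabla\rangle^s]\langle x\rangle^{-\gamma}=\langle x\rangle^\gamma[\langle\nabla\rangle^s,\langle x\rangle^{-\gamma}]=\frac{1}{c_s}\int_0^\infty u^{s/2}\,\langle x\rangle^\gamma R(u)\bigl[(\Delta\langle x\rangle^{-\gamma})+2(\nabla\langle x\rangle^{-\gamma})\cdot\nabla\bigr]R(u)\,\mathrm{d}u,
\end{equation*}
with $R(u):=(-\Delta+1+u)^{-1}$, which follows from the resolvent identity $[R(u),B]=-R(u)[R(u)^{-1},B]R(u)$ together with the elementary commutator $[-\Delta,\langle x\rangle^{-\gamma}]=-(\Delta\langle x\rangle^{-\gamma})-2(\nabla\langle x\rangle^{-\gamma})\cdot\nabla$. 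The key observation is that, for $\gamma\in[0,2]$, both $\nabla\langle x\rangle^{-\gamma}=-\gamma x\langle x\rangle^{-\gamma-2}$ and $\Delta\langle x\rangle^{-\gamma}$ are uniformly bounded on $\mathbb{R}^d$, and the rescaled products $\langle x\rangle^\gamma(\nabla\langle x\rangle^{-\gamma})=-\gamma x\langle x\rangle^{-2}$ and $\langle x\rangle^\gamma(\Delta\langle x\rangle^{-\gamma})$ remain uniformly bounded (decaying like $\langle x\rangle^{-1}$ and $\langle x\rangle^{-2}$), precisely because $\gamma\le 2$. I would therefore right-multiply by $\langle\nabla\rangle^{-s}$, commute $\langle x\rangle^\gamma$ through the leftmost $R(u)$ so that it pairs with these bounded coefficients, and treat the resulting commutator $[\langle x\rangle^\gamma,R(u)]=R(u)[(\Delta\langle x\rangle^\gamma)+2(\nabla\langle x\rangle^\gamma)\cdot\nabla]R(u)$ by a convergent iteration that yields only higher negative powers of $(1+u)$. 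Using Lemma~\ref{lemma: resolvent Lp} (which gives $\|R(u)\|_{\mathcal{B}(L^p)}\lesssim(1+u)^{-1}$ and $\|\partial_k R(u)\|_{\mathcal{B}(L^p)}\lesssim(1+u)^{-1/2}$ uniformly in $p\in[1,\infty]$, and $\|\langle\nabla\rangle^{-s}\|_{\mathcal{B}(L^p)}\lesssim 1$), every resulting piece is bounded by $(1+u)^{-a}$ with $a\ge 3/2$, so $\int_0^\infty u^{s/2}(1+u)^{-a}\,\mathrm{d}u$ converges for $s<2$.

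To extend to $s\ge 2$, I write $s=2n+s_0$ with $n\in\mathbb{N}_{\ge 1}$ and $s_0\in[0,2)$, decompose $\langle\nabla\rangle^s=\langle\nabla\rangle^{s_0}(1-\Delta)^n$, and use the commutator Leibniz rule
\begin{equation*}
[\langle x\rangle^\gamma,\langle\nabla\rangle^s]=[\langle x\rangle^\gamma,\langle\nabla\rangle^{s_0}](1-\Delta)^n+\langle\nabla\rangle^{s_0}[\langle x\rangle^\gamma,(1-\Delta)^n].
\end{equation*}
After right-multiplication by $\langle x\rangle^{-\gamma}\langle\nabla\rangle^{-s}$, commuting $(1-\Delta)^n$ past $\langle x\rangle^{-\gamma}$ reduces the first summand to the subcritical case already established (the residual $\langle\nabla\rangle^{2n-s}=\langle\nabla\rangle^{-s_0}$ being bounded on $L^p$), modulo a commutator $[(1-\Delta)^n,\langle x\rangle^{-\gamma}]$ which is itself of the form addressed in the next step. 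The second summand is expanded via iterated use of $[-\Delta,\langle x\rangle^{\pm\gamma}]$ into a finite sum of terms $c_\alpha(x)\partial^\beta$ with $|\beta|\le 2n-1\le s-1$ and $c_\alpha$ uniformly bounded on $\mathbb{R}^d$ (each a product of rescaled derivatives $(\partial^{\alpha_1}\langle x\rangle^\gamma)\langle x\rangle^{-\gamma}$ together with derivatives of $\langle x\rangle^{-\gamma}$, all bounded thanks to $\gamma\le 2$). Since $|\beta|\le s-1$, the operator $\partial^\beta\langle\nabla\rangle^{-s}$ is bounded on $L^p$ for every $p\in[1,\infty]$ by iterating the second inequality in~\eqref{eq:bound-bessel_Lp}. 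The second commutator estimate in the lemma follows from the same scheme (or, equivalently, by taking adjoints and swapping the roles of $\langle x\rangle^\gamma$ and $\langle\nabla\rangle^s$).

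The main obstacle is the endpoint range $p\in\{1,\infty\}$: there the first-order Riesz multiplier $\partial_k\langle\nabla\rangle^{-1}$ is \emph{not} bounded (the Mihlin-type bound~\eqref{eq:bound-bessel2_Lp} requires $1<p<\infty$), so any bare $\nabla$ produced during the commutator expansion must always be paired with at least one extra smoothing factor $\langle\nabla\rangle^{-1}$ or $R(u)^{1/2}$ before an $L^p$ bound is invoked. This is what dictates the careful bookkeeping of derivatives---in particular the need to commute $\langle x\rangle^{\pm\gamma}$ past the resolvents $R(u)$ \emph{before} the $\nabla$'s are transferred---and it is precisely the reason why Lemma~\ref{lemma: resolvent Lp} was stated with the second-order gains $\|\partial_k R(u)\|_{\mathcal{B}(L^p)}\lesssim(1+u)^{-1/2}$ and $\|\partial_k\langle\nabla\rangle^{-s-1}\|_{\mathcal{B}(L^p)}\lesssim 1$ valid uniformly for all $p\in[1,\infty]$.
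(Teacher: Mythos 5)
Your overall route is the paper's: the subordination formula of Lemma~\ref{lemma: expr sqrt}, resolvent commutator identities, the uniform $L^p$ bounds of Lemma~\ref{lemma: resolvent Lp}, and a bootstrap in $s$ by peeling off powers of $1-\Delta$ (the paper represents \emph{both} $\langle x\rangle^\gamma$ and $\langle\nabla\rangle^s$ by integrals when $\gamma<2$, but your single-integral variant is a cosmetic difference). There is, however, a genuine quantitative gap in the range $1\le s<2$. The term carrying a bare gradient is, schematically, $\int_0^\infty u^{s/2}R(u)\,b(x)\cdot\nabla\,R(u)\langle\nabla\rangle^{-s}\,\mathrm{d}u$ with $b=\langle x\rangle^\gamma\nabla\langle x\rangle^{-\gamma}$ bounded. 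Estimating $\nabla R(u)$ by $(1+u)^{-1/2}$ gives an integrand of size $u^{s/2}(1+u)^{-3/2}$, and
\begin{equation*}
\int_0^\infty u^{s/2}(1+u)^{-a}\,\mathrm{d}u<\infty \iff a>\tfrac{s}{2}+1,
\end{equation*}
so $a=3/2$ only covers $s<1$; your assertion that $a\ge 3/2$ yields convergence for all $s<2$ is false. The cure is precisely the case distinction made in the paper: for $0\le s<1$ one may leave the gradient on a resolvent, while for $1\le s<2$ the gradient must be commuted all the way to the right and absorbed into $\langle\nabla\rangle^{-s}$ via the second bound in \eqref{eq:bound-bessel_Lp}, leaving two full resolvents and hence $a=2$, which does converge for $s<2$. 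Your closing paragraph gestures at this transfer of derivatives, but you never reconcile it with the $a\ge 3/2$ bookkeeping, and as written the $u$-integral diverges for $s\in[1,2)$.

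Two secondary points. For $\gamma\in(1,2]$ the commutator $[\langle x\rangle^\gamma,R(u)]$ involves $\nabla\langle x\rangle^\gamma\sim\langle x\rangle^{\gamma-1}$, which is an \emph{unbounded} multiplier; your ``convergent iteration'' must be shown to terminate with these growing weights absorbed by the decaying coefficients $\nabla\langle x\rangle^{-\gamma}$, $\Delta\langle x\rangle^{-\gamma}$ to their right (the paper's proof does the analogous bookkeeping by distinguishing $\gamma<1$ from $\gamma\ge1$). In the step $s\ge2$, the summand $\langle\nabla\rangle^{s_0}[\langle x\rangle^\gamma,(1-\Delta)^n]\langle x\rangle^{-\gamma}\langle\nabla\rangle^{-s}$ still has the nonlocal operator $\langle\nabla\rangle^{s_0}$ sitting to the \emph{left} of the variable coefficients $c_\alpha(x)$, so boundedness does not follow from ``$c_\alpha$ bounded and $\partial^\beta\langle\nabla\rangle^{-s}$ bounded'' alone; one needs a further commutation of $\langle\nabla\rangle^{s_0}$ past $c_\alpha$, i.e.\ an induction of the kind the paper runs in passing from $s\in(2,4]$ to larger $s$.
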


As a direct consequence of the previous lemma we have the following.
 
\begin{cor}\label{cor:weighted-sobolev}
 Let $0\leq \gamma\leq 2$,  $s\geq 0$ and $1\leq p \leq \infty$. There exist $c, c'>0$ such that 
 \begin{equation*}
    c'\,  \| \langle \nabla \rangle^s \langle x \rangle^\gamma f \|_{L^p} \leq   \| f \|_{H_\gamma^{s,p}} \leq c\,  \| \langle \nabla \rangle^s \langle x \rangle^\gamma f \|_{L^p} . 
 \end{equation*}
\end{cor}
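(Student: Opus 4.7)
The plan is to derive Corollary \ref{cor:weighted-sobolev} as an immediate consequence of the commutator estimate provided by Lemma \ref{lemma: [x^gamma, grad^s]}. The basic ingredient is the obvious algebraic identity
\begin{equation*}
\langle x \rangle^\gamma \langle \nabla \rangle^s f \;-\; \langle \nabla \rangle^s \langle x \rangle^\gamma f \;=\; [\langle x \rangle^\gamma, \langle \nabla \rangle^s] f ,
\end{equation*}
and the idea is simply to plug into the right-hand side the expression for $f$ in terms of either $\langle x \rangle^\gamma \langle \nabla \rangle^s f$ or $\langle \nabla \rangle^s \langle x \rangle^\gamma f$, depending on which direction of the claimed equivalence is being estimated.

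For the upper bound $\|\langle x \rangle^\gamma \langle \nabla \rangle^s f\|_{L^p} \lesssim \|\langle \nabla \rangle^s \langle x \rangle^\gamma f\|_{L^p}$, I would set $g := \langle \nabla \rangle^s \langle x \rangle^\gamma f$ so that $f = \langle x \rangle^{-\gamma}\langle \nabla \rangle^{-s} g$ and rewrite the identity as
\begin{equation*}
\langle x \rangle^\gamma \langle \nabla \rangle^s f \;=\; g \;+\; [\langle x \rangle^\gamma,\langle \nabla \rangle^s]\, \langle x \rangle^{-\gamma}\langle \nabla \rangle^{-s} g ,
\end{equation*}
then take the $L^p$ norm and apply the first bound of Lemma \ref{lemma: [x^gamma, grad^s]} to control the commutator term by $\|g\|_{L^p}$. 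For the reverse inequality, I would set $h := \langle x \rangle^\gamma \langle \nabla \rangle^s f$, so $f = \langle \nabla \rangle^{-s}\langle x \rangle^{-\gamma} h$, and obtain
\begin{equation*}
\langle \nabla \rangle^s \langle x \rangle^\gamma f \;=\; h \;-\; [\langle x \rangle^\gamma,\langle \nabla \rangle^s]\, \langle \nabla \rangle^{-s}\langle x \rangle^{-\gamma} h ,
\end{equation*}
and apply the second bound of Lemma \ref{lemma: [x^gamma, grad^s]} to estimate the commutator term by $\|h\|_{L^p}$.

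The main (and only) obstacle is making these manipulations rigorous on $\mathcal{S}'$ in the endpoint cases $p=1$ and $p=\infty$, where $\mathcal{S}$ is not norm-dense. To handle this, I would first verify the identity for $f \in \mathcal{S}$, where everything acts classically and Fubini/integration-by-parts arguments are unproblematic. Then I would extend to arbitrary $f \in H^{s,p}_\gamma$ by noting that both $\langle x \rangle^{-\gamma}\langle \nabla \rangle^{-s}$ and $\langle \nabla \rangle^{-s}\langle x \rangle^{-\gamma}$ map $L^p$ continuously into $\mathcal{S}'$ (using the $L^p$-boundedness of $\langle \nabla \rangle^{-s}$ recorded in Lemma \ref{lemma: resolvent Lp} together with the trivial multiplier bound on $\langle x \rangle^{-\gamma}$), so that all quantities on both sides are well-defined tempered distributions that coincide by continuity. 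Since all the work has been absorbed into the preceding two lemmas, the corollary itself is essentially a one-line consequence.
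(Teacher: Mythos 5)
Your proposal is correct and is exactly the argument the paper intends: the corollary is stated as ``a direct consequence'' of Lemma \ref{lemma: [x^gamma, grad^s]}, and your two substitutions $f=\langle x\rangle^{-\gamma}\langle\nabla\rangle^{-s}g$ and $f=\langle\nabla\rangle^{-s}\langle x\rangle^{-\gamma}h$ combined with the two commutator bounds of that lemma are precisely the intended one-line derivation. Your extra care about the endpoint cases $p=1,\infty$ is sensible but adds nothing beyond what the operator-norm formulation of the lemma already guarantees.
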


\begin{proof}[Proof of Lemma \ref{lemma: [x^gamma, grad^s]}]
We will prove the lemma for the operator $[\langle x \rangle^\gamma , \langle \nabla \rangle^s]\langle x \rangle^{-\gamma} \langle \nabla \rangle^{-s} $, the other case can be proved analogously. 
We distinguish different cases: 
\begin{itemize}
    \item $s=2$, $\gamma \in \R^+$.

 We compute explicitly the commutator yielding
    \begin{align*}
        [\langle x \rangle^\gamma, \langle \nabla \rangle^2]\langle x \rangle^{-\gamma} \langle \nabla \rangle^{-2} =& [\langle x \rangle^\gamma, -\Delta]\langle x \rangle^{-\gamma} \langle \nabla \rangle^{-2}\\
        = &  \Delta (\langle x \rangle^\gamma) \langle x \rangle^{-\gamma} \langle \nabla \rangle^{-2} +  \langle x \rangle^{-\gamma} \nabla (\langle x \rangle^{\gamma})\cdot \nabla \langle \nabla \rangle^{-2} \\
        & + \nabla ( \langle x \rangle^{\gamma})\cdot \nabla ( \langle x \rangle^{-\gamma})  \langle \nabla \rangle^{-2}, 
    \end{align*}
    which is a sum of bounded operators on $L^p$ thanks to Lemma \ref{lemma: resolvent Lp}.
    
    \item $s \in [0,2)$,  $\gamma = 2$. 
    
    Applying Lemma \ref{lemma: expr sqrt} we obtain the expression
    \begin{align}
    \label{eq: comm x^2}
        [\langle x \rangle^2 , \langle \nabla \rangle^s] \langle x \rangle^{-2} \langle \nabla \rangle^{-s} =& \int_0^\infty u^{s/2}R_{-\Delta} (u) (\Delta (x^2) + 2\nabla(x^2)\cdot \nabla)R_{-\Delta} (u)\,  \langle x \rangle^{-2} \langle \nabla \rangle^{-s} \, \mathrm{d}u
    \end{align}
    where for a non negative selfadjoint operator $A$ we define $R_{A} (u) = (A+1 + u)^{-1}$. For the second term in the integral we can argue as follows: 
    \begin{align*}
        \Big\| \int_0^\infty u^{s/2}R_{-\Delta} (u)& \nabla(x^2)\cdot \nabla R_{-\Delta} (u)\,  \langle x \rangle^{-2} \langle \nabla \rangle^{-s} \, \mathrm{d}u \Big\|_{\mathcal{B}(L^p)} \\
         =& 2 \Big\|\sum_{k=1}^d\int_0^\infty u^{s/2}  R_{-\Delta} (u)^2\, x_k \langle x \rangle^{-2}\partial_k \langle \nabla \rangle^{-s} \, \mathrm{d}u\\
          & \hspace{1cm}+  \int_0^\infty u^{s/2}R_{-\Delta} (u)^2\,x_k \partial_k( \langle x \rangle^{-2}) \langle \nabla \rangle^{-s} \, du\Big\|_{\mathcal{B}(L^p)}\\
          \lesssim & \int_0^\infty  \frac{u^{s/2}}{(1+u)^2} \, \mathrm{d}u \lesssim 1
    \end{align*}
    where we applied Lemma \ref{lemma: resolvent Lp}. Since $\Delta (x^2) = 2d$ the corresponding term in \eqref{eq: comm x^2} is treated applying directly Lemma \ref{lemma: resolvent Lp}. This and the previous inequality, together with expression \eqref{eq: comm x^2} prove the statement for $s \in [0, 2)$ and $\gamma =2$. 
    
    \item $s, \gamma \in [0, 2)$. 
    
    We reason analogously, applying Lemma \ref{lemma: expr sqrt} to obtain the expression
    \begin{align}
        [\langle x \rangle^\gamma , \langle \nabla \rangle^s] =& \int_0^\infty t^{\gamma/2}\,u^{s/2}[ (x^2 + 1+t)^{-1}, (-\Delta + 1 + u)^{-1}]\, \mathrm{d}u\, \mathrm{d}t \nonumber\\
         =& \int_0^\infty t^{\gamma/2}\,u^{s/2} R_{x^2}(t)R_{-\Delta} (u) (\Delta (x^2) + 2\nabla(x^2)\cdot \nabla)R_{-\Delta} (u)R_{x^2}(t)\, \mathrm{d}u\, \mathrm{d}t \label{comp: int commut}. 
    \end{align}
     If $0\leq s<1$ we rewrite the contribution of $\nabla(x^2)\cdot \nabla = 2x\cdot \nabla $ as 
     \begin{align}
        R_{x^2}(t)R_{-\Delta} (u) \, x\cdot \nabla R_{-\Delta} (u)R_{x^2}(t) & \nonumber\\
        =R_{x^2}(t)R_{-\Delta} (u)^2  \big(\,& \sum_{k=1}^d    \partial_k  R_{x^2}(t) x_k + dR_{x^2}(t)-2R_{-\Delta} (u) \Delta R_{x^2}(t) \big). \label{comp: commut x, grad}
    \end{align}
    Using the inequality
    \begin{equation*}
        \|R_{x^2}(t)x_k \langle x \rangle^{-\gamma}\|_{\mathcal{B}(L^p)} \lesssim \begin{cases}
            (1 + t)^{-1} & 1\leq \gamma <2\\
            (1 + t)^{-1/2} & 0 \leq \gamma <1
        \end{cases}
    \end{equation*}
as well as Lemma \ref{lemma: resolvent Lp} to estimate \eqref{comp: commut x, grad} we obtain
    \begin{align}
        \Big\| \Big (\int_0^\infty t^{\gamma/2}\,u^{s/2} R_{x^2}(t)R_{-\Delta} (u) \, \nabla(x^2)\cdot \nabla&\,R_{-\Delta} (u)R_{x^2} (t)\, \mathrm{d}u\, \mathrm{d}t\Big ) \langle x \rangle^{-\gamma} \langle \nabla \rangle^{-s}\Big\|_{\mathcal{B}(L^p)}\nonumber\\
        \lesssim  & \begin{cases}
            \int_0^\infty \frac{t^{\gamma/2}}{(1+t)^2} \mathrm{d}t  \int_0^\infty \frac{u^{s/2}}{(1+u)^{3/2}}  \mathrm{d}u & 1\leq \gamma <2\nonumber\\
            \int _0^\infty \frac{t^{\gamma/2}}{(1+t)^{3/2}} \mathrm{d}t  \int_0^\infty \frac{u^{s/2}}{(1+u)^{3/2}}  \mathrm{d}u & 0\leq \gamma <1\nonumber\\
        \end{cases}\nonumber\\
        \lesssim &\, 1  \label{comp: bound [x^g, grad^s]}
    \end{align}
    since we assumed $s<1$. If $1\leq s <2$ we commute $\partial_k$ to the right in \eqref{comp: commut x, grad} obtaining 
    \begin{align*}
       R_{x^2}(t)R_{-\Delta} (u) \, x\cdot &\nabla R_{-\Delta} (u)R_{x^2}(t) \langle x \rangle^{-\gamma} \langle \nabla \rangle^{-s}\\
        =  R_{x^2}(t) R_{-\Delta} (u)^2 \big( \,& \sum_{k=1}^d R_{x^2}(t) x_k \langle x \rangle^{-\gamma} \partial_k \langle \nabla \rangle^{-s}  +  \partial_k (R_{x^2}(t) x_k \langle x \rangle^{-\gamma} )  \langle \nabla \rangle^{-s}\\
         & + d R_{x^2}(t) \langle x \rangle^{-\gamma} \langle \nabla \rangle^{-s} -2R_{-\Delta} (u) \Delta R_{x^2}(t)  \langle x \rangle^{-\gamma} \langle \nabla \rangle^{-s}\big). 
  \end{align*}
    Since $ \|\partial_k (R_{x^2}(t) x_k \langle x \rangle^{-\gamma} ) \|_{\mathcal{B}(L^p)}\lesssim (1+t)^{-1}$ and applying again Lemma \ref{lemma: resolvent Lp} as before we obtain 
    \begin{align*}
      \Big\| \Big (\int_0^\infty u^{\gamma/2}\,t^{s/2} R_{x^2}(u)R_{-\Delta} (t) \, &\nabla(x^2)\cdot \nabla\,R_{-\Delta} (t)R_{x^2} (u)\, \mathrm{d}u\, \mathrm{d}t\Big ) \langle x \rangle^{-\gamma} \langle \nabla \rangle^{-s}\Big\|_{\mathcal{B}(L^p)}\\
      \lesssim  & \begin{cases}
            \int_0^\infty \frac{t^{\gamma/2}}{(1+t)^2} \mathrm{d}t  \int_0^\infty \frac{u^{s/2}}{(1+u)^{2}}  \mathrm{d}u & 1\leq \gamma <2\\
            \int _0^\infty \frac{t^{\gamma/2}}{(1+t)^{3/2}} \mathrm{d}t  \int_0^\infty \frac{u^{s/2}}{(1+u)^{2}}  \mathrm{d}u & 0\leq \gamma <1\\
        \end{cases}\\
        \lesssim & 1 .
    \end{align*}
    The term in \eqref{comp: int commut} involving $\Delta (x^2)$ can be bounded in an analogous way.
    \end{itemize}
     We have therefore proved the statement holds for $\gamma, s \in [0, 2]$, we now use this fact to prove the case $s \in (2, 4]$. Indeed, we need to prove that 
     \begin{equation*}
         [\langle x \rangle^\gamma , \langle \nabla \rangle^s]\langle x \rangle^{-\gamma} \langle \nabla \rangle^{-s} = \langle x \rangle^\gamma \langle \nabla \rangle^s \langle x \rangle^{-\gamma }\langle \nabla \rangle^{-s} - \mathrm{Id}
     \end{equation*}
     is bounded for $\gamma \in [0,2]$ and $s \in (2, 4]$. We rewrite the first operator in the sum as 
     \begin{align}
     \label{comp: comm s>2}
         \langle x \rangle^\gamma \langle \nabla \rangle^s \langle x \rangle^{-\gamma }\langle \nabla \rangle^{-s} =& \langle x \rangle^\gamma \langle \nabla \rangle^{s-2} \langle x \rangle^{-\gamma }\langle \nabla \rangle^{-s+2} + \langle x \rangle^\gamma \langle \nabla \rangle^{s-2} [-\Delta , \langle x \rangle^{-\gamma}]\langle \nabla \rangle^{-s}
     \end{align}
     where the first operator is bounded, since $\gamma, s-2 \in [0,2]$. For the second operator, let us first consider $ \Delta (\langle x \rangle^{-\gamma}) = c \langle x \rangle^{-\gamma-2} + c' \langle x \rangle^{-\gamma-4} $. Let $ j =2,4$, then we have
     \begin{align*}
          \langle x \rangle^\gamma \langle \nabla \rangle^{s-2} \langle x \rangle^{-\gamma-j} \langle \nabla \rangle^{-s} = \big(\langle x \rangle^\gamma \langle \nabla \rangle^{s-2} \langle x \rangle^{-\gamma}\langle \nabla \rangle^{-s +2}\big)\,\big( \langle \nabla \rangle^{s -2} \langle x \rangle^{-j}  \langle \nabla \rangle^{-s} \big)
     \end{align*}
     where again $\langle x \rangle^\gamma \langle \nabla \rangle^{s-2} \langle x \rangle^{-\gamma}\langle \nabla \rangle^{-s +2}$ is bounded from the previous step and it holds 
     \begin{align*}
         \langle \nabla \rangle^{s -2} \langle x \rangle^{-2}  \langle \nabla \rangle^{-s}  = &  \langle x \rangle^{-2}(\langle x \rangle^{2}\langle \nabla \rangle^{s -2} \langle x \rangle^{-2}  \langle \nabla \rangle^{-s+2}) \langle \nabla \rangle^{-2} ,
     \end{align*}
     which is again a composition of bounded operators on $L^p$, and similarly for 
     \begin{align*}
         \langle \nabla \rangle^{s -2} \langle x \rangle^{-4}  \langle \nabla \rangle^{-s}  = &[\, \langle x \rangle^{-2}(\langle x \rangle^{2}\langle \nabla \rangle^{s -2} \langle x \rangle^{-2}  \langle \nabla \rangle^{-s+2})\,]^2 \langle \nabla \rangle^{-2}.
     \end{align*}
          Hence $ \langle x \rangle^\gamma \langle \nabla \rangle^{s-2} \Delta (\langle x \rangle^{-\gamma})\langle \nabla \rangle^{-s}$ is a bounded operator on $L^p$ and the same can be proved for $ \langle x \rangle^\gamma \langle \nabla \rangle^{s-2} \nabla (\langle x \rangle^{-\gamma})\cdot \nabla\, \langle \nabla \rangle^{-s}$ with similar computations. These facts, combined with the expression given in \eqref{comp: comm s>2} prove the statement for $s \in (2, 4]$. For larger $s$ we reason analogously. 
    \end{proof}

	\section{Time decay for the free propagator}
	\label{app: pointw dec} In this appendix we 
	consider the linear flow $e^{-it\langle\nabla \rangle }$. First, in Subsection \ref{subsec:Lp-time-decay}, we recall the pointwise time-decay estimates in $L^p$-spaces that were used in Section \ref{sec:global-short}. We also formulate pointwise time-decay estimates in Besov-Lorentz spaces. In Subsection \ref{subsec:Strichartz}, we deduce from the latter the Strichartz estimates in Lorentz spaces that were used in Section \ref{sec:global-long}. Finally, in Subsection \ref{subsec:pointwiseLp}, we derive the pointwise estimates in weighted $L^p$ spaces that were used in Section \ref{sec:min-vel}.
	
	\subsection{Pointwise estimates in $L^p$ spaces}\label{subsec:Lp-time-decay}
	We will use the following lemma due to H\"ormander \cite{Hormander_book_hyperbolic}. In what follows we denote by $\mathcal{F}$ the unitary Fourier transform on $L^2$.
	
	\begin{lemma}
		\label{lemma: ancona f}
		Let $\phi :\R^ d \to \mathbb{C}$ be such that $\mathcal F \phi  \in C^\infty (\R^d)$ and 
		\begin{equation*}
			|\nabla^\alpha\mathcal F \phi(\xi) |\leq C_\alpha \langle \xi \rangle^{-\frac{d}2 - 1 - |\alpha|}	\quad \forall\, \xi \in \R^d, \alpha\in \N^d,
		\end{equation*}
		for some $C_\alpha>0$. Then 
		\begin{equation}\label{eq:bound-Hormander}
			\big|e^{-it\langle \nabla \rangle } \phi(x)\big| \leq c_m(|t| + |x|)^{-\frac{d}2}\big(1+(|x|^2-t^2)\mathbf{1}_{|x|\ge t}\big)^{-m},
		\end{equation}
		for all $ t \in \R$, $x\in\R^d$, $m\in\mathbb{N}$ and for some $c_m>0$. In particular, for all $2\le p\le\infty$, $\gamma\ge0$ and $t\neq0$,
		\begin{equation}\label{eq:bound-phi-Lp}
		    \| e^{-it\langle\nabla\rangle}\phi\|_{L^p_\gamma}\lesssim\langle t\rangle^{\gamma-\frac{d}{2r}},
		\end{equation}
		with $p=\frac{2r}{r-1}$.
	\end{lemma}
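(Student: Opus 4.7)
The plan is to prove the pointwise bound~\eqref{eq:bound-Hormander} by oscillatory integral analysis, and then deduce the weighted $L^p$ estimate~\eqref{eq:bound-phi-Lp} by direct integration of this pointwise bound.

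For~\eqref{eq:bound-Hormander}, write
\begin{equation*}
e^{-it\langle\nabla\rangle}\phi(x) = (2\pi)^{-d/2}\int_{\R^d} e^{i\Phi(\xi)}\,\mathcal{F}\phi(\xi)\,\mathrm{d}\xi,\qquad \Phi(\xi) := x\cdot\xi - t\langle\xi\rangle.
\end{equation*}
Since $\nabla_\xi\Phi(\xi) = x - t\xi/\langle\xi\rangle$ and $|\xi/\langle\xi\rangle|<1$, the phase has no critical point when $|x|\ge t$, while for $|x|<t$ the unique critical point $\xi_\ast = x/\sqrt{t^2-|x|^2}$ is non-degenerate. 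Outside the light cone, repeated integration by parts against the operator $L=-i|\nabla_\xi\Phi|^{-2}\nabla_\xi\Phi\cdot\nabla_\xi$ produces, after $m$ applications, an arbitrary polynomial factor $(|x|^2-t^2)^{-m}$ times symbols that, in view of the $\langle\xi\rangle^{-d/2-1-|\alpha|}$-bounds on derivatives of $\mathcal{F}\phi$, remain $L^1$-integrable and yield the $(|t|+|x|)^{-d/2}$ base decay. Inside the light cone, a dyadic decomposition in $|\xi|$ combined with the stationary phase formula at $\xi_\ast$ delivers the same base factor: the key computation is that $|\det D^2_\xi\Phi(\xi_\ast)|^{-1/2}$ and the symbol value $|\mathcal{F}\phi(\xi_\ast)|\lesssim\langle\xi_\ast\rangle^{-d/2-1}$ multiply to exactly $t^{-d/2}$. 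A smooth cutoff between the two regimes concludes the pointwise bound; this is the classical Hörmander argument, see~\cite{Hormander_book_hyperbolic}.

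For~\eqref{eq:bound-phi-Lp}, I would split $\R^d = \{|x|\le t\}\cup\{|x|>t\}$. In the first region, $\langle x\rangle\lesssim\langle t\rangle$ and $(|t|+|x|)^{-d/2}\lesssim\langle t\rangle^{-d/2}$ yield
\begin{equation*}
\int_{|x|\le t}\langle x\rangle^{p\gamma}\big|e^{-it\langle\nabla\rangle}\phi(x)\big|^p\,\mathrm{d}x \lesssim \langle t\rangle^{p\gamma-pd/2}\,t^d \lesssim \langle t\rangle^{p(\gamma-d/(2r))},
\end{equation*}
using the identity $d-pd/2=-pd/(2r)$, which is equivalent to $p=2r/(r-1)$. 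In the exterior region, spherical coordinates and the substitution $s=|x|-t$ reduce the integrand to
\begin{equation*}
(t+s)^{p\gamma+d-1}(2t+s)^{-pd/2}\big(1+s(s+2t)\big)^{-pm},
\end{equation*}
and splitting the $s$-integral at $s=t$, with $m$ chosen so that $pm$ exceeds $(p\gamma+d-pd/2+1)/2$, bounds both pieces by $\langle t\rangle^{p(\gamma-d/(2r))-2}$, giving strictly better decay than required. The case $p=\infty$ follows by taking the essential supremum of~\eqref{eq:bound-Hormander} directly, and the small-$t$ regime is trivial since $\langle t\rangle\sim 1$ there.

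The main obstacle is the pointwise bound~\eqref{eq:bound-Hormander} itself. Producing a single uniform expression that interpolates smoothly between the stationary and non-stationary regimes near the light cone $|x|=t$ requires balancing the blow-up of the inverse Hessian against the polynomial decay of $\mathcal{F}\phi$, which one handles via a dyadic decomposition in both $|\xi|$ and the distance to the light cone; this is precisely where the symbol-type assumption $|\nabla^\alpha\mathcal{F}\phi(\xi)|\lesssim\langle\xi\rangle^{-d/2-1-|\alpha|}$ is used at all orders. The $L^p_\gamma$ reduction from~\eqref{eq:bound-Hormander} is then a purely computational step.
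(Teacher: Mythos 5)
Your route is the same as the paper's: the paper simply cites H\"ormander (\cite{Hormander_book_hyperbolic}*{Theorem 7.2.1}) for the pointwise bound \eqref{eq:bound-Hormander} and then asserts that \eqref{eq:bound-phi-Lp} ``directly follows''. Your sketch of the stationary/non-stationary phase analysis is a faithful outline of that cited result (in particular the cancellation $|\det D^2_\xi\Phi(\xi_\ast)|^{-1/2}\,|\mathcal F\phi(\xi_\ast)|\sim |t|^{-d/2}$ is exactly the right bookkeeping), and your integration of the pointwise bound is correct for $|t|\ge 1$: the identity $d-\tfrac{pd}{2}=-\tfrac{pd}{2r}$ is equivalent to $p=\tfrac{2r}{r-1}$, and the exterior computation with $s=|x|-t$ closes for $m$ large.

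The one step that is not right as written is ``the small-$t$ regime is trivial since $\langle t\rangle\sim 1$ there.'' For $0<|t|<1$ and $p>2$, integrating the $p$-th power of the bound \eqref{eq:bound-Hormander} over $\{|x|\le |t|\}$ produces $\int_{|x|\le|t|}(|t|+|x|)^{-pd/2}\,\mathrm{d}x\approx |t|^{\,d-pd/2}=|t|^{-pd/(2r)}$, which blows up as $t\to 0$; so the pointwise bound does \emph{not} yield a constant uniform over all $t\ne 0$. This is not an artifact of a lossy upper bound: for $\phi=\phi_d=\mathcal F^{-1}(\langle\xi\rangle^{-d/2-1})$ one has $e^{-it\langle\nabla\rangle}\phi_d(0)\sim |t|^{1-d/2}$ as $t\to 0$ (and $\phi_d\notin L^\infty$ for $d\ge 3$), so the claimed uniform bound genuinely fails near $t=0$ for large $p$. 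To be fair, the paper's own proof has the identical imprecision, and the lemma is only ever invoked for $|t|\ge 1$ (the regime $|t|\le 1$ is always handled separately via Sobolev embedding in Lemmas \ref{lemma: free t dec interp} and \ref{lm:linear-bound-weighted-Lp}); the clean fix is either to restrict \eqref{eq:bound-phi-Lp} to $|t|\ge 1$ or to replace $\langle t\rangle^{\gamma-\frac{d}{2r}}$ by $|t|^{\gamma-\frac{d}{2r}}$ for small $t$. Your proof should say this explicitly rather than call the regime trivial.
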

	
	\begin{proof}
	    The estimate \eqref{eq:bound-Hormander} is proven in \cite{Hormander_book_hyperbolic}*{Theorem 7.2.1}. 
	    \eqref{eq:bound-phi-Lp} directly follows from \eqref{eq:bound-Hormander}.
	\end{proof}

	We introduce the notations
	\begin{equation*}
	    \phi_d:=\mathcal{F}^{-1}\big(\langle\xi\rangle^{-s_d}\big),\qquad s_d:=\frac{d}{2}+1.
	\end{equation*}
	The following provides a useful representation of $e^{-it\langle\nabla\rangle}f$ for $f$ regular enough. The proof is straightforward.
	
	\begin{lemma}\label{lm:repr_with_phid}
	    Let $f\in H^{s_d,p'}$ for some $1\le p'\le 2$. Then, for all $t\in\mathbb{R}$,
	    \begin{equation}\label{eq:convol_nabla}
	        e^{-it\langle\nabla\rangle}f=\big(e^{-it\langle\nabla\rangle}\phi_d\big)*\big(\langle\nabla\rangle^{s_d}f\big).
	    \end{equation}
	\end{lemma}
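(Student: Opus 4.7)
The identity is essentially a Fourier-multiplier calculation combined with the convolution theorem, so the plan is to (i) rewrite $f$ as a Bessel potential of a regular enough function, (ii) interchange the two commuting Fourier multipliers $e^{-it\langle\nabla\rangle}$ and $\langle\nabla\rangle^{-s_d}$, and (iii) recognize the latter as convolution by $\phi_d$. Concretely, set $g:=\langle\nabla\rangle^{s_d}f$, which by hypothesis lies in $L^{p'}$ and in particular in $\mathcal{S}'$, so that $f=\langle\nabla\rangle^{-s_d}g$.

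Applying the Fourier transform (noting that $e^{-it\langle\nabla\rangle}$ is a bounded operator on $\mathcal{S}'$, being a Fourier multiplier by a smooth bounded symbol, and similarly for $\langle\nabla\rangle^{-s_d}$), both multipliers commute and one computes
\begin{equation*}
\mathcal{F}\big(e^{-it\langle\nabla\rangle}f\big)(\xi)
=e^{-it\langle\xi\rangle}\mathcal{F}(f)(\xi)
=\langle\xi\rangle^{-s_d}e^{-it\langle\xi\rangle}\mathcal{F}(g)(\xi).
\end{equation*}
Since $\mathcal{F}(\phi_d)(\xi)=\langle\xi\rangle^{-s_d}$ and $e^{-it\langle\nabla\rangle}$ is itself a Fourier multiplier, $\mathcal{F}(e^{-it\langle\nabla\rangle}\phi_d)(\xi)=e^{-it\langle\xi\rangle}\langle\xi\rangle^{-s_d}$. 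The right-hand side above therefore equals (up to the normalization constant of $\mathcal{F}$) the Fourier transform of $(e^{-it\langle\nabla\rangle}\phi_d)*g$, which gives \eqref{eq:convol_nabla} as an equality in $\mathcal{S}'$.

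The one point to check is that the convolution on the right-hand side makes sense as a genuine (locally integrable) function, so that the convolution theorem is legitimately applicable. For this, I would invoke Lemma \ref{lemma: ancona f}: since $\phi_d$ satisfies the hypotheses of that lemma (its Fourier transform is $\langle\xi\rangle^{-s_d}$, smooth with the required decay since $s_d=d/2+1$), one obtains from \eqref{eq:bound-phi-Lp} that $e^{-it\langle\nabla\rangle}\phi_d\in L^{p}$ for every $2\le p\le\infty$. Since $\langle\nabla\rangle^{s_d}f\in L^{p'}$ with $p'\in[1,2]$, one can choose $p$ with $\frac1p+\frac1{p'}\ge 1$ and apply Young's inequality to conclude that $(e^{-it\langle\nabla\rangle}\phi_d)*(\langle\nabla\rangle^{s_d}f)$ is a well-defined function in a suitable $L^r$ space.

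\textbf{Main obstacle.} There is no real difficulty: the argument is a standard Fourier-multiplier manipulation. The only mild care needed is the separation between the endpoint cases $p'=1$ and $p'=2$ when invoking Young's inequality to make sense of the convolution — but this is routine and does not affect the validity of the identity in $\mathcal{S}'$.
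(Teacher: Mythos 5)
Your proposal is correct and follows essentially the same route as the paper: write $f=\langle\nabla\rangle^{-s_d}\langle\nabla\rangle^{s_d}f=\phi_d*(\langle\nabla\rangle^{s_d}f)$ via the Fourier convolution theorem, commute the multipliers, and justify that the convolution is well defined using Lemma \ref{lemma: ancona f}. Your extra care with Young's inequality at the endpoints is harmless but not needed beyond what the paper already invokes.
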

	
	\begin{proof}
	    We rewrite $f$ as
		\begin{align*}
			f  &= \langle \nabla \rangle^{-s_d}\langle \nabla \rangle^{s_d} f= \mathcal{F}^{-1}[\langle \xi \rangle^{-s_d}\mathcal{F}(\langle \nabla\rangle^{s} f )] 
			= \phi_d * (\langle \nabla\rangle^{s_d} f ).
		\end{align*}
		Hence,
		observing that the convolution product is well-defined by Lemma \ref{lemma: ancona f}, we obtain \eqref{eq:convol_nabla}.
	\end{proof}
	
	Using the previous two lemmas, we have the following estimate.

	\begin{lemma}
	\label{lemma: free t dec interp}
		Let $s\ge s_d=d/2+1$ and $1\le r \le\infty$. Then, for all $f \in H^{s,p'}\cap H^{s}$ and $t\ge0$, 
		\begin{equation*}
			\|e^{-it\langle \nabla \rangle} f\|_{L^{p}\cap L^\infty} \lesssim \langle t\rangle^{-\frac{d}{2r}}\| f\|_{H^{s,p'}\cap H^{s}},
		\end{equation*}
		with $p=\frac{2r}{r-1}$.
		\end{lemma}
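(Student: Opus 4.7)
The plan is to combine the convolution representation from Lemma~\ref{lm:repr_with_phid} with the pointwise decay of $e^{-it\langle\nabla\rangle}\phi_d$ from Lemma~\ref{lemma: ancona f}, and then use Riesz--Thorin interpolation to hit all exponents $r\in[1,\infty]$ at once. More precisely, the workhorse is the operator $T_t:=e^{-it\langle\nabla\rangle}\langle\nabla\rangle^{-s_d}$, which acts on a test function $g$ by convolution with $e^{-it\langle\nabla\rangle}\phi_d$ (up to a constant), so its mapping properties are controlled by $L^b$-norm estimates on $e^{-it\langle\nabla\rangle}\phi_d$ together with Young's inequality. After establishing the bound on $T_t$ one writes $f=\langle\nabla\rangle^{-s_d}\langle\nabla\rangle^{s_d}f$ and uses Lemma~\ref{lemma: resolvent Lp} to replace $\|\langle\nabla\rangle^{s_d}f\|_{L^{p'}}$ by $\|f\|_{H^{s,p'}}$, which is legitimate since $s\ge s_d$ and $\langle\nabla\rangle^{s_d-s}$ is bounded on $L^{p'}$ for every $1\le p'\le\infty$.

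I will split the analysis into two time regimes. For $t\le 1$, I will just observe that $e^{-it\langle\nabla\rangle}$ is an isometry on $H^s$ (it commutes with $\langle\nabla\rangle^s$ and is unitary on $L^2$) and that the Sobolev embedding $H^s\hookrightarrow L^p\cap L^\infty$ holds since $s\ge d/2+1>d/2$; this yields $\|e^{-it\langle\nabla\rangle}f\|_{L^p\cap L^\infty}\lesssim\|f\|_{H^s}\lesssim\langle t\rangle^{-\frac d{2r}}\|f\|_{H^{s,p'}\cap H^s}$, because $\langle t\rangle^{-d/(2r)}$ is bounded below on $[0,1]$. For $t\ge 1$, I will integrate the pointwise bound of Lemma~\ref{lemma: ancona f} (taking $m$ large enough depending on $b$) to obtain
\[
\|e^{-it\langle\nabla\rangle}\phi_d\|_{L^b}\lesssim t^{\frac db-\frac d2}\,,\qquad 1\le b\le\infty\,,
\]
the main contribution coming from the ball $\{|x|\le t\}$ where the bound essentially equals $t^{-d/2}$, while the region $\{|x|>t\}$ is made subleading by the factor $(1+(|x|^2-t^2))^{-m}$. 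Young's inequality with $b=\infty$ then gives $\|T_t\|_{L^1\to L^\infty}\lesssim t^{-d/2}$.

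To reach the desired exponent $t^{-d/(2r)}$ for arbitrary $r\in[1,\infty]$ I will invoke Riesz--Thorin interpolation twice, using as second endpoint the trivial bounds $\|T_t\|_{L^2\to L^2}\lesssim 1$ and $\|T_t\|_{L^2\to L^\infty}\lesssim 1$ (the latter because $\langle\nabla\rangle^{-s_d}:L^2\to H^{s_d}\hookrightarrow L^\infty$ and $e^{-it\langle\nabla\rangle}$ is an $H^{s_d}$-isometry). Interpolating between $(L^1,L^\infty)$ and $(L^2,L^2)$ at the parameter $\theta=(r-1)/r$ gives $\|T_t\|_{L^{p'}\to L^p}\lesssim t^{-d(1-\theta)/2}=t^{-d/(2r)}$ with $p=2r/(r-1)$ and $p'=2r/(r+1)$; interpolating between $(L^1,L^\infty)$ and $(L^2,L^\infty)$ at the same $\theta$ gives the matching bound $\|T_t\|_{L^{p'}\to L^\infty}\lesssim t^{-d/(2r)}$. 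Combining the two regimes, replacing $\|\langle\nabla\rangle^{s_d}f\|_{L^{p'}}$ by $\|f\|_{H^{s,p'}}$, and bounding $t^{-d/(2r)}\lesssim\langle t\rangle^{-d/(2r)}$ for $t\ge 1$ closes the argument. The only nontrivial step is the verification of the $L^b$ bound on $e^{-it\langle\nabla\rangle}\phi_d$ with the exponent $t^{d/b-d/2}$: one must check that the decay of $(1+(|x|^2-t^2)_+)^{-m}$ in the exterior region $\{|x|>t\}$ makes it subleading, which boils down to the elementary inequality $\int_t^\infty y^{d-1-db/2}(1+y^2-t^2)^{-mb}\,dy\lesssim t^{d-db/2}$ for $m$ large enough.
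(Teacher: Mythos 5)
Your proof is correct and follows essentially the same route as the paper: small times via unitarity on $H^s$ and the Sobolev embedding, large times via the convolution representation of Lemma~\ref{lm:repr_with_phid}, the H\"ormander kernel bound of Lemma~\ref{lemma: ancona f}, Young's inequality, and Riesz--Thorin interpolation between the $L^1\to L^\infty$ dispersive bound and the trivial $L^2\to L^2$ bound. The only cosmetic difference is that the paper obtains the $L^\infty$ target bound directly from Young's inequality with the kernel measured in $L^{p}$ (Eq.~\eqref{eq:bound-phi-Lp}), whereas you reach it by a second interpolation against the $L^2\to L^\infty$ endpoint; both are valid.
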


	\begin{proof}
Since $H^{s,p'}\cap H^{s}\subset H^{s_d,p'}\cap H^{s_d}$, it suffices to prove the lemma for $s=s_d$. For $t\le1$, using Sobolev's embedding $H^{s}\hookrightarrow L^p\cap L^\infty$ (since $p\ge2$ and $s>\frac{d}2$) and the unitarity of $e^{-it\langle\nabla\rangle}$ in $H^{s}$, we have
\begin{equation}\label{eq:estim_Lp_1}
    \|e^{-it\langle \nabla \rangle} f\|_{L^{p}\cap L^\infty}\lesssim \|f\|_{H^{s}}.
\end{equation}
		For $t>1$, using Lemma \ref{lm:repr_with_phid} and the fact that $\phi_d$ satisfies the conditions of Lemma \ref{lemma: ancona f}, we obtain
		\begin{align}
		    \|e^{-it\langle \nabla \rangle} f\|_{ L^\infty}&=\big\|\big(e^{-it\langle\nabla\rangle}\phi_d\big)*\big(\langle\nabla\rangle^{s}f\big)\big\|_{L^\infty}\notag\\
		    &\lesssim\big\|e^{-it\langle\nabla\rangle}\phi_d\big\|_{L^p}\big\|\langle\nabla\rangle^{s}f\big\|_{L^{p'}}\lesssim| t|^{-\frac{d}{2r}}\|f\|_{H^{s,p'}}.\label{eq:estim_Lp_2}
		\end{align}
Interpolating the inequality  
		\begin{equation*}
			\|e^{-it\langle \nabla \rangle} f\|_{L^\infty} \lesssim |t|^{ -d/2 }\|\langle \nabla\rangle^{s} f\|_{L^{1}},
		\end{equation*}
(obtained from the previous bound for $p=\infty$), with 
		\begin{align*}
			\|e^{-it\langle \nabla \rangle} f\|_{L^2} = \|f\|_{L^2}\leq  \|\langle \nabla\rangle^{s} f\|_{L^{2}},
		\end{align*}
		we obtain, using e.g. \cite{grafakos}*{Theorem 1.3.4}, 
		\begin{equation*}
			\|e^{-it\langle \nabla \rangle} f\|_{L^a} \lesssim |t|^{ -d/2(1-\theta) } \|\langle \nabla\rangle^{s} f\|_{L^{b}}, \quad \frac{1}{a} = \frac{\theta}{2}, \quad\frac{1}{b} = \frac{1-\theta}{1} + \frac{\theta}{2}, \quad 0\le\theta\le1.
		\end{equation*}
		Setting $\theta = 1-\frac{1}{r}$ yields
		\begin{align}
			\label{comp: bound 1/t}
			\|e^{-it\langle \nabla \rangle} f\|_{L^{p}} \lesssim |t|^{-\frac{d}{2r}}  \|\langle \nabla \rangle^{s} f\|_{L^{p'}}
		\end{align}
	Combining \eqref{eq:estim_Lp_1}, \eqref{eq:estim_Lp_2} and \eqref{comp: bound 1/t} leads to the statement of the lemma.
	\end{proof}

One can formulate a more precise time-decay estimate in Besov-Lorentz spaces as follows. For $s\in\mathbb{R}$, $1\le p,q<\infty$ and $1\le r\le\infty$, we denote by $B^s_{(p,r),q}$ the Besov-Lorentz space associated to the norm
\begin{equation*}
    \|f\|_{B^s_{(p,r),q}}=\Big(\sum_{k\in\mathbb{N}_0}2^{ksq}\|\Lambda_kf\|^q_{L^{p,r}}\Big)^{\frac1q},
\end{equation*}
where $(\Lambda_k)_{k\in\mathbb{N}_0}$ stands for a Littlewood-Paley decomposition, see \cite{SeegerTrebels} for more details.

\begin{lemma}\label{lem:decay_Besov}
Let $2\le b\le\infty$ and $0\le\beta\le1$. Then
\begin{align*}
\|e^{-it\langle\nabla\rangle}f\|_{B'}\lesssim |t|^{-(d-1+\beta)(\frac12-\frac1b)}\|f\|_B,
\end{align*}
where $B$ stands for the Besov-Lorentz space $B=B_{(b',2),2}^s$, and $B'=B_{(b,2),2}^{-s}$, with $s=\frac12(d+1+\beta)(\frac12-\frac1b)$.
\end{lemma}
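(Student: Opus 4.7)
The plan is to reduce to a pointwise dispersive estimate on each Littlewood--Paley block and then repackage via real interpolation into the Besov--Lorentz scale, following the general machinery developed in \cite{SeegerTrebels}.

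First, I take a standard Littlewood--Paley decomposition $f=\sum_{k\ge 0}\Lambda_k f$, with $\widehat{\Lambda_k f}$ supported in an annulus $\{|\xi|\sim 2^k\}$ for $k\ge1$ and in $\{|\xi|\lesssim1\}$ for $k=0$. For each dyadic block, a stationary phase analysis of the kernel $K_t^k(x)=\int e^{i(x\cdot\xi-t\langle\xi\rangle)}\chi(2^{-k}\xi)\,\mathrm{d}\xi$ (using that the Hessian of $\xi\mapsto\langle\xi\rangle$ has $d-1$ non-degenerate directions transverse to the radial one) yields the pointwise dispersive bound
\[
\|e^{-it\langle\nabla\rangle}\Lambda_k f\|_{L^\infty}\lesssim 2^{k(d+1+\beta)/2}\,|t|^{-(d-1+\beta)/2}\,\|\Lambda_k f\|_{L^1},
\]
for any $0\le\beta\le1$. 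The endpoints are the familiar ones: $\beta=0$ is the wave-type decay (purely tangential curvature of the cone), while $\beta=1$ captures the improved decay coming from the additional curvature of the Klein--Gordon characteristic surface (which, for low frequencies, also follows from Lemma~\ref{lemma: ancona f}); intermediate $\beta$ follows by convex interpolation between these two regimes.

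Second, combining with the trivial unitary bound $\|e^{-it\langle\nabla\rangle}\Lambda_k f\|_{L^2}=\|\Lambda_k f\|_{L^2}$, real interpolation of the operator $e^{-it\langle\nabla\rangle}\Lambda_k$ between $L^1\to L^\infty$ and $L^2\to L^2$ produces the corresponding estimate between the Lorentz spaces $L^{b',2}$ and $L^{b,2}$ (with exponent $\theta=\frac12-\frac1b$ and second index~$2$ arising naturally from the $K$-functional in real interpolation):
\[
\|e^{-it\langle\nabla\rangle}\Lambda_k f\|_{L^{b,2}}\lesssim 2^{ks}\,|t|^{-(d-1+\beta)(\frac12-\frac1b)}\,\|\Lambda_k f\|_{L^{b',2}},
\]
where $s=\frac12(d+1+\beta)(\frac12-\frac1b)$ is exactly the exponent in the statement.

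Finally, multiplying by $2^{-ks}$, squaring, and summing over $k\ge 0$, the powers of $2^k$ cancel on the two sides and the Besov--Lorentz norms reassemble:
\[
\|e^{-it\langle\nabla\rangle}f\|_{B^{-s}_{(b,2),2}}^2 = \sum_{k\ge0} 2^{-2ks}\|e^{-it\langle\nabla\rangle}\Lambda_k f\|_{L^{b,2}}^2 \lesssim |t|^{-2(d-1+\beta)(\frac12-\frac1b)}\sum_{k\ge0}2^{2ks}\|\Lambda_k f\|_{L^{b',2}}^2,
\]
which is the claimed inequality. (A mild abuse is made concerning which block $\Lambda_k e^{-it\langle\nabla\rangle}\Lambda_k$ sits in, which is handled as usual by a finite-band $\tilde\Lambda_k$ absorbed into $\Lambda_k$.)

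The principal difficulty is the pointwise estimate per block, specifically unifying the high-frequency regime (where $\langle\xi\rangle\sim|\xi|$ mimics the wave phase and the decay rate is forced to be $|t|^{-(d-1)/2}$ with an unavoidable loss $2^{k(d+1)/2}$ in frequency) and the low-frequency regime (where $\langle\xi\rangle\sim 1+|\xi|^2/2$ is Schrödinger-like and allows faster decay); the parameter $\beta$ is precisely the degree to which we exploit the extra curvature to trade frequency loss for time decay, and the quoted Hörmander bound of Lemma~\ref{lemma: ancona f} supplies the sharp input at both extremes.
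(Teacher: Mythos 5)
Your proof is correct and follows essentially the same route as the paper, which simply invokes the per-block dispersive estimate of Ginibre--Velo (Lemma 2.1 of \cite{GinibreVelo_Timedecay}) and observes that replacing Riesz--Thorin by real interpolation, via $L^{b,2}=[L^1,L^\infty]_{1/b',2}$, upgrades the Lebesgue scale to the Lorentz scale; you merely write out the stationary-phase input and the dyadic summation that the paper delegates to the reference. The only slip is in your intermediate display: interpolating the loss $2^{k(d+1+\beta)/2}$ with exponent $\eta=2(\tfrac12-\tfrac1b)$ yields a factor $2^{k(d+1+\beta)(\frac12-\frac1b)}=2^{2ks}$, not $2^{ks}$, which is exactly what is needed for the regularity gap $2s$ between $B^{s}_{(b',2),2}$ and $B^{-s}_{(b,2),2}$ to close in the final summation (and is in fact the value your last display implicitly uses).
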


\begin{proof}
   It suffices to modify the end of the proof of \cite{GinibreVelo_Timedecay}*{Lemma 2.1} (see also \cite{Brenner})  in a straightforward way, using real interpolation $L^{b,2}=[L^1,L^\infty]_{1/b',2}$.
\end{proof}

\subsection{Strichartz estimates}\label{subsec:Strichartz}

Lemma \ref{lem:decay_Besov} implies the following Strichartz estimates. Compared to the estimates previously used in the literature, see e.g. \cites{Machihara_et_al,dancona_fan08}, a difference here is that we need a Strichartz estimate in Lorentz space. On the other hand, we do not consider the endpoint case $a=2$, $b=2(d-1+\beta)(d-3+\beta)^{-1}$ since we do not need it in our application.

In the following statement, $\beta=0$ corresponds to the usual wave admissible pairs and $\beta=1$ to Schrödinger admissible pairs. We only consider here the estimates that were used in Section \ref{sec:global-long}, we do not state more general estimates that could be deduced from Lemma \ref{lem:decay_Besov}.

\begin{prop}\label{prop:Strichartz-Lorentz}
    Let $a<2\le\infty$, $0\le\beta\le1$ and $2\le b< 2(d-1+\beta)(d-3+\beta)^{-1}$ be such that
    \begin{equation*}
		\frac{2}{a} + \frac{d-1+\beta}{b} = \frac{d-1+\beta}{2}.
	\end{equation*}
	Then
	\begin{equation}\label{eq:Strichartz1}
       \|e^{-it\langle\nabla\rangle}f\|_{L^a_tL^{b,2}_x}\lesssim \|f\|_{H^s},
   \end{equation}
   for any $s\ge\frac12+\frac1a-\frac1b$    and
   	\begin{equation}\label{eq:Strichartz2}
		\Big\|\int_{0}^t e^{-i(t-\tau)\langle \nabla\rangle} f(\tau,x) \mathrm{d}\tau \Big\|_{L^a_t\, L^{b,2}_x} \lesssim  \|f\|_{L^{1}_t H^{s}_x}.
	\end{equation}
\end{prop}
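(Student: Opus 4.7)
\smallskip

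The plan is to run the standard $TT^*$ argument, but working in the Besov--Lorentz scale so that the $L^{b,2}_x$-norm on the left-hand side is accessible. Set
\[
s_0 := \tfrac{1}{2}(d+1+\beta)\bigl(\tfrac12-\tfrac{1}{b}\bigr) = \tfrac12+\tfrac1a-\tfrac1b, \qquad \alpha:=(d-1+\beta)\bigl(\tfrac12-\tfrac1b\bigr)=\tfrac{2}{a},
\]
where the equality $2/a=\alpha$ uses the admissibility condition. Since $a>2$ and $b<2(d-1+\beta)(d-3+\beta)^{-1}$, one has $0<\alpha<1$, which keeps us away from the Keel--Tao endpoint. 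Lemma~\ref{lem:decay_Besov} then reads
\[
\bigl\|e^{-it\langle\nabla\rangle}f\bigr\|_{B^{-s_0}_{(b,2),2}}\lesssim |t|^{-\alpha}\|f\|_{B^{s_0}_{(b',2),2}},
\]
with the two Besov--Lorentz spaces in duality. Combined with the energy identity $\|e^{-it\langle\nabla\rangle}f\|_{L^2}=\|f\|_{L^2}$, this is the only analytic input.

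Next I would apply the usual $TT^*$ duality argument. For $F\in L^{a'}_t B^{s_0}_{(b',2),2}$, write
\[
\bigl(TT^*F\bigr)(t)=\int_{\mathbb{R}}e^{-i(t-\tau)\langle\nabla\rangle}F(\tau)\,\mathrm{d}\tau,
\]
and use the dispersive estimate pointwise in $(t,\tau)$ together with the Hardy--Littlewood--Sobolev inequality in time (valid since $0<\alpha=2/a<1$) to deduce
\[
\|TT^*F\|_{L^a_t B^{-s_0}_{(b,2),2}}\lesssim\|F\|_{L^{a'}_t B^{s_0}_{(b',2),2}}.
\]
Factoring $TT^*$ and using the self-duality of the Besov--Lorentz pair then yields the intermediate homogeneous bound
\[
\bigl\|e^{-it\langle\nabla\rangle}f\bigr\|_{L^a_t B^{-s_0}_{(b,2),2}}\lesssim \|f\|_{L^2}.
\]

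To pass from the Besov--Lorentz norm on the left to the cleaner $L^{b,2}_x$ norm in \eqref{eq:Strichartz1}, I would use two classical facts: first, the Bessel-potential characterization giving $\|g\|_{B^{-s_0}_{(b,2),2}}\simeq \|\langle\nabla\rangle^{-s_0}g\|_{B^0_{(b,2),2}}$, and second, the Littlewood--Paley embedding $B^0_{(b,2),2}\hookrightarrow L^{b,2}$ for $2\le b<\infty$ (with the corresponding embedding into $\mathrm{bmo}$ at $b=\infty$, which is enough by the strict inequality $b<2(d-1+\beta)/(d-3+\beta)$). Replacing $f$ by $\langle\nabla\rangle^{s_0}f$ in the previous bound and using these embeddings gives
\[
\bigl\|e^{-it\langle\nabla\rangle}f\bigr\|_{L^a_t L^{b,2}_x}\lesssim \|f\|_{H^{s_0}},
\]
which is \eqref{eq:Strichartz1} at the critical regularity $s=s_0$; for $s\ge s_0$ the bound follows from the trivial embedding $H^{s}\hookrightarrow H^{s_0}$. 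The inhomogeneous bound \eqref{eq:Strichartz2} then follows at once by Minkowski's integral inequality combined with the time translation invariance of $e^{-it\langle\nabla\rangle}$ and the already established homogeneous estimate; no Christ--Kiselev lemma is needed since the loss only goes from $L^{a'}$ to $L^1$ in time.

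The main obstacle is the bookkeeping in step three: carefully justifying $\|g\|_{B^{-s_0}_{(b,2),2}}\simeq \|\langle\nabla\rangle^{-s_0}g\|_{B^0_{(b,2),2}}$ and the embedding $B^0_{(b,2),2}\hookrightarrow L^{b,2}$ in the Lorentz setting, for which I would cite \cite{SeegerTrebels} or a direct real-interpolation argument between the classical $B^0_{b,2}\hookrightarrow L^b$ embeddings at two values of $b$. Everything else is textbook $TT^*$ once the dispersive estimate of Lemma~\ref{lem:decay_Besov} is in hand.
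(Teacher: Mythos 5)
Your proposal is correct and follows essentially the same route as the paper: the key inputs are the Besov--Lorentz dispersive bound of Lemma~\ref{lem:decay_Besov}, the abstract $TT^*$ machinery (which the paper invokes as \cite{KeelTao}*{Theorem 10.1} rather than unpacking it via Hardy--Littlewood--Sobolev as you do), the shift $\|g\|_{B^{-s_0}_{(b,2),2}}\simeq\|\langle\nabla\rangle^{-s_0}g\|_{B^0_{(b,2),2}}$, and the embedding $B^0_{(b,2),2}\hookrightarrow L^{b,2}$ from \cite{SeegerTrebels}. The only cosmetic difference is that you spell out the non-endpoint $TT^*$ argument and the passage to the inhomogeneous estimate by Minkowski, both of which are exactly what the cited abstract theorem delivers.
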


\begin{proof}
   Recall that $B$ stands for the Besov-Lorentz space $B_{(b',2),2}^s$.  Applying \cite{KeelTao}*{Theorem 10.1} (with $B_0=H=L^2$ and $B_1=B$), we obtain from Lemma \ref{lem:decay_Besov} that
   \begin{equation*}
       \|e^{-it\langle\nabla\rangle}f\|_{L^a_tB'}\lesssim \|f\|_{L^2}, \qquad a=2\Big[(d-1+\beta)(\frac12-\frac1b)\Big]^{-1},
   \end{equation*}
   provided that $a>2$. Equivalently
   \begin{equation*}
       \|e^{-it\langle\nabla\rangle}f\|_{L^a_tB_{(b,2),2}^0}\lesssim \|f\|_{H^s},
   \end{equation*}
   with $s=\frac12(d+1+\beta)(\frac12-\frac1b)$. Since, by \cite{SeegerTrebels}*{Theorem 1.1}, we have that $B^0_{(b,2),2}\hookrightarrow L^{b,2}$, we deduce from the previous inequality that \eqref{eq:Strichartz1} holds. The estimate \eqref{eq:Strichartz2} follows in the same way.
\end{proof}

\subsection{Pointwise estimates in weighted $L^p$ spaces}\label{subsec:pointwiseLp}

The next lemma was used in Section \ref{sect: wave op}. Recall the notation $\Theta:=-i\nabla\langle\nabla\rangle^{-1}$.

\begin{lemma}\label{lm:xgamma-lin}
Let $0\leq \gamma\leq 2$. Then for all $\varphi\in L^2_\gamma$,
\begin{equation}\label{eq:xgamma-lin}
    \big\| e^{-it\langle\nabla\rangle}\varphi\|_{L^2_\gamma}\lesssim \|\varphi\|_{L^2_\gamma}+t^\gamma\|\varphi\|_{L^2}.
\end{equation}
\end{lemma}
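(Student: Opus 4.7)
My approach would be to reduce the estimate to the two endpoints $\gamma=0$ and $\gamma=2$, handle each by hand, and then interpolate via Stein's complex interpolation theorem.

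The $\gamma=0$ case is just unitarity of $e^{-it\langle\nabla\rangle}$. For the endpoint $\gamma=2$, I would use the conjugation identity \eqref{eq:b1} of Proposition~\ref{prop:asympt1}, namely $e^{it\langle\nabla\rangle}x^{2}e^{-it\langle\nabla\rangle}=(x+t\Theta)^{2}$, with $\Theta=-i\nabla\langle\nabla\rangle^{-1}$ a bounded self-adjoint Fourier multiplier of norm at most~$1$. This gives, by the spectral theorem applied to the positive self-adjoint operator $x+t\Theta$,
\[
\|x^{2}e^{-it\langle\nabla\rangle}\varphi\|_{L^{2}}^{2}=\langle\varphi,e^{it\langle\nabla\rangle}x^{4}e^{-it\langle\nabla\rangle}\varphi\rangle=\langle\varphi,(x+t\Theta)^{4}\varphi\rangle=\|(x+t\Theta)^{2}\varphi\|_{L^{2}}^{2}.
\]
Expanding $(x+t\Theta)^{2}=x^{2}+t(x\cdot\Theta+\Theta\cdot x)+t^{2}\Theta^{2}$, using $\|\Theta\|_{\mathcal B(L^{2})}\le 1$ together with the boundedness of each commutator $[x_{j},\Theta_{k}]$ (a direct Fourier-space computation, or a consequence of Lemma~\ref{lemma: [x^gamma, grad^s]}), and finally applying the Cauchy--Schwarz interpolation $\|x\varphi\|_{L^{2}}^{2}\le\|x^{2}\varphi\|_{L^{2}}\|\varphi\|_{L^{2}}$ combined with AM--GM to absorb the $t\|x\varphi\|$ cross term, I would obtain
\[
\|\langle x\rangle^{2}e^{-it\langle\nabla\rangle}\varphi\|_{L^{2}}\lesssim\|\varphi\|_{L^{2}_{2}}+t^{2}\|\varphi\|_{L^{2}}.
\]

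For intermediate $\gamma\in(0,2)$, I would apply Stein's complex interpolation theorem to the analytic family
\[
T_{z}:=\langle x\rangle^{z}e^{-it\langle\nabla\rangle}(\langle x\rangle^{2}+t^{2})^{-z/2},\qquad 0\le\textnormal{Re}(z)\le 2.
\]
On the line $\textnormal{Re}(z)=0$, both $\langle x\rangle^{is}$ and $(\langle x\rangle^{2}+t^{2})^{-is/2}$ are unitary multiplication operators, so $\|T_{is}\|_{\mathcal B(L^{2})}=1$ uniformly in $s$ and $t$. On the line $\textnormal{Re}(z)=2$, $T_{2+is}$ factors as a unitary multiplier composed with $\langle x\rangle^{2}e^{-it\langle\nabla\rangle}(\langle x\rangle^{2}+t^{2})^{-1}$ and a further unitary multiplier; the $\gamma=2$ estimate just established, together with the trivial operator inequality $\|(\langle x\rangle^{2}+t^{2})\psi\|_{L^{2}}\ge\|\langle x\rangle^{2}\psi\|_{L^{2}}$ and $\ge t^{2}\|\psi\|_{L^2}$, gives $\|T_{2+is}\|_{\mathcal B(L^{2})}\lesssim 1$, uniformly in $s$ and $t$. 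Stein's theorem then yields $\|T_{\gamma}\|_{\mathcal B(L^{2})}\lesssim 1$, which upon substituting $\psi=(\langle x\rangle^{2}+t^{2})^{\gamma/2}\varphi$ reads
\[
\|\langle x\rangle^{\gamma}e^{-it\langle\nabla\rangle}\varphi\|_{L^{2}}\lesssim\|(\langle x\rangle^{2}+t^{2})^{\gamma/2}\varphi\|_{L^{2}}.
\]
Since $\gamma/2\in[0,1]$, the elementary subadditivity $(a+b)^{\gamma/2}\le a^{\gamma/2}+b^{\gamma/2}$ for $a,b\ge 0$ yields $(\langle x\rangle^{2}+t^{2})^{\gamma/2}\le\langle x\rangle^{\gamma}+t^{\gamma}$ pointwise, so the right-hand side is bounded by $\|\varphi\|_{L^{2}_{\gamma}}+t^{\gamma}\|\varphi\|_{L^{2}}$, as required.

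The main obstacle I anticipate is twofold: first, justifying the commutator manipulations in the $\gamma=2$ step on a suitable dense subspace (Schwartz functions), and then extending by density to arbitrary $\varphi\in L^{2}_{\gamma}$; second, verifying the admissibility of the analytic family $T_{z}$ for Stein interpolation, i.e.\ obtaining the required at most admissible exponential growth in $|\textnormal{Im}(z)|$, which follows from the $L^{2}$-boundedness of the unitary multipliers $\langle x\rangle^{is}$ and $(\langle x\rangle^{2}+t^{2})^{-is/2}$ together with standard analyticity of their operator-valued functional calculus on positive self-adjoint multiplication operators.
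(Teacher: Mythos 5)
Your proof is correct and follows essentially the same route as the paper: the $\gamma=2$ endpoint via the conjugation identity $x^2e^{-it\langle\nabla\rangle}=e^{-it\langle\nabla\rangle}(x+t\Theta)^2$, expansion using the boundedness of $\Theta$ and $[x,\Theta]$ with Cauchy--Schwarz to absorb the cross term, then interpolation between the weighted $L^2$ spaces $L^2((\langle x\rangle^2+t^2)^{2\theta}\,\mathrm{d}x)\to L^2(\langle x\rangle^{4\theta}\,\mathrm{d}x)$ and the subadditivity $(\langle x\rangle^2+t^2)^{\gamma/2}\le\langle x\rangle^\gamma+t^\gamma$. The only cosmetic difference is that you construct the analytic family for Stein's complex interpolation by hand, whereas the paper simply cites the Riesz--Thorin theorem with change of measures, which amounts to the same argument.
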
	
\begin{proof}
From an explicit computation we obtain the expression
\begin{equation*}
e^{it\langle\nabla\rangle}xe^{-it\langle\nabla\rangle}=x+t\Theta,
\end{equation*}
which gives the statement when $\gamma = 1$. Moreover, this relation further implies $x^2e^{-it\langle\nabla\rangle} = e^{-it\langle\nabla\rangle} (x+t\Theta)^2$. Since $\langle x \rangle^2 = 1 + x^2$, we have
\begin{align}
\label{comp: x^2 free prop}
    \big\|e^{-it\langle\nabla\rangle}\varphi\|_{L^2_2}\le \|\varphi\|_{L^2} + \|(x+t\Theta)^2\varphi\|_{L^2}, 
\end{align}
with $(x+t\Theta)^2 = x^2  + t^2 \Theta^2 + t(\Theta\cdot x  + x\cdot \Theta)=x^2  + t^2 \Theta^2 + 2t\Theta\cdot x  + t[x,\Theta]$,
where we use the notation $[x,\Theta] = \sum_{k=1}^d [x_k,\Theta_k]$. 
Since $\Theta_k$ and $[x_k,\Theta_k]$ are bounded operators in $L^2$, we deduce that
\begin{align}
    \big\|e^{-it\langle\nabla\rangle}\varphi\|_{L^2_2} \lesssim \|\langle x\rangle^2\varphi\|_{L^2}+t^2\|\varphi\|_{L^2}\lesssim\|(\langle x \rangle^2 + t^2 ) \varphi\|_{L^2}. 
\end{align}
Hence we have the bounded operators 
\begin{align*}
    e^{-it\langle\nabla\rangle} :L^2(\mathrm{d}x) \to L^2(\mathrm{d}x) \qquad e^{-it\langle\nabla\rangle}: L^2 ((\langle x \rangle^2 + t^2 )^2 \mathrm{d}x) \to  L^2 (\langle x \rangle^4 \mathrm{d}x)
\end{align*}
By an application of the Riesz-Thorin theorem (see \cite{stein_weiss}*{Theorem 2.11}) we obtain that $e^{-it\langle\nabla\rangle}: L^2 ((\langle x \rangle^2 + t^2 )^{2\theta} \mathrm{d}x) \to  L^2 (\langle x \rangle^{4\theta} \mathrm{d}x)$ is a bounded operator for all $\theta \in (0,1)$, from which the statement follows since $(\langle x \rangle^2 + t^2 )^{\theta} \lesssim \langle x \rangle^{2\theta} + t^{2\theta}$. 
\end{proof}	

\begin{lemma}\label{lm:linear-bound-weighted-Lp}
Let $s\ge d/2+1$, $1\le r \le\infty$ and $0\le\gamma\le2$. Then, for all $f \in H^s_{\gamma}\cap H^{s,p'}$ and $t\ge0$,
    \begin{align*}
     \| e^{-it\langle\nabla\rangle} f\|_{L^p_\gamma\cap L^\infty_\gamma} \lesssim  \| f \|_{H^{s}_\gamma}  + \langle t \rangle^{\gamma-\frac{d}{2r}} \|  f\|_{H^{s,p'}\cap H^s},
\end{align*}
		with $p=\frac{2r}{r-1}$.
\end{lemma}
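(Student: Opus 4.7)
My plan is to split the argument based on the size of $t$, handling short and long times separately. The key tools will be the convolution representation of Lemma \ref{lm:repr_with_phid}, the pointwise Hörmander-type bound \eqref{eq:bound-Hormander}, the equivalence of weighted Sobolev norms (Lemma \ref{lemma: weight sob norm}), and the $L^2_\gamma$ estimate of Lemma \ref{lm:xgamma-lin}.

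\textbf{Short-time case ($t\le 1$):} Since $s>d/2$ and $p\ge 2$, Sobolev embedding gives $H^s\hookrightarrow L^p\cap L^\infty$. Thus
\[
\|e^{-it\langle\nabla\rangle}f\|_{L^p_\gamma\cap L^\infty_\gamma}=\|\langle x\rangle^\gamma e^{-it\langle\nabla\rangle}f\|_{L^p\cap L^\infty}\lesssim \|\langle x\rangle^\gamma e^{-it\langle\nabla\rangle}f\|_{H^s}.
\]
By Lemma \ref{lemma: weight sob norm}, this is controlled by $\|\langle x\rangle^\gamma\langle\nabla\rangle^s e^{-it\langle\nabla\rangle}f\|_{L^2}=\|\langle x\rangle^\gamma e^{-it\langle\nabla\rangle}\langle\nabla\rangle^s f\|_{L^2}$ (since $\langle\nabla\rangle^s$ commutes with the flow). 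Lemma \ref{lm:xgamma-lin} then yields $\lesssim \|f\|_{H^s_\gamma}+t^\gamma\|f\|_{H^s}\lesssim \|f\|_{H^s_\gamma}$, which matches the first term on the right-hand side.

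\textbf{Long-time case ($t>1$):} I will split the spatial domain into an interior region $\{|x|\le 2t\}$ and an exterior region $\{|x|>2t\}$. On the interior, $\langle x\rangle^\gamma\lesssim\langle t\rangle^\gamma$, so
\[
\|\mathbf 1_{|x|\le 2t}\langle x\rangle^\gamma e^{-it\langle\nabla\rangle}f\|_{L^p\cap L^\infty}\lesssim\langle t\rangle^\gamma\|e^{-it\langle\nabla\rangle}f\|_{L^p\cap L^\infty}\lesssim\langle t\rangle^{\gamma-\frac{d}{2r}}\|f\|_{H^{s,p'}\cap H^s}
\]
by Lemma \ref{lemma: free t dec interp}. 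This matches the second term on the right-hand side. On the exterior, I use the convolution representation $e^{-it\langle\nabla\rangle}f=(e^{-it\langle\nabla\rangle}\phi_d)*\langle\nabla\rangle^{s_d}f$ from Lemma \ref{lm:repr_with_phid}. For $0\le\gamma\le 2$ one has the subadditive bound $\langle x\rangle^\gamma\lesssim\langle x-y\rangle^\gamma+\langle y\rangle^\gamma$, giving the pointwise split
\[
\langle x\rangle^\gamma|e^{-it\langle\nabla\rangle}f(x)|\lesssim \bigl(\langle\cdot\rangle^\gamma|e^{-it\langle\nabla\rangle}\phi_d|\bigr)*|\langle\nabla\rangle^{s_d}f|(x)+|e^{-it\langle\nabla\rangle}\phi_d|*\bigl(\langle\cdot\rangle^\gamma|\langle\nabla\rangle^{s_d}f|\bigr)(x).
\]
For the first piece, the weighted kernel bound $\|\langle\cdot\rangle^\gamma e^{-it\langle\nabla\rangle}\phi_d\|_{L^p}\lesssim\langle t\rangle^{\gamma-d/(2r)}$, derived exactly as \eqref{eq:bound-phi-Lp} from Hörmander's estimate, together with Young's inequality and $\|\langle\nabla\rangle^{s_d}f\|_{L^{p'}}\lesssim\|f\|_{H^{s,p'}}$, yields the decaying contribution. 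For the second piece, Cauchy--Schwarz and the time-independent bound $\|e^{-it\langle\nabla\rangle}\phi_d\|_{L^2}=\|\phi_d\|_{L^2}<\infty$ yield the $L^\infty$ control $\lesssim\|\langle\cdot\rangle^\gamma\langle\nabla\rangle^{s_d}f\|_{L^2}\lesssim\|f\|_{H^s_\gamma}$.

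\textbf{Main obstacle:} Obtaining the $L^p$ (not just $L^\infty$) bound for the second convolution piece is the delicate step, since the Cauchy--Schwarz estimate above is naturally an $L^\infty$ bound. To convert it to $L^p$ on the exterior $\{|x|>2t\}$, I will further split the $y$-integration into $|y|\le|x|/2$ and $|y|>|x|/2$: in the first subregion one has $|x-y|\ge|x|/2>t$, so the sharp factor $(1+(|x-y|^2-t^2)_+)^{-m}$ in Hörmander's estimate \eqref{eq:bound-Hormander} (with $m$ chosen large) gives fast pointwise decay $\langle x\rangle^{-N}$, producing an $L^p$-integrable contribution bounded by $\|f\|_{H^{s}}\le\|f\|_{H^s_\gamma}$; in the second subregion $\langle x\rangle\lesssim\langle y\rangle$, which transfers the weight onto $f$ and allows an $L^2\to L^p$ conversion using the Young--type inequality with the $L^p$-bound of the kernel. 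Combining all contributions yields the claimed estimate.
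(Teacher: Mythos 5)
Your short-time argument, your $L^\infty_\gamma$ estimate (convolution representation, the split $\langle x\rangle^\gamma\lesssim\langle x-y\rangle^\gamma+\langle y\rangle^\gamma$, the $L^p\times L^{p'}\to L^\infty$ and $L^2\times L^2\to L^\infty$ pairings) and the interior region $\{|x|\le 2t\}$ all match the paper and are fine. The gap is in the $L^p_\gamma$ bound on the exterior region, which is exactly the delicate part you flagged, and your proposed resolution does not close it. To bound $\||K_t|*h\|_{L^p}$ with $h=\langle\cdot\rangle^\gamma|\langle\nabla\rangle^{s_d}f|\in L^2$ by Young's inequality you must put the kernel $K_t=e^{-it\langle\nabla\rangle}\phi_d$ in $L^a$ with $\tfrac1a=\tfrac12+\tfrac1p\le 1$, and since $|K_t(z)|\sim(t+|z|)^{-d/2}$ on the light cone $\{|z|\le t\}$ one has $\|K_t\|_{L^a}\sim t^{d/a-d/2}=t^{d/p}$, which \emph{grows}. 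The total bound is then $t^{d/p}\|f\|_{H^{s}_\gamma}=t^{d/2-\frac{d}{2r}}\|f\|_{H^s_\gamma}$, and $t^{d/2-\frac{d}{2r}}$ is not dominated by $1+t^{\gamma-\frac{d}{2r}}$ once $d/2>\gamma$ (e.g.\ $p=2$, $d\ge5$). The geometric restrictions $|x|>2t$, $|y|>|x|/2$ do not help: $\langle\cdot\rangle^\gamma\langle\nabla\rangle^{s_d}f$ may concentrate its entire $L^2$ mass on an annulus $\{t<|y|<Ct\}$, in which case the full $L^2\to L^p$ norm of convolution with $\mathbf 1_{\{|z|\le 2t\}}t^{-d/2}$ is attained. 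The same loss of $t^{d/p}$ occurs for your first convolution piece in $L^p$ (Young forces the pairing $L^{p/2}\times L^{p'}\to L^p$, giving $t^{\gamma-\frac{d}{2r}+\frac dp}$), so that piece is also only controlled in $L^\infty$. The underlying reason is that the $t^{-d/(2r)}$ decay of $L^{p'}\to L^p$ is an interpolation of \emph{operator} bounds ($L^1\to L^\infty$ with $L^2\to L^2$), not a consequence of a single Young pairing; no pointwise kernel estimate plus Young will reproduce it with a weight attached. (A separate, minor point: with the threshold $|x|>2t$, in your subregion $|y|\le|x|/2$ the quantity $|x-y|^2-t^2\ge|x|^2/4-t^2$ degenerates as $|x|\downarrow 2t$, so the Hörmander factor gives no decay there; take the threshold $3t$ or $4t$ instead.)

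The paper avoids kernel estimates entirely for the $L^p$ part. It first proves the endpoint $\gamma=2$ using the exact commutation identity $\langle x-t\Theta\rangle^2e^{-it\langle\nabla\rangle}=e^{-it\langle\nabla\rangle}\langle x\rangle^2$ together with the $L^p$-boundedness of $\Theta$, $[x,\Theta]$ and $\Theta^2$ (Lemma \ref{lemma: resolvent Lp}), which yields $\|e^{-it\langle\nabla\rangle}f\|_{L^p_2}\lesssim\|f\|_{H^s_2}+t^{2-\frac{d}{2r}}\|f\|_{H^{s,p'}}$; it then combines this with the unweighted $\gamma=0$ estimate of Lemma \ref{lemma: free t dec interp} by \emph{real interpolation between weighted $L^p$ spaces} in the parameter $\gamma$. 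To repair your proof you would need to replace the exterior-region Young argument by this commutation-plus-interpolation scheme (or some equivalent operator-level interpolation); the rest of your outline can stay as is.
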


\begin{proof}
For $t\le 1$, it suffices to use Sobolev's embedding $H^s\hookrightarrow L^p\cap L^\infty$ together with Corollary \ref{cor:weighted-sobolev} and Lemma \ref{lm:xgamma-lin}. Suppose that $t>1$. Using Lemma \ref{lm:repr_with_phid}, we write
\begin{align*}
        \big|\langle x\rangle^\gamma e^{-it\langle\nabla\rangle}f\big|&=\langle x\rangle^\gamma\big|\big(e^{-it\langle\nabla\rangle}\phi_d\big)*\big(\langle\nabla\rangle^{s}f\big)\big|\\
        &\lesssim\big(\langle x\rangle^\gamma\big| e^{-it\langle\nabla\rangle}\phi_d\big|\big)*\big|\langle\nabla\rangle^{s}f\big|+\big| e^{-it\langle\nabla\rangle}\phi_d\big|*\big(\langle x\rangle^\gamma\big|\langle\nabla\rangle^{s}f\big|\big).
\end{align*}
This implies, for the $L^\infty$-norm,
\begin{align*}
        \big\| e^{-it\langle\nabla\rangle}f\big\|_{L^\infty_\gamma}
        &\lesssim\big\| e^{-it\langle\nabla\rangle}\phi_d\big\|_{L^p_\gamma}\big\|f\big\|_{H^{s,p'}}+\big\| e^{-it\langle\nabla\rangle}\phi_d\big\|_{L^2}\big\|f\big\|_{H^s_\gamma}\\
        &\lesssim\langle t\rangle^{\gamma-\frac{d}{2r}}\big\|f\big\|_{H^{s,p'}}+\big\|f\big\|_{H_\gamma^s},
\end{align*}
where we used Lemma \ref{lemma: ancona f} in the second inequality. 

Now we consider the $\|\cdot\|_{L^p}$ norm for $2\le p<\infty$.  First, for $\gamma=2$, we have 
\begin{align*}
    \|x^2f\|_{L^p}&\le\big\|(x^2+t^2\Theta^2)f\|_{L^p}+t^2\|\Theta^2f\|_{L^p}\\
    &\le\big\|(x-t\Theta)^2f\big\|_{L^p}+2t\|(x\cdot\Theta+\Theta\cdot x)f\|_{L^p}+t^2\|\Theta^2f\|_{L^p}\\
    &\le\big\|(x-t\Theta)^2f\big\|_{L^p}+2t\|(\Theta\cdot x)f\|_{L^p}+t\|[x,\Theta]f\|_{L^p}+ t^2\|\Theta^2f\|_{L^p}\\
    &\lesssim\big\|(x-t\Theta)^2f\big\|_{L^p}+t^2\|f\|_{L^p},
\end{align*}
where in the last inequality we used that $\Theta$, $[x,\Theta]$ and $\Theta^2=\mathrm{Id}-(-\Delta+1)^{-1}$ are bounded operators from $L^p$ to $L^p$ by Lemma \ref{lemma: resolvent Lp} (using that  $2\le p<\infty$ for $\Theta$). Using, similarly as before, that $\langle x-t\Theta\rangle^2e^{-it\langle\nabla\rangle} = e^{-it\langle\nabla\rangle} \langle x\rangle^2$, this yields
\begin{align*}
        \big\| e^{-it\langle\nabla\rangle}f\big\|_{L^p_2}&\lesssim   \big\|\langle x-t\Theta \rangle^2 e^{-it\langle\nabla\rangle}f\big\|_{L^p}+t^2   \big\| e^{-it\langle\nabla\rangle}f\big\|_{L^p}\\
        &=\big\|e^{-it\langle\nabla\rangle}\langle x\rangle^2f\big\|_{L^p}+t^2\big\| e^{-it\langle\nabla\rangle}f\big\|_{L^p}\\
        &\lesssim\big\|f\big\|_{H_2^s}+|t|^{2-\frac{d}{2r}}\big\| f\big\|_{H^{s,p'}},
\end{align*}
where we used Sobolev's embedding $H^s\hookrightarrow L^p$, Lemma \ref{lemma: weight sob norm} and \eqref{comp: bound 1/t} in the last inequality. In order to interpolate later on, we rewrite this as
\begin{align}\label{eq:first_ineq_interpo}
        \big\|\langle x\rangle^2 e^{-it\langle\nabla\rangle}\langle\nabla\rangle^{-s}f\big\|_{L^p}&\lesssim \|f\|_{B_0\cap B_1},
\end{align}
where we have set
\begin{equation*}
    B_0:=L^2_2= L^2(\langle x\rangle^4\mathrm{d}x),\qquad B_1:=L^{p'}(| t|^{p'(2-\frac{d}{2r})}\mathrm{d}x).
\end{equation*}

Next we interpolate \eqref{eq:first_ineq_interpo} with the inequality obtained for $\gamma=0$. The argument may be well-known, we provide some details for the convenience of the reader. From the proof of Lemma \ref{lemma: free t dec interp} and Sobolev's embedding $H^s\hookrightarrow L^p$, we deduce that 
		\begin{equation*}
			\|e^{-it\langle \nabla \rangle} f\|_{L^{p}} \lesssim \min\big( |t|^{-\frac{d}{2r}}\| f\|_{H^{s,p'}} , \|f\|_{H^s}\big),
		\end{equation*}
which in turn implies that
\begin{equation}\label{eq:2nd_ineq_interpo}
			\|e^{-it\langle \nabla \rangle}\langle\nabla\rangle^{-s} f\|_{L^{p}} \lesssim \|f\|_{A_0+A_1},
		\end{equation}
where we have set
\begin{equation*}
    A_0:=L^2(\mathrm{d}x),\qquad A_1:=L^{p'}(| t|^{-\frac{d}{2r}p'}\mathrm{d}x).
\end{equation*}
By real interpolation, we deduce from \eqref{eq:first_ineq_interpo}, \eqref{eq:2nd_ineq_interpo} and \cite{bergh_lofstrom}*{Theorem 5.4.1} that, for all $0\le\gamma\le2$,
\begin{align}\label{eq:result_interpo}
        \big\|\langle x\rangle^\gamma e^{-it\langle\nabla\rangle}\langle\nabla\rangle^{-s}f\big\|_{L^p}&\lesssim \|f\|_{[B_0\cap B_1;A_0+A_1]_{\gamma,p}}.
\end{align}
Now since $A_0$, $A_1$, $B_0$, $B_1$ are all Banach function lattices, we have
\begin{align*}
    [B_0\cap B_1;A_0+A_1]_{\gamma,p}&=[B_0;A_0+A_1]_{\gamma,p}\cap [B_1;A_0+A_1]_{\gamma,p}\\
    &=\big([B_0;A_0]_{\gamma,p}+[B_0;A_1]_{\gamma,p}\big)\cap \big([B_1;A_0]_{\gamma,p}+[B_1;A_1]_{\gamma,p}\big),
\end{align*}
(see \cite{Maligranda} for the commutativity between interpolation and intersection, and use duality \cite{bergh_lofstrom}*{Theorem 2.7.1} for the commutativity between interpolation and sum). Hence
\begin{equation}
    \|f\|_{[B_0\cap B_1;A_0+A_1]_{\gamma,p}}\lesssim \|f\|_{[B_0;A_0]_{\gamma,p}}+\|f\|_{[B_1;A_1]_{\gamma,p}}. \label{eq:interpo-a}
\end{equation}
Since $p\ge2$, using again \cite{bergh_lofstrom}*{Theorem 5.4.1},
\begin{equation}
    \|f\|_{[B_0;A_0]_{\gamma,p}}\lesssim\|f\|_{[B_0;A_0]_{\gamma,2}}\lesssim\|\langle x\rangle^\gamma f\|_{L^2}, \label{eq:interpo-b}
\end{equation}
and likewise, since $p\ge p'$,
\begin{equation}
    \|f\|_{[B_1;A_1]_{\gamma,p}}\lesssim\|f\|_{[B_1;A_1]_{\gamma,p'}}\lesssim|t|^{\gamma-\frac{d}{2r}}\|f\|_{L^{p'}}.\label{eq:interpo-c}
\end{equation} 
Putting together \eqref{eq:result_interpo} and \eqref{eq:interpo-a}--\eqref{eq:interpo-c} concludes the proof.
\end{proof}
	\bibliography{biblio_eff}
	\doclicenseThis
\end{document}